\let\csname equation*\endcsname\relax
\let\csname endequation*\endcsname\relax
\pgfplotsset{compat=1.15}
\newtheorem{thm}{Theorem}[section]
\newtheorem{lemma}{Lemma}[section]
\newtheorem{rem}{Remark}[section]
\newtheorem{prop}{Proposition}[section]
\newtheorem{RHP}{Riemann-Hilbert problem}
\newtheorem{Cor}{Corollary} [section]
\newtheorem{Assumption}{Assumption}
\newcommand{\ee}{\mathrm{e}}
\renewcommand{\rmi}{i}
\newcommand{\imi}{i}
\newcommand{\D}{\displaystyle}
\newcommand{\la}{\lambda}
\newcommand{\ii}{i}
\renewcommand{\i}{i}
\newcommand{\w}{\mathrm{w}}
\newcommand{\ord}{\mathcal{O}}
\renewcommand{\Im}{\operatorname{Im}}
\renewcommand{\Re}{\operatorname{Re}}
\newcommand{\mC}{{\mathbb C}}
\renewcommand{\Re}{\operatorname{Re}}
\newcommand{\ol}{\overline}
 \providecommand{\D}{\mathbb}
\renewcommand{\(}{\left(}
\renewcommand{\)}{\right)}
\pgfplotsset{compat=1.15}
\newrobustcmd{\fixappendix}{%
  \patchcmd{\l@section}{1.5em}{7em}{}{}%
  \patchcmd{\l@subsection}{2.3em}{7em}{}{}%
}
\begin{document}

\title{Maxwell-Bloch equations without spectral broadening: the long-time asymptotics of an input pulse in a long two-level laser amplifier}

	\author{Volodymyr Kotlyarov$^{\square}$ and  Oleksandr Minakov$^{\triangle}$}
		
	\address{${ }^{\square}$\it\small  B. Verkin Institute for Low Temperature Physics and Engineering \\
		\it\small  of the National Academy of Sciences of Ukraine, 47, Nauky Ave., Kharkiv, 61103 Ukraine\\
			${}^{\triangle}$\it\small  Department of Mathematical Analysis, Faculty of Mathematics and Physics, Charles University, 					Sokolovska 8, Prague 8, 186\,75 Czech Republic\\
	}

	\ead{kotlyarov@ilt.kharkov.ua ;\; minakov(at)karlin(dot)mff(dot)cuni(dot)cz
	}
	\vspace{10pt}
	
	\begin{indented}
		\item[]October 2022
	\end{indented}

\begin{abstract}
We study the problem of propagation of an input electromagnetic pulse through a long two-level laser amplifier under trivial initial conditions. In this paper, we consider an unstable model described by the Maxwell-Bloch equations without spectral broadening. Previously, this model was studied by S.V. Manakov in \cite{Manakov82} and together with V.Yu. Novokshenov in \cite{MN86}. We consider this model in a more natural formulation as an initial-boundary (mixed) problem using a modern version of the inverse scattering transform method in the form of a suitable Riemann-Hilbert (RH) problem. The RH problem arises as a result of applying the Fokas-Its method of simultaneous analysis of the corresponding spectral problems for the Ablowitz-Kaup-Newell-Segur (AKNS) equations. This approach makes it possible to obtain rigorous asymptotic results at large times, which differ significantly from the previous ones. Differences take place both near the light cone and in the tail region, where a new type of solitons is found against an oscillating background. These solitons are physically relevant, their velocities are smaller than the speed of light. The number of such solitons can be either finite or infinite (in the latter case, the set of zeros has a condensation point at infinity). Such solitons can not be reflectionless, they are generated by zeros of the reflection coefficient of the input pulse (and not by poles of the transmission coefficient).

Thus our approach shows the presence of a new phenomenon in soliton theory, namely, the boundary condition (input pulse) of a mixed problem under trivial initial conditions can generate solitons due to the zeros of the reflection coefficient, while the poles of the transmission coefficient do not contribute to the asymptotics of the solution.

\vskip 0.5cm
\noindent Keywords: Maxwell-Bloch equations, spectral broadening, two-level laser amplifier, Riemann-Hilbert problem, unified spectral method, rigorous asymptotics\\ 
Mathematics Subject Classification numbers: 
35B40, 35Q15, 35Q51, 35Q60, 35M13, 30E20, 37K10, 37K40, 45E05

\end{abstract}

\newpage
\tableofcontents
\title[]{}

\maketitle

\makeatletter

\clearpage

\newpage
\section{Introduction and results}

The integrable Maxwell-Bloch (MB) equations  have the following form (cf.\cite{GZM85})
\begin{align}
\label{MB1}
& \frac{\partial \cal E}{\partial t}+\frac{\partial \cal E}{\partial x}=\int_{-\infty}^\infty\rho(t,x,\lambda)n(\lambda)d\lambda, \qquad
\int_{-\infty}^\infty n(\lambda)d\lambda=1,\\
\label{MB2}
& \frac{\partial \rho}{\partial t}+2\rmi\lambda\rho=\cal N\cal E,\\
\label{MB3}
&\frac{\partial \cal N}{\partial t}=-\,\frac{1}{2}\left(\ol{\cal E}\rho+\cal E \ol{\rho}\right).
\end{align}
Here ${\cal{E}}={\cal E}(t,x)$ is a complex-valued function of the time variable $t$ and spatial variable $x$, $\rho=\rho(t,x,\lambda)$ and $\mathcal{N} = {\cal N}(t,x,\lambda)$ are respectively complex-valued and real-valued functions of $t$, $x$ and spectral parameter $\lambda$, and the bar denotes the complex conjugation. 
	
Equations \eqref{MB1}--\eqref{MB3} arise in a few physical models and their studying was launched in \cite{L1}-\cite{L4}.
The next very important step was done in \cite{AKN}, where the inverse scattering transform method was developed for a self-induced transparency model. In this paper, we are interested in a model of quantum laser amplifier, which was studied in \cite{Manakov82, MN86}. For these models, ${\mathcal E}(t,x)$ is the complex-valued envelope of an electromagnetic wave with a fixed polarization, ${\mathcal N}(t,x,\lambda)$ and ${\rho}(t,x,\lambda)$ are entries of the density matrix of the atom subsystem
\begin{equation}\label{MB5}
\begin{pmatrix}{\cal N}(t,x,\lambda)&\rho(t,x,\lambda)\\
\ol{\rho(t,x,\lambda)} & -{\cal N}(t,x,\lambda)\end{pmatrix}.
	\end{equation}
Parameter $\lambda$ denotes a deviation of the passage frequency from its mean value.
The weight function $n(\lambda)$ in equation \eqref{MB1} characterizes  the inhomogeneous broadening. 

In recent years, interest in various problems related to the Maxwell-Bloch equations has grown noticeably.
For short reviews on the MB equations and applications of the inverse scattering transform method  to them  see \cite{AKN, AS}, \cite{GZM83}-\cite{GZM85}, \cite{K13}, \cite{Zakh80}.
We note the work of \cite{LM2022} where the authors study the Cauchy problem for the Maxwell-Bloch equations  of light-matter interaction, under assumptions that prevent the generation of solitons. It concerns the aftereffect of the passage of an optical pulse in an active (stable and unstable) medium.

In this paper, we study  the case of the infinitely narrow spectral line, i.e. without the spectral broadening, when $n(\lambda)=\delta(\lambda)$, where $\delta(.)$ is the Dirac $\delta$-function. Then the system \eqref{MB1}--\eqref{MB3} is reduced to the form ($\la=0$)
\begin{equation}\label{MB1a}
\frac{\partial \mathcal E}{\partial t}+\frac{\partial \mathcal E}{\partial x}= \rho,\qquad
\frac{\partial \rho}{\partial t}={\mathcal N\mathcal E},\qquad
\frac{\partial {\mathcal N}}{\partial t}=-\,\frac{1}{2}\left(\ol{\mathcal E}\rho+\mathcal E\ol{\rho}\right)
\end{equation}
where
$\mathcal{E} = \mathcal{E}(t, x),$ $\mathcal{N} = \mathcal{N}(t, x) := \mathcal{N}(t, x, 0),$ $\rho = \rho(t, x) := \rho(t,x,0),$ 
and the initial-boundary value (mixed) problem is defined by the following  conditions:
\begin{equation}\label{IBC}
{\cal E}(0,x)={\cal E}_0(x),\quad \rho(0,x )=\rho_0(x ),\quad {\cal N}(0,x)={\cal N}_0(x), \quad {\cal E}(t,0)={\cal E}_1(t),
\end{equation}
where $x\in[0,l)$ ($l\le\infty$) and $t\in(0, +\infty).$

Note that the functions $\rho(t,x)$, ${\cal N}(t,x)$ are not independent; indeed, equations \eqref{MB2}, \eqref{MB3} imply $\frac{\partial}{\partial t} \left(|\rho(t,x)|^2+{\cal N}(t,x)^2\right)=0,$ 
and hence without loss of generality we can assume
$$
|\rho(t,x)|^2+{\cal N}(t,x)^2\equiv1.
$$
Then $\mathcal{N}(0,x)$ is given by
\begin{equation}\label{N0}
	{\cal N}(0,x)=\mp\sqrt{1-|\rho(0,x)|^2}.
\end{equation}		
The sign ``minus" corresponds to a stable medium (attenuator). The sign ``plus"  corresponds to an unstable medium (for example, a quantum two-level laser amplifier), which is the subject of our study.

An approach to the study of the mixed problem for the Maxwell-Bloch equations, based on the formalism of the matrix Riemann-Hilbert (RH) problem, was proposed in \cite{K13} for the case of an arbitrary spectral broadening and in \cite{MK06}, \cite{KM14}, \cite{FKM17} for the case without spectral broadening.
Furthermore, the full linearization of the mixed problem is established in \cite{K13}.
The corresponding matrix RH problems were formulated in terms of spectral functions defined through given initial and boundary conditions for the MB equations by using the Fokas-Its method of simultaneous spectral analysis of the corresponding AKNS equations \cite{Fok97} - \cite{FI92}, \cite{BFS03} - \cite{BKS11}. 

Our goal is to study the asymptotic behaviour of a solution of the mixed problem for the MB equations \eqref{MB1a}.
More precisely, we study the problem of propagation of an input electromagnetic pulse 
\begin{equation}
\label{input_pulse}
\mathcal{E}(t,0)=\mathcal{E}_{1}(t),\quad t>0,
\end{equation}
through a long two-level laser amplifier under trivial initial conditions, i.e.: 
\begin{equation}\label{trivial}
{\cal E}(0,x)=\rho(0,x )\equiv0 ,\quad {\cal N}(0,x)\equiv 1, \quad  x\geq0
\end{equation}
(existence and uniqueness of such a solution is established in Proposition \ref{prop_ibv} and %Appendix 
\ref{sect_uniqueness}).

Such a problem was earlier treated in \cite{Manakov82} by S.V.~Manakov  and  in \cite{MN86}  together with V.Yu.~Novokshenov, but in a different formulation: they considered the Cauchy problem on the whole $t$-axis with an input pulse equal to zero for negative $t.$ 

We consider this model in a more natural formulation as an initial-boundary (mixed) problem using a modern version of the inverse scattering transform method in the form of a suitable Riemann-Hilbert (RH) problem. The corresponding matrix Riemann-Hilbert problem is formulated on a contour that is the union of the continuous spectra of the Lax operators (generated by AKNS spectral problems) for the Maxwell-Bloch equations and which consists of the real axis and the circle of radius 1/2 centred at the origin of the complex plane. In this case, the jump matrices are exponentially growing  on the circle  and the corresponding phase function has saddle points on the imaginary axis  in the absence of stationary points on the real axis. These features of the RH problem lead to the fact that the asymptotic behaviour of the solution at large times near the light cone is a train of pulses of unboundedly growing amplitude and contracting width. Each pulse has a speed that approaches the speed of light. 

Our results share some qualitative features with the ones obtained by S.V. Manakov \cite{Manakov82} in 1982, but differ from them. This is due to the difference in approaches, which is that we use the inverse scattering transform method in the form of a matrix RH problem, the  Fokas-Its unified method  of simultaneous spectral analysis of the corresponding AKNS equations and the rigorous Deift-Zhou steepest descent method \cite{DIZ93} - \cite{DZ93},  while S.V. Manakov  did not use the true reflection coefficient, but its approximation in the form of the Fourier transform  of the input pulse in the assumption of its smallness. At the same time, he claimed that ``{\it the long-time solution becomes  essentially (and in a certain sense, extremely) nonlinear.}" Therefore
it seems that replacing the reflection coefficient with the Fourier transform of the input pulse is not completely justifiable.
For a comparison between our results and those obtained in \cite{Manakov82} and \cite{MN86}, see Remark~\ref{rem_1.2} (for the region near the light cone), and Remark \ref{rem_1.3} (for the region of a rapidly oscillating self-similar wave).

\subsection*{Formulation of results.}

To formulate the results, let us introduce two functions, $r(k)$ and $a(k)^{-1},$ which are called the reflection and transmission coefficients, respectively, associated with the initial and boundary conditions of the problem \eqref{MB1a}, \eqref{input_pulse}, \eqref{trivial} (they are defined later in Section \ref{sect_scattering_data}), and let $b(k) = r(k)a(k).$
Function $r(k)$ is analytic in $\Im k\geq 0,$ and decays as $k\to\infty.$ 

\medskip
Our main results are summarised in Theorems \ref{thm_really_close_light_cone} and \ref{thm_tail} (see also Figure \ref{Fig_regions}). Theorem \ref{thm_really_close_light_cone} covers the region
\begin{equation}\label{ineq_as_sol}
 x  < t \leq x + \frac{1}{4x}\(m^2 \ln^2 x + C \ln x\cdot\ln\ln x\),
\end{equation}
where $C\in\mathbb{R}$ is an arbitrary number, and where the parameter $m$ characterizes the behaviour of the reflection coefficient $r(.)$ and is specified more precisely in Assumptions \ref{assumption_1+}, \ref{assumption_1} below.
Together with the causality principle (Theorem \ref{thm_caus}) it covers completely the region $0 \leq t \leq x + \frac{1}{4x}\(m \ln x + C_1 \ln\ln x\)^2, \quad C_1\in\mathbb{R}.$

Theorem \ref{thm_tail} covers the region 
$$
(1-\sigma)^{-1} x \leq t \leq \sigma^{-1} x,
\qquad \mbox{or, equivalently, } \qquad \sigma \leq \frac{x}{t} \leq 1-\sigma,
$$
where $\sigma$ is an arbitrary number in the interval $(0,\frac12).$

\medskip
In Theorem \ref{thm_really_close_light_cone}, we need to make one of the following assumptions on the behaviour of the reflection coefficient $r(k)$ in the Zakharov-Shabat spectral problem for the Dirac operator with potential defined by the input pulse $\mathcal{E}_1(t)$. We assume the fulfilment of Assumption \ref{assumption_1+} in parts I, II of Theorem~\ref{thm_really_close_light_cone},
and the slightly weaker Assumption~\ref{assumption_1} in parts III, IV.

\begin{Assumption}\label{assumption_1+}
Let the reflection coefficient $r(.)$ satisfy the following condition: there exist a real number $m\geq2$ and a nonzero complex number $C\in\mathbb{C}\setminus\left\{0\right\}$ such that
\[
r(k) = \frac{C}{k^m} + \ord(k^{-m-1})\mbox{ as } k\to\infty, \mbox{ uniformly in } \Im k\geq0.
\]
\end{Assumption}
\begin{Assumption}\label{assumption_1}
Let the reflection coefficient $r(.)$ satisfy the following condition: there exists a real number $m\geq2$ such that 
\[
r(k)\asymp k^{-m}\quad \mbox{ as } k\to\infty,\quad \mbox{uniformly in }\Im k \geq 0\] (here the symbol $\asymp$ means `of the same order', i.e. there exist two positive constants $0<C_1<C_2$ such that $C_1|k|^{-m} \leq |r(k)| \leq C_2|k|^{-m}$  as $|k|\to\infty,$ uniformly in $\Im k\geq 0$).
\end{Assumption}
\begin{rem}Assumption \ref{assumption_1+} is satisfied for instance in the case of trivial initial data \eqref{trivial} and a smooth and fast decaying for $t\to\infty$ input pulse \eqref{input_pulse} with the following behaviour at $t=0:$ $\mathcal{E}_1(t) = c_1 t^{m-1}(1+\ord(t)), \ t\to+0,$ for some $c_1\neq0.$
Note though that it is not satisfied by functions $\mathcal{E}_1$ from the Schwartz class with support on $[0, +\infty)$, since for them $r(k)$ decays faster than any power of $k$ as $k\to\infty$ (and also $\mathcal{E}_1(t)$ decays faster than any power of $t$ as $t\to0+$).
\end{rem}

\begin{thm}\label{thm_really_close_light_cone}[Near the light cone.]
Let an input pulse $\mathcal{E}_1(t)$ be not identically equal to zero and be integrable with the first moment
\begin{equation}\label{first_moment}
\int_0^{\infty}(1+t)|\mathcal{E}_1(t)|\,dt<\infty.
\end{equation}
Then the solution of the initial-boundary value (ibv) problem \eqref{MB1a}, \eqref{input_pulse}, \eqref{trivial} (which exists and is unique in view of Proposition \ref{prop_ibv} and \ref{sect_uniqueness} below) has the following behaviour:
\begin{enumerate}[I.]
\item
under Assumption \ref{assumption_1+},
in the limit as $k_0\to\infty,$ where 
\begin{equation}\label{k_0}
k_0=k_0(t/x)\equiv\frac12\sqrt{\frac{x}{t-x}},
\end{equation}
uniformly in the domain
\[
\left\{(t,x):\quad x\ <\ t\ \leq\ x + \frac{1}{x}\right\}
\]
we have
\begin{align}\label{E_thm_1}
&\mathcal{E}(t,x) = 4  k_0\, r(i k_0) \, I_{m-1}\(2\sqrt{x(t-x)}\) + \ord(k_0^{-m}),
\\\label{N_thm_1}
&\mathcal{N}(t,x) = 1 - 2|r(ik_0)|^2 \left(I_m\(2\sqrt{x(t-x)}\)\right)^2 + \ord(k_0^{-2m-1}),
\\\label{rho_thm_1}
&\rho(t,x) = 2\, r(i k_0)\, I_m\(2\sqrt{x(t-x)}\) + \ord(k_0^{-m-1}).
\end{align}
Here $I_\nu$ is the modified Bessel function of the first kind of the order $\nu, \ \nu = m-1, m.$
\item 
Let $\varepsilon_1>0$ be a fixed number. 
Denote
\begin{equation}\label{p}
p_1(t,x) = m \ln x - m \ln\sqrt{x(t-x)} - 2\sqrt{x(t-x)}.
\end{equation}
Then under Assumption \ref{assumption_1+}, in the limit as $x\to\infty$, uniformly for $(t,x)$ in the domain
\begin{equation}\label{domain_part_II}
\left\{(t,x):\ x+\frac{1}{x}\ \leq\  t \ \leq \ x + \frac{1}{4x}\(m\ln x - (m+\varepsilon_1)\ln\ln x\)^2
\right\}
\end{equation}
we have $e^{-p_1(t,x)} = \ord\((\ln x)^{-\varepsilon_1}\)$ and 
\[
\mathcal{E}(t,x) = 4 \, k_0\cdot r(i k_0) \cdot I_{m-1}\(2\sqrt{x(t-x)}\) + \ord(e^{-p_1(t,x)} + k_0 e^{-2p_1(t,x)}),
\]
\[
\mathcal{N}(t,x) = 1 - 2|r(ik_0)|^2 \(I_m(2\sqrt{x(t-x)})\)^2 + \ord(e^{-2p_1(t,x)}k_0^{-1} + e^{-3p_1(t,x)}),
\]
\[
\rho(t,x) = 2\,r(i k_0)\, I_m\(2\sqrt{x(t-x)}\) + \ord(e^{-p_1(t,x)}k_0^{-1} + e^{-2p_1(t,x)}).
\]
(Note that the estimate $e^{-p_1(t,x)} = \ord((\ln x)^{-\varepsilon_1})$ is the worst possible in the region \eqref{domain_part_II} estimate, but it might be better.)
\item Let $\varepsilon_2\in(0,\frac12)$ and $K\geq m + \varepsilon_1$ be fixed numbers, where $\varepsilon_1$ is a constant from part II. 
Denote
\begin{equation}\label{p12}
p_2(t,x) = m \ln x - 2\sqrt{x(t-x)} - (m-\frac12)\ln\sqrt{x(t-x)}.
\end{equation}
Then under Assumption \ref{assumption_1}, in the limit as $x\to\infty,$ uniformly for $(t,x)$ in the domain 
\begin{equation}\label{domain_part_III}
\left\{(t,x):\
x + \frac{1}{4x}\(m\ln x - K \ln\ln x\)^2
\leq
 t \leq
  x + \frac{1}{4x}\(\!m\ln x - (m+\varepsilon_2-{1/2})\ln\ln x\)^2\right\}
\end{equation}
we have $e^{-p_2(t,x)} = \ord\((\ln x)^{-\varepsilon_2}\)$  
and
\[
\mathcal{E}(t,x) = \frac{2k_0\cdot r(ik_0)\cdot e^{2\sqrt{x(t-x)}}}
{\sqrt{\pi}\cdot\sqrt[4]{x(t-x)}}
\(1 + \ord\((\ln x)^{-1} + e^{-p_2(t,x)}\)\),
\]
\[
\mathcal{N}(t,x) = 1 - \frac{|r(ik_0)|^2e^{4\sqrt{x(t-x)}}}{2\pi\sqrt{x(t-x)}}
+\ord\(e^{-2p_2(t,x)}(\ln x)^{-1} + e^{-3p_2(t,x)}\),
\]
\[
\rho(t,x) = \frac{r(ik_0)\,e^{2\sqrt{x(t-x)}}}{\sqrt{\pi}\,\sqrt[4]{x(t-x)}}
+ \ord\(e^{-p_2(t,x)}(\ln x)^{-1} + e^{-2p_2(t,x)}\).
\]

\item let Assumption \ref{assumption_1} be satisfied, and let $n=0,1,2,3, \ldots$ be an integer. Then in the limit as $x\to+\infty$ (or, equivalently, $t\to+\infty$) uniformly for $(t, x)$ in the domain
\begin{equation}\label{domain_part_IV}
\left\{(t,x):\
x + \frac{1}{4x}\(m\ln x+(n-m)\ln\ln x\)^2 \leq \, t\, \leq \, x + \frac{1}{4x}\(m\ln x+(n+1-m)\ln\ln x\)^2\right\},
\end{equation}
the solution of the problem \eqref{MB1a}, \eqref{input_pulse}, \eqref{trivial} has the following asymptotics:
\begin{align}\label{E_res_as_sol}
&\mathcal{E}(t, x)=2\sqrt{ \frac{x}{t-x}}\(\frac{(-1)^{n} \cdot e^{\ii\arg r(i k_0)}}
{\cosh\Theta_n(t, x)}+
\ord \( 
\frac{1}{\sqrt{\ln x}}
\)
\),
\\\label{N_res_as_sol}
&\mathcal{N}(t, x) = 1 - \frac{2}{\cosh^2\Theta_n(t, x)} + \ord\(\frac{1}{\sqrt{\ln x}}\),
\\\label{rho_res_as_sol}
&\rho(t, x) = \frac{2\,(-1)^{n-1}\ e^{i\arg r(i k_0)}\ \tanh\Theta_n(t, x)}{\cosh\Theta_n(t, x)} + \ord\(\frac{1}{\sqrt{\ln x}}\),
\end{align}
where
\[\Theta_n(t, x) = 2\sqrt{x(t-x)} - \(n+\frac12\)\ln\sqrt{x(t-x)} + \chi_n(k_0(t/x)),\]
and $k_0$ is defined in \eqref{k_0}, and
\[\chi_n(k_0)=
\ln\dfrac{|r(i k_0)|\cdot n!}{\sqrt{\pi}\cdot 2^{3n+2}}\ .
\]
\end{enumerate}
\end{thm}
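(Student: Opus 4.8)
The plan is to apply the Deift--Zhou nonlinear steepest descent method to the matrix Riemann--Hilbert problem for $M(k;t,x)$ constructed in Section~\ref{sect_scattering_data}, carrying the same global deformation through all four regimes and adapting only the local analysis near the saddle, so that the error estimates degrade continuously as $(t,x)$ leaves the light cone. The reconstruction formulas recover $\mathcal{E}$ from the large-$k$ expansion of $M$ and $\rho,\mathcal{N}$ from the behaviour of $M$ at the points of the circle $|k|=\frac12$ that correspond to $\lambda=0$. First I would isolate the phase $\theta(k)=\frac{x}{4k}-(t-x)k$ in the jump exponentials and solve $\theta'(k)=0$, obtaining saddle points $k=\pm ik_0$ on the imaginary axis with $k_0=\frac12\sqrt{x/(t-x)}$; the decisive computation is $e^{2\ii\theta(ik_0)}=e^{2\sqrt{x(t-x)}}$ (growing) against $e^{2\ii\theta(-ik_0)}=e^{-2\sqrt{x(t-x)}}$ (decaying), so the instability of the amplifier is carried by the upper saddle. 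Since $\theta$ has no stationary point on $\R$, the real-axis jump deforms onto steepest-descent rays where it is exponentially small, leaving only the jump on $|k|=\frac12$ and the two saddle contributions.

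For parts~I and II the reflection coefficient at the saddle, $r(ik_0)=C(ik_0)^{-m}+\ord(k_0^{-m-1})$ under Assumption~\ref{assumption_1+}, is small, so the deformed problem is a small-norm perturbation of the identity and can be linearized. The structural fact driving everything is that under $u=-\ii k/k_0$ the phase becomes $2\ii\theta=\frac{z}{2}(u+u^{-1})$ with $z=2\sqrt{x(t-x)}$, exactly the exponent in the contour-integral representation of the modified Bessel functions; consequently the Cauchy integral producing the first correction to $M$ is a Bessel integral, and reading off the relevant entries yields $I_{m-1}$ (for $\mathcal{E}$) and $I_m$ (for $\rho,\mathcal{N}$) with argument $2\sqrt{x(t-x)}$ and coefficient $r(ik_0)$, giving \eqref{E_thm_1}--\eqref{rho_thm_1}. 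Part~II is the same representation pushed to the boundary of the linearizable region, with the error tracked through $e^{-p_1}$. Part~III is then asymptotic bookkeeping: inserting the large-argument expansion $I_\nu(z)\sim e^{z}/\sqrt{2\pi z}$ into the part~I formulas produces the explicit exponential profiles, and one must retain the quadratic term $|r(ik_0)|^2e^{4\sqrt{x(t-x)}}$ in $\mathcal{N}$, which is precisely the quantity that ceases to be small as one approaches the soliton region \eqref{domain_part_IV}.

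For part~IV the product $|r(ik_0)|^2e^{4\sqrt{x(t-x)}}$ is of order one (the transition sits at $2\sqrt{x(t-x)}\approx m\ln x$, matching the boundaries of \eqref{domain_part_IV}), so linearization fails and one must solve a genuinely nonlinear local parametrix near $ik_0$. I would show that after rescaling this local model is solvable in closed form and at leading order produces a single soliton of $\mathrm{sech}$ shape; the integer $n$ labels the successive windows, of logarithmic width in $\ln\ln x$, swept out as the amplified exponential crosses the thresholds, while the phase $\Theta_n$ assembles the growing exponent $2\sqrt{x(t-x)}$, the algebraic steepest-descent prefactor $-(n+\tfrac12)\ln\sqrt{x(t-x)}$, and the constant $\chi_n$ carrying $|r(ik_0)|$ together with the combinatorial factor $n!/2^{3n+2}$. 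Matching this inner solution to the outer small-norm solution and applying the reconstruction formulas gives \eqref{E_res_as_sol}--\eqref{rho_res_as_sol}, with amplitude prefactor $2\sqrt{x/(t-x)}=4k_0$ consistent with part~I.

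The principal obstacle is twofold. First, the jump on $|k|=\frac12$ grows exponentially, violating the decay hypotheses of standard steepest descent; it must be tamed by a suitable $g$-function or triangular factorization that absorbs the growth before any small-norm estimate applies, and this conjugation has to remain compatible with the reconstruction of $\rho,\mathcal{N}$ on that very circle. Second, the nonlinear parametrix of part~IV must be built uniformly in $n$ with the stated error $\ord((\ln x)^{-1/2})$, which requires identifying the correct discrete set of resonances responsible for these solitons --- the mechanism driven by zeros of $r$ rather than by poles of the transmission coefficient --- and then checking that the four domains tile the region \eqref{ineq_as_sol} with matching asymptotics across their interfaces.
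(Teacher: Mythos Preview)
Your framework for Parts I and II is essentially correct and matches the paper: after moving the circle from $|k|=\tfrac12$ to $|k|=k_0$ and scaling $k=k_0z$, the jump on the unit circle has the form $\begin{pmatrix}1 & -r(k_0z)e^{-2i\theta}\\0&1\end{pmatrix}$, which is small when $\sqrt{x(t-x)}$ is bounded or moderately growing; the small-norm Cauchy integral then yields the Bessel contour integral, exactly as you describe. No $g$-function is needed here --- the growth you worry about on the original circle $|k|=\tfrac12$ is a red herring, since after the deformation to $|k|=k_0$ the phase on the circle is purely real and the only issue is the size of $r(k_0z)e^{2\sqrt{x(t-x)}}$.

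There are, however, two genuine gaps. First, Part III is \emph{not} obtained by inserting the large-argument Bessel asymptotics into the Part I formula: the domain \eqref{domain_part_III} is precisely where the small-norm estimate from Parts I--II fails, so the representation you want to expand is no longer valid there. The paper instead builds a local parametrix on neighbourhoods of the arcs $C_u,C_d$ (not just of the points $\pm i$) that satisfies the jump exactly; the $n=0$ case of this parametrix, which involves only the scalar Cauchy transform of $e^{-\zeta^2}$, is what handles Part III.

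Second, and more seriously, your description of Part IV is missing the key mechanism. After factoring the jump on $C_u$ as $r(k_0z)^{\sigma_3/2}e^{\sqrt{x(t-x)}\sigma_3}\begin{pmatrix}1 & -e^{-\zeta_u^2}\\0&1\end{pmatrix}e^{-\sqrt{x(t-x)}\sigma_3}r(k_0z)^{-\sigma_3/2}$ with the conformal variable $\zeta_u=2\sqrt[4]{x(t-x)}\sin\tfrac{\alpha_u}{2}$, the inner model problem is the one solved by \emph{Hermite polynomials} of degree $n$: the matrix $L_n(\zeta)$ built from $\pi_n$ and its Cauchy transform has jump $\begin{pmatrix}1 & -e^{-\zeta^2}\\0&1\end{pmatrix}$ and asymptotics $(I+\ord(\zeta^{-1}))\zeta^{n\sigma_3}$. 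The integer $n$ is chosen so that the conjugation factors $r(k_0z)e^{2\sqrt{x(t-x)}}(x(t-x))^{-n/2}$ are $\ord(1)$ on $\partial\Omega_u$; this is exactly the balance $2\sqrt{x(t-x)}\approx m\ln x+(n-m)\ln\ln x$ defining the $n$-th window \eqref{domain_part_IV}. Even then the naive matching error is only $\ord((x(t-x))^{-1/4})$ in the diagonal but $\ord(1)$ off-diagonal; one must refine $L_n$ by a triangular factor $\Delta^{(1)}(\zeta)$ (or $\widehat\Delta$ in the other half of the window) that improves the off-diagonal decay at the cost of introducing a simple pole at $\zeta=0$, then cancel that pole with an explicit rational prefactor $N(z)$. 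The $\mathrm{sech}$ profile, the phase $\Theta_n$, and the combinatorial constant $n!/2^{3n+2}$ all come from solving the $2\times2$ linear system for the coefficients of $N$. None of this is a ``closed-form nonlinear soliton parametrix'' in the usual sense.

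Finally, the remark about ``zeros of $r$ rather than poles of the transmission coefficient'' belongs to Theorem~\ref{thm_tail} (the tail region), not to Part IV: the pulse train near the light cone is driven purely by the saddle-point growth and the Hermite degree selection, not by any discrete spectral data.
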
 
\begin{rem}
Formula \eqref{E_res_as_sol} shows that the output field $\mathcal{E}(t, x)$ is a sequence of pulses of unboundedly growing amplitude and contracting width, and formula \eqref{N_res_as_sol} shows that $\mathcal{N}(t, x)$ is close to $1$ away from the peaks and is close to $-1$ near the peaks.
\end{rem}
\begin{rem}
Theorem \ref{thm_really_close_light_cone}, part I, covers both the situations when $x$ is bounded and when $x$ might grow.
\end{rem}

\begin{rem}
Note that the results of parts II, III of Theorem \ref{thm_really_close_light_cone} are consistent, since $I_{\nu}(\xi) = \frac{e^{\xi}}{\sqrt{2\pi\xi}}(1+\ord(\xi^{-1}))$ as $\xi\to\infty,$ $|\arg\xi|<\frac{\pi}{2}$ (\cite[formula (9.7.1)]{abramowitz}).
\end{rem}

\begin{rem}
In their domain of overlap, part III and part IV with $n=0$ of Theorem \ref{thm_really_close_light_cone} are consistent.
\end{rem}

\begin{figure}\begin{center}
\begin{tikzpicture}
\fill[fill=blue!10](6,0)--(0,0)--(5,5)--(6, 5);
\draw[->] (-0,0) -- (6.1,0);
\draw[->] (0,0) -- (0,5);
\node at (6.4, 0.1) {$x$};
\node at (-0.3, 5) {$t$};
\draw[dashed](0,0) -- (5.5, 5.5);
\node at (6, 5.3) {$x=t$};
\node at (4, 1) {\color{blue}Causality region};
\draw[dashed, fill=green!10] (0.5, 5) -- (0, 0)-- (3.5,5);
\node at (1.2, 2.9) {\rotatebox{65}{Solitons due to $b(k)=0$}};
\draw[dashed, thin, fill=orange!50] (5, 5) to (0,0) [out=50, in=-135] to (4.2, 5);
\node at (2.3, 2) {\rotatebox{45}{\color{orange}Peaks of growing amplitude}};
\end{tikzpicture}\end{center}
\caption{Different regions of $x>0,t>0$ quarter plane. The causality region (in purple) is described in Theorem \ref{thm_caus}, the tail solitonic sector of the light cone (in green) is described in Theorem \ref{thm_tail} and the region near the boundary of the light cone (in orange) is described in Theorem \ref{thm_really_close_light_cone}.}
\label{Fig_regions}
\end{figure}
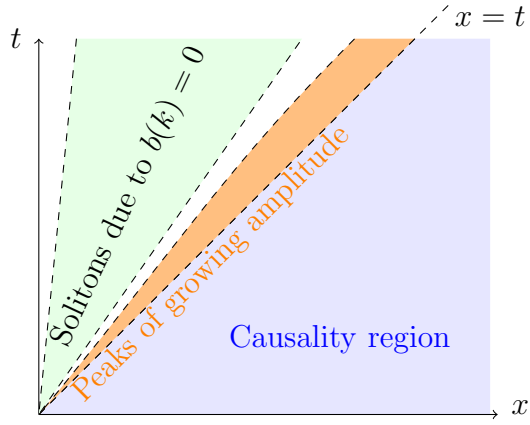

\begin{rem}\label{rem_1.2}

Formula \eqref{E_res_as_sol} can be rewritten in the form
\[
\mathcal{E}(t, x) =4k_0\(  \frac{(-1)^n \,e^{i\,\arg r(i k_0)}}{\cosh\left[2k_0 t - (2k_0 - \frac{1}{2k_0})x - x_0(t, x)\right]} + \mathcal{O}\(\frac{1}{\sqrt[4]{x(t-x)}}\) \),
\]
where 
$k_0 = k_0(\frac{t}{x})$ and $x_0(t, x) = (n+\frac12)\ln\sqrt{x(t-x)} - \chi(k_0(\frac{t}{x})).$
Note that it takes the form of a one soliton solution for the MB equation in an unstable medium,
\begin{equation*}
\begin{split}
\mathcal{E}(t, x) = \frac{-4k_2\,e^{-i\varphi_0}}
{\cosh\left[2k_2 t - (2k_2 - \frac{1}{2k_2})x - x_0
\right]},
\quad
\mathcal{N}(t, x) = 1 - \frac{2} {\cosh^2\left[2k_2 t - (2k_2 - \frac{1}{2k_2})x-x_0 \right]}.
\end{split}
\end{equation*}
We see that the above soliton has velocity $\(1-\frac{1}{4k_2^2}\)^{-1} >1,$ i.e. bigger than the velocity of the light, hence it does not have physical meaning.
It is remarkable that despite this fact, with the modulated parameters $x_0, \varphi_0, k_2,$ this soliton has a velocity smaller than the speed of light and represents the asymptotics of a physically meaningful problem.
Indeed, taking into account that $2\sqrt{x(t-x)}=4k_0(t-x)$, $(n+\frac12)\ln \sqrt{x(t-x)}= (n+\frac12)\ln(\frac{m}{2}\ln x) [1+o(1)]$, $\chi_n(k_0)=-m\ln k_0[1+o(1)]= (m\ln(\frac{m}{2}\ln x) - m \ln x)[1+o(1)]$  as $x\to\infty$, it is easy to verify that \eqref{E_res_as_sol} can be rewritten in the form:
\begin{equation}\label{Man}
\mathcal{E}(t, x) =4k_0\(  \dfrac{(-1)^n \,e^{\ii\,\arg\,r(\ii k_0)}} {\cosh\left[4k_0(t - x) -m\ln x -(n-m+1/2)\ln(\frac{m}{2}\ln x) \(1+o(1)\)\right]} +\mathcal{O}\(\frac{1}{\sqrt{\ln x}}\) \).	
\end{equation}
Formula \eqref{Man} agrees qualitatively with the one obtained by Manakov (\cite[formulae (33), (36)]{Manakov82}), but it does not coincide precisely in terms of the amplitude and width of the pulse, which in \cite{Manakov82} additionally depend on the pulse number, while in our case such dependence is absent.

Moreover, in Manakov's case the field $\mathcal{E}(t, x)$ is real-valued, while in our case the field $\mathcal{E}(t, x)$ is complex-valued, as it should be for the envelope of an electromagnetic wave.

\end{rem}

\medskip
To formulate our second main result, we denote by $Z_b$ the set of zeros of $b(.)$ (the $b(k)$ is introduced in Section \ref{sect_scattering_data} below) in the half-plane $\Im k\geq 0.$ For simplicity, we make the following assumption on their mutual location:

\begin{Assumption}\label{assumption_2}
We assume that all zeros of $b(.)$ in $\Im k\geq0$ are simple, do not lie on the real line, and all their absolute values are pairwise distinct.
\end{Assumption}

\noindent We thus can parametrize $Z_b = \left\{k_{j}\right\}_{j=1}^{N},$ where $N\in\mathbb{N}\cup\left\{\infty\right\},$ $|k_{j}| < |k_{j+1}|, $ $\Im k_{j} > 0$ for all $j.$

\begin{thm}\label{thm_tail}
Let an initial pulse $\mathcal{E}_1(.)$ be a compactly supported \color{black} locally integrable function, \color{black} not identically equal to zero, let Assumption \ref{assumption_2} be satisfied and let $b(k)=\mathcal{O}(k^{-2})$ as $k\to\infty$. 
Let $\sigma\in(0, \frac12)$ be any fixed number, and let $\varepsilon>0$ be so small that for any $t, x$ satisfying $\sigma \leq \frac{x}{t} \leq 1-\sigma$ there might be at most one $k_j\in Z_b$ such that
\[\left|\frac{x}{t} - \frac{4|k_j|^2}{1 + 4|k_j|^2}\right|<\varepsilon.\]
Then in the limit 
$$\tau\equiv t-x \to \infty,$$ uniformly in $\sigma\leq \frac{x}{t} \leq 1 -\sigma,$ 
the asymptotics of the solution of the problem \eqref{MB1a}, \eqref{input_pulse}, \eqref{trivial} (which exists and is unique in view of Proposition \ref{prop_ibv} and \ref{sect_uniqueness} below) take the following form:
\begin{enumerate}
\item [I.] Away from solitons: let $\left| \frac{x}{t} - \frac{4|k_j|^2}{1 + 4|k_j|^2} \right| \geq \varepsilon$ for all $k_j\in Z_b.$ Then

\begin{align*}
\mathcal{E}(t, x) &= \frac{2\,k_0^{1/2}}{\tau^{1/2}}\(\sqrt{\nu_l}\, e^{i\,\omega_l(t, x)} + \sqrt{\nu_r}\, e^{i\,\omega_r(t, x)}\) + \mathcal{O}\(\tau^{-1}\),
\qquad
\mathcal{N}(t, x) = -1 + \mathcal{O}\(\tau^{-1}\),
\end{align*}
and
\begin{align*}
\rho(t, x) &= \frac{1}{\tau^{1/2}k_0^{1/2}}\(\sqrt{\nu_l}\,e^{i(\omega_l(t, x)+\frac{\pi}{2})} - \sqrt{\nu_r}\, e^{i(\omega_r(t, x)+\frac{\pi}{2})}\) + \mathcal{O}\(\tau^{-1}\),
\end{align*}
where $k_0 = k_0(t/x)$ is defined in \eqref{k_0}, 
$$ 
\nu_l = \nu_l(k_0) = \frac{1}{2\pi}\ln\(1 + \frac{1}{|r(-k_0)|^2}\),	\quad
\nu_r = \nu_r(k_0) = \frac{1}{2\pi}\ln\(1 + \frac{1}{|r(k_0)|^2}\),
$$
and (here, subscripts $l, r$ stand for `left', `right')
\begin{multline}\label{omega_l}
\omega_l(t, x) = 4\tau k_0 - \nu_l \ln(16\tau k_0) - \frac{1}{\pi}\int_{-k_0}^{k_0}\frac{\ln\frac{1 + |r(s)|^{-2}}{1 + |r(-k_0)|^{-2}}\, ds}{s+k_0} + \arg\(a(-k_0)b(-k_0)\) + \arg\Gamma(i\nu_l)
\\
+2\sum\limits_{|k_j| < k_0}\arg\frac{k_0 + \ol{k_j}}{k_0 + k_j}
- \frac{\pi}{4},
\end{multline}
\begin{multline}\label{omega_r}
\omega_r(t, x) = - 4 \tau k_0 + \nu_r \ln(16\tau k_0) - \frac{1}{\pi}\int_{-k_0}^{k_0}\frac{\ln\frac{1 + |r(s)|^{-2}}{1 + |r(k_0)|^{-2}}\, ds}{s-k_0} + \arg (a(k_0) b(k_0)) - \arg\Gamma(i\nu_r) 
\\
+2\sum\limits_{|k_j|<k_0}\arg\frac{k_0 - \ol{k_j}}{k_0-k_j}
+ \frac{\pi}{4}.
\end{multline}

\item [II.] Near the solitons: let $\left|\frac{x}{t} - \frac{4|k_j|^2}{1 + 4|k_j|^2}\right| < \varepsilon,$ for some $j.$
Then
\begin{multline*}
\mathcal{E}(t, x)
=
4B_j(t, x)
+
\frac{2\sqrt{k_0\,\nu_l}}{\tau^{1/2}}
\(
\(1 - \frac{i\,A_j(t, x)}{k_0 + k_j}\)^2 \cdot e^{i\,\omega_l(t, x)}
+
\frac{B_j(t, x)^2\, e^{-i\,\omega_l(t, x)}}{(k_0 + \ol{k_j})^2}
\)
\\
+
\frac{2\,\sqrt{k_0\,\nu_r}}{\tau^{1/2}}
\(
\(1 + \frac{i A_j(t, x)}{k_0 - k_j}\)^2 \cdot e^{i\,\omega_r(t, x)}
+
\frac{B_j(t, x)^2\,e^{-i\,\omega_r(t, x)}}{(k_0 - \ol{k_j})^2}
\)
+
\mathcal{O}(\tau^{-1}),
\end{multline*}
and
\begin{align*}
&\mathcal{N}(t, x) = -P(t, x) + Q(t, x)\ol{Y(t, x)} + \ol{Q(t, x)}Y(t, x) + \mathcal{O}(\tau^{-1}),
\\
&\rho(t, x) = Q(t, x) + 2Y(t, x)P(t, x) + 2X(t, x) Q(t, x) + \mathcal{O}(\tau^{-1}),
\end{align*}
where
\[
P(t, x) = 1 - \frac{2|B_j(t, x)|^2}{|k_j|^2},
\qquad
Q(t, x) = \frac{-2i B_j(t, x)}{\ol{k_j}}\(1 - \frac{i A_j(t, x)}{k_j}\),
\]
\begin{align*}
X(t, x) &= 
\frac{\sqrt{\nu_l}}{2\sqrt{k_0\,\tau}}
\(
\(1 + \frac{i\,A_j(t, x)}{k_0 + \ol{k_j}}\)\frac{B_j(t, x)\,e^{-i\,\omega_l(t, x)}}{k_0 + \ol{k_j}}
-
\(1 - \frac{i\,A_j(t, x)}{k_0 + k_j}\)\frac{\ol{B_j(t, x)}\,e^{i\,\omega_l(t, x)}}{k_0+k_j}
\)
\\
&+ 
\frac{\sqrt{\nu_r}}{2\sqrt{k_0\,\tau}}
\(
\(1 - \frac{i\,A_j(t, x)}{k_0 - \ol{k_j}}\)\frac{B_j(t, x)\,e^{-i\,\omega_r(t, x)}}{k_0 - \ol{k_j}}
-
\(1 + \frac{i\,A_j(t, x)}{k_0 - k_j}\)\frac{\ol{B_j(t, x)}\,e^{i\,\omega_r(t, x)}}{k_0-k_j}\right)
\end{align*}
and
\begin{align*}
Y(t, x) = \frac{i\,\sqrt{\nu_l}}{2\,\sqrt{k_0\,\tau}}
\(
e^{i\,\omega_l(t, x)}\(1 - \frac{i\,A_j(t, x)}{k_0 + k_j}\)^2
+
\frac{B_j(t, x)^2\,e^{-i\,\omega_l(t, x)}}{(k_0 + \ol{k_j})^2}
\)
\\
-\frac{i\,\sqrt{\nu_r}}{2\,\sqrt{k_0\,\tau}}
\(
e^{i\,\omega_r(t, x)}\(1 + \frac{i\,A_j(t, x)}{k_0 - k_j}\)^2
+
\frac{B_j(t, x)^2\,e^{-i\,\omega_r(t, x)}}{(k_0 - \ol{k_j})^2}
\).
\end{align*}
Here 
\[
A_j = A_j(t, x) = \frac{|\w_j|^2\cdot 2\Im k_j}{1 + |\w_j|^2},
\quad 
B_j = B_j(t, x) = \frac{-\ol{\w_j}\cdot 2\Im k_j}{1 + |\w_j|^2},
\]
and $\w_j = |\w_j|e^{i\arg \w_j},$ where (below, $\dot b(k_j)$ is the derivative of $b(.)$ at the point $k_j$)
\begin{multline*}
|\w_j| = \frac{1}
{2\Im k_j\cdot |a(k_j)\dot{b}(k_j)|}
\exp\left\{-2\Im k_j\(t-x-\frac{x}{4\left[(\Re k_j)^2 + (\Im k_j)^2\right]}\)\right\}
\cdot\\\cdot
\exp\left[
\frac{-\Im k_j}{\pi}\int_{-k_0}^{k_0}\frac{\ln\(1 + |r(s)|^{-2}\)\, ds}
{(s-\Re k_j)^2 + (\Im k_j)^2}\right]
\prod\limits_{p: |k_p|<|k_j|}\left|\frac{k_j - k_p}{k_j - \ol{k_p}}\right|^2\,,
\end{multline*}
and
\begin{multline*}
\arg \w_j = -\arg\(a(k_j)\dot{b}(k_j)\) + 2\Re k_j\cdot\(t-x+\frac{x}{4[(\Re k_j)^2 + (\Im k_j)^2]}\)
+
\\
+\frac{1}{\pi}\int_{-k_0}^{k_0}
\frac{(s - \Re k_j)\ln\(1 + |r(s)|^{-2}\)\, ds}{(s-\Re k_j)^2 + (\Im k_j)^2}
+
2\sum\limits_{p: |k_p|<|k_j|}\arg\(\frac{k_j - k_p}{k_j - \ol{k_p}}\).
\end{multline*}
\end{enumerate}
\end{thm}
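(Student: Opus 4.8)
The plan is to run the Deift--Zhou nonlinear steepest-descent method on the matrix Riemann--Hilbert problem for $M(k)=M(k;t,x)$ associated with the scattering data (Section~\ref{sect_scattering_data}), whose oscillatory factors are governed by the phase $\phi(k)=2\(k\tau+\frac{x}{4k}\)$. First I would locate the relevant critical data: $\phi'(k)=2\(\tau-\frac{x}{4k^2}\)$ vanishes exactly at the real saddles $\pm k_0=\pm\frac12\sqrt{x/\tau}$, while for a zero $k_j$ of $b$ one computes $\Im\phi(k_j)=2\Im k_j\(\tau-\frac{x}{4|k_j|^2}\)$ and $\Re\phi(k_j)=2\Re k_j\(\tau+\frac{x}{4|k_j|^2}\)$. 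Hence the activation curve $\frac{x}{t}=\frac{4|k_j|^2}{1+4|k_j|^2}$ in the statement is precisely the locus $|k_j|=k_0$ where $\Im\phi(k_j)$ changes sign, so only the (at most one, by hypothesis) zero with $\bigl||k_j|-k_0\bigr|<\varepsilon$ contributes at leading order and all others are exponentially suppressed. I would also fix the reconstruction formulas at the outset: $\mathcal E$ from the $k^{-1}$ coefficient of $M$ at $k=\infty$, and $\rho,\mathcal N$ from the value of $M$ at the distinguished point corresponding to $\lambda=0$ (here $k=0$); the factors $|k_j|^{-2}$ in $P,Q$ are exactly the squared moduli of the soliton factor $B_j/(k-k_j)$ evaluated there, which is a good consistency check.

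Next I would regularize the oscillation by the scalar function
\[
\delta(k)=\exp\left\{\frac{1}{2\pi i}\int_{-k_0}^{k_0}\frac{\ln\(1+|r(s)|^{-2}\)}{s-k}\,ds\right\},
\]
and conjugate $M$ by $\delta(k)^{\sigma_3}$. This single step is the source of essentially every non-elementary ingredient of $\omega_l,\omega_r$: the local exponents $\nu_l,\nu_r$ read off the jump of $\delta$ at the endpoints $\mp k_0$, the principal-value Cauchy integrals against $\ln\(1+|r(s)|^{-2}\)$, and, once the discrete data are folded in, the Blaschke-type sums $\sum_{|k_j|<k_0}\arg\frac{k_0\mp\ol{k_j}}{k_0\mp k_j}$. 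I would then factorize the real-line jump into triangular factors and open lenses at $\pm k_0$, so that off the saddles the deformed jumps are exponentially close to $I$. In parallel one must dispose of the circle $|k|=\frac12$: there $\frac{1}{4k}=\ol k$, so $\phi=2\(k\tau+x\ol k\)$ and $\Im\phi=(t-2x)\sin\theta$ on $k=\frac12 e^{i\theta}$, which (after the lenses are oriented) makes the circle jump exponentially negligible except in a neighbourhood of $x/t=\frac12$ that must be treated separately. The genuinely novel bookkeeping is that the discrete spectrum enters through the zeros of $b$ rather than the poles of $a^{-1}$: these zeros become residue conditions of the transformed problem, and I would check that the conjugation by $\delta$ together with the Blaschke products turns the norming data into exactly the $\w_j$ displayed in the statement.

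With the model localized, I would build the two local parametrices at $\pm k_0$ from parabolic-cylinder functions; the saddles being simple with $\phi''(k_0)=4\tau/k_0$, the standard computation produces the $\nu_{l,r}\ln(16\tau k_0)$ terms, the $\arg\Gamma(i\nu_{l,r})$ phases and the $\mp\frac{\pi}{4}$ constants, and fixes the $\tau^{-1/2}$ decay rate. For Part~I the outer parametrix is trivial, so summing the two saddle contributions in the $k^{-1}$ coefficient immediately gives the two-wave formula for $\mathcal E$ with amplitudes $\sqrt{\nu_l},\sqrt{\nu_r}$ and phases $\omega_l,\omega_r$, and then $\mathcal N=-1+\ord(\tau^{-1})$, $\rho=\ord(\tau^{-1/2})$ follow (the order in $\mathcal N$ being forced by $|\rho|^2+\mathcal N^2\equiv1$). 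For Part~II the outer parametrix instead carries a simple pole at $k_j$ and its conjugate $\ol{k_j}$; solving this rational one-soliton model yields the building blocks $A_j,B_j$, hence the dressing term $4B_j$ in $\mathcal E$ and the quantities $P,Q,X,Y$ in $\rho,\mathcal N$ after evaluation at $k=0$. Finally I would glue the outer solution to the two local parametrices, pose the small-norm error problem $R=I+\ord(\tau^{-1})$, and read off the three fields.

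The main obstacle is the regime of Part~II, where a zero $k_j$ of $b$ sits on the circle $|k|=k_0$ at the same time that the real stationary points $\pm k_0$ are active. One must then run the one-soliton outer model and the two parabolic-cylinder parametrices inside a single uniform scheme and show that the $\delta$-function, the Cauchy integrals and the Blaschke products match continuously through the transition, so that the soliton data $\w_j$ and the background phases $\omega_{l,r}$ interlock into the stated formulas with a uniform $\ord(\tau^{-1})$ remainder. The subsidiary difficulties are the behaviour of the circle near $x/t=\frac12$, where the saddle $k_0=\frac12$ collides with the contour, and the translation of the $k$-plane asymptotics of $M$ into the value at $k=0$ needed to reconstruct $\rho$ and $\mathcal N$; both are routine once the dominant balance above is in place.
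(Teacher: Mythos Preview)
Your outline has the right endgame---parabolic-cylinder parametrices at real saddles $\pm k_0$, a one-soliton global model in Part~II, small-norm error---but you have skipped the one genuinely new step, and without it the argument does not start. The oscillatory factors in the basic RH problem are $e^{\pm 2i\theta}$ with $\theta(k)=\tau k-\dfrac{x}{4k}$ (minus sign), whose stationary points are $\pm i k_0$ on the \emph{imaginary} axis; on the circle $|k|=k_0$ one has $-2i\theta=4\tau k_0\sin\alpha$ for $k=k_0 e^{i\alpha}$, so the circle jump is exponentially \emph{growing} in $\tau$ and cannot be tamed by a scalar conjugation and lens opening alone. Your $\phi(k)=2\bigl(k\tau+\dfrac{x}{4k}\bigr)$ is not the original phase but the paper's $2h(k)$, and your saddle and $\Im\phi(k_j)$ computations are correct only for the \emph{transformed} problem. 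The missing device is a $g$-function step: after first moving the circle from radius $\tfrac12$ to radius $k_0$, one conjugates by $e^{i(g-\theta)\sigma_3}$ with $g(k)=h(k)$ for $|k|>k_0$ and $g(k)=-h(k)$ for $|k|<k_0$, so that $g_++g_-=0$ on $|k|=k_0$. This is what converts the growing circle jump into an oscillatory one and produces the real saddles $\pm k_0$ of $h$ that you are tacitly using. Your side remark that the circle needs special care near $x/t=\tfrac12$ is an artifact of computing with $h$ on the wrong circle; after Step~1 and the $g$-function the issue does not arise.

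Two further structural points hinge on that missing step. First, the compact-support hypothesis on $\mathcal E_1$ is used in a specific place you omit: after the $g$-function and lens opening, jumps on contours $L_5,L_6$ running to $\infty$ still carry $a(k)b^*(k)e^{2ih(k)}$, and the bound $b^*(k)=\ord(k^{-1}e^{T\Im k})$ from compact support is exactly what lets one push $L_5,L_6$ to infinity once $2\tau>T$. Second, the fact that only zeros of $b$ (and not of $a$) survive as discrete data is not bookkeeping but a consequence of the particular chain of transformations: the $\delta F^{-1}$ conjugation together with the column swap $\left(\begin{smallmatrix}0&1\\-1&0\end{smallmatrix}\right)$ inside $|k|=k_0$ yields a matrix $M^{(5)}$ whose columns, rewritten in Jost solutions, involve $Y$ divided by $b$ (or $b^*$) and $Z$ multiplied by $H=Gb$, with $a$ nowhere appearing; this is the mechanism behind the ``new phenomenon'' in the statement, and it does not fall out of a generic $\delta$-conjugation. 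Once the $g$-function and these two steps are in place, your description of the scalar function $G$ (what you call $\delta$), the Blaschke product producing $\mathrm w_j$, the parabolic-cylinder local models, and the reconstruction at $k=\infty$ and $k=0$ is essentially what the paper does.
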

\begin{rem}
Note that in previous studies (cf. \cite{GZM85}) it was believed that in the unstable medium, zeros of $a(.)$ generate solitons whose speed is higher than the speed of light, and thus are physically impossible. However, Theorem \ref{thm_tail} states that zeros of $a(.)$ do not contribute in any way to the asymptotics. On the contrary, it is zeros of $b(.)$ that generate solitons of the problem, and these solitons are physically relevant, i.e. they have speeds less than the speed of light.
\end{rem}
\begin{rem}\label{rem_1.3}
Comparing the formulae of the paper \cite{MN86} and Theorem \ref{thm_tail}, 
we see that they agree qualitatively, but do not agree quantitatively. For instance, in the unnumbered formula after formula (3.14) in \cite{MN86}, the amplitude of $\mathcal{E}$ is proportional to $x^{1/4}(t-x)^{-3/4},$
while in our formula it is proportional to $\ln(\frac{t}{x} - 1)x^{1/4}(t-x)^{-3/4}.$

Besides, our approach shows the presence of a new phenomenon in the theory of solitons, namely, the boundary condition (input pulse) of a mixed problem under trivial initial conditions can generate solitons due to zeros of the reflection coefficient, while the poles of the transmission coefficient do not contribute to the asymptotics of the solution. 
\end{rem}

\begin{rem}\label{rem_matching_initial_boundary_values}
Note that the condition $m\geq2$ from Assumptions \ref{assumption_1+}, \ref{assumption_1} is used in the proof of Theorem \ref{thm_really_close_light_cone} to ensure that the left-hand side in \eqref{estimates_J} belongs to the class $L_1.$

Note also that if an initial pulse $\mathcal{E}_1(t)$ has a limit $\mathcal{E}_1(0)$ as $t\to+0$, and Assumption \ref{assumption_1+} or \ref{assumption_1} with $m>1$ is satisfied, then $\mathcal{E}_1(0)=0$ and thus the initial and boundary conditions for $\mathcal{E}(t,x)$ match at $x=t=0.$ 
Indeed, 
if $\mathcal{E}_1(0)\neq0,$ then reflection coefficient $r(k)$ vanishes not faster than $1/k$ as $k\to\infty$, as follows from the integral representations for $a(k)$ and $b(k)$ (see Remark 2.2) by using Riemann-Lebesgues lemma and the known formula:
$-4K_{12}(t,t)\vert_{t=0}=\mathcal{E}_1(t)\vert_{t=0}\neq 0$.

\end{rem}

\medskip
The paper is organised as follows. In Sections \ref{sect_scattering_data}, \ref{sect_matrix_RH}
we give some preliminary information about the MB equations and the corresponding RH problem. Most of the material there follows \cite{FKM17}. In addition, in Section \ref{sect_matrix_RH} we prove the causality principle (Theorem \ref{thm_caus}).
Section \ref{sect_thm_1} is devoted to the proof of Theorem \ref{thm_really_close_light_cone}, and Section \ref{sect_thm_2} is devoted to the proof of Theorem \ref{thm_tail}.

\medskip\noindent
\paragraph{Notations.} Throughout the paper, we use the following notation. For a function $f(t, x; k)$ that depends on the real variables $t, x\in\mathbb{R}$ and a complex variable $k\in\mathbb{C},$ we denote
\[
f^*(t, x; k) := \ol{f(t, x; \ol k)},
\]
where the bar denotes the complex conjugate.

\section{Basic solutions of the Ablowitz-Kaup-Newell-Segur linear equations}\label{sect_scattering_data}

In this preparatory section only, we assume the existence of the solution of the initial-boundary value problem \eqref{MB1a}, \eqref{IBC}, and then derive a meaningful Riemann-Hilbert problem, which is fully determined by the initial and boundary values.

Then in the next Section \ref{sect_matrix_RH}, we drop the assumption of the existence of the solution of the initial-boundary problem, and instead start directly from the RH problem, which is fully determined by the initial and boundary conditions. Based on that, we then prove the existence of a solution to the initial-boundary value problem.

Most of the constructions of this Section \ref{sect_scattering_data} and the next Section \ref{sect_matrix_RH} are taken from \cite{FKM17}, and we sketch them for the convenience of the reader.

\subsection*{Lax-pair representation of the MB equations.}
The Ablowitz-Kaup-Newell-Segur (AKNS)  equations for the  Maxwell-Bloch   equations without spectral broadening have the form \cite{AKN, AS, GZM85}:
\begin{align}\label{teq}
\Phi_t=&U(t,x;k)\Phi,  &U(t,x; k)&=-(\rmi k\sigma_{3}+H(t,x)),\\
\Phi_x=&V(t,x;k)\Phi,   &V(t,x;k)&= \rmi k\sigma_3+H(t,x)+\frac{\rmi F(t,x)}{4k},\label{xeq}
\end{align}
where $\Phi = \Phi(t, x; k),$
$\sigma_{3}=\begin{pmatrix}
		1&0\\ 0&-1
	\end{pmatrix}$, and
\[
H(t,x)=\frac{1}{2} \begin{pmatrix}
0&{\cal E}(t,x)\\ - \ol{{\cal E}(t,x)} & 0
\end{pmatrix},
	\quad
F(t,x)=\begin{pmatrix}
{\cal N}(t,x)&\rho(t,x)\\
\ol{\rho(t,x)} & -{\cal N}(t,x )
\end{pmatrix}.
	\]
It is  well known \cite{AS} that the overdetermined system of differential equations \eqref{teq}, \eqref{xeq}  is compatible if and only if the compatibility condition
\begin{equation}\label{UV}
		U_x(t, x; k)-V_t(t, x; k)+[U(t, x; k), V(t, x; k)]=0
\end{equation}
holds (here, $[U,V] = UV - VU$ is the matrix commutator). It is equivalent to the system of nonlinear equations
\begin{align}\label{HF}
\frac{\partial H(t, x)}{\partial t}+\frac{\partial H(t, x)}{\partial x}=\frac{1}{4}[\sigma_3, F(t, x)], \qquad \frac{\partial F(t, x)}{\partial t}=[F(t, x), H(t, x)],
	\end{align}
which are the matrix form of the MB equations \eqref{MB1a}.

\subsection*{Jost solutions.}

We suppose here that the solution (${\cal E}(t,x )$, ${\cal N}(t,x)$,  $\rho(t,x)$) of the mixed problem \eqref{MB1a}, \eqref{IBC} for the Maxwell-Bloch equations in the domain $t\in\mathbb{R}_+$, $0\le x\le l\le\infty$ does exist, unique, smooth and tends to its limits fast enough for large $x$ and large $t.$

Then the AKNS linear equations \eqref{teq} and \eqref{xeq}  are compatible.

Define solutions $W(t, x; k), \Phi(t; k), \Psi(t, x; k), w(x; k), Y(t, x; k), Z(t, x; k)$ of the $x$- or (and) $t$-equations \eqref{teq}, \eqref{xeq} as follows.

Let $W(t,x;k)$ satisfy the $x$-equation \eqref{xeq} (for all $t$) together with the initial condition $W(t,0; k)=I$, and let $\Phi(t; k)$ satisfy the $t$-equation \eqref{teq} for $x=0$ under the initial condition $$\lim\limits_{t\to\infty}\Phi(t; k)e^{\ii k t\sigma_3}=I.$$

Let $\Psi(t,x;k)$ be the solution of the $t$-equation \eqref{teq} (for all $x$), which also satisfies the initial condition $\Psi(0,x; k)=I$, and let $w(x; k)$ satisfy the $x$-equation \eqref{xeq} for $t=0,$ with the following initial condition:
$$\lim\limits_{x\to l-} w(x; k)e^{-\ii x\mu( k)\sigma_3}=I,$$
where 
$$\mu( k)=k+\frac{1}{4k}\,.$$
(If $l<\infty$, then this is equivalent to $w(l; k)=e^{\ii l\mu( k)\sigma_3}$,  and if $l=\infty$, the initial condition takes the form  $\lim\limits_{x\to\infty} w(x; k)e^{-\ii x\mu( k)\sigma_3}=I$).

\noindent Next, define the {\it Jost solutions} $Y(t,x;k)$ and $Z(t,x;k)$ as the matrix products
\begin{equation}\label{YZ}
Y(t,x;k)=W(t,x;k)\Phi(t; k),\qquad Z(t,x;k)=\Psi(t,x;k)w(x; k).
\end{equation}

Note that automatically $Y$ satisfies the $x$-equation \eqref{xeq}, and $Z$ satisfies the $t$-equation \eqref{teq}.
It is a direct consequence of the following Lemma \ref{lem 2.1} that in fact the functions $Y(t, x; k),$ $Z(t, x; k)$ satisfy both the $t$- and $x$-equations \eqref{teq}, \eqref{xeq}.

\begin{lemma}\label{lem 2.1}[\cite{BK00}, Lemma 2.1]
Let equations \eqref{teq} and \eqref{xeq} be compatible for all $t,x,k\in\mathbb{R}$. Let $\mathcal{F}(t,x;k)$ be a matrix satisfying the $t$-equation \eqref{teq} for all $x$ (the $x$-equation \eqref{xeq} for all $t$). Assume that $\mathcal{F}(t_0,x; k)$ satisfies the $x$-equation \eqref{xeq} for some $t=t_0\le\infty$ (the $t$-equation \eqref{teq} for some $x=x_0\le\infty$). Then $\mathcal{F}(t,x; k)$ satisfies the $x$-equation \eqref{xeq} for all $t$ (satisfies the $t$-equation \eqref{teq} for all $x$).
\end{lemma}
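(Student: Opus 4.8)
The plan is to reduce the statement to the uniqueness theorem for a linear first-order ODE in a single variable, treating the other variable and the spectral parameter $k$ as frozen parameters. Since the two assertions are interchanged by the symmetry $t\leftrightarrow x,$ $U\leftrightarrow V$ (under which the compatibility condition \eqref{UV} is preserved), it suffices to prove one of them. I will assume that $\mathcal{F}(t,x;k)$ satisfies the $t$-equation \eqref{teq} for all $x,$ that $\mathcal{F}(t_0,x;k)$ satisfies the $x$-equation \eqref{xeq} at some fixed $t=t_0,$ and deduce that $\mathcal{F}$ satisfies the $x$-equation for all $t.$

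The device is to measure the failure of the $x$-equation by the defect
\[
G(t,x;k) := \mathcal{F}_x(t,x;k) - V(t,x;k)\,\mathcal{F}(t,x;k),
\]
so that the claim is precisely $G\equiv 0.$ The key step is to show that $G$ propagates homogeneously in $t,$ namely $G_t = U\,G.$ For this I differentiate the identity $\mathcal{F}_t = U\mathcal{F}$ with respect to $x$ --- legitimate by the standing smoothness hypothesis, which also guarantees $\mathcal{F}_{xt} = \mathcal{F}_{tx}$ --- to obtain $\mathcal{F}_{tx} = U_x\mathcal{F} + U\mathcal{F}_x,$ and substitute it into $G_t = \mathcal{F}_{xt} - V_t\mathcal{F} - V\mathcal{F}_t.$ Using $\mathcal{F}_t = U\mathcal{F}$ once more, a short computation collapses all terms except a single commutator, giving
\[
G_t - U\,G = (U_x - V_t + [U,V])\,\mathcal{F}.
\]
By the zero-curvature compatibility condition \eqref{UV} the parenthesis vanishes identically, hence $G_t = U\,G.$

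It remains to invoke uniqueness. For fixed $x$ and $k,$ the map $t\mapsto G(t,x;k)$ solves the homogeneous linear ODE $G_t = U(t,x;k)\,G;$ by hypothesis its value at $t=t_0$ vanishes, and the zero matrix solves the same initial value problem, so uniqueness for linear ODEs forces $G(t,x;k)=0$ for all $t.$ This is the claim. I expect the only genuine subtlety to be the endpoint case $t_0=\infty$ permitted by the statement: there one replaces the pointwise condition by the prescribed asymptotic normalization and argues through the invertible transition matrix of $G_t = U\,G,$ concluding $G\equiv 0$ from $\lim_{t\to\infty}G=0.$ In the applications of interest --- the Jost solutions $Y=W\Phi$ and $Z=\Psi w$ of \eqref{YZ}, where the relevant endpoint is finite ($x_0=0$ for $Y,$ and $t_0=0$ for $Z$) --- the finite-endpoint argument already suffices, so that the entire content of the lemma is the single algebraic identity above, which converts compatibility \eqref{UV} into the homogeneous propagation of the defect.
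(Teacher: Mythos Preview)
Your proof is correct and is exactly the standard argument for this compatibility lemma: define the defect $G=\mathcal{F}_x-V\mathcal{F}$, use the zero-curvature condition \eqref{UV} to show $G_t=UG$, and conclude $G\equiv0$ from uniqueness of the linear ODE with vanishing data at $t=t_0$. The paper itself does not give a proof but simply refers to \cite{BK00}, Lemma~2.1, where the same argument appears; there is nothing to compare.
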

The proof can be found, for example, in \cite{BK00} (Lemma 2.1).

\subsection*{Properties of the Jost solutions.}\label{sect_Jost}
To formulate the properties of the Jost solutions $Y, Z,$ let us introduce the notations: 
\begin{equation}\label{Sigma}\Omega_\pm = \{k\!\in\! \mathbb{C}_\pm :\, |k|>1/2\},
\qquad
D_\pm = \{k\!\in\! \mathbb{C}_\pm :\, |k| < 1/2\},
\qquad
\Sigma = \mathbb{R}\cup C_{u}(1/2)\cup
C_{d}(1/2),\end{equation}
where 
$\mathbb{C}_{\pm} = \left\{k\in\mathbb{C}:\ \pm\Im k > 0\right\}$ and where
$C_{u}(1/2)$ and $C_{d}(1/2)$ are the semicircles
$$C_{u}(1/2)=\{k\in\mathbb{C} :\, |k| =1/2,\, \arg k\! \in\! (\pi, 0)\}, \qquad C_{d}(1/2)=\{k\!\in\! \mathbb{C} :\, |k|=1/2,\, \arg k\!\in\! (-\pi, 0)\}$$ (the subscripts $u, d$ stand for `up' and `down'). 
The orientation on the contour $\Sigma$ is from the left to the right on the real line $\mathbb{R} $ and on the half-circles $C_{u}(1/2)$, $C_{d}(1/2)$  and is depicted in Figure \eqref{domains}.
Let $\overline\Omega _\pm$ and $\overline D_\pm$ be the closures of the domains $\Omega _\pm$ and  $D_\pm$, respectively. Note that the contour $\Sigma$ is the set where $\Im\mu(k)=0$:
$$
\Sigma=\left\{k\in\mathbb{C}: \,\Im\left(k+\frac{1}{4k}\right)=0\right\}=\mathbb{R}\cup C_{u}(1/2)\cup C_{d}(1/2).
$$

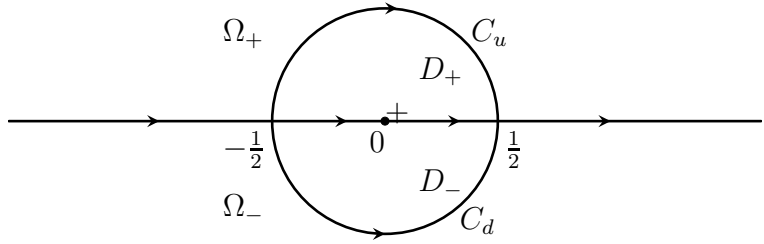
\begin{figure}
	\centering
	\begin{tikzpicture}[line cap=round,line join=round,>=triangle 45,x=1.0cm,y=1.0cm] 
		\clip(-5,-2.5) rectangle (5, 2.5);
		\draw [line width=1pt, decoration = {markings, mark=at position 0.2 with {\arrow{stealth}}}, 
decoration = {markings, mark=at position 0.45 with {\arrow{stealth}}}, 
decoration = {markings, mark=at position 0.6 with {\arrow{stealth}}}, 
decoration = {markings, mark=at position 0.8 with {\arrow{stealth}}}, postaction={decorate}] (-5,0) -- (5,0);
\draw [line width=1 pt, decoration = {markings, mark=at position 0.25 with {\arrow{stealth reversed}}}, 
decoration = {markings, mark=at position 0.75 with {\arrow{stealth}}}, postaction={decorate}] (0,0) circle (1.5cm);
\draw (1,1.5) node[anchor=north west] {$C_{u}$};
\draw (0.85,-1) node[anchor=north west] {$C_{d}$};
		\draw [fill=black] (-0,0) circle (1.5pt);
		\draw (1.45,0.01) node[anchor=north west] {$\frac{1}{2}$};
		\draw (-0.35, 0.01) node[anchor=north west] {$0$};
		\draw (-2.3, 0.01) node[anchor=north west] {$-\frac{1}{2}$};

		\draw (0.3,1) node[anchor=north west] {$D_+$};
		\draw (0.3,-0.5) node[anchor=north west] {$D_-$};

\draw (-2.3, 1.5) node[anchor=north west] {$\Omega_+$};
\draw (-2.3, -1.5) node[anchor=south west] {$\Omega_-$};

\put(100, 50){$\Omega_+$}
\put(100,-50){$\Omega_-$};
\unitlength 1mm
\put(-30,0){$+$}  \put(0,1){$+$}   \put(28,1){$+$}   \put(-30,-3){$-$}  \put(0,-3){$-$}   \put(28,-3){$-$}
\put(-6,15){$+$}   \put(-6,-17){$-$}   \put(-6,11){$-$}   \put(-6,-13){$+$}

\end{tikzpicture}
\caption{The domains $\Omega_\pm$, $D_\pm$ and the oriented contour $\Sigma=\mathbb{R}\cup C_{u}\cup C_{d}$.} \label{domains}
	\label{fig1}
\end{figure}

\noindent
The function $Y(t,x;k)=(Y_{[1]}(t,x;k), Y_{[2]}(t,x;k)),$ defined in \eqref{YZ}, has the following properties (\cite{FKM17}):
\begin{lemma}[\cite{FKM17}]\label{lem_Y_prop}
\begin{enumerate}[1)]
\item
$Y(t,x;k)$ ($ k\neq0$) satisfies the $t$- and $x$-equations \eqref{teq},~\eqref{xeq};
	
\item
$Y(t,x;k)=\Lambda  Y^*(t,x;k) \Lambda^{-1}$,\; $ k\in\mathbb{R}\setminus\{0\}$, where $\Lambda= \begin{pmatrix} 0&1\\ -1&0 \end{pmatrix}$;

\item $\det Y(t,x;k) \equiv1, \quad  k\in\mathbb{R}\setminus\{0\}$;

\item the map $(t, x)\longmapsto Y(t,x;k)$ ($ k\neq0$) is smooth in $t$ and $x$;

\item the maps $k\longmapsto Y_{[1]}(t,x;k)e^{\ii kt-\ii \mu(k)x}$, $k\longmapsto Y_{[1]}(t,x;k)e^{-\ii kt+\ii \mu(k)x}$, $k\longmapsto Y_{[2]}(t,x;k)e^{-\ii kt+\ii \mu(k)x}$, $k\longmapsto Y_{[2]}(t,x;k)e^{\ii kt-\ii \mu(k)x}$  are analytic in $\Omega_-$, $D_-$, $\Omega_+$, $D_+,$ respectively;

\item the vector functions $Y_{[1]}(t,x;k)e^{\ii kt-\ii \mu(k)x}$,\, $Y_{[1]}(t,x;k)e^{-\ii kt+\ii \mu(k)x}$ and \linebreak
$Y_{[2]}(t,x;k)e^{-\ii kt+\ii \mu(k)x}$,\, $Y_{[2]}(t,x;k)e^{\ii kt-\ii \mu(k)x}$ are analytic in $\mathbb{C}_-$  and $\mathbb{C}_+,$ respectively, continuous up to the boundary with exception of $k=0$ and have the following asymptotic behaviour:
\begin{align*}
&Y_{[1]}(t,x;k)e^{i\theta(t, x;k)}= \begin{pmatrix} 1\\ 0\end{pmatrix}+ \ord(k^{-1}),\quad &k\in\overline \Omega_-,\; k\to\infty,&\\
&Y_{[1]}(t,x;k)e^{-i\theta(t, x;k)}= \ord(1) +\ord(k),\quad &k\in\overline  D_-\setminus\{0\},\; k\to 0,&
\end{align*}
\begin{align*}
&Y_{[2]}(t,x;k)e^{-i\theta(t, x;k)}= \begin{pmatrix} 0\\ 1\end{pmatrix}+ \ord(k^{-1}),\quad &k\in\overline  \Omega_+,\; k\to\infty,&\\
&Y_{[2]}(t,x;k)e^{i\theta(t, x;k)}= \ord(1) +\ord(k),\quad &k\in\overline  D_+\setminus\{0\},\; k\to 0,&
\end{align*}
where
\begin{equation}\label{theta}
\theta(t,x;k) = kt-\mu(k)x = (t-x)k-\frac{x}{4k}.
\end{equation}
\end{enumerate}
\end{lemma}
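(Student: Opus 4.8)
The plan is to deduce all six properties from standard facts about the two factors $W(t,x;k)$ and $\Phi(t;k)$ in the product $Y=W\Phi$ of \eqref{YZ}, combining the compatibility Lemma \ref{lem 2.1}, the traceless and $\Lambda$-symmetric algebraic structure of the coefficient matrices $U,V$ in \eqref{teq}, \eqref{xeq}, and the Volterra integral representations of the two factors. Properties 1)--4) are immediate. Since $W$ solves the $x$-equation \eqref{xeq} for every $t$ while $\Phi$ does not depend on $x$, we have $Y_x=W_x\Phi=VW\Phi=VY$, so $Y$ satisfies \eqref{xeq} for all $t$; moreover $Y(t,0;k)=\Phi(t;k)$ solves the $t$-equation \eqref{teq}, hence $Y$ solves \eqref{teq} at $x_0=0$, and Lemma \ref{lem 2.1} promotes this to a solution for all $x$, giving 1). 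For 3), Abel's formula applies because $\operatorname{tr}U=\operatorname{tr}V=0$ (each of $\rmi k\sigma_3$, $H$, $F$ is traceless), so $\det Y$ is independent of $t,x$; evaluating at $x=0$ gives $\det Y=\det\Phi(t;k)$, and the normalization $\Phi(t;k)e^{\rmi kt\sigma_3}\to I$ with $\det e^{\rmi kt\sigma_3}=1$ forces $\det Y\equiv1$. For 2), a direct computation using $\Lambda\sigma_3\Lambda^{-1}=-\sigma_3$ and the antidiagonal form of $H,F$ shows $U=\Lambda U^*\Lambda^{-1}$ and $V=\Lambda V^*\Lambda^{-1}$ for real $k$; since the normalizations defining $W$ and $\Phi$ are invariant under $k\mapsto\ol k$ followed by conjugation with $\Lambda$, both $Y$ and $\Lambda Y^*\Lambda^{-1}$ solve the same normalized system and coincide by uniqueness. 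Property 4) is the smooth dependence of solutions of \eqref{teq}, \eqref{xeq} on the parameters $t,x$, which holds because $\mathcal E,\mathcal N,\rho$ are assumed smooth.

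The analytic statements 5)--6) are the substance, and I would extract them from Volterra equations for the exponentially renormalized columns. The factor $\Phi$, normalized at $t=+\infty$, carries only the exponent $kt$, so after the weighting $e^{\pm\rmi kt}$ its two columns extend analytically to complementary half-planes with $\ord(k^{-1})$ corrections as $k\to\infty$, by the classical AKNS Jost analysis. The factor $W$, normalized at $x=0$, carries the exponent $\mu(k)x$ with $\mu(k)=k+\frac{1}{4k}$, so its domains of analyticity are governed by the sign of $\Im\mu(k)$, whose zero set is exactly $\Sigma=\mathbb{R}\cup C_u\cup C_d$; this is what partitions the plane into $\Omega_\pm$ and $D_\pm$. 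Multiplying the factors and collecting the exponents into the single phase $\theta(t,x;k)=kt-\mu(k)x$ of \eqref{theta} yields the analyticity of $Y_{[1]}e^{\pm\rmi\theta}$ and $Y_{[2]}e^{\pm\rmi\theta}$ in $\Omega_-,D_-,\Omega_+,D_+$ asserted in 5), together with the leading limits $(1,0)^{\top}$ and $(0,1)^{\top}$ as $k\to\infty$ coming from the zeroth term of the Neumann series. Property 6) then strengthens this by showing that the columns $Y_{[1]}$ and $Y_{[2]}$ are in fact analytic across the whole lower and upper half-planes respectively, so that the arcs $C_u,C_d$ are not jump curves for the columns themselves, continuous up to $\Sigma\setminus\{0\}$, with the stated $k\to0$ estimates $\ord(1)+\ord(k)$.

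The hard part will be the singularity at $k=0$ of the term $\frac{\rmi F}{4k}$ in \eqref{xeq}: it is responsible both for the separate small-$k$ domains $D_\pm$ and for the $\ord(1)+\ord(k)$ behaviour as $k\to0$, and it must be handled by a local analysis of the Volterra kernel near the origin, distinct from the $k\to\infty$ regime. The second delicate point is bookkeeping the two distinct exponential weights $e^{\pm\rmi kt}$ and $e^{\pm\rmi\mu(k)x}$ coming from the $t$- and $x$-equations, and verifying that they combine consistently into the single phase $\theta$ in each of the four regions while preserving continuity up to $\Sigma$ away from the origin. Both points are carried out in full in \cite{FKM17}, which I would follow; here I would present only the convergence of the Neumann series and the domain decomposition induced by $\Im\mu(k)=0$.
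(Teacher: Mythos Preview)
The paper does not give its own proof of this lemma; it is simply stated with the attribution \cite{FKM17}, so there is no argument in the paper to compare against. Your sketch is the standard AKNS/Jost approach and is correct in outline: properties 1)--4) follow exactly as you say from Lemma~\ref{lem 2.1}, the tracelessness and $\Lambda$-symmetry of $U,V$, and smooth dependence on parameters; for 5)--6) the key observation is that $W(t,x;k)$, being normalized at $x=0$ by $W(t,0;k)=I$, is analytic in $k\in\mathbb{C}\setminus\{0\}$ (its Volterra series converges for every such $k$), so the half-plane analyticity of the columns $Y_{[j]}=W\Phi_{[j]}$ is inherited entirely from $\Phi$, while the finer partition into $\Omega_\pm,D_\pm$ in 5) records where each exponential weight $e^{\pm i\theta}$ is bounded. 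You correctly flag the $k\to0$ analysis of the $\tfrac{iF}{4k}$ term as the nontrivial point and defer it to \cite{FKM17}, which is exactly what the paper does.
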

	
\noindent The function $Z(t,x;k)=(Z_{[1]}(t,x;k), Z_{[2]}(t,x;k))$, defined in \eqref{YZ}, has the following properties (\cite{FKM17}):
\begin{lemma}[\cite{FKM17}]\label{lem_Z_prop}
\begin{enumerate}[1)]
\item $Z(t,x;k)$ ($ k\neq 0$) satisfies the $t$- and $x$-equations \eqref{teq},~\eqref{xeq};
	
\item $Z(t,x;k)=\Lambda Z^*(t,x;k)\Lambda^{-1}$,\; $ k\in\mathbb{R}\setminus\{0\}$;

\item $\det Z(t,x;k) \equiv 1,\;  k\in\mathbb{R}\setminus\{0\}$;

\item the map $(t, x)\longmapsto Z(t,x;k)$ ($ k\neq0$) is smooth in $t$ and $x$;

\item the maps $k\longmapsto Z_{[1]}(t,x;k)$ and $k\longmapsto Z_{[2]}(t,x;k)$ are analytic in $\Omega_+\cup D_-$ and $\Omega_-\cup D_+,$ respectively, and the asymptotic behaviour of $Z_{[1]}(t,x;k)e^{\ii kt-\ii x\mu(k)}$, $Z_{[2]}(t,x;k)e^{-\ii kt+\ii x\mu(k)}$ is as follows:
\begin{align*}
Z_{[1]}(t,x;k)e^{i\theta(t, x;k)}=& \begin{pmatrix}1\\0\end{pmatrix}+ \ord(k^{-1}),\quad &k\in \overline\Omega_+,\; k\to\infty,\\
Z_{[1]}(t,x;k)e^{i\theta(t, x;k)} =& \ord(1)+ \ord(k),\quad &k\in\overline D_-\setminus\{0\}\; k\to 0,
\end{align*}
\begin{align*}
Z_{[2]}(t,x;k)e^{-i\theta(t, x;k)}=& \begin{pmatrix}0\\1\end{pmatrix}+ \ord(k^{-1}),\quad &k\in \overline\Omega_-,\; k\to\infty,\\
Z_{[2]}(t,x;k)e^{-i\theta(t, x;k)}=& \ord(1)+ \ord(k),\quad &k\in\overline D_+\setminus\{0\},\; k\to 0.
\end{align*}
\end{enumerate}
\end{lemma}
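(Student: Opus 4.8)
The plan is to follow the scheme used for $Y$ in Lemma~\ref{lem_Y_prop}, using that $Z=\Psi w$ is assembled from the $t$-solution $\Psi$ (normalized at $t=0$) and the $x$-solution $w$ (normalized at the endpoint $x=l$, or at $x\to\infty$). Property~1) is almost immediate: since $\Psi$ solves the $t$-equation \eqref{teq} for every $x$ and $w$ does not depend on $t,$ we get $Z_t=\Psi_t w=UZ,$ so $Z$ solves the $t$-equation; moreover $Z(0,x;k)=\Psi(0,x;k)w(x;k)=w(x;k)$ solves the $x$-equation \eqref{xeq} at $t=0,$ and Lemma~\ref{lem 2.1} (applied with $t_0=0$) promotes this to the $x$-equation for all $t.$ Property~3) follows from Abel's formula once one observes that both $U$ and $V$ are trace-free ($\operatorname{tr}\sigma_3=\operatorname{tr}H=\operatorname{tr}F=0$): then $\det\Psi$ is constant in $t$ and $\det w$ is constant in $x,$ and evaluating at $t=0$ and at the endpoint, where the normalizations have unit determinant, gives $\det Z\equiv1.$ Property~4) is the standard smooth dependence of solutions of the linear equations \eqref{teq}, \eqref{xeq} on the parameters $t,x.$

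For the symmetry~2) I would verify that $U$ and $V$ are fixed by the involution $A\mapsto\Lambda A^{*}\Lambda^{-1}.$ A direct computation with $\Lambda\sigma_3\Lambda^{-1}=-\sigma_3,$ $\Lambda H^{*}\Lambda^{-1}=H$ and $\Lambda F^{*}\Lambda^{-1}=-F$ gives $\Lambda U^{*}\Lambda^{-1}=U$ and $\Lambda V^{*}\Lambda^{-1}=V.$ Hence $\Lambda\Psi^{*}\Lambda^{-1}$ solves the same $t$-equation as $\Psi$ and $\Lambda w^{*}\Lambda^{-1}$ the same $x$-equation as $w$; since the normalizations $\Psi(0,x;k)=I$ and $w(l;k)=\ee^{\ii l\mu(k)\sigma_3}$ are themselves fixed by the involution (using $\Lambda\sigma_3\Lambda^{-1}=-\sigma_3$ and the reality of $\mu$ on $\mathbb R$), uniqueness forces $\Lambda\Psi^{*}\Lambda^{-1}=\Psi$ and $\Lambda w^{*}\Lambda^{-1}=w$; multiplying and using that the involution respects products yields $\Lambda Z^{*}\Lambda^{-1}=Z$ for $k\in\mathbb R\setminus\{0\}.$

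The substantive part is property~5). First I would recast $\Psi$ and $w$ as Volterra integral equations after factoring out the diagonal free dynamics, writing $\Psi=N\,\ee^{-\ii kt\sigma_3}$ and $w=\hat w\,\ee^{\ii\mu(k)x\sigma_3}$ with $N(0,x;k)=I,$ $\hat w(l;k)=I.$ Because $U$ is polynomial in $k$ and $\Psi$ is integrated over the bounded interval $[0,t],$ its Neumann series converges for every $k,$ so $\Psi$ is entire in $k$; consequently all $k$-analyticity of $Z$ is inherited from $w,$ column by column, via $Z_{[1]}=\Psi w_{[1]}$ and $Z_{[2]}=\Psi w_{[2]}.$ The off-diagonal part of $V$ couples the two columns of $\hat w$ through iterated kernels whose boundedness on the integration range $[x,l]$ (where $x'-x\ge0$) is governed by the sign of $\Im\mu(k)=\Im k\,(1-\tfrac{1}{4|k|^{2}});$ tracking this sign shows that the first column converges precisely where $\Im\mu>0,$ i.e.\ on $\Omega_+\cup D_-,$ and the second where $\Im\mu<0,$ i.e.\ on $\Omega_-\cup D_+,$ which is the claimed analyticity. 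The large-$k$ asymptotics then follow by multiplying the two expansions: in $\ol\Omega_+$ one has $Z_{[1]}\ee^{\ii\theta}=\ee^{\ii\theta}\Psi w_{[1]}\sim\ee^{\ii\theta}\ee^{-\ii kt\sigma_3}\ee^{\ii\mu x}\begin{pmatrix}1\\0\end{pmatrix}=\begin{pmatrix}1\\0\end{pmatrix}+\ord(k^{-1}),$ with $\theta$ as in \eqref{theta}, and symmetrically for $Z_{[2]}\ee^{-\ii\theta}$ in $\ol\Omega_-.$

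The main obstacle is the behaviour near $k=0$: the term $\tfrac{\ii F}{4k}$ in $V$ makes both $\mu(k)$ and the Volterra kernels for $w$ singular there, so the large-$k$ expansion cannot simply be continued to the origin. I would treat this by the analogous construction with the roles of the regular and singular parts of $V$ interchanged, now taking $\tfrac{\ii F}{4k}$ as the dominant term and diagonalizing $F(0,x)$ (whose eigenvalues are $\pm1,$ since $\mathcal N^{2}+|\rho|^{2}=1$); this produces exactly the singular exponential $\ee^{\mp\ii\theta}$ and a bounded amplitude, giving $Z_{[1]}\ee^{\ii\theta}=\ord(1)+\ord(k)$ in $\ol D_-\setminus\{0\}$ and $Z_{[2]}\ee^{-\ii\theta}=\ord(1)+\ord(k)$ in $\ol D_+\setminus\{0\}$ as $k\to0.$ A final point requiring care is that each analyticity domain, e.g.\ $\Omega_+\cup D_-,$ is split by the circle $|k|=\tfrac12$ (where $\Im\mu=0$), so the normalization at $k=\infty$ and the bound at $k=0$ are established on different connected components; no matching across $\Sigma$ is needed, only analyticity and the stated asymptotics on each piece. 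Throughout, the estimates are identical to those for $Y$ in Lemma~\ref{lem_Y_prop} and follow the treatment in \cite{FKM17}.
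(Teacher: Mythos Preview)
The paper does not prove this lemma at all: it is stated with the attribution \cite{FKM17} and no argument is given in the present paper. Your sketch is therefore not being compared against a proof in this paper, but rather against the standard argument for Jost-type solutions that \cite{FKM17} presumably contains.

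Your outline is correct and follows exactly that standard route: item~1) via Lemma~\ref{lem 2.1} with $t_0=0$; item~3) via trace-freeness of $U,V$ and the normalizations; item~2) via the involution $A\mapsto\Lambda A^{*}\Lambda^{-1}$ fixing $U,V$ and the initial data; item~5) by writing the Volterra equations for $N=\Psi\,\ee^{\ii kt\sigma_3}$ and $\hat w=w\,\ee^{-\ii\mu(k)x\sigma_3}$, using that the sign of $\Im\mu(k)=\Im k\,(1-\tfrac{1}{4|k|^{2}})$ selects the analyticity domains $\Omega_+\cup D_-$ and $\Omega_-\cup D_+$ for the two columns, and handling $k\to0$ by diagonalizing $F$ (eigenvalues $\pm1$) to extract the singular exponential $\ee^{\mp\ii x/(4k)}.$ One minor wording issue: the two components of, say, $\Omega_+\cup D_-$ are separated by the real axis rather than by the circle $|k|=\tfrac12$ (one lies in $\mathbb C_+$, the other in $\mathbb C_-$), but your conclusion that the large-$k$ and small-$k$ statements live on different components and require no matching is correct.
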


\begin{rem}\label{remZ}
In the case of trivial initial conditions \eqref{trivial} we have $w(x,k) = e^{i x (k + 1/(4k))\sigma_3}$ and hence the matrix $Z(t,x;k) = (Z_{[1]}(t,x;k),  Z_{[2]}(t,x;k))$ is analytic in $k\in \mathbb{C}\setminus\{0\}.$
\end{rem}

\subsection*{Spectral coefficients $a(k), b(k), A(k), B(k), \alpha(k), \beta(k).$}\label{sect_ab}

Since the matrices  $Y(t,x;k)$ and $Z(t,x;k)$  are solutions of the $t$- and $x$-equations \eqref{teq}, \eqref{xeq}, they are linearly dependent. Consequently, there exists a transition matrix $T( k)$, independent of $t$ and $x$, such that
\begin{equation}\label{sc}
Y(t,x;k)=Z(t,x;k) T( k).
\end{equation}
The transition matrix is equal to
\[
T( k)=Z^{-1}(0,0; k)Y(0,0; k)=w^{-1}(0; k)\Phi(0; k),
\]
and, hence,  $T( k)=\Lambda T^*( k)\Lambda^{-1}$, $ k\in\mathbb{R}\setminus\{0\}$, i.e. $T( k)$ has the form
\[
T( k)=\begin{pmatrix}
		 a^*( k) & b( k) \\
		-b^*( k) & a( k)
	\end{pmatrix}.
\]
The scattering relation \eqref{sc} can be written in the form
\begin{equation}\label{ZYrelations}
\begin{split}
&Y_{[1]}(t,x;k)= a^*( k)Z_{[1]}(t,x;k)- b^*(k)Z_{[2]}(t,x;k),\quad  k \in \Sigma\setminus\{0\},\\
&Y_{[2]}(t,x;k)=a( k)Z_{[2]}(t,x;k)+b( k)Z_{[1]}(t,x;k),\quad  k \in \Sigma\setminus\{0\},
\end{split}.
\end{equation}
where $\Sigma = \mathbb{R}\cup C_{u}\cup C_{d}$
was introduced above in Fig.\ref{domains}. 
From these relations we obtain that
\begin{align*}	&a( k)=\det(Z_{[1]}(t,x;k),Y_{[2]}(t,x;k)),\quad
a^*( k)=\det(Y_{[1]}(t,x;k), Z_{[2]}(t,x;k)),\\
&b( k)=\det(Y_{[2]}(t,x;k), Z_{[2]}(t,x;k)), \quad b^*( k)=\det(Y_{[1]}(t,x;k), Z_{[1]}(t,x;k)).
\end{align*}
	
\noindent 
To study the properties of $a(k), b(k)$, it is convenient to introduce the matrix
\begin{equation}\label{Phi0}
\Phi(0; k)=\begin{pmatrix}
A^*(k) & B( k)\\
-B^*(k) & A( k) \end{pmatrix},
\end{equation}
which is determined by the boundary condition $\mathcal{E}_1(t) = {\cal E}(t, 0)$
(here,  the functions $A(k), B(k)$ are called the {\it spectral functions of the $t$-equation for $x=0$}),
and the matrix
\[
w(0; k)=\begin{pmatrix}
\alpha( k) & -\beta^*( k)\\
\beta( k) & \alpha^*( k) \end{pmatrix},
\]
which is determined by the initial functions ${\cal E}(0,x)$, $\rho(0,x)$ and ${\cal N}(0,x)$
(here,  the functions $\alpha, \beta$ are called the {\it spectral functions of the $x$-equation for $t=0$}).

The functions $\alpha(k)$, $\beta(k)$ and  $ \alpha^*(k)$, $\beta^*(k)$ can be extended analytically in $\Omega_+\cup D_-$ and $\Omega_-\cup D_+$, respectively, the functions $A(k)$, $B(k)$ and
$A^*(k)$, $B^*(k)$ can be extended analytically in $\mathbb{C}_+$ and $\mathbb{C}_-$, respectively.
They have the following asymptotic behaviour:
\begin{align*}
&\alpha(k)=1+\ord(k^{-1}), && \beta(k)=\ord(k^{-1}), &&  k\to\infty,  \quad k\in\overline\Omega_+;\\
&\alpha(k)=\ord(1), && \beta(k)=\ord(1), && k\to0, \quad k\in \overline D_-;\\
&\alpha^*(k)=1+\ord(k^{-1}), && \beta^*(k)=\ord(k^{-1}), && k\to\infty, \quad k\in\overline\Omega_-;\\
&\alpha^*(k)=\ord(1), && \beta^*(k)=\ord(1), && k\to0, \quad k\in \overline D_+;
\end{align*}
\begin{align*}
&A(k)=1+\ord(k^{-1}), &&B(k)=\ord(k^{-1}),&  & k\to\infty, \quad k\in\mathbb{C}_+;\\
&A^*(k)=1+\ord(k^{-1}), &&B^*(k)=\ord(k^{-1}), & & k\to\infty, \quad k\in\mathbb{C}_-;\\
&A(k)=\ord(1), &&B(k)=\ord(1),&  &k\to0, \quad k\in\mathbb{C}_+;\\
&A^*(k)=\ord(1), && B^*(k)=\ord(1),&  &k\to0, \quad k\in\mathbb{C}_-.
\end{align*}

\noindent The entries of the transition matrix $T( k)$ in the domains of their analyticity are equal to
\begin{center}
$a(k)= \alpha(k)A(k)-\beta(k)B(k)$,\;\; $k\in\Omega_+$;\quad
$b(k)= \alpha^*(k)B(k)+\beta^*(k)A(k)$,\;\; $k\in D_+$;
		
$a^*(k)= \alpha^*(k) A^*(k) - \beta^*(k) B^*(k)$,\;\; $k\in\Omega_-$;\quad
$b^*(k)= \alpha(k) B^*(k)+ \beta(k) A^*(k)$,\;\; $k\in D_-$\,.
\end{center}
The spectral functions $a(k)$ and $b(k)$ are defined and smooth for $ k \in \Sigma\setminus\{0\}$. The matrix $T$ is unimodular, $T( k)\equiv1$  and, hence,
\begin{equation}\label{aabb}a( k) a^*(k) + b(k) b^*(k) \equiv1.
\end{equation}
The spectral functions have the following asymptotics:
\begin{center}
$a(k)=1+\ord(k^{-1})$ as $k\to\infty$, $k\in\overline\Omega_+,
\qquad
b(k)=\ord(1)$ as $k\to 0$, $k\in \overline D_+$;\;
		
$a^*(k)=1+\ord(k^{-1})$ as $k\to\infty$, $k\in\overline\Omega_-,
\qquad
b^*(k)= \ord(1)$ as $k\to 0$, $k\in\overline D_-.$
	\end{center}

\begin{rem}\label{rem_prop_abr_trivial}
In the case of trivial initial data \eqref{trivial} the ibv problem is only defined by the input pulse which in turn determines the spectral functions in the form \cite{FKM17}: $\alpha(k)\equiv1, \beta(k)\equiv0,$
\begin{align*}
a(k)=A(k)=1+\int_0^T K_{22}(0,\tau)\ee^{\rmi k \tau }d\tau, \qquad  & b(k)=B(k)=\int_0^T K_{12}(0,\tau)\ee^{\rmi k \tau }d\tau,
\end{align*}
where $K_{lm}(t, \tau)$ ($l,m=1,2$)  are the entries of a transformation operator \cite{FKM17}
$$ \Phi(t; k)=e^{-\ii k t\sigma_3}+\int_t^T K(t,\tau)e^{-\ii k\tau\sigma_3}d\tau, \qquad  k\in\mathbb{R}, \qquad   T\le\infty, 
$$
which satisfy $\int_{t}^T|K_{lm}(t, \tau)|d\tau<\infty.$ 
Here $T\leq\infty$ is the supremum of the support of $\mathcal{E}_{1}(t),$ i.e.
$\mathcal{E}_1(t) = 0$ for $t\in(T,\infty).$
These formulae show that $a(k)$ and $b(k)$ admit analytic continuation in the domain $\Im k\geq0$ in the case $T=\infty$, and even are entire functions in the case $T<\infty$,
with $b^*(.)$ satisfying the estimate
\begin{equation}\label{b_estimate}
b^*(k) = \ord(k^{-1}e^{T\Im k}), \quad k\to\infty,\ \Im k\geq 0.
\end{equation}
In either case, the function $r(k)=\frac{b(k)}{a(k)}$
admits an analytic continuation from the real axis.

Note also that for trivial initial data \eqref{trivial}, the functions $b(.), r(.)$ do not vanish identically provided that $\mathcal{E}_1(t)$ does not vanish identically.
\end{rem}

\subsection*{Zeros of $a, b.$}

\begin{lemma}\label{lem_zeros_ab}
Let $\left\{k_j^{_{(a)}}\right\}_j$ be the set of zeros of $a(.)$ in $\Omega_+,$ and $n_j^{_{(a)}}$ be the multiplicity of the zeros, i.e. $a(k_j^{_{(a)}}) = \ldots = a^{(n_j^{_{(a)}}-1)}(k_j^{_{(a)}}) = 0,$ $a^{(n_j^{_{(a)}})}(k_j^{_{(a)}})\neq 0.$ 
\\Similarly, let
 $\left\{k_j^{_{(b)}}\right\}_j$ be the set of zeros of $b(.)$ in $D_+,$ and $n_j^{_{(b)}}$ be the multiplicity of the zeros, i.e. $b(k_j^{_{(b)}}) = \ldots = b^{(n_j^{_{(b)}}-1)}(k_j^{_{(b)}}) = 0,$ $b^{(n_j^{_{(b)}})}(k_j^{_{(b)}})\neq 0.$
 Then
\begin{enumerate}[I.]
\item
for each zero $k_j^{_{(a)}}$ of $a(.)$ there exist constants $\mu_{0,j}^{_{(a)}}, \ldots, \mu_{n_j^{_{(a)}}-1, j}^{_{(a)}}\in\mathbb{C}$ (independent of $t, x$) such that
\[
\frac{d^p}{dk^p}Y_{[2]}(t, x; k)\Big|_{k_j^{_{(a)}}} = \sum\limits_{q=0}^{p}\binom{p}{q}\mu_{p-q,j}^{_{(a)}}\frac{d^{q}}{d k^q}Z_{[1]}(t, x; k)\Big|_{k = k_j^{_{(a)}}},
\qquad p = 0, \ldots, n_j^{_{(a)}} - 1,
\]
where $\binom{p}{q}=\frac{p!}{q!(p-q)!}$ is the binomial coefficient, 
and\\
for each zero $k_j^{_{(b)}}$ of $b(.)$ there exist constants $\mu_{0,j}^{_{(b)}}, \ldots, \mu_{n_j^{_{(b)}}-1, j}^{_{(b)}}\in\mathbb{C}$ (independent of $t, x$) such that
\[
\frac{d^p}{dk^p}Y_{[2]}(t, x; k)\Big|_{k_j^{_{(b)}}} = \sum\limits_{q=0}^{p}
\binom{p}{q}\mu_{p-q,j}^{_{(b)}}\frac{d^{q}}{d k^q}Z_{[2]}(t, x; k)\Big|_{k = k_j^{_{(b)}}},
\qquad p = 0, \ldots, n_j^{_{(b)}} - 1.
\]

\item
Let $k_j^{_{(a)}}$ (respectively $k_j^{_{(b)}}$) be a zero of $a(.)$ (resp. $b(.)$) and 
\\
let $T_{1,j}^{_{(a)}}, \ldots, T_{n_j^{_{(a)}},j}^{_{(a)}}$ (resp. $T_{p, j}^{_{(b)}}, p=1, \ldots, n_j^{_{(b)}}$) be the coefficients in the Taylor expansion of the function 
$
\frac{1}{a(k)}
\mbox{(resp. $\frac{1}{b(k)}$)}
$
at the point $k=k_j^{_{(a)}}$ ($k=k_j^{_{(b)}}$),
i.e.
\\
$
\frac{1}{a(k)}
=
\sum_{q=1}^{n_j^{_{_{(a)}}}}
\frac{T_{q,j}^{_{(a)}}}{\(k-k_{j}^{_{(a)}}\)^{_q}}
+
\mathcal{O}(1),\quad k\to k_j^{_{(a)}}
\hfill\qquad
\mbox{(resp. }
\frac{1}{b(k)}
=
\sum_{q=1}^{n_j^{_{(b)}}}
\frac{T_{q,j}^{_{(b)}}}{\(k-k_{j}^{_{(b)}}\)^{_q}}
+
\mathcal{O}(1),\quad k\to k_j^{_{(b)}}
$).
\\
Define the constants
\\
$
A^{_{(a)}}_{p, j} = \sum_{q=0}^{n_j^{_{(a)}}-p}T^{_{(a)}}_{q+p, j} \frac{\mu_{q, j}^{_{(a)}}}{q!},
\ p = 1, \ldots, n_j^{_{(a)}}$

$\hfill
(\mbox{resp.} A^{_{(b)}}_{p, j} = \sum_{q=0}^{n_j^{_{(b)}}-p}T^{_{(b)}}_{q+p, j} \frac{\mu_{q, j}^{_{(b)}}}{q!},
\
p = 1, \ldots, n_j^{_{(b)}}).
$

Then\\
\[
\frac{1}{a(k)}Y_{[2]}(t, x; k) e^{i\theta(t, x; k)} 
-
\left[
\sum_{q=1}^{n^{_{(a)}}_j}
\frac{A_{q,j}^{_{(a)}}\,e^{2i\theta(t,x;k)}}{\(k-k_j^{_{(a)}}\)^q}
\right]
Z_{[1]}(t, x; k) e^{-i\theta(t, x; k)} 
=
\ord(1),\quad k\to k_j^{_{(a)}}
\]
(resp.
$
\frac{1}{b(k)}Y_{[2]}(t, x; k) e^{i\theta(t, x; k)} 
- 
\left[
\sum_{q=1}^{n^{_{(b)}}_j}
\frac{A_{q,j}^{_{(b)}}\,e^{2i\theta(t,x;k)}}{\(k-k_j^{_{(b)}}\)^{_q}}
\right]
Z_{[2]}(t, x; k) e^{-i\theta(t, x; k)} 
=
\ord(1),\ k\to k_j^{_{(b)}}
$).
\end{enumerate}
\end{lemma}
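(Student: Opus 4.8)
The plan is to deduce both parts from the determinant representations $a(k)=\det(Z_{[1]},Y_{[2]})$ and $b(k)=\det(Y_{[2]},Z_{[2]})$ together with a propagation argument in $(t,x)$. I treat the $a$-case in detail; the $b$-case is identical with $Z_{[1]}$ replaced by $Z_{[2]}$ and $\Omega_+$ by $D_+$.

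First I would record the relevant analyticity. Since $k_j^{_{(a)}}\in\Omega_+$ and the factor $e^{\pm i\theta(t,x;k)}$ is entire and nonvanishing for $k\neq0$, properties 5) of Lemma \ref{lem_Y_prop} and Lemma \ref{lem_Z_prop} show that both $Y_{[2]}(t,x;\cdot)$ and $Z_{[1]}(t,x;\cdot)$ are analytic in a full neighbourhood of $k_j^{_{(a)}}$, so $a(k)=\det(Z_{[1]},Y_{[2]})$ is analytic there with a zero of order exactly $n_j^{_{(a)}}$. Since $Z_{[1]}(t,x;k_j^{_{(a)}})\neq0$ (it is a nonzero column of the invertible $Z$, i.e. a nontrivial solution of \eqref{teq},\eqref{xeq}), I complete $Z_{[1]}$ to an analytic frame $\{Z_{[1]},e\}$ and write $Y_{[2]}=\mu(k)Z_{[1]}+\nu(k)e(k)$ with $\mu,\nu$ analytic near $k_j^{_{(a)}}$; then $a(k)=\nu(k)\det(Z_{[1]},e)$ forces $\nu$ to vanish to order $n_j^{_{(a)}}$, i.e.
\[
Y_{[2]}(t,x;k)=\mu(k)Z_{[1]}(t,x;k)+\ord\big((k-k_j^{_{(a)}})^{n_j^{_{(a)}}}\big),\qquad k\to k_j^{_{(a)}}.
\]
Setting $\mu_{m,j}^{_{(a)}}:=\mu^{(m)}(k_j^{_{(a)}})$ and expanding the product by the Leibniz rule reproduces, for $p=0,\dots,n_j^{_{(a)}}-1$, precisely the identities asserted in Part I — provided the coefficients $\mu_{m,j}^{_{(a)}}$ do not depend on $t,x$.

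This $(t,x)$-independence is the crux of the matter. I would fix a reference point $(t_0,x_0)$, let $\mu(k)=\sum_{m=0}^{n_j^{_{(a)}}-1}\frac{\mu_{m,j}^{_{(a)}}}{m!}(k-k_j^{_{(a)}})^m$ be the Taylor polynomial obtained there, and set $P(t,x;k):=Y_{[2]}-\mu(k)Z_{[1]}$. Because $\mu(k)$ is now a \emph{fixed} polynomial in $k$ and both $Y_{[2]},Z_{[1]}$ satisfy \eqref{teq},\eqref{xeq}, so does $P$: $\partial_t P=UP$, $\partial_x P=VP$. Differentiating these $p$ times in $k$ and evaluating at $k_j^{_{(a)}}$ shows that the vectors $\pi_p(t,x):=\partial_k^{\,p}P(t,x;k)\big|_{k_j^{_{(a)}}}$, $p=0,\dots,n_j^{_{(a)}}-1$, solve a closed, lower-triangular linear system, each $\pi_p$ coupling only to $\pi_0,\dots,\pi_{p-1}$ through the $k$-derivatives of $U,V$ (note $U$ is affine in $k$ and $V$ is analytic away from $0$). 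By construction $\pi_p(t_0,x_0)=0$ for all these $p$, so uniqueness for this linear system gives $\pi_p\equiv0$ for all $(t,x)$. Hence the single polynomial $\mu$ works at every $(t,x)$, which is exactly the claimed independence. This uniqueness/propagation step is where the real work lies; everything else is formal.

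Finally, Part II follows from Part I by a Laurent computation. Dividing the displayed relation by $a(k)$ and using $\ord((k-k_j^{_{(a)}})^{n_j^{_{(a)}}})/a(k)=\ord(1)$ gives $\frac1a Y_{[2]}=\frac{\mu(k)}{a(k)}Z_{[1]}+\ord(1)$. Multiplying the Taylor series $\mu(k)=\sum_{m\ge0}\frac{\mu_{m,j}^{_{(a)}}}{m!}(k-k_j^{_{(a)}})^m$ by the principal part $\frac1{a(k)}=\sum_{q=1}^{n_j^{_{(a)}}}\frac{T_{q,j}^{_{(a)}}}{(k-k_j^{_{(a)}})^q}+\ord(1)$ and collecting the coefficient of $(k-k_j^{_{(a)}})^{-p}$ yields exactly $A_{p,j}^{_{(a)}}=\sum_{q=0}^{n_j^{_{(a)}}-p}T_{q+p,j}^{_{(a)}}\frac{\mu_{q,j}^{_{(a)}}}{q!}$, so that $\frac{\mu}{a}=\sum_{p=1}^{n_j^{_{(a)}}}\frac{A_{p,j}^{_{(a)}}}{(k-k_j^{_{(a)}})^p}+\ord(1)$. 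Multiplying through by $e^{i\theta}$, inserting $e^{-i\theta}e^{i\theta}=1$ into the singular term and using that $e^{i\theta}$ is bounded near $k_j^{_{(a)}}$ produces the stated identity. The $b$-case is obtained verbatim upon replacing $a$ by $b$, $Z_{[1]}$ by $Z_{[2]}$ and $\mu^{_{(a)}}$ by $\mu^{_{(b)}}$.
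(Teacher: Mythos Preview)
Your argument is correct and follows essentially the same route as the paper: for Part I the paper invokes the determinant identities $a=\det(Z_{[1]},Y_{[2]})$, $b=\det(Y_{[2]},Z_{[2]})$ and cites \cite[Lemma D.1]{GM2020} for the linear-algebra step, then notes that independence of the $\mu$'s follows from $Y_{[2]},Z_{[1]},Z_{[2]}$ solving \eqref{teq}--\eqref{xeq}; for Part II it refers to \cite[Lemma D.2]{GM2020} and omits the Taylor multiplication. Your frame-completion argument and the propagation-by-uniqueness step simply make explicit what the paper leaves to the reference, and your Laurent computation in Part II is exactly the ``tedious multiplication of Taylor expansions'' the paper alludes to.
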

\begin{proof}
{\it I.} Indeed, $a(k) = \det[Z_{[1]}(t,x;k), Y_{[2]}(t,x;k)]$ and $b(k) = \det[Y_{[2]}(t,x;k), Z_{[2]}(t,x;k)],$
and hence the argument from \cite[Lemma D.1]{GM2020} is readily applied. The fact that $\mu_{q,j}^{_{(a)}}$, $\mu_{q,j}^{_{(b)}}$ are independent of $t, x$ follows from the fact that $Z_{[1]}, Z_{[2]}, Y_{[2]}$ satisfy the $t$- and $x$-equations~\eqref{teq},~\eqref{xeq}.

{\it II.} The proof is a tedious multiplication of Taylor expansions of $Y_{[1]}, Z_{[2]}$ in a manner similar to \cite[Lemma D.2]{GM2020} and we omit it.

(Note however that in \cite[Lemma D.2]{GM2020} instead of the sum 
$\sum_{q=1}^{n^{_{(b)}}_j}
\frac{A_{q,j}^{_{(b)}}\,e^{2i\theta(t,x;k)}}{\(k-k_j^{_{(b)}}\)^{_q}}$
another one was considered, namely
$\sum_{q=1}^{n^{_{(b)}}_j}
\frac{\widetilde A_{q,j}^{_{(b)}}(t,x)\,e^{2i\theta(t,x;k_j^{_{(b)}})}}{\(k-k_j^{_{(b)}}\)^{_q}},$
and hence the corresponding quantities $\widetilde A_{q,j}^{_{(b)}}(t,x)$ depended on $t$ and $x.$
For our RH problem it is more convenient to consider quantities $A_{q,j}^{_{(b)}}$ which are not dependent on $t,x$).
\end{proof}

\subsection*{Riemann-Hilbert problem.}\label{sect_RH}
\noindent Let us define the matrix
\begin{equation}\label{M}
M(t,x;k)\!=\!\begin{cases}
\!\begin{pmatrix}
Z_{[1]}(t,x;k)e^{\ii kt-\ii x\mu(k)} ,& \D\frac{Y_{[2]}(t,x;k)}{a(k)}e^{-\ii kt+\ii x\mu(k)}
\end{pmatrix}\!\!, \quad k \in\! \Omega_+,\!\smallskip\\
\!\begin{pmatrix}	\D\frac{Y_{[1]}(t,x;k)}{a^*(k)}e^{\ii kt-\ii x\mu(k)} ,& Z_{[2]}(t,x;k)e^{-\ii kt +\ii x\mu(k)}
\end{pmatrix}\!\!,\quad k \!\in\! \Omega_-,\!\smallskip\\
\!\begin{pmatrix}
\D\frac{Y_{[2]}(t,x;k)}{b(k)}e^{\ii kt-\ii x\mu(k)} ,& Z_{[2]}(t,x;k)e^{-\ii kt+\ii x\mu(k)}
\end{pmatrix}\!\!,\quad k \!\in\! D_+,\!\smallskip\\
\begin{pmatrix}
Z_{[1]}(t,x;k)e^{\ii kt-\ii x\mu(k)} ,& \D\frac{-Y_{[1]}(t,x;k)}{b^*(k)}e^{-\ii kt+\ii x\mu(k)}
\end{pmatrix}\!\!,\, k \!\in\! D_-.\!
\end{cases}
\end{equation}

\begin{prop}\label{prop_RH}
The function $M$ defined in \eqref{M} solves the following RH problem:
\begin{RHP}\label{mainRHP}
Find a $2\times2$ matrix $M(t,x;k)$ that satisfies the following properties:
\begin{enumerate}
\item\label{RH1} analyticity: $M(t,x;k)$ is analytic in $k\in\mathbb{C}\setminus
\(\Sigma\cup\left\{k_{j}^{_{(a)}}\right\}_{j}
\cup
\{\,\ol{k_{j}^{_{(a)}}}\,\}_{j}
\cup
\left\{k_{j}^{_{(b)}}\right\}_{j}
\cup
\{\,\ol{k_{j}^{_{(b)}}}\,\}_j\)
$
and continuous up to the boundary,
where $k_j^{_{(a)}}$ and $k^{_{(b)}}_j$ are zeros of $a(.)$ in $\Omega_+$ and $b(.)$ in $D_+,$ respectively;

\item \label{RH2} pole conditions: at the zeros $k_j^{_{(a)}}$ of the function $a(.)$ in the domain $\Omega_+$ and their complex conjugates, the following pole conditions are satisfied:
		
$M(t,x;k)\begin{pmatrix}1 & \sum_{q=1}^{n_j^{_{(a)}}}\frac{-A_{q,j}^{_{(a)}}\ e^{-2i\theta(t,x;k)}}{\(k-k_j^{_{(a)}}\)^{_q}}
\\ 0 & 1\end{pmatrix} = \ord(1), \quad k\to k_j^{_{(a)}},$

$	
\hfill M(t,x;k)\begin{pmatrix}1 & 0 \\ \sum_{q=1}^{n_j^{_{(a)}}}\frac{\ol{A_{q,j}^{_{(a)}}}\ e^{2i\theta(t,x;k)}}{\(k-\ol{k_j^{_{(a)}}}\)^{_q}}
& 1\end{pmatrix} = \ord(1), \quad k\to \ol{k_j^{_{(a)}}};$

\item \label{RH3} pole conditions: at the zeros $k_j^{_{(b)}}$ of the $b(.)$ in the domain $D_+$ and their complex conjugates, the following pole conditions are satisfied:

$M(t,x;k)\begin{pmatrix}1 & 0 \\ \sum_{q=1}^{n_j^{_{(b)}}}\frac{-A_{q,j}^{_{(b)}}\ e^{2i\theta(t,x;k)}}{\(k-k_j^{_{(b)}}\)^{_q}} & 1\end{pmatrix} = \ord(1), \quad k\to k_j^{_{(b)}},$

$	
\hfill M(t,x;k)\begin{pmatrix}1 & \sum_{q=1}^{n_j^{_{(b)}}}\frac{\ol{A_{q,j}^{_{(b)}}}\ e^{-2i\theta(t,x;k)}}{\(k-\ol{k_j^{_{(b)}}}\)^{_q}}
\\ 0 & 1\end{pmatrix} = \ord(1), \quad k\to \ol{k_j^{_{(b)}}};$

\item jump conditions: $M_{-}(t,x;k)=M_{+}(t,x;k)J(t,x;k), \qquad
k\in\Sigma\setminus\{-\frac{1}{2}, 0, \frac{1}{2}\}$, 
where
\begin{align}\label{J1}
J(t,x;k)=&
\begin{pmatrix}	1+|r(k)|^2 & -r(k)e^{-2\ii\theta(t,x;k)} \\\\
-{r^*(k)}e^{2\ii\theta(t,x;k)} & 1
\end{pmatrix}, & k \in \mathbb{R},\; |k|>\frac{1}{2}, \nonumber\\
=&\begin{pmatrix}
1& -{(r^*(k))}^{-1}e^{-2\ii\theta(t,x;k)} \\\\
-r(k)^{-1}e^{2\ii\theta(t,x;k)} & 1+|r(k)|^{-2}
\end{pmatrix},& \quad k \in \mathbb{R},\; |k|<\frac{1}{2}, \; k\neq0;\nonumber\\
\end{align}
\begin{align}
=&\begin{pmatrix}
	0 & -r(k)e^{-2\ii\theta(t,x;k)} \\\\
	r(k)^{-1}e^{2\ii\theta(t,x;k)} & 1
\end{pmatrix}, &\quad k \in C_{u}(1/2), \nonumber\\
=&\begin{pmatrix}
0 & (r^*(k))^{-1}e^{-2\ii\theta(t,x;k)} \\\\
-r^*(k)e^{2\ii\theta(t,x;k)} & 1
\end{pmatrix},&\quad k \in C_{d}(1/2), \label{J2}
\end{align}
where  $r(k):=b(k)/a(k)$ is defined on $\mathbb{R}\cup C_{u}(1/2)$, and $\theta(t,x;k)= k t - x \mu(k)$ is defined in \eqref{theta}.
		
\item $M(t,x;k)$ is bounded in the neighbourhoods of the points
$\{-\frac{1}{2}, 0, \frac{1}{2}\}$;

\item normalisation: $ M(t,x;k)=I+O(k^{-1}),\quad |k|\to\infty.$
\end{enumerate}
\end{RHP}
\end{prop}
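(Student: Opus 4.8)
The plan is to verify directly that the piecewise matrix \eqref{M} meets each of the six requirements of Riemann-Hilbert problem \ref{mainRHP}, using the analytic structure of the Jost solutions recorded in Lemmas \ref{lem_Y_prop} and \ref{lem_Z_prop}, the scattering relations \eqref{ZYrelations}, the unimodularity identity \eqref{aabb}, and the local expansions at the zeros of $a,b$ from Lemma \ref{lem_zeros_ab}. First I would dispose of analyticity (item \ref{RH1}) and the normalisation. In each of the four sectors $\Omega_\pm, D_\pm$ the two columns of $M$ are a Jost column times one of $e^{\pm i\theta}$, divided by one of $a,a^*,b,b^*$; parts 5) of Lemmas \ref{lem_Y_prop}, \ref{lem_Z_prop} guarantee that the relevant products $Z_{[1]}e^{i\theta}$, $a^{-1}Y_{[2]}e^{-i\theta}$, and so on, extend analytically to the stated sector, so the only singularities are poles at the zeros of the denominators, which is precisely the excluded set. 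The normalisation $M=I+\mathcal{O}(k^{-1})$ as $k\to\infty$ then follows from the large-$k$ asymptotics in Lemma \ref{lem_Y_prop}(6) and Lemma \ref{lem_Z_prop}(5) together with $a,a^*\to 1$.

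The computational core is the jump relation. On each arc of $\Sigma$ I would rewrite both boundary values of $M$ in the common frame $(c_1,c_2):=(Z_{[1]}e^{i\theta},\,Z_{[2]}e^{-i\theta})$. For instance, on $\mathbb{R}$ with $|k|>\tfrac12$, substituting $Y_{[1]}=a^*Z_{[1]}-b^*Z_{[2]}$ and $Y_{[2]}=aZ_{[2]}+bZ_{[1]}$ from \eqref{ZYrelations} and writing $r=b/a$, $r^*=b^*/a^*$, the $\Omega_+$ formula gives $M_+=(c_1,c_2)\left(\begin{smallmatrix}1 & re^{-2i\theta}\\ 0 & 1\end{smallmatrix}\right)$ and the $\Omega_-$ formula gives $M_-=(c_1,c_2)\left(\begin{smallmatrix}1 & 0\\ -r^*e^{2i\theta} & 1\end{smallmatrix}\right)$, whence $J=M_+^{-1}M_-=\left(\begin{smallmatrix}1+|r|^2 & -re^{-2i\theta}\\ -r^*e^{2i\theta} & 1\end{smallmatrix}\right)$, matching \eqref{J1}. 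The remaining three arcs are handled in the same spirit: on $|k|<\tfrac12$ one uses the $D_\pm$ formulas and $a^*/b^*=(r^*)^{-1}$ to reach the second case of \eqref{J1}, while on the semicircles $C_u, C_d$ the crossing is between $\Omega_\pm$ and $D_\pm$, so the single relation $Y_{[2]}=aZ_{[2]}+bZ_{[1]}$ together with $a/b=r^{-1}$ reproduces \eqref{J2}. Throughout, it is the unimodularity \eqref{aabb} that forces the diagonal entries to come out as $1$, $1+|r|^{\pm2}$, or $0$.

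The pole conditions (items \ref{RH2}, \ref{RH3}) I would read off from Lemma \ref{lem_zeros_ab}(II). Writing $M=(M_1,M_2)$, right-multiplication of $M$ by the triangular factor in item \ref{RH2} replaces $M_2$ by $M_2+cM_1$ with $c=-\sum_q A_{q,j}^{_{(a)}}e^{-2i\theta}(k-k_j^{_{(a)}})^{-q}$; since $M_1=Z_{[1]}e^{i\theta}$ is analytic at $k_j^{_{(a)}}$ and $M_2=a^{-1}Y_{[2]}e^{-i\theta}$, this combination equals $e^{-i\theta}$ times the left-hand side of the displayed identity of Lemma \ref{lem_zeros_ab}(II), up to the bounded factor $e^{-2i\theta}$, hence is $\mathcal{O}(1)$. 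The conjugate conditions and the $b$-zeros are the mirror computation, using the symmetries from Lemma \ref{lem_Y_prop}(2), Lemma \ref{lem_Z_prop}(2) and the second half of Lemma \ref{lem_zeros_ab}.

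The step I expect to be most delicate is the boundedness at the three exceptional points $\{-\tfrac12,0,\tfrac12\}$, where $\Sigma$ self-intersects and $\mu(k)$ is singular at $k=0$. There the analyticity of the individual Jost columns breaks down, and one must instead combine the $k\to0$ expansions from Lemma \ref{lem_Y_prop}(6) and Lemma \ref{lem_Z_prop}(5) with the near-origin behaviour of $a$ and $b$, checking from every adjacent sector that each entry of $M$ remains bounded; the corresponding local analysis is carried out in \cite{FKM17}, and I would invoke it, verifying only that the present normalisations of $r$ and of $\theta$ in \eqref{theta} are consistent with that reference.
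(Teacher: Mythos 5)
Your proposal is correct and takes exactly the route the paper intends: the paper's own proof is the one-line remark that the claim ``boils down to direct verification of all the properties,'' and your verification (analyticity and normalisation from Lemmas \ref{lem_Y_prop}, \ref{lem_Z_prop}; the jumps by rewriting $M_\pm$ in the common frame $(Z_{[1]}e^{i\theta}, Z_{[2]}e^{-i\theta})$ via \eqref{ZYrelations}; the pole conditions from Lemma \ref{lem_zeros_ab}(II) after noting the triangular factor produces $e^{-2i\theta}$ times the Lemma's bounded combination; symmetry for the conjugate points; and deferring the local boundedness at $\{-\frac12,0,\frac12\}$ to \cite{FKM17}) is precisely the verification the paper leaves implicit. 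The only nitpick is cosmetic: in the pole-condition step the modified column equals $e^{-2i\theta}$ times the left-hand side of Lemma \ref{lem_zeros_ab}(II) (not ``$e^{-i\theta}$ times, up to $e^{-2i\theta}$''), which does not affect the conclusion.
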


\begin{proof}
The proof is standard and boils down to direct verification of all the properties.
\end{proof}

\begin{rem}
If the zeros of $a(.), b(.)$ lie on the circle $C_u(\frac12),$ the corresponding limits in the conditions \eqref{RH2}, \eqref{RH3} are understood within the respective domain.

Note that RH problem \ref{mainRHP} is fully determined by the initial and boundary data \eqref{IBC}.

Note also that the RH problem \ref{mainRHP} is formulated on the contour $\Sigma,$ which is the union of the continuous spectra of both Lax operators for the Maxwell-Bloch equations.
\end{rem}

\section{Matrix Riemann-Hilbert problem}\label{sect_matrix_RH}

In this section, we drop the assumption of the previous Section \ref{sect_scattering_data} that there exists a solution to the initial value boundary problem \eqref{MB1a}, \eqref{IBC}.

From now on we restrict ourselves to the case of {\it the input pulse}, i.e. we assume that the initial conditions are trivial, i.e. given by formulae \eqref{trivial}, and a boundary function \eqref{input_pulse} satisfies condition \eqref{first_moment}.
Recall (see Remark \ref{rem_prop_abr_trivial}) that the functions $b(.), r(.)$ do not vanish identically provided that $\mathcal{E}_1(t)$ does not vanish identically.

\begin{prop}\label{prop_ibv}
Let $\mathcal{E}_{1}(t)$ be a locally integrable function satisfying \eqref{first_moment} and not identically equal to zero, let $a(.), b(.)$ be corresponding to $\mathcal{E}_{1}(t)$ spectral functions of the $t$-equation defined by formulae \eqref{Phi0} (i.e. $a(k)=A(k), b(k)=B(k)$) and let $r(k) = \frac{b(k)}{a(k)}$ and let $r(k)=\mathcal{O}(k^{-2})$ as $k\to\infty.$
Then
\begin{enumerate}[I.]
\item RH problem \ref{mainRHP} has a unique solution.
\item Functions $\mathcal{E}, \rho, \mathcal{N}$ defined by the following formulae
\begin{align}\label{E1}
{\cal E}(t,x)=&-\lim\limits_{k\to\infty}4\ii k M_{12}(t,x;k),
\end{align}
\begin{equation}\label{F1}
\begin{pmatrix} {\cal N}(t,x) & \rho(t,x) \\
\ol{\rho(t,x)} & -{\cal N}(t,x) \end{pmatrix}=M(t,x;+\ii 0)\sigma_3 M^{-1}(t,x;+\ii 0),
\end{equation}
satisfy the MB equations \eqref{MB1a} together with the initial and boundary conditions \eqref{input_pulse},\eqref{trivial}.
\item Given a solution $M(t,x;k)$ of the RH problem \ref{mainRHP} (which exists and is unique in view of part I of this Proposition), define functions $Y_{[1]}, Y_{[2]}, Z_{[1]}, Z_{[2]}$ by formula \eqref{M}. Then these functions satisfy the properties of Lemmas \ref{lem_Y_prop}, \ref{lem_Z_prop} 
and, moreover, the function $Z = (Z_{[1]}, Z_{[2]})$ satisfies the property in Remark \ref{remZ}, i.e. $Z$ is analytic in $k\in\mathbb{C}\setminus\{0\}$.

\end{enumerate}

\end{prop}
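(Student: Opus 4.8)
The plan is to handle the three parts in their natural logical order: first solvability and uniqueness of RH problem \ref{mainRHP} (part I), then reconstruction of the potentials by a dressing argument together with verification of the MB equations and the prescribed data (part II), and finally reading off the properties of $Y,Z$ from those of $M$ (part III). Uniqueness is the easy half of part I. I would first check by direct computation from \eqref{J1}--\eqref{J2} that $\det J(t,x;k)\equiv1$ on every arc of $\Sigma$ (each jump is unimodular because $r r^{*}=|r|^{2}$ on $\mathbb{R}$ and the off-diagonal products cancel the $|r|^{\pm2}$ terms on the circles). Since the triangular pole matrices in conditions \eqref{RH2},\eqref{RH3} are unimodular, $\det M$ then has no jump, no poles, is bounded near $\pm\tfrac12,0$ and tends to $1$, so Liouville forces $\det M\equiv1$; in particular $M$ is invertible. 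If $M_{1},M_{2}$ are two solutions, $M_{1}M_{2}^{-1}$ inherits a trivial jump, the triangular pole conditions make its singularities at the $k_{j}^{_{(a)}},k_{j}^{_{(b)}}$ and their conjugates removable, and it is bounded near the exceptional points and normalised to $I$, whence $M_{1}\equiv M_{2}$.

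Existence is the genuinely hard step and the place I expect the main obstacle. I would recast RH problem \ref{mainRHP} as a singular integral equation on $\Sigma$ and invoke Zhou's vanishing lemma: the associated homogeneous problem (same jumps and pole conditions but $M=\ord(k^{-1})$ at infinity) has only the zero solution. The Schwarz symmetry $M(t,x;k)=\Lambda\,\overline{M(t,x;\overline k)}\,\Lambda^{-1}$, inherited from property 2) of Lemmas \ref{lem_Y_prop},\ref{lem_Z_prop}, lets one build an auxiliary function $H_{0}(k)=M_{0}(k)\,\Lambda\,\overline{M_{0}(\overline k)}^{\top}\Lambda^{-1}$, show it is entire and decaying, hence $\equiv0$, and then extract a boundary positivity relation forcing $M_{0}\equiv0$. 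The difficulty is that on $C_{u},C_{d}$ the jumps \eqref{J2} have a vanishing diagonal entry and exponentially growing off-diagonal entries, so the naive quadratic form is not sign-definite. To circumvent this I would exploit Remark \ref{rem_prop_abr_trivial}: for trivial initial data $a,b$ extend analytically (indeed are entire when $T<\infty$), so the circle arcs are not independent data but only encode the relation $r=b/a$. This analyticity should let me absorb the circle contributions by an explicit (Blaschke/rational) transformation that also removes the discrete data attached to the zeros of $a$ and $b$, reducing the positivity question to the real line, where the jump $\bigl(\begin{smallmatrix}1+|r|^{2}&\ast\\ \ast&1\end{smallmatrix}\bigr)$ is of standard positive type.

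For part II I would set $\tilde M(t,x;k):=M(t,x;k)\,e^{-i\theta(t,x;k)\sigma_3}$ with $\theta$ from \eqref{theta}. Since $J=e^{-i\theta\sigma_3}J_{0}e^{i\theta\sigma_3}$ with $J_{0}$ independent of $t,x$, and since the residue data of \eqref{RH2},\eqref{RH3} become $t,x$-independent after the same conjugation, the logarithmic derivatives $\tilde M_{t}\tilde M^{-1}$ and $\tilde M_{x}\tilde M^{-1}$ have no jump across $\Sigma$ and no poles at the $k_{j}$'s (writing $\tilde M=G_{j}N_{j}^{-1}$ with $G_{j}$ analytic and invertible, $N_{j}$ constant, the poles cancel). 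They are bounded near $\pm\tfrac12,0$ except for the simple pole at $k=0$ produced by $\mu(k)=k+\tfrac1{4k}$ in $\theta_{x}$, so a Liouville argument yields
\[
\tilde M_{t}\tilde M^{-1}=-\bigl(ik\sigma_3+H\bigr),\qquad
\tilde M_{x}\tilde M^{-1}=ik\sigma_3+H+\tfrac{i}{4k}F,
\]
where, expanding $M=I+M_{1}k^{-1}+\ord(k^{-2})$, one gets $H=i[M_{1},\sigma_3]$ (its $(1,2)$ entry reproduces \eqref{E1}) and $F=M(t,x;+i0)\sigma_3M^{-1}(t,x;+i0)$, i.e. \eqref{F1}. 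Thus $\tilde M$ solves both linear equations \eqref{teq},\eqref{xeq} simultaneously, so its compatibility condition \eqref{UV} holds identically, which by \eqref{HF} is precisely the MB system \eqref{MB1a}; moreover $F^{2}=I$ (from $F=M\sigma_3M^{-1}$ and $\det M\equiv1$) gives $|\rho|^{2}+\mathcal N^{2}\equiv1$.

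It remains to pin down the data, the second delicate point. At $x=0$ one has $\theta=kt$ and the jump/pole data reduce to those generated by $\mathcal E_{1}$ through $a,b$ as in Remark \ref{rem_prop_abr_trivial}, and the standard inverse-scattering identification gives $\mathcal E(t,0)=\mathcal E_{1}(t)$. At $t=0$ one has $\theta=-x\mu(k)$ and, by Remark \ref{remZ}, $w(x;k)=e^{ix\mu(k)\sigma_3}$ so that $Z$ is analytic in $\mathbb C\setminus\{0\}$; I would show that the $t=0$ reconstruction is gauge-equivalent to the vacuum, forcing $F(0,x)=\sigma_3$, i.e. $\rho(0,x)=0$, $\mathcal N(0,x)=1$, and $\mathcal E(0,x)=0$, as required by \eqref{trivial}. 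Finally, part III is bookkeeping: defining $Y,Z$ from $M$ via \eqref{M} and undoing the exponential factors, the analyticity domains, the $\Lambda$-symmetry (from uniqueness applied to $\Lambda\overline{M(t,x;\overline k)}\Lambda^{-1}$), the unit determinants (from $\det M\equiv1$), and the $t$- and $x$-equations (from part II) are all inherited directly, while the extra analyticity of $Z$ across the circle asserted in Remark \ref{remZ} follows from the explicit form of $w(x;k)$ for trivial data.
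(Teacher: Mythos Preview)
Your uniqueness argument for part~I matches the paper's exactly. For existence, the paper takes a shorter route than you: it verifies directly that $\overline{J^{T}(\bar k)}=J(k)$ on $\Sigma\setminus\mathbb{R}$ and that $J+\overline{J^{T}}$ is positive definite on $\mathbb{R}$, and then invokes Zhou's Theorem~9.3. Your vanishing-lemma approach is essentially the mechanism behind that theorem, but your worry about the circle arcs is misplaced: the Schwarz symmetry relation between the jumps on $C_{u}$ and $C_{d}$ is precisely what makes the auxiliary function $H_{0}$ analytic across the circle, so no positivity is needed there---only on $\mathbb{R}$, where the jump is of standard type. Your proposed workaround (deform the circle away using the analyticity of $a,b$) would also work but is unnecessary.

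For part~II your dressing argument is correct and more self-contained than the paper, which simply cites \cite{FKM17} for the Lax pair and the MB equations. For the boundary condition the paper is more explicit than your ``standard inverse-scattering identification'': at $x=0$ it performs a concrete triangular transformation inside $D_{\pm}$ that removes the circle jump, leaving a pure real-line RH problem whose unique solution is written down in terms of the Jost solutions $\Phi,\Psi$ of the $t$-equation with potential $\mathcal{E}_{1}$; the reconstruction formula then gives $\mathcal{E}(t,0)=\mathcal{E}_{1}(t)$ directly. For the initial data the paper, like you, defers to the causality argument.

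Part~III is where your proposal has a genuine gap. Your sentence ``the extra analyticity of $Z$ across the circle asserted in Remark~\ref{remZ} follows from the explicit form of $w(x;k)$ for trivial data'' is circular: $w$ was introduced in Section~\ref{sect_scattering_data} under the \emph{assumption} that the solution exists, whereas here you only have $M$ and must \emph{derive} the analyticity of the columns defined by \eqref{M}. A~priori, the $Z_{[1]}$ defined in $\Omega_{+}$ and the one in $D_{-}$ (and likewise the two pieces of $Z_{[2]}$) are unrelated functions. The paper closes this gap by reading the jump condition on $C_{u}(1/2)$ as the scattering relation $Y_{[2]}=b\,\hat Z_{[1]}+a\,\tilde Z_{[2]}$, which shows that $\tilde Z_{[2]}$ (defined in $D_{+}$) continues into $\Omega_{+}$, and then compares with the jump on $|k|>1/2$ to identify it with $\hat Z_{[2]}$; the symmetry $M=\Lambda\,\overline{M(\cdot,\bar k)}\,\Lambda^{-1}$ (which you do invoke) handles $Z_{[1]}$. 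You need this jump-based continuation argument, not an appeal to $w$.
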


\begin{proof}
I. Note that the RH problem satisfies the Schwartz reflection symmetry \cite{Zhou89}:
$\ol{J^T(\ol k)} = J(k)$ for $k\in \Sigma\setminus\mathbb{R},$ $J(k) + \ol{J^T(k)}$ is positive definite for $k\in\mathbb{R}$ (here superscript $T$ denotes transposition, and $J(k) = J(t,x;k)$), which implies \cite[Theorem 9.3]{Zhou89} the solvability of the RH problem.

Furthermore, it is easy to see that $\det J(t,x;k) \equiv 1$, $k\in\Sigma\setminus\{-\frac12, 0, \frac12\},$ and hence $\det M(t,x;k)\equiv1$ for $\forall k\in\mathbb{C}\setminus\Sigma.$
The uniqueness then follows by a standard argument: assuming that there exists another solution $\widetilde M,$ it follows that the function $\widetilde M M^{-1}$ 
\begin{enumerate}
\item  has identity jumps across $\Sigma,$ 
\item does not have poles at the points $k_j^{_{(a)}},$ $k_j^{_{(b)}}$ and their complex conjugates, 
\item is bounded everywhere, 
\item tends to the identity matrix at infinity.
\end{enumerate}
It then follows from the Liouville theorem that $\widetilde M M^{-1}$ is the identity matrix.

	\medskip
II. {\it Step 1: differentiability of $M(t,x;k)$ in $t, x.$} In view of Remark \ref{rem_prop_abr_trivial}, the function $r(k)=\frac{b(k)}{a(k)}$
admits an analytic continuation from the real axis.
It then follows that the solution of the RH problem is differentiable in $t, x$ (cf. \cite[Theorem 2.2]{KM19}).
Indeed, first of all we note that  differentiation of the jump matrices $J(t,x,k)$ with respect to $t$ and $x$ multiply their entries on $k$ and $k-\frac{1}{4k}$ correspodently. This forces us to pass to the equivalent Riemann-Hilbert problem. This can be done due to the analyticity of the reflection coefficient by reformulating the RH problem for a new contour, which should bypass only the origin of the complex plane because at infinity 
$r(k)$ vanishes like $k^{-m}$ ($m\ge2$).  
\begin{figure}[ht]
\hskip2cm	\begin{tikzpicture}%(150,160)(-70,100)
		\setlength{\unitlength}{0.60mm}
		\linethickness{1pt}
		%%%%%real line
		\draw[thick,->] (-7,0) to (-3,0);
		\draw[thick,->] (3,0) to (7,0);
		%%%%%% dotted line
		%\draw[thin] (-3,0) to (3,0);
		%%%%%%%%%%%%%%% 
		\draw[very thick, dashed] (-3,0) to (3,0);
		%\draw [line width=1pt,dash pattern=on 11pt off 5pt] (-3,0) to (3,0);	
		\draw [fill=black] (0,0) circle (2pt);	
		%%%%%%%%%%%%%%% line L
\draw[thick,postaction=decorate, decoration={markings, mark = at position 0.5 with {\arrow{<}}}](3,0)[out=150, in =30] to (-3,0);
		%%%%%%%%%%%%%%% line L
\draw[thick,postaction=decorate, decoration={markings, mark = at position 0.5 with {\arrow{<}}}](3,0)[out=-150, in =-30] to (-3,0);
		\node at (0, -1.3) {$\ol{\hat L}$};
		\node at (2.6, 2.3) {$C_u$};
		\node at (2.6, -2.3) {$C_d$};
		
		%%%%%%%%%%%%% D
		\node at (1,0.4 ) {$D$};
		\node at (1,-0.4 ) {$\ol{D}$};
		\node at (3.4,-0.4 ) {$1/2$};
		\node at (-3.5,-0.4 ) {$-1/2$};
		\node at (0,-0.3) {$0$};	\draw [line width=1pt,] (0,0) circle (0.04cm);
		\draw [ line width=0.8pt,] (0,0) circle (3cm);
	\end{tikzpicture}
	\caption{All branch of the jump contour $\Sigma^{(1)}$ are oriented from the left to the right.} \label{sigma1}
\end{figure}
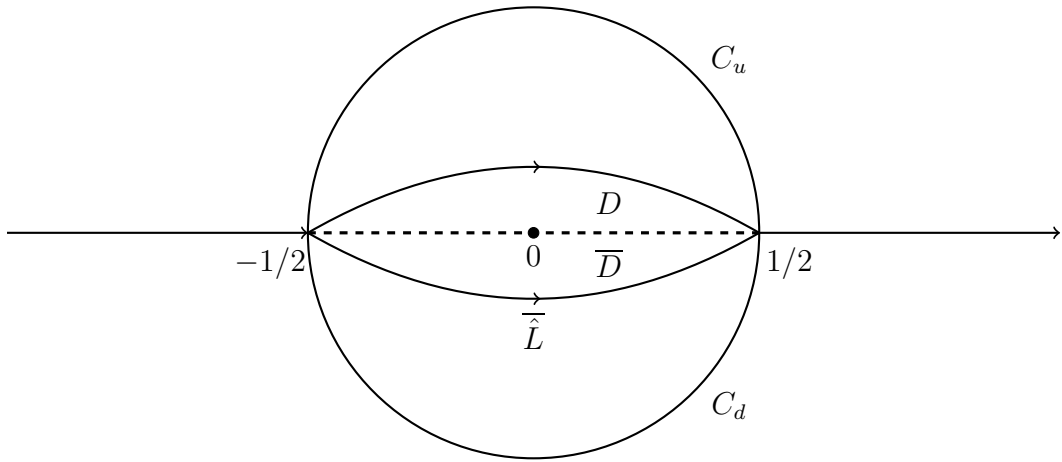
The factorization of the jump matrix given on the interval $(-\frac{1}{2}, \frac{1}{2})$
\begin{align*}
	J(t,x,k)=&\begin{pmatrix}1&0\\-r^{-1}(k)\ee^{2\imi\theta(t,x;k)}&1 \end{pmatrix}
	\begin{pmatrix}1&(-r^*(k))^{-1}\ee^{-2\imi\theta(t,x;k)}\\0&1
	\end{pmatrix}
\end{align*}
suggests a suitable transformation of the RH problem, namely:	
\[ {M}^{(1)}(t,x,k)=M(t,x,k) G^{(1)}(t,x,k),\]
where (see figure \ref{sigma1})
\begin{align*}
	G^{(1)}(t,x,k)=&\begin{pmatrix}1&0\\-r(k)^{-1}\ee^{2\imi\theta(t,x;k)}&1 \end{pmatrix},
	& k\in D,\\
	=&\begin{pmatrix}1&(r^*(k))^{-1}\ee^{-2\imi\theta(t,x;k)}\\0&1
	\end{pmatrix},
	&k\in\overline{D},\\
	=&\begin{pmatrix}1&0\\0&1
	\end{pmatrix},& k\notin D \cup\overline{D},
\end{align*}
This transformation implies the following RH problem:
\[{M}^{(1)}_-(t,x,k)={M}^{(1)}_+(t,x,k){J}^{(1)}(t,x,k), \qquad k\in\hat{\Sigma}^{(1)},\]
\[{M}^{(1)}(t,x,k)\rightarrow I,\ k\rightarrow\infty,\]
where $\Sigma^{(1)}=
(-\infty,-\frac{1}{2})\cup (\frac{1}{2}, \infty)\cup C_{u}(\frac{1}{2})\cup C_d(\frac{1}{2}) \cup\hat L \cup\overline{\hat L}$ and the jump matrix $J^{(1)}(t,x,k)=(G^{(1)}_+(t,x,k))^{-1}J(t,x,k)G^{(1)}_-(t,x,k)=$
\begin{align*}
	=&\begin{pmatrix}1&0\\0&1
	\end{pmatrix},& k\in (-\frac{1}{2}, \frac{1}{2})\\
	=&\begin{pmatrix}1&0\\-r^{-1}(k)\ee^{2\imi\theta(t,x;k)}&1 \end{pmatrix}, & k\in\hat L,\\
	=&\begin{pmatrix}1&(r^*(k))^{-1}\ee^{-2\imi\theta(t,x;k)}\\0&1
	\end{pmatrix}, & k\in\ol{\hat L}\end{align*}
and $J^{(1)}(t,x,k)=J(t,x,k)$ for $k\in(-\infty,-\frac{1}{2})
\cup (\frac{1}{2}, \infty)\cup C_{u}(\frac{1}{2})\cup C_d(\frac{1}{2}). $	
We have to note that matrix $M^{(1)}(t,x,k)$  is analytic everywhere with the exception of the contour $\Sigma^{1}$. Indeed, at the point $k=0$ matrices $M(t,x,k)$ and $G^{(1)}(t,x,k)$  are bounded in upper and lower half vicinities of zero.  Hence the matrix   $M^{(1)}(t,x,k)$, is also bounded there and continuous in  punchered  vicinity and therefore it analytic and at the point zero.

As usual, let us put $M^{(1)}_+(t,x,s)=I+N^{(1)}(t,x,s)$ when $ s\in \Sigma_+^{(1)}$ and pass to the singular integral equation:
$$
N^{(1)}(t,x,s) - \dfrac{1}{2\pi\imi}\int\limits_{\Sigma^{(1)}}\dfrac{N^{(1)}(t,x,z)[I-J^{(1)}(t,x,z)]}{(z-s)_+}d z=
\dfrac{1}{2\pi\imi}\int\limits_{\Sigma^{(1)}}\dfrac{[I-J^{(1)}(t,x,z)]}{(z-s)_+}d z.
$$ 
After X. Zhou \cite{Zhou89} it is well known that such type of singular integral equations are uniquely solvable in Hilbert space $L^2(\Sigma^{(1)})$. In particular, the unique solvability of the mixed problem to Maxwell-Bloch equations can be found in  \cite{K13}. Thus the above singular integral equation  has a unique solution $N^{(1)}(t,x,s)\in L^2(\Sigma^{(1)})$. Now we can differentiate the last equation in $t$ and $x$ as many times as it allows the number $m$. Indeed, to differentiate these equations and matrix $N^{(1)}(t,x,s)$ it is sufficient that its formal derivatives are convergent.
So the derivative of order $p$ leads to the multiplication of the integrand by the factor $k^q(k+\frac{1}{4k})^{p-q}$  where ($q\le p$). 
Thus, the integrands have the order $O(k^{p-m})$ when $k\to\infty$ since the reflection coefficient $r(k)$ decreases as $O(k^{-m})$. The power factors on the finite part of the contour $\Sigma^{(1)}$ do not affect the convergence of the integrals, while at infinity the $L^2$-convergence takes place under the condition
$2(m-p)>1$ that means $1\le p<m-1/2$. Therefore we have to take into account such inequalities: $1\le p\le m-1$ and, hence, $m\ge2$. This provides a unique solvability and existence of the partial derivatives of $N^{(1)}(t,x,k)$ with respect to $t$ and $x$. Hence for $k\neq0$ matrices $M^{(1)}(t,x,k)$ and $M(t,x,k)$ have partial derivatives of order $p\le m-1$ under condition that $m\ge2$. 

{\it Step 2: Lax pair equations from the RH problem solution.}
It is known (see, for example, \cite{K13}, \cite{FKM17}) that the next theorem is valid.
\begin{thm}\label{AKNS}
	Let
	${\Phi}(t,x,k):={M}^{(1)}(t,x,k)\ee^{-\imi k(t-x)\sigma_3 -\imi\frac{x\sigma_3}{4k}} $. Then  $ {\Phi}(x,t,k)$ satisfies\\ the 
	Ablowitz-Kaup-Newell-Segur  system of equations:
	\begin{align*}
		\Phi_t=&-\left(\imi k\sigma_3 +H(t,x)\right)\Phi, \\  \label{xequ1}
		\Phi_x=&\left(\imi k\sigma_3 +H(t,x)+\frac{\imi F(t,x)}{4k}\right)\Phi, 
	\end{align*}
	The matrices $H(t,x)$ and $F(t,x)$ are defined as
	\begin{equation}\label{H1}\nonumber
		H(t,x)=-\imi [\sigma_3, m^{(1)}(t,x)],\qquad m^{(1)}(t,x):=-\dfrac{1}{\pi}\int\limits_{\Sigma^{(1)}} \left[I+N^{(1)}(t,x,k)\right]\left[I-J^{(1)}(t,x,k)\right]  d k,
	\end{equation}
	\begin{equation}\label{F1}\nonumber
		F(t,x):=\begin{pmatrix} {\cal N}(t,x) & \rho(t,x) \\
			\overline {\rho (t,x)} & -{\cal N}(t,x) \end{pmatrix}=-M^{(1)}(t,x,0)\sigma_3\left(M^{(1)}(t,x,0)\right)^{-1}.
	\end{equation}
\end{thm}
\begin{Cor}
	AKNS equations are evidently compatible, i.e. $\Phi_{tx}=\Phi_{xt}$, and hence the Maxwell-Bloch equations in the form \eqref{HF} are satisfied.
\end{Cor}

{\it Step 3: initial and boundary conditions.}
 It remains to check that the initial and boundary conditions are fulfiled.
The verification that the $\mathcal{E}, \rho, \mathcal{N}$ satisfy the trivial initial conditions \eqref{trivial} follows in the same way as the proof of Theorem \ref{thm_caus}.

To check the fulfilment of the boundary conditions,
set $x=0$ in RH problem \ref{mainRHP} and make the following transformation:
\\$M^{(1)}(t;k) = M(t,x=0;k)G^{(1)}(t;k),$
where 
\[
G^{(1)}(t;k) = \begin{pmatrix}1& 0\\
-r(k)^{-1} e^{2\ii t k}& 1
\end{pmatrix},  k\in D_+,
\quad
G^{(1)}(t;k) = \begin{pmatrix}1& (r^*(k))^{-1} e^{-2\ii t k}\\
0& 1 \end{pmatrix}, k\in D_-,
\]
and
$G^{(1)}(t;k) = I\mbox{ elsewhere.}$
We then obtain an equivalent RH problem
$$M^{(1)}_-(t;k)=M^{(1)}_+(t;k)J^{(1)}(t;k), \qquad k\in \mathbb{R},
$$
with the jump matrix
$
J^{(1)}(t;k)=
\begin{pmatrix}	1+|r(k)|^2 & -r(k)e^{-2\ii t k} \\
-{r^*(k)}e^{2\ii t k} & 1
\end{pmatrix}, k \in \mathbb{R},
$
and the pole conditions as in RH problem \ref{mainRHP} at the zeros $k_j^{_{(a)}}$ of the function $a(.)$ in the domain $\Im k\geq0$ and their complex conjugates.

Note that this RH problem has a unique solution that is given as follows:
\[
M^{(1)}(t;k)
=
\begin{cases}
\begin{pmatrix}
\Psi_{1}(t;k)e^{ikt}, & \frac{1}{a(k)}\Phi_2(t;k)e^{-ikt}
\end{pmatrix}, \Im k>0,
\\
\begin{pmatrix}
\frac{1}{a^*(k)}\Phi_1(t;k)e^{ikt}, & \Psi_{2}(t;k)e^{-ikt}
\end{pmatrix}, \Im k<0,
\end{cases}
\]
where $\Phi(t;k), \Psi(t;k)$ are the solutions of the $t$-equation \eqref{teq},
$\Phi_t(t;k) + i k \sigma_3 \Phi(t;k) = \begin{pmatrix}0 & -\frac12\mathcal{E}_1(t) \\ \frac12\ol{\mathcal{E}_1(t)} & 0\end{pmatrix}\! \Phi(t;k),$ which are 
uniquely determined by the boundary conditions
$
\lim\limits_{t\to\infty}\Phi(t; k)e^{i k t \sigma_3} = I,
$
$
\Psi(t=0; k) = I,
$
respectively.
It then readily follows from \eqref{E1} that 
$\mathcal{E}(t,x=0;k) = -4i\lim\limits_{k\to\infty}M_{12}(t,x=0;k)=
-4i\lim\limits_{k\to\infty}M^{(1)}_{12}(t;k)=\mathcal{E}_1(t),$ 
which is what we aimed to prove.

\medskip
III. The properties of the columns follow from the fact that $M(t,x;k)$ satisfies the RH~problem~\ref{mainRHP}.
Indeed,
we can write matrix $M(t,x,k)$ as follows 
\[
M(t,x;k)\!=:\!\begin{cases}
	\!\begin{pmatrix}
		\hat Z_{[1]}(t,x;k)e^{\ii\theta(t,x,k)},&\D\frac{\hat Y_{[2]}(t,x;k)} {a(k)}e^{-\ii \theta(t,x,k)}
	\end{pmatrix}\!\!, \quad k \in\! \Omega_+,\!\smallskip\\
	\!\begin{pmatrix}
		\D\frac{\tilde Y_{[2]}(t,x;k)} {b(k)}e^{\ii \theta(t,x,k)} ,& \tilde Z_{[2]}(t,x;k)e^{-\ii \theta(t,x,k)}
	\end{pmatrix}\!\!,\quad k \!\in\! D_+,\!\smallskip\\
	\!\begin{pmatrix}
		\D\frac{\hat Y_{[1]}(t,x;k)} {a^*(k)}e^{\ii\theta(t,x,k)} ,& \hat Z_{[2]}(t,x;k)e^{-\ii\theta(t,x,k)}
	\end{pmatrix}\!\!,\quad k \!\in\! \Omega_-,\!\smallskip\\
	\begin{pmatrix}
		\tilde Z_{[1]}(t,x;k)e^{\ii\theta(t,x,k)} ,& \D\frac{-\tilde Y_{[1]}(t,x;k)}{b^*(k)}e^{-\ii \theta(t,x,k)}
	\end{pmatrix}\!\!,\,\quad k \!\in\! D_-,\!
\end{cases}
\]
where, in view of property \eqref{RH1} of RH problem \ref{mainRHP}, the functions $\hat Z_{[1]}(t,x;k)$ and 
$\hat Y_{[2]}(t,x;k)$ ($\hat Z_{[2]}(t,x;k)$ and 
$\hat Y_{[1]}(t,x;k)$) are meromorphic in $k\in\Omega_+$ ($k\in\Omega_-$),
 while $\tilde Y_{[2]}(t,x;k)$ and
$\tilde Z_{[2]}(t,x;k)$ ($\tilde Z_{[1]}(t,x;k)$ and 
$\tilde Y_{[1]}(t,x;k)$) are meromorphic in $k\in D_+$ ($z\in D_-$). 
Moreover, the residual conditions \eqref{RH2}, \eqref{RH3} of RH problem \ref{mainRHP} imply that $\hat Z_{[1]}(t,x;k), \tilde Z_{[2]}(t,x;k), \hat Z_{[2]}(t,x;k), \tilde Z_{[1]}(t,x;k)$ do not have poles in their domains.
Furthermore, expansions for $a(k)^{-1}, b(k)^{-1}$ from Lemma \ref{lem_zeros_ab} and the residual conditions \eqref{RH2}, \eqref{RH3} show that the rest of the columns, i.e. $\widehat Y_{[2]}(t,x;k),$ $\tilde Y_{[2]}(t,x;k), \tilde Y_{[1]}(t,x;k), \hat Y_{[1]}(t,x;k)$ also do not have poles in their domains (otherwise the products in the conditions \eqref{RH2}, \eqref{RH3} would not be regular).

From the jump conditions on $C_u(\frac12)$ it follows that $\hat Y_{[2]}(t,x;k) = \tilde Y_{[2]}(t,x;k)$ for $k\in C_u(\frac12),$ and hence in view of analyticity $\hat Y_{[2]}(t,x;k) \equiv \tilde Y_{[2]}(t,x;k) =: Y_{[2]}(t,x;k)$ is analytic in $k\in \mathbb{C}_+.$
Similarly, from the jump condition on $C_d(\frac12)$ it follows that $\hat Y_{[1]}(t,x;k) \equiv \tilde Y_{[1]}(t,x;k) =: Y_{[1]}(t,x;k)$ is analytic in $k\in\mathbb{C}_-.$

Furthermore, the jump condition on $C_u(\frac12)$ also implies
\begin{equation}
\label{jump_Cu}
\hat Y_{[2]}(t,x;k) = b(k)\hat Z_{[1]}(t,x;k) + a(k) \tilde Z_{[2]}(t,x;k),\qquad k\in C_u(1/2),
\end{equation}
and hence $\tilde Z_{[2]}(t,x;k)$ has an analytic continuation into $k\in \Omega_+,$ and thus is analytic in $\mathbb{C}_+.$ Besides, $\hat Z_{[1]}(t,x;k)$ admits an analytic continuation into $D_+,$ and thus is analytic in $\mathbb{C}_+.$
Similarly, from the jump condition on $C_d(\frac12)$ it follows that $\tilde Y_{[1]}(t,x;k) = a^*(k) \tilde Z_{[1]}(t,x;k) - b^*(k) \hat Z_{[2]}(t,x;k)$ and thus $\tilde Z_{[1]}(t,x;k)$ and $\hat Z_{[2]}(t,x;k)$ are analytic in $\mathbb{C}_-.$

The jump condition on $\mathbb{R}\setminus[-\frac12, \frac12]$ implies
\[
\hat Y_{[2]}(t,x;k) = b(k) \hat Z_{[1]}(t,x;k) + a(k) \hat Z_{[2]}(t,x;k), \quad k \in (-\infty, -1/2)\cup(1/2, +\infty).
\]
Comparing that with \eqref{jump_Cu}, we conclude that $\hat Z_{[2]}(t,x;k)\!\! \equiv\!\! \tilde Z_{[2]}(t,x;k)\! =:\!\!Z_{[2]}(t,x;k)$ is analytic in $\mathbb{C}\setminus\{0\}.$

To obtain that $\hat Z_{[1]}(t,x;k) \equiv \tilde Z_{[1]}(t,x;k)$  for all $k\in\mathbb{C}\setminus\{0\},$ the easiest way is to observe that if $M(t,x;k)$ is a solution to the RH problem \ref{mainRHP}, then
$\Lambda \ol{M(t,x;\ol k)}\Lambda^{-1}$ is also a solution (here, $\Lambda$ is defined in Lemma \ref{lem_Y_prop}), and thus from uniqueness we obtain that 
$\Lambda \ol{M(t,x;\ol k)}\Lambda^{-1} = M(t,x;k).$
\end{proof}

\subsection*{Causality principle.}
\begin{thm}{\rm [Causality principle.]}\label{thm_caus}
Let the conditions of Proposition \ref{prop_ibv} be satisfied.
Then for $x\geq t$ the solution of RH problem \ref{mainRHP} is unique and trivial, i.e. $M(t,x;k)\equiv I$ and ${\cal E}(t,x)=\rho(t,x)\equiv 0,$ $\mathcal{N}(t,x)\equiv 1$ for $x\geq t.$
\end{thm}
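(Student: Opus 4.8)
The plan is to invoke the uniqueness from Proposition \ref{prop_ibv} and reduce the whole statement to showing that the unique solution of RH problem \ref{mainRHP} is $M\equiv I$ when $x\ge t$. Once this is established, the trivial values $\mathcal{E}=\rho\equiv 0$, $\mathcal{N}\equiv 1$ follow at once from the reconstruction formulae \eqref{E1}, \eqref{F1}, since $M\equiv I$ gives $\lim_{k\to\infty}kM_{12}=0$ and $M\sigma_3M^{-1}=\sigma_3$. So the entire content is the vanishing of the off-identity part of $M$.

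The engine of the argument is the sign of the imaginary part of the phase \eqref{theta}. Writing $k=\xi+\ii\eta$, one computes
\[
\Im\theta(t,x;k)=\eta\left[(t-x)+\frac{x}{4|k|^2}\right].
\]
For $x\ge t$ the bracket is positive precisely when $|k|^2<\frac{x}{4(x-t)}$, i.e.\ inside a circle of radius $\frac12\sqrt{x/(x-t)}>\frac12$ (the radius being $+\infty$ on the light cone $x=t$). Hence in the upper half-plane $\Im\theta>0$ on and inside the contour circle $|k|=\frac12$ and throughout the annulus out to this critical radius, while $\Im\theta<0$ beyond it; the lower half-plane is the mirror image. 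Combined with the analyticity and decay of $r=b/a$ in $\Im k\ge0$ (and of $r^*$ in $\Im k\le0$) guaranteed by Remark \ref{rem_prop_abr_trivial} together with the estimate \eqref{b_estimate}, this sign pattern is exactly what makes the exponentials $e^{\mp2\ii\theta}$ accompanying $r$, $r^*$ in the jump \eqref{J1}, \eqref{J2} decay once the corresponding triangular factor is deformed off $\Sigma$ into the appropriate half-plane.

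Concretely, I would factorise the jump on $\mathbb{R}\setminus[-\tfrac12,\tfrac12]$ into its upper- and lower-triangular parts and open lenses carried into the regions where $\Im\theta$ has the favourable sign; perform the analogous merging of the jumps on $C_u(\tfrac12)$, $C_d(\tfrac12)$ and on $(-\tfrac12,\tfrac12)$ (where $r^{-1}=a/b$ enters); and convert the pole conditions \eqref{RH2}, \eqref{RH3} at the zeros of $a$ and $b$ into small-circle jumps whose weights carry $e^{\pm2\ii\theta}$. For $x\ge t$ the sign analysis should show that every jump produced this way is trivial (or can be pushed to $k=\infty$, where it tends to $I$), and that the residue data at the $b$-zeros in $D_+$ — weighted by $e^{2\ii\theta}$ with $\Im\theta>0$ there — are compatible only with vanishing residues. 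The resulting equivalent problem then has identity jump, no poles, is bounded, and is normalised to $I$ at infinity, so Liouville's theorem forces it to be $\equiv I$; undoing the (now trivial) transformations returns $M\equiv I$. The verification of the trivial initial conditions \eqref{trivial} is the same computation specialised to the other characteristic direction.

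The main obstacle is twofold. First, the critical circle $|k|=\frac12\sqrt{x/(x-t)}$ passes through the region $|k|>\frac12$, so the favourable sign of $\Im\theta$ is not uniform along $\mathbb{R}$; the lens opening must be organised region by region (as in Deift--Zhou) rather than globally, and one must control the local behaviour at the degenerate points $k\in\{0,\pm\tfrac12\}$, where the jump degenerates and $M$ is only required to be bounded. Second, and more delicate, is showing that the solitonic contributions genuinely switch off for $x\ge t$: one must prove that the residue relations coming from \eqref{RH3} at the zeros of $b$ in $D_+$ — whose constants $A_{q,j}^{(b)}$ are $t,x$-independent but which carry the exponential $e^{2\ii\theta}$ — are forced to the trivial solution, i.e.\ that causality removes these poles rather than merely rescaling them. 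Making the deformation exact, so as to obtain $M\equiv I$ identically and not merely up to asymptotically small errors, is where the argument will require the most care.
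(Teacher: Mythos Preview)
Your broad strategy (factorise the jump, perform an explicit triangular transformation, then apply Liouville) is the same as the paper's, and the sign computation $\Im\theta=\eta\bigl[(t-x)+x/(4|k|^2)\bigr]$ is correct. However, the two ``obstacles'' you isolate at the end are precisely where your outline diverges from the actual argument, and in each case the paper's resolution is much simpler than what you anticipate.

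\textbf{First obstacle (critical circle).} This is not a genuine difficulty. You are thinking asymptotically, worrying about where $e^{\mp2\ii\theta}$ \emph{decays}; but the argument is exact and only needs \emph{boundedness} at $k=\infty$ and $k=0$. The paper simply takes the four triangular factors (with $r e^{-2\ii\theta}$, $r^* e^{2\ii\theta}$, $r^{-1}e^{2\ii\theta}$, $(r^*)^{-1}e^{-2\ii\theta}$) and pushes the lenses all the way to $\infty$ in $\Omega_\pm$ and all the way to $0$ in $D_\pm$. The resulting $G^{(1)}$ is analytic in each of the four domains, and one checks directly that $(G_+^{(1)})^{-1}JG_-^{(1)}=I$ on every piece of $\Sigma$, including the circle $|k|=\tfrac12$. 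No region-by-region lens opening, no Deift--Zhou machinery; the transformed problem has \emph{identically} trivial jump. The only analytic checks are that $G^{(1)}\to I$ as $k\to\infty$ (since $r(k)=O(k^{-1})$ and for $\tau\le0$ the exponential is bounded in the right half-plane) and that $G^{(1)}$ stays bounded as $k\to0$ in $D_\pm$ (dominated by $e^{\mp ix/(2k)}$).

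\textbf{Second obstacle (poles from zeros of $b$).} Your plan to ``show the residue data are compatible only with vanishing residues'' is not how the paper proceeds, and it is not clear that argument would close. The clean mechanism is structural: once you compute $M^{(1)}=MG^{(1)}$ using the scattering relation \eqref{ZYrelations}, you find in every region that
\[
M^{(1)}(t,x;k)=\bigl(Z_{[1]}(t,x;k)e^{\ii\theta},\ Z_{[2]}(t,x;k)e^{-\ii\theta}\bigr).
\]
By Remark~\ref{remZ}/Proposition~\ref{prop_ibv}(III), for trivial initial data $Z$ is analytic in $\mathbb{C}\setminus\{0\}$, so $M^{(1)}$ automatically has no poles at the zeros of $a$ or $b$ --- the transformation has replaced the columns $Y_{[2]}/a$, $Y_{[2]}/b$ by columns of $Z$. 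Nothing about ``residues forced to vanish'' is needed. With no jump, no poles, boundedness at $0,\pm\tfrac12$, and normalisation $I$ at infinity, Liouville gives $M^{(1)}\equiv I$, hence $M\equiv I$.
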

	
\begin{proof}
Note that the spectral function $r(k)$ is not identically zero, is analytic in $k\in\mC_+$ and $r(k)=O(k^{-1})$ as $k\to\infty.$
Now, we need to treat the cases $x>t$ and $x=t$ differently. For $x>t$
we have that $k_0^2=\dfrac{x}{4\tau}$ and $\tau \equiv t-x$ are negative. The distribution of signs of the phase function 
$\theta=\theta(t,x;k),$ defined in \eqref{theta},
is determined by the equality 
$$
\mathrm{sgn}(\Re(\ii \theta))=-\mathrm{sgn}\Im\theta=
-\tau\,\mathrm{sgn}\left[
\left(1+\dfrac{k_0^2}{|k|^2}\right)\Im k\right]\!,
$$	which means that $e^{\mp2\ii\theta} =O(e^{\pm\tau\Im k})$ as $|k|\to\infty$, $\pm\Im k \geq0.$ On the other hand, for $x=t$ we have $e^{\mp2i\theta} = \ord(e^{\frac{x\Im k}{2|k|^2}}) = \ord(1)$ as $|k|\to\infty,$ $\pm\Im k \geq0,$ for every fixed $x.$ This kind of behaviour of the exponents, a factorization of the jump matrix ($x\geq t$)
\begin{align*}
J(t,x;k)=\begin{pmatrix}1+|r(k)|^2& -r(k)e^{-2\ii\theta}\\
-r^*(k) e^{2\ii\theta}& 1
\end{pmatrix}=
\begin{pmatrix}1 & -r(k) e^{-2\ii\theta}\\
0 & 1 \end{pmatrix}
\begin{pmatrix}1&0 \\ -r^*(k) e^{2\ii\theta} &1 \end{pmatrix}\!,
\end{align*}
and analyticity of the functions $r(k)$ and $r^*(k)$ in $\mC_\pm,$ respectively, allows applying the following transformation to the basic RH problem:
$M^{(1)}(t,x;k)=M(t,x;k) G^{(1)}(t,x;k)$, where
\begin{align*}
G^{(1)}(t,x;k)= &
\begin{pmatrix}
1  & -r(k)\ee^{-2\ii\theta(t,x;k)} \\
0 & 1 \end{pmatrix}, &  k\in \Omega_+,\\
=&\begin{pmatrix} 1 & 0 \\
r^*(k)\ee^{2\ii\theta(t,x;k)} & 1 \end{pmatrix}, &  k\in\Omega_-.
\end{align*}
The function $M^{(1)}$ then satisfies the following jump condition,
$$M^{(1)}_-(t,x;k)=M^{(1)}_+(t,x;k)J^{(1)}(t,x;k), \qquad k\in \mathbb{R}, \quad |k|>\dfrac{1}{2}
		$$
with the jump matrix
		$$
J^{(1)}(t,x;k)= (G^{(1)}_+(t,x;k))^{-1}J(t,x;k)G^{(1)}_-(t,x;k)\equiv I, \qquad k\in\mathbb{R}, \quad |k|>\dfrac{1}{2}.
		$$  
Another factorization of the jump matrix, 
		\begin{align*}
J(t,x;k)=\begin{pmatrix}1& -(r^*(k))^{-1} e^{-2\ii\theta}\\
-r(k)^{-1} e^{2\ii\theta}& 1+|r|^{-2}
\end{pmatrix}=
		\begin{pmatrix}1& 0\\
	-r(k)^{-1} e^{2\ii\theta}& 1
			\end{pmatrix}
		\begin{pmatrix}1&
-(r^*(k))^{-1} e^{-2\ii\theta}\\	0& 1
		\end{pmatrix},
\end{align*}
defines $G^{(1)}(t,x;k)$ in the domains $D_\pm$:
	\begin{align*}
G^{(1)}(t,x;k)=& \begin{pmatrix}1& 0\\
-r(k)^{-1} e^{2\ii\theta}& 1
\end{pmatrix}, &  k\in D_+,\\
=&\begin{pmatrix}1& (r^*(k))^{-1} e^{-2\ii\theta}\\
0& 1 \end{pmatrix}, &  k\in D_-,
		\end{align*} 
and we have
$$
M^{(1)}_-(t,x;k)=M^{(1)}_+(t,x;k)J^{(1)}(t,x;k), \qquad k\in \mathbb{R}, \quad |k|<\dfrac{1}{2},  \quad k\neq 0,
$$
with the jump matrix
		$$
J^{(1)}(t,x;k)= (G^{(1)}_+(t,x;k) )^{-1}(k) J(t,x;k) G^{(1)}_-(t,x;k) \equiv I, \qquad k\in\mathbb{R}, \quad |k|<\dfrac{1}{2},  \quad k\neq 0.
		$$  
It is easy to verify that on the circle $|k|=1/2$
		$$
J^{(1)}(t,x;k)= (G^{(1)}_+(t,x;k))^{-1}J(t,x;k)G^{(1)}_-(t,x;k)\equiv I, \qquad k\in C_{u}(1/2)\cup C_{d}(1/2).
		$$

\noindent
Furthermore, note that in view of formulae \eqref{ZYrelations}, we have 
$$
M^{(1)}(t,x;k) = 
\begin{pmatrix}
Z_{[1]}(t,x;k) e^{i\theta(t,x;k)}, & Z_{[2]}(t,x;k) e^{-i\theta(t,x;k)}
\end{pmatrix},
\quad
k\in\Omega_+\cup\Omega_-.
$$
In view of Remark \ref{remZ}, the function $Z(t,x;k)$ is analytic in $k\in\mathbb{C}\setminus\{0\},$
and hence $M^{(1)}$ does not have poles at the zeros of $a(.), b(.).$

\medskip

Next, let us examine the behaviour of $M^{(1)}$ at the origin. For $\tau<0$,
$e^{\pm2\ii\theta} =O(e^{\pm2\tau\frac{|k_0|}{|k|}\sin\arg k})= o(1)$ when $|k|\to 0$, and $\arg k\in (0,\pi)$ for the sign plus and $\arg k\in (-\pi, 0)$
for the sign minus. Hence $G^{(1)}(t,x;k)= O(1)$ and $M^{(1)}(t,x;k)= O(1)$ for all $|k|\to 0$. At the infinity, since $G^{(1)}(t,x;k)=I+O(k^{-1})$ as $|k|\to \infty$, then also $M^{(1)}(t,x;k)=I+O(k^{-1}).$ Further,  $M^{(1)}(t,x;k)$ is analytic everywhere with exception of the points $-\dfrac{1}{2}$, $0$, $\dfrac{1}{2}$ which are  removable singularities for $M^{(1)}(t,x;k)$. Then $M^{(1)}(t,x;k)\equiv I$ by the Liouville theorem.  Hence, $\lim\limits_{k\to\infty} k(M^{(1)}(t,x;k)-I)\equiv 0$ and therefore 
$${\cal E}(t,x)=-4\ii \lim\limits_{k\to\infty} kM_{12}(t,x;k)=-4\ii \lim\limits_{k\to\infty} \left(kM^{(1)}(t,x;k)(G^{(1)}(t,x;k))^{-1}\right)_{12}\equiv 0.
$$ 
Finally, $M(t,x;+\ii 0)\equiv I$ and therefore $F(t,x)\equiv\sigma_3$, i.e. $\rho(t,x)\equiv 0$, and ${\cal N}(t,x)\equiv 1$.
Theorem \ref{thm_caus} is proved.
\end{proof}

We remind that this trivial result is valid in the causality region $t\leq x$. Thus the Riemann-Hilbert  problem provides the well-known causality principle.

\section{Asymptotic analysis. Near the light cone. Proof of Theorem \ref{thm_really_close_light_cone}}\label{sect_thm_1}

According to the causality principle, Theorem \ref{thm_caus}, the solution of the mixed problem is trivial in the region $x\geq t.$ We thus are only interested in the light cone region, which is defined by the inequality $$\tau\equiv t-x>0.$$
Moreover, in this Section we will only deal with a narrow region near the light cone, which is given by the inequalities \eqref{ineq_as_sol}
\begin{equation}\nonumber
x<t \leq x + \frac{m^2\ln^2x + C\ln x\cdot \ln\ln x}{4x},\ \ C>0,
\end{equation}
where a real number $m\geq1$ is defined either in Assumption \ref{assumption_1+} or \ref{assumption_1}, and $C>0$ is an arbitrary constant.

As usual within the framework of the Deift-Zhou method of steepest descent, a crucial role is played by the signature table of the phase function. 
For convenience, let us rewrite $\ee^{\mp2\rmi \theta(t,x;k)}$ in the form:
$$\ee^{\mp2\rmi \theta(t,x;k)} =\exp\left\{\mp 2\ii(t-x) \left( k- \dfrac{k_0^2}{k}\right)
\right\} =\exp\left\{\mp 2i(t-x)S(k, k_0)\right\},$$
where $k_0=k_0(t/x) = \sqrt{\dfrac{x}{4\tau}}>0$ is defined in \eqref{k_0} and $S(k, k_0) = k - \frac{k_0^2}{k}.$
Note that the phase function $S(k, k_0)$ has two saddle points, for the derivative is equal to zero at the points $\pm i k_0,$ 
$$\frac{d}{dk}S(k, k_0)=1+ \dfrac{k_0^2}{k^2}=0.
$$
It is evident that $\frac{d^2}{dk^2}S(k=\mp\ii k_0, k_0)\neq 0.$  
 Then
\[
\mathrm{sgn}(\Im(\theta(t,x;k)))=\mathrm{sgn}(\Im S(k, k_0))=\mathrm{sgn}\left(1+\dfrac{k_0^2}{|k|^2} \right)\Im k=\mathrm{sgn} \Im k,
\]
i.e. there are two domains $\mathbb{C}_\pm$ where $\Im S(k, k_0)$ is positive and negative, respectively.
Further, the saddle points  $\pm\ii k_0$ are located on the imaginary line.
Then, in the spirit of the steepest descent method, we must move to a new contour where the corresponding requirements
\begin{itemize}
	\item $\Re S(k, k_0)=const$,
	\item $\Im S(-\ii k_0, k_0) < \Im S(k, k_0)< \Im S(\ii k_0, k_0)$
\end{itemize}
are fulfiled. Such a contour is nothing more than the circle $|k|= k_0$,
as follows from the identity
$$\theta(t,x;k)
=(t-x)
\(\Re k\cdot\(1-\frac{k_0^2}{k^2}\)+i\Im k\cdot\(1 + \frac{k_0^2}{|k|^2}\)\).$$ 

\subsection{Transformations of the RH problem}

\subsubsection*{Step 1: moving the circle $|k|=\frac12$ to the circle $|k|=\frac12\sqrt{\frac{x}{t-x}}.$}
Following the logic explained above, our first step is to ``move'' the jump from the circle $|k|=\frac12$ to the circle $|k|=k_0.$
This is done with the help of the following transformation of the RH problem \ref{mainRHP}:
\begin{align*}
&
M^{(1)}(t, x; k) = M(t, x; k)\begin{pmatrix}0 & -r(k)e^{-2i\theta} \\ \frac{1}{r(k)}e^{2i\theta} & 1\end{pmatrix},
\quad \frac12<|k|<\frac12\sqrt{\frac{x}{t-x}},\ \Im k > 0,
\\
&
M^{(1)}(t, x; k) = M(t, x; k)\begin{pmatrix}0 & \frac1{r^*(k)} e^{-2i\theta} \\ -r^*(k) e^{2i\theta} & 1\end{pmatrix}^{-1},
\quad \frac12<|k|<\frac12\sqrt{\frac{x}{t-x}},\ \Im k < 0,
\\
& M^{(1)}(t, x; k) = M(t, x; k)\quad \mbox{ elsewhere,}
\end{align*}
where $r^*(k)=\ol{r(\ol k)}$ and $\theta=\theta(t,x;k).$ This transformation 
exploits the fact that $r(k)$ and $r^*(k)$ are analytic in $k\in\mathbb{C}_\pm$, respectively, and
effectively ``erase'' the jump from the circle $|k|=\frac12$ and redraw it on the circle $|k|=k_0=\frac12\sqrt{\frac{x}{t-x}}.$
This means that RH problem \ref{mainRHP} is reformulated now on the new contour $\Sigma(k_0)=\mathbb{R}\cup C_{u}(k_0)\cup C_{d}(k_0),$
where the semicircles $C_{u}(k_0)$ and $ C_{d}(k_0)$ are the upper and lower parts of the circle $|k|=k_0$, respectively, both of which are oriented from the point $-k_0$ to $k_0,$
\begin{equation}\label{Cud}C_u=C_{u}(k_0)=\{k\in\mathbb{C}\!:\, |k| =k_0,\arg k\! \in\! (\pi, 0)\}, 
\ C_d=C_{d}(k_0)=\{k\!\in\! \mathbb{C}\!:\, |k|=k_0,\, \arg k\!\in\! (0, -\pi)\}.
\end{equation}
Function $M^{(1)}(t,x;k)$ satisfies the following RH problem:
\begin{itemize}
	\item $M^{(1)}(t,x;k)$ is analytic in $k\in \mathbb{C}\setminus \Sigma(k_0)$;
	\item $M^{(1)}(t,x;k)$ has continuous non-tangential boundary values $M^{(1)}_{\pm}(t,x;k)$ ($k\in\Sigma(k_0)$), which satisfy the jump relation:\\
	$M^{(1)}_-(t,x;k)=M^{(1)}_+(t,x;k)J^{(1)}(t,x;k),k\in \Sigma(k_0),$ 
	\item $M^{(1)}(t,x;k)$ is bounded in the neighbourhoods of the points
	$\{-k_0, 0, k_0\}$;
	\item $M^{(1)}(t,x;k)$ satisfies the pole conditions at the zeros of the function $a(.)$ in the domain $|k|\ge k_0,$ which are defined by formulae in property $\eqref{RH2}$ of RH problem \ref{mainRHP}, and satisfies the pole conditions at the zeros of the function $b(.)$ in the domain $|k|\le k_0,$ which are defined by formulae in property $\eqref{RH3}$ of RH problem \ref{mainRHP};
	\item $M^{(1)}(t,x;k)=I+\ord(k^{-1}),\quad k\to \infty.$
\end{itemize}
The jump matrix $J^{(1)}(t,x;k)$ is defined by the same formulae \eqref{J1} and \eqref{J2}, but  on the new contour $\Sigma(k_0)$.

Note that since $a(k)\to 1$ as $k\to\infty,$ for $k_0$ sufficiently large there are no zeros of $a(k)$ in the region $|k|\geq k_0,$ $\Im k\geq0.$ Since in the region \eqref{ineq_as_sol} we have $k_0\geq \frac{4x(1+o(1))}{m^2\ln^2 x}\to\infty$ as $x\to\infty,$ we can assume without loss of generality that $M^{(1)}$ has no poles caused by zeros of the function $a(.).$

\subsubsection*{Step 2: scaling.}
In the regime $\frac{x}{t-x}\to+\infty$ the circle in the jump contour for $M^{(1)}$ is expanding. For convenience, we introduce the scaling 
\[
k = k_0z, \quad \mbox{ where }\quad k_0 = k_0(t/x)=\sqrt{\frac{x}{4(t-x)}},
\]
which transforms that circle to the unit circle in variable $z$.
We have 
\[
\theta(t, x; k) = (t-x)k - \frac{x}{4k}
=
\frac12\sqrt{x(t-x)} \(z-\frac{1}{z}\).
\]
Next, let us parametrize the circle $|z|=1$ in the following way: we introduce one parametrization for the upper part of the circle and another one for the lower part of the circle:
\begin{equation}\label{alpha_ud_z}
\begin{split}
&z = i e^{-i\alpha_u},\quad \alpha_u\in\(\frac{-\pi}{2}, \frac{\pi}{2}\),\quad z\in C_{u}(1),
\\
&z = -i e^{i\alpha_d},\quad \alpha_d\in\(\frac{-\pi}{2}, \frac{\pi}{2}\),\quad z\in C_{d}(1).
\end{split}
\end{equation}
Note that with these parametrizations both halves of the circle are oriented from the point $z=-1$ to the point $z=1.$ Then
\begin{equation}\label{theta_alpha_beta}
\begin{split}
-2i\theta(t, x; k=k_0z) &= 2\sqrt{x(t-x)} - 4\sqrt{x(t-x)} \sin^2\frac{\alpha_u}{2},\quad z\in C_{u}(1),
\\
2i\theta(t, x; k=k_0z) &= 2\sqrt{x(t-x)} - 4\sqrt{x(t-x)} \sin^2\frac{\alpha_d}{2},\quad z\in C_{d}(1).
\end{split}
\end{equation}
It follows in particular that $-2i\theta(t,x;k_0z)$ is real on $C_u(1)$ and smaller there than $2\sqrt{x(t-x)}.$

\subsubsection*{Step 3.}
We would like to remove the jump from the interval $z\in(-1, 1).$
To fulfil this objective, we define

\begin{align*}
&M^{(2)}(z; t, x) = M^{(1)}(t,x;k_0z)\begin{pmatrix}1 & 0 \\ \dfrac{-1}{r(k_0z)}e^{2i\theta(t,x;k_0z)} & 1\end{pmatrix}, \quad |z|<1, \Im z > 0,
\\
&M^{(2)}(z; t, x) = M^{(1)}(t, x;k_0z)\begin{pmatrix}1 & \dfrac{1}{r^*(k_0z)} e^{-2i\theta(t,x;k_0z)} \\ 0 & 1\end{pmatrix},  \quad |z|<1, \Im z < 0,
\\
&M^{(2)}(z; t, x) = M^{(1)}(t, x;k_0z), \quad |z|>1.
\end{align*}

This transformation removes the poles caused by the zeros of the function $b(.)$ in the domain $|z|<1.$ This can be seen either directly by applying the corresponding transformation $M^{(1)} \to M^{(2)}$ to the pole conditions of $M^{(1)}$ at the zeros of $b(.)$, or can also be seen in a more simple fashion, by tracking how $M^{(2)}$ depends on $Y, Z$ (which are defined from $M$ by formula \eqref{M}). We have 
$M^{(2)}(z;t,x)=\begin{pmatrix}
Z_{[1]}(t,x;k_0z)e^{i\theta(t,x;k_0z)}, & Z_{[2]}(t,x;k_0z)e^{-i\theta(t,x;k_0z)}
\end{pmatrix},\ |z|<1, \Im z>0,$ and the statement follows from the properties of $Y,Z,$ pp. \pageref{sect_Jost} - \pageref{sect_ab}.
Thus, the function $M^{(2)}$ has no poles neither at the zeros of the function $a(.)$ nor at the zeros of the function $b(.),$ and satisfies the following RH problem (see also Figure \ref{Figure_Omega_up_low}, left):

\subsubsection*{RH problem for $M^{(2)}.$}
\begin{enumerate}
\item $M^{(2)}(z; t, x)$ is analytic in $z\in\mathbb{C}\setminus\Sigma,$ where $\Sigma^{(2)} = \{z:\ |z| = 1\}\cup(\mathbb{R}\setminus(-1, 1)),$
\item $M^{(2)}(z;t,x) \to I$ as $z\to\infty,$
\item $M^{(2)}_-(z; t, x) = M_+^{(2)}(z; t, x) J^{(2)}(z; t, x),$
where
\begin{equation}\label{J_M_2}
\begin{split}
&J^{(2)}(z; t, x) = \begin{pmatrix}1 & -r(k_0z) e^{-2i\theta} \\ 0 & 1\end{pmatrix},\quad z \in C_{u}(1),
\\
&J^{(2)}(z; t, x) = \begin{pmatrix}1 & 0 \\  -r^*(k_0z) e^{2i\theta} & 1\end{pmatrix},\quad z \in C_{d}(1),
\\
&J^{(2)}(z; t, x) = \begin{pmatrix}1 + |r(k_0z)|^2 & -r(k_0z)e^{-2i\theta} \\ -r^*(k_0z)\,e^{2i\theta} & 1\end{pmatrix},\quad z \in (-\infty, -1)\cup(1, +\infty),
\end{split}
\end{equation}
\end{enumerate}
where $r^*(k_0z) = \ol{r(k_0\ol z)}$ and $\theta=\theta(t,x;k_0z).$

We see that as $k_0\to\infty,$ the jump matrix $J^{(2)}$ is close to the identity matrix over $\mathbb{R}\setminus[-1,1],$ but its behaviour over $C_{u}(1)$ and $C_{d}(1)$ depends significantly on the parameter $\sqrt{x(t-x)}.$
In the case when $x(t-x)$ is bounded or grows moderately, the jumps over $C_{u}(1),$ $C_{d}(1)$ remain small and we can use a small-nom theory.
On the other hand, when $x(t-x)$ is large, the jumps over $C_{u}(1),$ $C_{d}(1)$ are growing, and to tackle this issue, the idea is to construct a parametrix that satisfies exactly the jump condition over $C_{u}(1) \cup C_{d}(1)$; this can be done using Hermite polynomials, just as Laguerre polynomials were used in \cite{BM19}, \cite{KM19} (note however a significant difference of our parametrix from the ones from \cite{BM19}, \cite{KM19}: while there the corresponding parametrices were defined only in relatively small domains around the critical points, in our case the parametrices will be defined in rather big domains $\Omega_u, \Omega_d$, see Figure \ref{Figure_Omega_up_low}, right).

In the subsequent Sections \ref{sect_proof_thm_1_part_I} and \ref{sect_proof_thm_1_part_II} we treat the cases of bounded or moderately growing parameter $x(t-x)$, and the case of large parameter $x(t-x)$ is considered in Section \ref{sect_parametrices}.

\subsection{Proof of Theorem \ref{thm_really_close_light_cone}, part I}\label{sect_proof_thm_1_part_I}

It follows from the form of the jump matrices for $M^{(2)}$ \eqref{J_M_2}, formula \eqref{theta_alpha_beta} and Assumption \ref{assumption_1+} that uniformly for $0<\sqrt{x(t-x)}\leq1$ the jump matrix $J^{(2)}$ admits the estimate
\begin{equation}\label{estimates_J}\|J^{(2)}(z;t,x)-I\|_{L^1(\Sigma^{(2)})\cap L^2(\Sigma^{(2)}) \cap L^{\infty}(\Sigma^{(2)})} = \mathcal{O}(k_0^{-m}),\quad k_0\to\infty.\end{equation}
It then follows from the standard small-norm theory that $M^{(2)}(z;t,x)-I = \ord(k_0^{-m})$ uniformly for $z\in\mathbb{C},$ and 
\begin{equation}\label{M2_integral}
M^{(2)} = I + \mathcal{C}[M^{(2)}_+(.)(I-J^{(2)})],
\mbox{ or }
M^{(2)}(z;t,x) = I + \frac{1}{2\pi i}\int_{\Sigma^{(2)}}\frac{[M^{(2)}_+(s;t,x)(I-J^{(2)}(s;t,x))\,ds}{s-z},
\end{equation}
where $M^{(2)}_+ = M^{(2)}_+(.;t,x)$ is the solution of the singular integral equation
\[
M^{(2)}_+ = I + \mathcal{C}_+[M^{(2)}_+(.)(I-J^{(2)})],
\]
and where we denoted
\[
\mathcal{C}f(k) = \frac{1}{2\pi i}\int_{\Sigma^{(2)}}\frac{f(s)\,ds}{s-k},
\qquad
\mathcal{C}_+f(k) = \frac{1}{2\pi i}\int_{\Sigma^{(2)}}\frac{f(s)\,ds}{(s-k)_+}.
\]
Functions $\mathcal{E}, \rho, \mathcal{N}$ can be obtained from $M_+^{(2)}$ by the following Lemma.
\begin{lemma}\label{lem_reconstruction_E_N_rho}
\begin{multline}\label{E_thm_1_proof}
\mathcal{E}(t,x) = 
\frac{-4i k_0}{2\pi i}\int_{\Sigma^{(2)}}[(J^{(2)}(z;t,x) - I)]_{12}dz
+
\frac{-4i k_0}{2\pi i}\int_{\Sigma^{(2)}}[(M^{(2)}_+(z;t,x) - I)(J^{(2)}(z;t,x) - I)]_{12}dz,
\end{multline}
and 
\begin{equation}\label{rho_N_from_M}
\rho(t,x) = -2M^{(2)}_{11}(0;t,x)M^{(2)}_{12}(0;t,x),
\qquad
\mathcal{N}(t,x) = 1 - 2|M^{(2)}_{12}(0;t,x)|^2,
\end{equation}
where $M_{11}(0;t,x)$, $M_{12}(0;t,x)$ are the corresponding entries of the matrix $M^{(2)}(0;t,x),$
\begin{equation}\label{M_2_0}
M^{(2)}(0;t,x)
=
I - \int_{\Sigma^{(2)}}\frac{(J^{(2)}(z;t,x)-I)\,dz}{2\pi i\,z}
- \int_{\Sigma^{(2)}}\frac{(M_+^{(2)}(z;t,x)-I)(J^{(2)}(z;t,x)-I)\,dz}{2\pi i\,z}.
\end{equation}
\end{lemma}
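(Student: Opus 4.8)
The plan is to read off the three physical quantities from the solution $M^{(2)}$ of the final RH problem via the reconstruction formulae \eqref{E1}, \eqref{F1} for the original matrix $M$. Two facts drive everything. First, since $M^{(2)}(z;t,x)=M(t,x;k_0 z)$ for $|z|>1$, the behaviour of $M$ at $k=\infty$ is governed by that of $M^{(2)}$ at $z=\infty$. Second, I claim $M^{(2)}(0;t,x)=M(t,x;+i0)$, so that the value of $M$ at $k=+i0$ is encoded in $M^{(2)}(0)$. Throughout I use that every transformation $M\to M^{(1)}\to M^{(2)}$ is multiplication by a unimodular factor, whence $\det M^{(2)}\equiv\det M\equiv1$ (the latter established in Proposition \ref{prop_ibv}).

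For $\mathcal{E}$: from $M^{(2)}(z)=M(t,x;k_0z)$ for $|z|>1$ and \eqref{E1} I would write $\mathcal{E}=-4i\lim_{k\to\infty}kM_{12}=-4ik_0\lim_{z\to\infty}zM^{(2)}_{12}$. Expanding the integral representation \eqref{M2_integral} as $z\to\infty$ through $\frac{1}{s-z}=-z^{-1}(1+\mathcal{O}(z^{-1}))$ gives $\lim_{z\to\infty}z(M^{(2)}(z)-I)=-\frac{1}{2\pi i}\int_{\Sigma^{(2)}}M^{(2)}_+(s)(I-J^{(2)}(s))\,ds$, the interchange of limit and integral being legitimate thanks to the $L^1$-bound on $M^{(2)}_+(I-J^{(2)})$ underlying \eqref{estimates_J}. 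Splitting $M^{(2)}_+(I-J^{(2)})=(I-J^{(2)})+(M^{(2)}_+-I)(I-J^{(2)})$, taking the $(1,2)$ entry, and multiplying by $-4ik_0$ produces exactly \eqref{E_thm_1_proof}.

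For $\rho,\mathcal{N}$: the identity $M^{(2)}(0)=M(+i0)$ comes from the fact that $M^{(1)}=M$ for $|k|<\tfrac12$, while the factor relating $M^{(1)}$ to $M^{(2)}$ inside $|z|<1$ is unit-triangular with a single off-diagonal entry proportional to $e^{2i\theta}/r(k_0z)$ in the upper half-disk (resp. $e^{-2i\theta}/r^*(k_0z)$ in the lower one); as $k=k_0z\to0$ along the imaginary axis, $\theta=(t-x)k-\frac{x}{4k}$ makes $e^{\pm2i\theta}=\mathcal{O}(e^{-x/(2|k|)})$ decay faster than any power of $r^{\mp1}$ can grow, so the factor tends to $I$. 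Since $(-1,1)\not\subset\Sigma^{(2)}$, the matrix $M^{(2)}$ is analytic at $z=0$; evaluating \eqref{M2_integral} there and splitting as before yields \eqref{M_2_0}. Substituting $M(+i0)=M^{(2)}(0)$ into \eqref{F1} and computing $M^{(2)}(0)\sigma_3(M^{(2)}(0))^{-1}$ with $\det M^{(2)}(0)=1$ gives $\rho=-2M^{(2)}_{11}(0)M^{(2)}_{12}(0)$ at once, and $\mathcal{N}=1+2M^{(2)}_{12}(0)M^{(2)}_{21}(0)$; the $\Lambda$-symmetry $M(k)=\Lambda\overline{M(\bar k)}\Lambda^{-1}$ from the proof of Proposition \ref{prop_ibv}, evaluated at $k=0$ (where the vertical limits from above and below coincide, $M^{(2)}$ being analytic at $0$), gives $M^{(2)}_{21}(0)=-\overline{M^{(2)}_{12}(0)}$, turning this into $\mathcal{N}=1-2|M^{(2)}_{12}(0)|^2$, i.e. \eqref{rho_N_from_M}.

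The algebraic parts are routine; the one step requiring genuine care is the identification $M^{(2)}(0)=M(+i0)$, where one must justify that the super-exponential decay of $e^{\pm2i\theta}$ at the origin really dominates the possibly singular factor $r(k_0z)^{\mp1}$, and that the limit is taken in the same non-tangential sense as the boundary value in \eqref{F1}. Verifying that the $\Lambda$-symmetry passes correctly to the value $M^{(2)}(0)$, rather than only to open half-plane values, is the accompanying subtlety.
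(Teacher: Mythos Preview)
Your proposal is correct and follows essentially the same route as the paper's proof: both trace the transformations $M\mapsto M^{(1)}\mapsto M^{(2)}$ to transfer the reconstruction formulae \eqref{E1}, \eqref{F1} to $M^{(2)}$, expand the Cauchy representation \eqref{M2_integral} by writing $M^{(2)}_+=I+(M^{(2)}_+-I)$, and use the $\Lambda$-symmetry to identify the structure $M^{(2)}(0)=\begin{pmatrix}A & B\\ -\ol B & \ol A\end{pmatrix}$ before multiplying out $M^{(2)}(0)\sigma_3 M^{(2)}(0)^{-1}$. The paper compresses the justification of $M^{(2)}(+i0)=M(+i0)$ and the passage of the $\Lambda$-symmetry to $M^{(2)}$ into the phrase ``tracking back the chain of transformations, we easily obtain,'' whereas you spell these out (the super-exponential decay of $e^{\pm 2i\theta}$ near $k=0$ and the analyticity of $M^{(2)}$ at $z=0$); this is added detail, not a different method.
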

\begin{proof}
Indeed, using formulae \eqref{E1}, \eqref{F1} and tracking back the chain of transformations that led from $M$ to $M^{(2)}$, we easily obtain that the
functions $\mathcal{E},$ $\rho,$ $\mathcal{N}$ can be expressed using $M^{(2)}$ by the same formulae \eqref{E1}, \eqref{F1}, if we change there $M$ to $M^{(2)}.$ From the representation \eqref{M2_integral} we then obtain
\[
\mathcal{E}(t,x) = \frac{-4i k_0}{2\pi i}\int_{\Sigma^{(2)}}[M^{(2)}_+(z;t,x)\(J^{(2)}(z;t,x) - I\)]_{12}dz,
\]
and substituting here $M^{(2)}_+(z;t,x) = I + (M^{(2)}_+(z;t,x) - I)$ we obtain formula \eqref{E_thm_1_proof}.
Similarly we obtain formula \eqref{M_2_0}. Finally, to get formulae \eqref{rho_N_from_M}, 
note that $M^{(2)}(0;t,x)$ has the structure
$$M^{(2)}(0;t,x) = \begin{pmatrix}A & B \\ -\ol{B} & \ol{A}\end{pmatrix},$$
and $|A|^2 + |B|^2 = 1.$
Multiplying matrices in \eqref{F1} (where we substituted $M$ with $M^{(2)}$), we find that 
$\rho(t,x) = -2AB,$ $\mathcal{N} = |A|^2 - |B|^2 = 1 - 2|B|^2,$ which finishes the proof of the Lemma.
\end{proof}

We now use Lemma \ref{lem_reconstruction_E_N_rho} to obtain asymptotic formulae \eqref{E_thm_1}, \eqref{N_thm_1}, \eqref{rho_thm_1} for the functions $\mathcal{E}, \mathcal{N}, \rho.$
\subsection*{Function $\mathcal{E}(t,x).$}
We now obtain the asymptotics for $\mathcal{E}(t,x).$
The second term in formula \eqref{E_thm_1_proof} is of the order $k_0^{-2m+1},$ since both $M^{(2)} - I$ and $J^{(2)} - I$ are of the order $k_0^{-m}.$
Furthermore, the first term in formula \eqref{E_thm_1_proof} can be written more explicitly as 
\begin{multline}\label{first_term_for_E}
\frac{-4ik_0}{2\pi i}\int_{\Sigma^{(2)}}(J^{(2)}(z;t,x) - I)_{12}dz
=
\frac{4i k_0}{2\pi i}\int_{\Sigma_u}r(k_0z)e^{-i\sqrt{x(t-x)}(z-\frac{1}{z})}dz
\\
=
\frac{4\,i^{m+1}k_0\, r(ik_0)}{2\pi i}
\int_{\Sigma_u}z^{-m}e^{-i\sqrt{x(t-x)}(z-\frac{1}{z})} dz
+
\frac{4i k_0}{2\pi i}
\int_{\Sigma_u}\ord(k_0^{-m-1}z^{-m})e^{-i\sqrt{x(t-x)}(z-\frac{1}{z})} dz,
\end{multline}
where we integrate over the contour $\Sigma_u,$
$$\Sigma_u = (-\infty, -1)\cup C_u \cup(1, +\infty),$$
and where we used that
$
r(k_0z) = i^m\,r(i k_0)z^{-m} + \ord(k_0^{-m-1}z^{-m})\quad \mbox{ uniformly for }z\in \Sigma_u,
$
which is a direct consequence of the representation for $r(.)$ from Assumption \ref{assumption_1+}.
The second term in \eqref{first_term_for_E} admits the estimate $\ord(k_0^{-m}),$ and the first term is equal to 
$4i^{m-1}k_0 r(ik_0)J_{m-1}(-2i\sqrt{x(t-x)}),$ where $J_{m-1}$ is the Bessel function of the first kind of the order $m-1.$
Indeed, recall the integral representation for the Bessel function \cite[formula (7), p.640]{lavrentevshabat} (cf. \cite[formula (10.9.19)]{DLMF}),
$$J_{\nu}(\xi) = \frac{1}{2\pi i}\int_{-\infty}^{(0_+)}z^{-\nu-1}e^{\frac{\xi}{2}(z-z^{-1})}dz,$$
where the integral path goes from $-\infty$ along the lower bank of $\mathbb{R}_-$, then circumvents the origin in the positive (counter-clockwise) direction, and then goes back to $-\infty$ along the upper bank of $\mathbb{R}_+.$
Note that by Jordan's lemma, the integral of $z^{-m}e^{-i\sqrt{x(t-x)}(z-\frac{1}{z})}$ over $\Sigma_u$ is equal to minus integral of it over $(-\infty, (0_+)).$
Combining together the above estimates and expressing the Bessel function in terms of the modified Bessel function (\cite[formulae (9.6.3), (9.1.35)]{abramowitz}),
$$I_{\nu}(\xi) = i^{\nu}J_{\nu}(-i\xi), \quad \xi>0,$$
we obtain formula \eqref{E_thm_1}.

\subsection*{Functions $\mathcal{N}(t,x),$ $\rho(t,x).$}
Similarly as in the previous paragraph, we note that the last term in formula \eqref{M_2_0} is of the order $\ord(k_0^{-2m}),$ since both factors in the integrand are of the order $\ord(k_0^{-m}).$
Then
\begin{equation}\label{M_11_estimate}
M_{11}^{(2)}(0;t,x) = 1 - \frac{1}{2\pi i}\int_{\mathbb{R}\setminus[-1,1]}|r(k_0z)|^2\frac{dz}{z} + \ord(k_0^{-2m}) = 1 + \ord(k_0^{-2m}),
\end{equation}
\begin{multline}\label{M_12_estimate}
M_{12}^{(2)}(0;t,x) = \frac{1}{2\pi i}\int_{\Sigma_u}\(\frac{i^m r(i k_0)}{z^m} + \ord(\frac{1}{k_0^{m+1}z^m})\)e^{-i\sqrt{x(t-x)}(z-\frac{1}{z})}\frac{dz}{z} + \ord(k_0^{-2m})
\\
=i^{m+2}r(ik_0)J_{m}(-2i\sqrt{x(t-x)}) + \ord(k_0^{-m-1}),
\end{multline}
where we manipulate the integrals similarly to the previous section
(devoted to the function $\mathcal{E}(t,x)$). 
Using formulae \eqref{rho_N_from_M} we obtain formulae \eqref{N_thm_1}, \eqref{rho_thm_1}, and thus the statement of Theorem \ref{thm_really_close_light_cone}, part I.
\begin{rem}
Note that the formula $\mathcal{N}(t,x) = |M_{11}^{(2)}(0;t,x)|^2 - |M_{12}^{(2)}(0;t,x)|^2,$
which is equivalent to the second of formulae \eqref{rho_N_from_M},
does not allow to obtain formula \eqref{N_thm_1}, since it gives the error term $\ord(k_0^{-2m})$, which is comparable with the main term in \eqref{N_thm_1}.
We thus need to use an equivalent formula \eqref{rho_N_from_M}, i.e.
$\mathcal{N}(t,x) = 1 - 2|M_{12}^{(2)}(0;t,x)|^2$.
\end{rem}

\subsection{Proof of Theorem \ref{thm_really_close_light_cone}, part II}\label{sect_proof_thm_1_part_II}
The proof goes in the same way, as in part I, Section \ref{sect_proof_thm_1_part_I}, but the corresponding estimates are changed.

In the domain \eqref{domain_part_II} we have
\[
2\leq2\sqrt{x(t-x)} \leq m \ln x - (m + \varepsilon_1)\ln\ln x,
\]
and thus the quantity $p_1=p_1(t,x)$ defined in \eqref{p} satisfies in the domain \eqref{domain_part_II} the inequalities
\[
e^{-p_1}\leq \frac{(x(t-x))^{m/2}}{(\ln x)^{m+\varepsilon_1}} = \ord\(\frac{(\ln x)^m}{(\ln x)^{m+\varepsilon_1}}\)
=\ord((\ln x)^{-\varepsilon_1})
\]
as $x\to\infty.$ In other words, $p_1(t,x)$ grows at least as $\varepsilon_1\ln\ln x,$ i.e.
$p_1(t,x) \geq \varepsilon_1 \ln \ln x + \ord(1).$
Note also that 
$$e^{-p_1(t,x)} = (2k_0)^{-m}e^{2\sqrt{x(t-x)}}.$$

The estimate of the error matrix \eqref{estimates_J} is changed to $\ord(e^{-p_1(t,x)}) = \ord((\ln x)^{-\varepsilon_1}),$ since $J_{err}-I$ is still of the order $\ord(k_0^{-m})$ on $\mathbb{R}\setminus[-1, 1],$ but is of the order $\ord(k_0^{-m}e^{2\sqrt{x(t-x)}})=\ord(e^{-p_1(t,x)})$ on $C_u\cup C_d.$
Hence, $M^{(2)}(z;t,x)-I = \ord(e^{-p_1(t,x)})$ uniformly for $z\in\mathbb{C}.$

\subsection*{Function $\mathcal{E}(t,x).$}
Formula \eqref{E_thm_1_proof} implies, similarly as in formula \eqref{first_term_for_E},
\[
\mathcal{E}(t,x)
=
4k_0\,r(ik_0)\,I_{m-1}(2\sqrt{x(t-x)})
+
\frac{4i k_0}{2\pi i}
\int_{\Sigma_u}\ord(k_0^{-m-1}z^{-m})e^{-i\sqrt{x(t-x)}(z-\frac{1}{z})} dz
+
\ord(k_0e^{-2p_1(t,x)}),
\]
and since the middle integral is of the order $\ord(k_0^{-m}e^{2\sqrt{x(t-x)}}) = \ord(e^{-p_1(t,x)}),$
we get
$$\mathcal{E}(t,x)=
4k_0\,r(ik_0)\,I_{m-1}(2\sqrt{x(t-x)})
+
\ord(e^{-p_1(t,x)} + k_0e^{-2p_1(t,x)}).
$$

\subsection*{Functions $\mathcal{N}(t,x), \rho(t,x).$}
Formulae \eqref{M_11_estimate}, \eqref{M_12_estimate} are now transformed to 
\[
M_{11}^{(2)}(0;t,x) = 1 + \ord(e^{-2p_1(t,x)}),
\ 
M_{12}^{(2)}(0;t,x)
=
-r(ik_0)I_m(2\sqrt{x(t-x)}) + \ord(k_0^{-1}e^{-p_1(t,x)} + e^{-2p_1(t,x)}).
\]
Formulae \eqref{rho_N_from_M} then finish the proof of Theorem \ref{thm_really_close_light_cone}, part II.

\subsection{Parametrices construction}\label{sect_parametrices}
Here we deal with hte parts III, IV of Theorem \ref{thm_really_close_light_cone}.
We start with constructing parametrices around the points $z = \pm i.$

\subsubsection{Hermite polynomials.}

Denote by $\pi_n(.)$ the monic Hermite polynomial of degree $n,$ $\pi_n(\zeta) = \zeta^n + \mathcal{O}(\zeta^{n-1})$ as $\zeta\to\infty,$ orthogonal on $\mathbb{R}$ with the weight $e^{-\zeta^2}:$
\[
\int_{-\infty}^{+\infty}\pi_n(\zeta)\pi_l(\zeta)e^{-\zeta^2}d\zeta = \sqrt{\pi}\, n!\, 2^{-n}\, \delta_{nl},\qquad n,l\geq0.
\]
For $n\geq 0,$ introduce the matrix-valued function
$$
L_n(\zeta) = 
\begin{pmatrix}
\pi_n(\zeta) & \dfrac{1}{2\pi i}\displaystyle\int_{-\infty}^{+\infty}\dfrac{\pi_n(s)\,e^{-s^2}\, ds}{s-\zeta}
\\
\gamma_n\,\pi_{n-1}(\zeta) & \dfrac{\gamma_n}{2\pi i}\displaystyle\int_{-\infty}^{+\infty}\dfrac{\pi_{n-1}(s)\,e^{-s^2}\,ds}{s-\zeta}
\end{pmatrix},
n\geq 1,
\quad
L_0(\zeta)=
\begin{pmatrix}
1 & \dfrac{1}{2\pi i}\displaystyle\int_{-\infty}^{+\infty}\dfrac{e^{-s^2}\, ds}{s-\zeta}
\\
0 & 1
\end{pmatrix},
$$
where $\gamma_n = \dfrac{-i\,\sqrt{\pi}\,2^n}{(n-1)!}, n\geq 1.$
The function $L_n(\zeta)$ satisfies the following jump condition:
\[
L_{n, -}(\zeta) = L_{n, +}(\zeta)
\begin{pmatrix}
1 & -e^{-\zeta^2} \\ 0 & 1
\end{pmatrix},\quad \zeta\in\mathbb{R},
\]
and has the following large $\zeta$ asymptotics:
\[
L_n(\zeta) = (I+\mathcal{O}(\zeta^{-1}))\zeta^{n\sigma_3},\quad \zeta\to\infty.
\]
Note that the rate of vanishing of the off-diagonal terms in the $\mathcal{O}$ term can be improved in the following way:
\begin{align}\label{improved_asymp_2}
\begin{pmatrix}
1 & \dfrac{-i\,n!}{\sqrt{\pi}\,2^{n+1}\,\zeta} \\ 0 & 1
\end{pmatrix}L_n(\zeta)
=
\begin{pmatrix}
1 + \mathcal{O}(\zeta^{-1}) & \mathcal{O}(\zeta^{-2}) 
\\
\mathcal{O}(\zeta^{-1}) & 1 + \mathcal{O}(\zeta^{-1})
\end{pmatrix}
\zeta^{n\sigma_3},\quad \zeta\to\infty,
\\\label{improved_asymp_2}
\begin{pmatrix}
1 & 0 \\ \dfrac{i\,\sqrt{\pi}\,n\,2^n}{n!\,\zeta} & 1
\end{pmatrix}L_n(\zeta)
=
\begin{pmatrix}
1 + \mathcal{O}(\zeta^{-1}) & \mathcal{O}(\zeta^{-1}) 
\\
\mathcal{O}(\zeta^{-2}) & 1 + \mathcal{O}(\zeta^{-1})
\end{pmatrix}
\zeta^{n\sigma_3},\quad \zeta\to\infty.
\end{align}
Note also that $\frac{n}{n!}$ equals 0 for $n=0$ and hence the above formulae make sense for all $n\geq 0.$

\subsubsection{Approximation: the first attempt.}\label{sect_first_attempt}
Introduce the following conformal changes of variables, valid in some neighbourhoods $\Omega_{u}, \Omega_{d}$ of the half-circles $C_{u}, C_{d},$ respectively (see Figure \ref{Figure_Omega_up_low}, right):
\[
\zeta_u = \zeta_u(z;t,x) = 2\sqrt[4]{x(t-x)}\,\sin\frac{\alpha_u}{2}, 
\qquad
\zeta_d =\zeta_d(z,t;x) = 2\sqrt[4]{x(t-x)}\,\sin\frac{\alpha_d}{2},
\]
where $\alpha_u=\alpha_u(z),$ $\alpha_d=\alpha_d(z)$ are defined in \eqref{alpha_ud_z}.
In view of \eqref{theta_alpha_beta} we then have that
\[
-2i\theta(t, x; k_0z) = 2\sqrt{x(t-x)} - \zeta_u^2,\ z\in \Omega_{u},
\]\[
2i\theta(t, x; k_0z) = 2\sqrt{x(t-x)} - \zeta_d^2,\ z\in \Omega_{d}.
\]
Introduce now the following function $M_{appr}$, which satisfies approximately the jump conditions of $M^{(2)}:$
\begin{align*}
& M_{appr}(z; t, x) = \(\frac{z-i}{z+i}\)^{n\sigma_3}, \quad z\in\mathbb{C}\setminus(\Omega_{u}\cup\Omega_{d}),
\\
& M_{appr}(z; t, x) = B_u(z;t,x) L_n(\zeta_u)r(k_0z)^{-\sigma_3/2}\cdot e^{-\sqrt{x(t-x)}\sigma_3}, \quad z\in\Omega_{u},
\\
& M_{appr}(z; t, x) = B_d(z;t,x) \sigma L_n(\zeta_d)\sigma \cdot \(r^*(k_0z)\)^{\sigma_3/2}\cdot e^{\sqrt{x(t-x)}\sigma_3}, \quad z\in\Omega_{d},
\end{align*}
where $\sigma=\begin{pmatrix}0 & 1 \\ 1 & 0\end{pmatrix}.$
Note that since the jump matrix for $M^{(2)}$ admits the following factorization on $C_{u}(1), C_{d}(1),$

\begin{align*}
&
\begin{pmatrix}1 & -r(k_0z) e^{-2i\theta} \\ 0 & 1\end{pmatrix}
=
\(r(k_0z)\)^{\sigma_3/2}e^{\sqrt{x(t-x)}\,\sigma_3}
\begin{pmatrix}
1 & -e^{-\zeta_u^2} \\ 0 & 1
\end{pmatrix}
e^{-\sqrt{x(t-x)}\,\sigma_3}
\(r(k_0z)\)^{-\sigma_3/2},\quad z\in C_{u}(1),
\\
&\begin{pmatrix}1 & 0 \\ -r^*(k_0z)\, e^{2i\theta} & 1\end{pmatrix}
=
\(r^*(k_0z)\)^{-\sigma_3/2} e^{-\sqrt{x(t-x)}\sigma_3}
\begin{pmatrix}
1 & 0 \\ -e^{-\zeta_d^2} & 1
\end{pmatrix}
e^{\sqrt{x(t-x)}\sigma_3}
\cdot\(r^*(k_0z)\)^{\sigma_3/2}, \ z\in C_{d}(1),
\end{align*}
thus the function $M_{appr}$ satisfies exactly the same jump conditions on $C_{u}(1), C_{d}(1)$ as $M^{(2)}$ does.

Next, the $B_u, B_d$ in the definition of $M_{appr}$ are some yet unknown functions, analytic in $z\in\Omega_{u}, \Omega_{d}$, respectively, which are introduced in order to minimize the incoherence of $M_{appr}$ on the borders of $\Omega_{u}, \Omega_{d}.$
We thus define
\begin{equation}\label{B_ud}
\begin{split}
&B_u(z;t,x)
=
\(\frac{z-i}{(z+i)\cdot\zeta_u}\)^{n\sigma_3}\cdot r(k_0z)^{\frac{\sigma_3}{2}}\cdot e^{\sqrt{x(t-x)}\,\sigma_3},\quad z \in \Omega_{u},
\\
&B_d(z;t,x)
=
\(\frac{(z-i)\cdot\zeta_d}{z+i}\)^{n\sigma_3}\cdot r^*(k_0z)^{\,\frac{-\sigma_3}{2}}\cdot e^{-\sqrt{x(t-x)}\,\sigma_3},\quad z \in \Omega_{d}.
\end{split}
\end{equation}
The matching of the $M_{appr}(z)=M_{appr}(z;t,x)$ on $\partial\Omega_{u}$ thus becomes
\begin{multline*}
M_{appr, +}(z)M_{appr, -}(z)^{-1}
=
\(\frac{z-i}{(z+i)\zeta_u}\)^{n\sigma_3}\cdot r(k_0z)^{\frac{\sigma_3}{2}}\cdot e^{\sqrt{x(t-x)}\,\sigma_3}
\(
I+\mathcal{O}(\zeta_u^{-1})\)\cdot
\\
\cdot
e^{-\sqrt{x(t-x)}\,\sigma_3}
\cdot
r(k_0z)^{-\frac{\sigma_3}{2}}
\cdot
\(\frac{z-i}{(z+i)\zeta_u}\)^{-n\sigma_3}=
\\
=
\begin{pmatrix}
1 + \mathcal{O}(\frac{1}{\sqrt[4]{x(t-x)}})
&
\mathcal{O}\(\frac{1}{\sqrt[4]{x(t-x)}}\) \frac{r(k_0 z)\,e^{2\sqrt{x(t-x)}}}{\(x(t-x)\)^{\frac{n}{2}}}
\\
\mathcal{O}\(\frac{1}{\sqrt[4]{x(t-x)}}\) \cdot\frac{\(x(t-x)\)^{\frac{n}{2}}\,e^{-2\sqrt{x(t-x)}}}{r(k_0 z)}
&
1 + \mathcal{O}(\frac{1}{\sqrt[4]{x(t-x)}})
\end{pmatrix} = 
\end{multline*}
\begin{align*}
=&
\begin{pmatrix}
1 + \mathcal{O}\(\frac1{\(x(t-x)\)^{\frac14}}\)
&
\mathcal{O}( e^{2\sqrt{x(t-x)}+\(\frac{m}{2}-\frac{n}{2}-\frac14\)\ln(x(t-x)) - m\ln x} )
\\
\mathcal{O}(e^{-2\sqrt{x(t-x)} + \(\frac{n}{2}-\frac{m}{2}-\frac14\)\ln(x(t-x)) + m\ln x})
&
1 + \mathcal{O}\(\frac{1}{(x(t-x))^{\frac14}}\)
\end{pmatrix}.
\end{align*}
Here we used our Assumption \ref{assumption_1} that $r(k)\asymp k^{-m}$ ($m\geq1$) as $k\to\infty$, and hence 
$$r(k_0z)\asymp \(\frac{x}{t-x}\)^{-m/2} = x^{-m}\cdot\(x(t-x)\)^{m/2} \mbox{ for }|z|=1 \mbox{ as } k_0\to\infty.$$
The arguments of the exponents in the 12 and 21 elements of the above estimates suggest considering the following curves in the $x,t$ plane:
\begin{equation}\label{parametrization}
m\ln x = 2\sqrt{x(t-x)} - \beta\ln\sqrt{x(t-x)},
\end{equation}
thus the matching becomes
\begin{equation}\label{J_err_not_refined}
M_{appr, +}(z)M_{appr, -}(z)^{-1}
=
\begin{pmatrix}
1 + \mathcal{O}\(\frac{1}{\sqrt[4]{x(t-x)}}\)
&
\mathcal{O}\((x(t-x))^{-\frac14 + \frac{\beta}{2} - \frac{n}{2} + \frac{m}{2}}\)
\\
\mathcal{O}\((x(t-x))^{-\frac14 - \frac{\beta}{2} + \frac{n}{2} - \frac{m}{2}}\)
&
1 + \mathcal{O}\(\frac{1}{\sqrt[4]{x(t-x)}}\)
\end{pmatrix}.
\end{equation}
The goal now is to make both the $12$ and $21$ entries small, i.e. 
to make both quantities 
$n-m-\beta-\frac12$ and $-n+m+\beta-\frac12$ well below zero.
This is not possible when $m+\beta +\frac12$ is close to an integer. Hence, instead of $L_n(\zeta)$ we need to use another parametrix with an ``improved'' asymptotic behaviour for large $\zeta.$ This however introduces poles at the points $z=\pm i$, and thus we need to multiply the whole $M_{appr}$ by a matrix that accounts for these poles. We do all this in the
forthcoming Sections \ref{sect_refined_analysis_1}, \ref{sect_refined_analysis_2}.

Note that in \eqref{parametrization}, the parameter $\beta$ can take both positive and negative values. 

\begin{rem}
Note also that inverting \eqref{parametrization} one obtains
\begin{align*}
2\sqrt{x(t-x)} &= m\ln x +\beta\ln\ln x + \beta\ln\frac{m}{2} + \mathcal{O}\(\frac{\ln\ln x}{\ln x}\),\quad x\to+\infty,
\\
& = m\ln t +\beta\ln\ln t + \beta\ln\frac{m}{2} + \mathcal{O}\(\frac{\ln\ln t}{\ln t}\),\quad t\to+\infty.
\end{align*}
This is a consequence of the following Lemma \ref{lem_inversion} (cf.\cite{Knuth}), which we use with $z = \frac{m}{2}\ln x,$ $y = \sqrt{x(t-x)},$ $\gamma = \dfrac{\beta}{2}$ (and which we prove in \ref{sect_appendix}):
\end{rem}
\begin{lemma}\label{lem_inversion}[cf. \cite{Knuth}]
Let $\gamma\in\mathbb{R}$ and 
\begin{equation}\label{ziny}
y -\gamma \ln y = z
\end{equation}
for large positive $y.$ Then as $z\to+\infty,$ y can be expressed in terms of $z$ as follows:
\[
y = z + \gamma \ln z + \frac{\gamma^2 \ln z}{z} + \frac{\gamma^3 \(-\ln^2 z + 2\ln z\)}{2z^2} + \mathcal{O}\(\frac{\ln^3z}{z^3}\),\qquad z\to+\infty.
\]
\end{lemma}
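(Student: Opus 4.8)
The plan is to prove the expansion by a bootstrapping (successive–substitution) argument applied to the rearranged equation. First I would observe that the map $y\mapsto f(y):=y-\gamma\ln y$ has derivative $f'(y)=1-\gamma/y$, which is strictly positive for $y>\gamma$ (and for all $y>0$ when $\gamma\leq 0$); hence $f$ is a strictly increasing bijection of a half-line $[y_0,\infty)$ onto $[f(y_0),\infty)$, so for all sufficiently large $z$ there is a unique large root $y=y(z)$, and $y(z)\to+\infty$ as $z\to+\infty$. Dividing \eqref{ziny} by $z$ and using $\gamma\ln y=o(y)$ shows $y/z\to 1$, i.e. $y=z(1+o(1))$ and consequently $\ln y=\ln z+o(1)$.

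Next I would introduce $w:=y-z=\gamma\ln y$, which by the above satisfies $w\sim\gamma\ln z$, so that $w/z\to 0$. Writing $\ln y=\ln z+\ln(1+w/z)$ and expanding the logarithm in its convergent Taylor series gives the fixed-point relation
\begin{equation*}
w = \gamma\ln z + \gamma\sum_{j\geq 1}\frac{(-1)^{j-1}}{j}\left(\frac{w}{z}\right)^j.
\end{equation*}
The idea is to solve this relation iteratively: the zeroth approximation is $w=\gamma\ln z+\ord(\ln z\cdot z^{-1})$, and feeding each approximation back into the right-hand side produces the next order in powers of $1/z$, with coefficients that are polynomials in $\ln z$.

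To obtain the three stated terms I would substitute the ansatz $w=\gamma\ln z+a\,z^{-1}+b\,z^{-2}+\ord(\ln^3 z\cdot z^{-3})$, where $a,b$ are to be determined and are expected to be polynomials in $L:=\ln z$. Computing $w/z$ and $(w/z)^2$ and retaining all contributions down to order $z^{-2}$ (the cubic and higher terms of the series, and the feedback of the remainder, are $\ord(L^3 z^{-3})$ and fall into the error), the relation becomes
\begin{equation*}
w = \gamma L + \frac{\gamma^2 L}{z} + \left(\gamma a - \frac{\gamma^3 L^2}{2}\right)\frac{1}{z^2} + \ord\!\left(\frac{L^3}{z^3}\right).
\end{equation*}
Matching coefficients of $z^{-1}$ and $z^{-2}$ then yields $a=\gamma^2 L$ and $b=\gamma a-\tfrac12\gamma^3 L^2=\tfrac12\gamma^3(-L^2+2L)$, and recalling $y=z+w$ gives exactly the asserted formula.

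The main obstacle is not the formal computation but the rigorous error bookkeeping: one must verify that the iteration converges to the true root (which follows from the strict monotonicity, equivalently the contraction property, established in the first step) and, crucially, that every term discarded at each stage—both the tail $\sum_{j\geq 3}$ of the logarithmic series and the feedback of the $\ord(L^3 z^{-3})$ remainder through the linear term $\gamma w/z$—is genuinely of order $\ord(L^3 z^{-3})$ and does not contaminate the $z^{-1}$ or $z^{-2}$ coefficients. This requires tracking that $w=\ord(L)$ uniformly, so that $w^j/z^j=\ord(L^j z^{-j})$, and checking that the self-consistency of the ansatz is preserved under one further iteration.
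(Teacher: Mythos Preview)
Your proof is correct, but it takes a genuinely different route from the paper's. You proceed by direct bootstrapping: set $w=y-z=\gamma\ln y$, expand $\ln(1+w/z)$ as a convergent Taylor series, and iterate by matching coefficients in an ansatz $w=\gamma L+a z^{-1}+bz^{-2}+\ldots$. The paper instead makes the substitution $y=z+\gamma p$, then $p=\ln z-q$, reducing the problem to $e^{-q}-1+\delta+\varepsilon q=0$ with small parameters $\delta=-\gamma\ln z/z$, $\varepsilon=\gamma/z$; it then invokes Rouch\'e's theorem to guarantee a unique root in a small disk and represents $q$ by a contour integral, which after expanding the geometric series yields the coefficients as residues. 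Your approach is more elementary---no complex analysis, just real monotonicity for existence and term-by-term substitution for the expansion---and is well suited to extracting the first few terms by hand. The paper's residue formulation is essentially the Lagrange inversion formula in contour form; it buys a systematic mechanism for producing all coefficients at once and makes the structure of the double series in $z^{-1}$ and $\ln z$ transparent, at the cost of importing analytic machinery that is not strictly needed for the three terms actually used.
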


%%%%%%%%%%%%%%%%%%%%%%%%%%%%%%%%%%%%%%%%

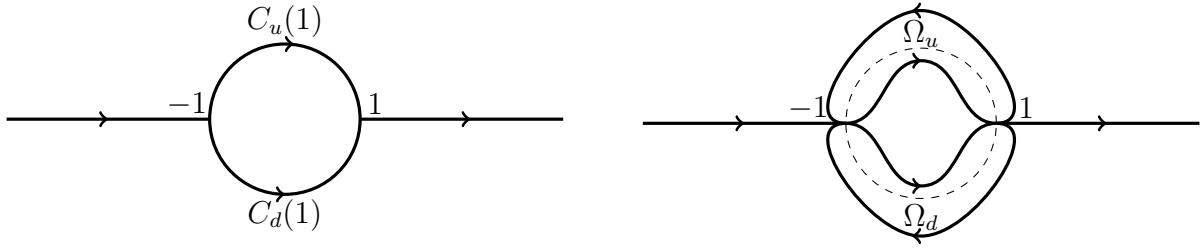
\begin{figure}\begin{center}
\begin{tikzpicture}
\def\r{1}
\def\L{3.7}
\def\h{1.25}
\draw [very thick, decoration={markings, mark=at position 0.25 with {\arrow{<}}}, decoration={markings, mark=at position 0.75 with {\arrow{>}}}, postaction=decorate] (0,0) circle (\r);

\draw[very thick, decoration={markings, mark=at position 0.5 with {\arrow{>}}}, postaction=decorate](-\L, 0) to [out=0, in = 180](-\r, 0);
\draw[very thick, decoration={markings, mark=at position 0.5 with {\arrow{<}}}, postaction=decorate](\L, 0) to [out=180, in = 0](\r, 0);

\node at (0, 1.3 * \r){$C_{u}(1)$};
\node at (0, -1.17 * \r - 0.1){$C_{d}(1)$};
\node at (\r+0.2, 0.2){$1$};
\node at (-\r-0.3, 0.2){$-1$};

\end{tikzpicture}
\qquad
\begin{tikzpicture}
\def\r{1}
\def\L{3.7}
\def\h{1.25}
\draw [dashed] (0,0) circle (\r);

\draw[very thick, decoration={markings, mark=at position 0.5 with {\arrow{>}}}, postaction=decorate](-\L, 0) to [out=0, in = 180](-\r, 0);
\draw[very thick, decoration={markings, mark=at position 0.5 with {\arrow{<}}}, postaction=decorate](\L, 0) to [out=180, in = 0](\r, 0);
\node at (0, 1.19 * \r){$\Omega_{u}$};
\node at (0, -1.15 * \r - 0.1){$\Omega_{d}$};
\node at (\r+0.4, 0.2){$1$};
\node at (-\r-0.5, 0.2){$-1$};

\draw[very thick, decoration={markings, mark=at position 0.5 with {\arrow{<}}}, postaction=decorate] (-\r, 0) [out=180, in=180] to (0, 1.5*\r) [out=0, in=0] to (\r, 0);
\draw[very thick, decoration={markings, mark=at position 0.5 with {\arrow{>}}}, postaction=decorate] (-\r, 0) [out=0, in=180] to (0, \r / 1.2) [out=0, in=180] to (\r, 0);

\draw[very thick, decoration={markings, mark=at position 0.5 with {\arrow{<}}}, postaction=decorate] (-\r, 0) [out=180, in=180] to (0, -1.5*\r) [out=0, in=0] to (\r, 0);
\draw[very thick, decoration={markings, mark=at position 0.5 with {\arrow{>}}}, postaction=decorate] (-\r, 0) [out=0, in=180] to (0, -\r / 1.2) [out=0, in=180] to (\r, 0);

\end{tikzpicture}
\end{center}
\caption{On the left: jump contour for $M^{(2)}$. On the right: jump contour for $M_{err}.$}
\label{Figure_Omega_up_low}
\end{figure}

\subsubsection{The case $n=0$ and proof of Theorem \ref{thm_really_close_light_cone}, part III.}
There is one particular case when the matching condition \eqref{J_err_not_refined} from previous Section \ref{sect_first_attempt} is sufficiently close to the identity matrix. This is the case when $n=0,$ and thus the matrix in the right-hand side of formula \ref{J_err_not_refined} on the half-circles $C_u$, $C_d$ are upper or lower triangular, and hence there is no competition between $12$ and $21$ terms.

Indeed, define
the error matrix $M_{err}(z;t,x) = M^{(2)}(z;t,x)M_{appr}(z;t,x)^{-1},$ then it satisfies the jump condition $M_{err,-}(z;t,x) = M_{err, +}(z;t,x)J_{err}(z;t,x),$ $z\in\Sigma_{err} = (-\infty, -1)\cup(1, +\infty)\cup\partial\Omega_u\cup\partial\Omega_d,$
where
\[
J_{err}(z;t,x)
=
\begin{cases}
\begin{pmatrix}
1 & r(k_0z) \cdot e^{2\sqrt{x(t-x)}} \cdot \frac{1}{2\pi i}\int_{-\infty}^{\infty}\frac{e^{-s^2}ds}{s-\zeta_u}
\\ 0 & 1
\end{pmatrix}, z\in\partial\Omega_u,
\\
\begin{pmatrix}
1 & 0 \\
\ol{r(k_0\ol z)} \cdot e^{2\sqrt{x(t-x)}} \cdot \frac{1}{2\pi i}\int_{-\infty}^{\infty}\frac{e^{-s^2}ds}{s-\zeta_d}
 & 1
\end{pmatrix}, z\in\partial\Omega_d,
\\
\begin{pmatrix}
1 + |r(k_0z)|^2 & -r(k_0z)e^{-2i\theta(t,x;k_0z)}
\\
-r^*(k_0z)e^{2i\theta(t,x;k_0z)} & 1
\end{pmatrix}, z\in(-\infty, -1)\cup(1, +\infty),
\end{cases}
\]
and functions $\mathcal{E}(t,x),$ $\mathcal{N}(t,x),$ $\rho(t,x)$ are reconstructed from the matrix $M_{err}(z;t,x)$ by formulae \eqref{E1}, \eqref{F1}, where we change $M$ with $M_{err}.$

\subsubsection*{Quantity $p_2.$}
Note that in the domain \eqref{domain_part_III}
we have
\[
m\ln x - K \ln\ln x
\leq
2\sqrt{x(t-x)} \leq m\ln x -(m+\varepsilon_2-\frac12)\ln\ln x
\]
and thus the quantity $p_2=p_2(t,x)$ defined in \eqref{p12}
admits the estimate
\[
e^{-p_2} = \frac{e^{2\sqrt{x(t-x)}}\,\(x(t-x)\)^{\frac{m}{2}-\frac14}}{x^m}
\leq
\frac{x^m\,\(x(t-x)\)^{\frac{m}{2}-\frac14}}{x^m\,(\ln x)^{m+\varepsilon_2-\frac12}}
\asymp
\frac{(\ln x)^{m-\frac12}}{(\ln x)^{m+\varepsilon_2-\frac12}}
=
(\ln x)^{-\varepsilon_2},
\]
and hence $p_2$ is a large quantity (but not too large: similarly we obtain $e^{p_2}=\ord((\ln x)^{K-m+\frac12})$).
Note also that $\sqrt{x(t-x)}\asymp \ln x$ as $x\to\infty$ and that 
\begin{equation}\label{p1_elaborate}
e^{-p_2} = \frac{e^{2\sqrt{x(t-x)}}}{(2k_0)^m\ \sqrt[4]{x(t-x)}}.
\end{equation}

\subsubsection*{Estimates of the error matrix $J_{err}.$}
Note that 
$\frac{1}{2\pi i}\int_{\infty}^{+\infty}\frac{e^{-s^2}ds}{s-\zeta_u} = \ord(\zeta_u^{-1})$ for $z\in\partial\Omega_u,$
and hence the jump error $J_{err}$ is close to the identity matrix on $\partial\Omega_u,$
$$
\|J_{err}-I\|_{L^2(C_u)\cap L^1(C_u)\cap L^{\infty}(C_u)} = \ord\(\frac{1}{k_0^m}e^{2\sqrt{x(t-x)}}\frac{1}{\sqrt[4]{x(t-x)}}\)
=\ord(e^{-p_2}).
$$
A similar estimate holds for $C_d,$ and a better estimate holds for the part of the contour on the real axis,
$$
\|J_{err}-I\|_{L^2(\mathbb{R}\setminus[-1,1])\cap L^1(\mathbb{R}\setminus[-1,1])\cap L^{\infty} (\mathbb{R}\setminus[-1,1])} = \ord\(k_0^{-m}\) = \ord(e^{-p_2}).
$$
Hence, considerations of Sections \ref{sect_proof_thm_1_part_I}, \ref{sect_proof_thm_1_part_II} can be repeated; in particular, formulae \eqref{E_thm_1_proof}, \eqref{rho_N_from_M} are valid, if we change there $M^{(2)}$ with $M_{err}$, $J^{(2)}$ with $J_{err}$ and $\Sigma^{(2)}$ with $\Sigma_{err},$ respectively.

\subsection*{Function $\mathcal{E}(t,x).$}
An analogue of formula \eqref{E_thm_1_proof} can be written as 
\begin{multline*}
\mathcal{E}(t,x)
=
\frac{-4ik_0}{2\pi i}\int_{\partial\Omega_u}r(k_0z)\,e^{2\sqrt{x(t-x)}}\,\frac{1}{2\pi i}
\int_{-\infty}^{+\infty}\frac{e^{-s^2}\,ds}{s-\zeta_u}dz
+
\frac{4i k_0}{2\pi i}\int_{\mathbb{R}\setminus[-1, 1]}r(k_0z)e^{-2i\theta(t,x;k_0z)}dz
\\
-
\frac{4ik_0}{2\pi i}\int_{\Sigma_{err}}[(M_{err, +}(z;t,x)-I)(J_{err}(z;t,x)-I)]_{12}dz
=: A_1 + A_2 + A_3.
\end{multline*}
Integrals $A_2, A_3$ admit the estimates
\begin{equation}\label{II_III_estimates}
A_3 = \ord\(\frac{e^{4\sqrt{x(t-x)}}}{k_0^{2m-1}\sqrt{x(t-x)}}\)
=
\ord(k_0e^{-2p_2}),
\qquad
A_2 = \ord\(k_0^{1-m}\) = \ord(k_0e^{-2p_2}),
\end{equation}
and in the term $A_1$ we substitute
\begin{equation}\label{integral_estimate}\frac{1}{2\pi i}\int_{\infty}^{+\infty}\frac{e^{-s^2}ds}{s-\zeta_u} = \frac{i}{2\sqrt{\pi}}\zeta_u^{-1} + \ord(\zeta_u^{-3}),
\end{equation}
then $A_1$ splits accordingly in two terms: the first one can be computed explicitly by computing the first order residue at the point $z=i$ (we use the relation $\zeta_u = \sqrt[4]{x(t-x)}e^{i\alpha_u/2}(z-i)$), and the second term admits an appropriate estimate:
\begin{multline}\label{I_estimate}
A_1 = 
\frac{-4ik_0}{2\pi i}\int_{\partial\Omega_u}r(k_0z)\,e^{2\sqrt{x(t-x)}}\,
\(
\frac{i}{2\sqrt{\pi}}\frac{e^{-i\alpha_u(z;t,x)/2}}{\sqrt[4]{x(t-x)}(z-i)}
+
\ord\(\frac{1}{(x(t-x))^{3/4}}\)
\)dz
\\
=
\frac{2k_0\,r(ik_0)\,e^{2\sqrt{x(t-x)}}}{\sqrt{\pi}\sqrt[4]{x(t-x)}} + \ord\(\frac{e^{2\sqrt{x(t-x)}}}{k_0^{m-1}\,\(x(t-x)\)^{3/4}}\)
=
\frac{2k_0\,r(ik_0)\,e^{2\sqrt{x(t-x)}}}{\sqrt{\pi}\sqrt[4]{x(t-x)}} + \ord\(\frac{k_0\,e^{-p_2}}{\ln x}\),
\end{multline}
where we used \eqref{p1_elaborate}.
Combining \eqref{II_III_estimates} and \eqref{I_estimate}, we obtain
\begin{equation}\label{E_part_III_interm}
\mathcal{E}(t,x)
=
\frac{2k_0r(ik_0)e^{2\sqrt{x(t-x)}}}{\sqrt{\pi}\sqrt[4]{x(t-x)}}
+
\ord\(k_0e^{-2p_2} + k_0 e^{-p_2}(\ln x)^{-1}
\).
\end{equation}
Note that $k_0 = \frac{x}{2\sqrt{x(t-x)}}\asymp\frac{x}{\ln x}$ and thus the $\ord$ term in \eqref{E_part_III_interm} is of the order $\ord\(\frac{x}{e^{2p_2}\,\ln x} + \frac{x}{e^{p_2}\,(\ln x)^2}\),$ i.e. growing. We thus prefer to rewrite \eqref{E_part_III_interm} by factoring out the main term,
\begin{equation*}
\mathcal{E}(t,x)
=
\frac{2k_0r(ik_0)e^{2\sqrt{x(t-x)}}}{\sqrt{\pi}\sqrt[4]{x(t-x)}}
\(1+
\ord\((\ln x)^{-1} + e^{-p_2}\)
\).
\end{equation*}
\subsection*{Functions $\mathcal{N}(t,x)$ and $\rho(t,x).$}
An analogue of formula \eqref{M_2_0} gives us
\begin{multline*}
M_{err}(0;t,x)
=
I - \frac{1}{2\pi i}\int_{\Sigma_{err}}\!\!\!\(J_{err}(z;t,x)-I\)\frac{dz}{z}
\\
- \frac{1}{2\pi i}\int_{\Sigma_{err}}\!\!\!\(M_{err,+}(z;t,x)-I\)\(J_{err}(z;t,x)-I\)\frac{dz}{z}
\\
=
I - \frac{1}{2\pi i}\int_{\Sigma_{err}}\(J_{err}(z;t,x)-I\)\frac{dz}{z}
+
\ord\(\frac{e^{4\sqrt{x(t-x)}}}{k_0^{2m}\,\sqrt{x(t-x)}}\)
\\
=
I - \frac{1}{2\pi i}\int_{\Sigma_{err}}\(J_{err}(z;t,x)-I\)\frac{dz}{z}
+
\ord\(e^{-2p_2}\),
\end{multline*}
and hence for the elements $M_{err,11},$ $M_{err,12}$ of the first row of the matrix $M_{err}$ we obtain
\begin{multline}\label{M_err_11}
M_{err, 11}(0;t,x)
=
1 - \frac{1}{2\pi i}\int_{\mathbb{R}\setminus[-1,1]}|r(k_0z)|^2\frac{dz}{z}
+
\ord\(e^{-2p_2}\)
\\
=
1 + \ord\(k_0^{-2m} + e^{-2p_2}\) 
=
1 + \ord\(e^{-2p_2}\),
\end{multline}
\begin{multline*}
M_{err, 12}(0;t,x)
=
\frac{-1}{2\pi i}\int_{\partial\Omega_u}r(k_0z)e^{2\sqrt{x(t-x)}}\frac{1}{2\pi i}\int_{-\infty}^{+\infty}\frac{e^{-s^2}ds}{s-\zeta_u}\frac{dz}{z}
+
\frac{1}{2\pi i}\int_{\mathbb{R}\setminus[-1,1]}\!\!r(k_0z)e^{-2i\theta(t,x;k_0z)}\frac{dz}{z}
\\
+ \ord\(e^{-2p_2}\)
=: B_1 + B_2 + \ord\(e^{-2p_2}\).
\end{multline*}
Here $B_2$ admits the estimate $B_2 = \ord(k_0^{-m}) = \ord(x^{-m}\,\ln^m x),$ and we elaborate on the term $B_1$ similarly as we did in \eqref{I_estimate}, by substituting formula \eqref{integral_estimate} and computing the residue at $z=i.$ Thus
\begin{multline}\label{M_err_12}
M_{err, 12}(0;t,x) 
= \frac{-r(ik_0)e^{2\sqrt{x(t-x)}}}{2\sqrt{\pi}\sqrt[4]{x(t-x)}}
+
\ord\(
\frac{k_0^{-m}e^{2\sqrt{x(t-x)}}}{\(x(t-x)\)^{3/4}}
+ \frac{1}{k_0^m} + e^{-2p_2}\)
\\
= \frac{-r(ik_0)e^{2\sqrt{x(t-x)}}}{2\sqrt{\pi}\sqrt[4]{x(t-x)}}
+
\ord\(
\frac{e^{-p_2}}{\ln x} 
+ \frac{\ln^mx}{x^m} + e^{-2p_2}\)
\\
= \frac{-r(ik_0)e^{2\sqrt{x(t-x)}}}{2\sqrt{\pi}\,\sqrt[4]{x(t-x)}}
+
\ord\(
\frac{e^{-p_2}}{\ln x} 
+ e^{-2p_2}\).
\end{multline}
Note that the main term in \eqref{M_err_12} is of the order $e^{-p_2}.$
Now we use an analogue of formulae \eqref{rho_N_from_M},
$\rho(t,x) = -2M_{err,11}(0;t,x)M_{err,12}(0;t,x),$
$\mathcal{N}(t,x) = 1 - 2|M_{err, 12}(0;t,x)|^2,$
where we substitute formulae \eqref{M_err_11}, \eqref{M_err_12}.
We thus obtain
\[
\rho(t,x) = \frac{r(ik_0)e^{2\sqrt{x(t-x)}}}{\sqrt{\pi}\sqrt[4]{x(t-x)}}
+
\ord\(\frac{e^{-p_2}}{\ln x} + e^{-2p_2}
\),
\]
\[
\mathcal{N}(t,x)
=
1 - \frac{|r(ik_0)|^2e^{4\sqrt{x(t-x)}}}{2\pi\sqrt{x(t-x)}}
+
\mathcal{O}\(
\frac{e^{-2p_2}}{\ln x} + e^{-3p_2}
\).
\]
This completes the proof of Theorem \ref{thm_really_close_light_cone}, part III.

\subsubsection{Proof of Theorem \ref{thm_really_close_light_cone}, part IV, first half.}\label{sect_refined_analysis_1}

In this section we consider one half of the domain \eqref{domain_part_IV}, namely (below, $n=0, 1, 2, 3\ldots$)
\begin{equation}\label{domain_part_IV_1}
x + \frac{1}{4x}\(m\ln x + (n-m)\ln\ln x\)^2 \leq t \leq x + \frac{1}{4x}\(m\ln x + (n-m+\frac12)\ln\ln x\)^2.
\end{equation}

\begin{lemma}\label{lem_beta_estimate}
For $(t,x)$ in the domain \eqref{domain_part_IV_1} one has
\[
n-m + \ord((\ln\ln x)^{-1}) \leq \beta(t,x) \leq n-m+\frac12 + \ord((\ln\ln x)^{-1}),\quad \mbox{ as }x\to\infty,
\]
where $\beta=\beta(t,x)$ is defined by formula \eqref{parametrization}.
\end{lemma}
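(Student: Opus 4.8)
The plan is to solve the defining relation \eqref{parametrization} explicitly for $\beta$ and then control the resulting expression using the two-sided bounds that the domain \eqref{domain_part_IV_1} imposes on $\sqrt{x(t-x)}$. Writing $y:=\sqrt{x(t-x)}$, relation \eqref{parametrization} reads $m\ln x = 2y - \beta\ln y$, which is linear in $\beta$; since $x(t-x)\to\infty$ in our regime (so $\ln y\neq 0$), it determines $\beta$ uniquely as
\[
\beta(t,x) = \frac{2\sqrt{x(t-x)} - m\ln x}{\ln\sqrt{x(t-x)}}.
\]
First I would convert the constraints of \eqref{domain_part_IV_1} into bounds for $\sqrt{x(t-x)}$. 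In \eqref{domain_part_IV_1} one has $t-x = \frac{1}{4x}(\cdots)^2$; multiplying by $x$ and taking square roots (the bracketed arguments are positive for large $x$, since $m\ln x$ dominates $(n-m)\ln\ln x$) gives
\[
\frac12\(m\ln x + (n-m)\ln\ln x\)\ \leq\ \sqrt{x(t-x)}\ \leq\ \frac12\(m\ln x + (n-m+\frac12)\ln\ln x\),
\]
and hence, setting $P:=2\sqrt{x(t-x)} - m\ln x$,
\[
(n-m)\ln\ln x\ \leq\ P\ \leq\ (n-m+\frac12)\ln\ln x.
\]

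Next I would estimate the denominator. The bounds above give $\sqrt{x(t-x)} = \frac{m}{2}\ln x\,(1+o(1))$ as $x\to\infty$, so that
\[
\ln\sqrt{x(t-x)} = \ln\ln x + \ln\frac{m}{2} + o(1) = \ln\ln x + \ord(1).
\]
Since $P/\ln\ln x\in[n-m,\,n-m+\frac12]$ is bounded, dividing gives
\[
\beta = \frac{P}{\ln\ln x + \ord(1)} = \frac{P}{\ln\ln x}\cdot\frac{1}{1 + \ord\((\ln\ln x)^{-1}\)} = \frac{P}{\ln\ln x} + \ord\((\ln\ln x)^{-1}\),
\]
and the claimed two-sided estimate follows at once from $\frac{P}{\ln\ln x}\in[n-m,\,n-m+\frac12]$.

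The argument is elementary, so there is no genuine obstacle; the only points needing care are (i) checking that the bracketed quantity in \eqref{domain_part_IV_1} is positive for large $x$, legitimising the square-root and the monotonicity step, and (ii) propagating the additive $\ord(1)$ error in $\ln\sqrt{x(t-x)}$ into the stated additive $\ord((\ln\ln x)^{-1})$ error for $\beta$, which is precisely where the boundedness of $P/\ln\ln x$ enters.
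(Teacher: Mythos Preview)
Your proposal is correct and follows essentially the same approach as the paper: convert the domain bounds into the two-sided estimate $(n-m)\ln\ln x \leq 2\sqrt{x(t-x)} - m\ln x \leq (n-m+\frac12)\ln\ln x$, note that this numerator equals $\beta\ln\sqrt{x(t-x)}$ by definition, estimate $\ln\sqrt{x(t-x)} = \ln\ln x + \ord(1)$ from those same bounds, and divide. Your write-up is slightly more explicit about the error propagation, but the argument is the same.
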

\begin{proof}
Indeed, \eqref{domain_part_IV_1} implies
\begin{equation}\label{domain_consequence}
m \ln x + (n-m)\ln\ln x
\leq 2\sqrt{x(t-x)} \leq m \ln x + (n-m+\frac12)\ln\ln x,
\end{equation}
and hence 
\begin{equation}\label{domain_consequence_2}
(n-m+\frac12)\ln \ln x \leq \beta(t,x) \ln \sqrt{x(t-x)} \equiv 2\sqrt{x(t-x)} - m\ln x \leq (n-m+\frac12)\ln \ln x.
\end{equation}
Estimating $\ln\sqrt{x(t-x)}$ from formula \eqref{domain_consequence} and substituting it into \eqref{domain_consequence_2}, we obtain the statement of the Lemma.
\end{proof}

Define
\[
M_{appr}^{(1)}(z;t,x)
=
\begin{cases}
N(z;t,x)\cdot\(\dfrac{z-i}{z+i}\)^{n\sigma_3}, \quad z\in\mathbb{C}\setminus\(\Omega_{u}\cup\Omega_{d}\),
\\
N(z;t,x)\cdot B_u(z;t,x)\Delta^{(1)}(\zeta_u)L_{n}(\zeta_u)r(k_0z)^{-\frac{\sigma_3}{2}}e^{-\sqrt{x(t-x)}\sigma_3},
\quad z\in\Omega_{u},
\\
N(z;t,x)\cdot B_{d}(z;t,x) \cdot\sigma\Delta^{(1)}(\zeta_d)L_n(\zeta_d)\sigma \cdot \ol{r(k_0z)}^ {\,\frac{\sigma_3}{2}}e^{\sqrt{x(t-x)}\sigma_3},\quad z \in \Omega_{d}.
\end{cases}
\]
Here $\Delta^{(1)}(\zeta)=\begin{pmatrix}1 & \dfrac{\sqrt{\pi}\,n!}{2\pi i\cdot 2^n\,\zeta} \\ 0 & 1\end{pmatrix}$ and hence 
the factor $\Delta^{(1)}(\zeta) L_n(\zeta)$ has the ``improved'' asymptotics \eqref{improved_asymp_2} for large $\zeta$ compared to the asymptotics of $L_n(\zeta).$
However, $\Delta^{(1)}(\zeta)$
has a simple pole at $\zeta=0,$ which corresponds to a simple pole of $M^{(1)}_{appr}(z;t,x)$ at the points $z = \pm i, $ and to cancel the latter
we introduce the hitherto unknown function $N(z) = N(z;t,x),$ for which we look in the form
\[
N(z) = \begin{pmatrix}
1 + \dfrac{i a}{z+ i} & \dfrac{i b}{z-i}
\\
\dfrac{i\,\ol{b}}{z+i} & 1 - \dfrac{i\,\ol{a}}{z-i}
\end{pmatrix},
\]
where $a, b$ are some yet unknown complex coefficients (which might depend on the parameters $t,x$).
The condition that $N(z) B_u(z) \Delta^{(1)}(\zeta_u)$ is regular at the point $z = i$ is equivalent
to the following system of linear equations for $a, b:$
\[
\(1 + \dfrac{a}{2}\)\cdot \dfrac{\phi^2\sqrt{\pi}\,n!}{2\pi\cdot 2^n\cdot\sqrt[4]{x(t-x)}} - b = 0,
\qquad
a + \dfrac{b}{2}\cdot \dfrac{\sqrt{\pi}\,n!\,\ol{\phi}^2}{2\pi\cdot 2^n\cdot\sqrt[4]{x(t-x)}} = 0,
\]
where
\begin{equation}\label{phi}
\phi = \phi(t,x) = e^{\frac{\pi i n}{2}}\cdot 2^{-n}\cdot\(x(t-x)\)^{\frac{-n}{4}}
\cdot
\sqrt{r(i k_0)}\cdot e^{\sqrt{x(t-x)}}.
\end{equation}
Introducing the short-hand notation
\begin{equation}\label{psi}\psi = \psi(t,x) :=\frac{\phi^2\cdot\sqrt{\pi}\,n!}{4\pi\cdot 2^n\cdot\sqrt[4]{x(t-x)}},\end{equation}
the above system can be rewritten in the form
\[
\begin{cases}
a + \ol{\psi}b = 0,
\\
\psi a - b = -2\psi,
\end{cases}
\mbox{ and hence }
\begin{cases}
a = \dfrac{-2|\psi|^2}{1 + |\psi|^2},
\\
b = \dfrac{2\psi}{1 + |\psi|^2}.
\end{cases}
\]
Note that $a$ is real and $(a+1)^2 + |b|^2 = 1,$ and hence $\det N(z)\equiv 1.$ Besides, $a, b$ are uniformly bounded in $t,x$, and hence $N(z)$ and $N(z)^{-1}$ are bounded away from $z=\pm i.$

Let us examine the jumps of the matrix error function 
$$M^{(1)}_{err}(z;t,x) = M^{(2)}(z;t,x) M_{appr}^{(1)}(z;t,x)^{-1},$$
with the jump matrix $J_{err}^{(1)}(z;t,x)=J_{err}^{(1)}(z)$ such that $M^{(1)}_{err, -}(z;t,x) = M^{(1)}_{err, +}(z;t,x)J^{(1)}_{err}(z;t,x).$ For 
$z\in\partial\Omega_{u}$ we have

\begin{align*}
J^{(1)}_{err}(z) 
&= 
N(z)\cdot\(\frac{z-i}{(z+i)\zeta_u}\)^{n\sigma_3}
r(k_0z)^{\frac{\sigma_3}{2}}
e^{\sqrt{x(t-x)}}
\Delta^{(1)}(\zeta_u)L_n(\zeta_u)
r(k_0z)^{\frac{-\sigma_3}{2}}
e^{-\sqrt{x(t-x)}\sigma_3}\(\frac{z+i}{z-i}\)^{n\sigma_3}
\\
&=
\begin{pmatrix}
1 + \mathcal{O}\(\frac{1}{\sqrt[4]{x(t-x)}}\)
&
\mathcal{O}(1)\cdot \(\sqrt{x(t-x)}\)^{\beta(t,x)+m-n-1}
\\
\mathcal{O}(1)\cdot \(\sqrt{x(t-x)}\)^{-\beta(t,x)-m+n-\frac12}
&
1 + \mathcal{O}\(\frac{1}{\sqrt[4]{x(t-x)}}\)
\end{pmatrix}.
\end{align*}
 
We see that the $12$ entry vanishes more rapidly compared to \eqref{J_err_not_refined}.
Since $\beta(t,x)$ satisfies the estimates from Lemma \ref{lem_beta_estimate} and $\ln\sqrt{x(t-x)}\sim \ln \ln x$ as $x\to\infty,$ it follows that 
$
(\sqrt{x(t-x)})^{\ord((\ln\ln x)^{-1})} = \ord(1) 
$
and hence
$$
J^{(1)}_{err}(z) = I + \mathcal{O}\(\frac{1}{\sqrt[4]{x(t-x)}}\)
=
I + \mathcal{O}\(\frac{1}{\sqrt{\ln x}}\)
=
I + \mathcal{O}\(\frac{1}{\sqrt{\ln t}}\),\qquad z\in\partial\Omega_{u}.
$$
Similarly, 
$
J^{(1)}_{err}(z) = I + \mathcal{O}\(\dfrac{1}{\sqrt{\ln t}}\),\ z\in\partial\Omega_{d}.
$
Furthermore,
$$
J_{err}^{(1)}(z) = I + \mathcal{O}\(\frac{1}{k_0^m}\)
=
I + \mathcal{O}\(\frac{1}{k_0^m}\)
=
I + \mathcal{O}\(\frac{\ln^mx}{x^m}\),
\qquad
z \in (-\infty, -1)\cup(1, +\infty).
$$
Overall, 
\begin{equation}\label{estimate_J_err_1}
J^{(1)}_{err}(z) = 
I + \mathcal{O}\(\dfrac{1}{\sqrt{\ln x}}\)
=
I + \mathcal{O}\(\dfrac{1}{\sqrt{\ln t}}\),\ z\in\Sigma_{err}:=\partial\Omega_u\cup\partial\Omega_d\cup(-\infty, -1)\cup(1, +\infty),\end{equation}
and by standard arguments we conclude that 
$M_{err}^{(1)}(z; t, x) = I + \mathcal{O}\(\dfrac{1}{\sqrt{\ln t}}\),$ uniformly in $z\in\mathbb{C}.$

\subsubsection*{Reconstruction of $\mathcal{E}, \mathcal{N} ,\rho.$}
Recall \eqref{E1}, \eqref{F1} that 
\[
\mathcal{E}(t, x) = -4\,i\,\lim\limits_{k\to\infty} k(M(k; t, x) - I)_{12},\]\[ 
\begin{pmatrix}
\mathcal{N}(t, x) & \rho(t, x)
\\
\ol{\rho(t, x)} & -\mathcal{N}(t, x)
\end{pmatrix}
=
M(+i 0; t, x)\sigma_3 M(+i 0; t, x)^{-1}.
\]
\noindent Tracing back the connection between $M^{(2)}$ and $M$, it follows that
\begin{align*}
&
\mathcal{E}(t, x) = -2\,i\,\sqrt{\frac{x}{t-x}}\lim\limits_{z\to\infty} z(M^{(2)}(z; t, x) - I)_{12},
\\ 
&\begin{pmatrix}
\mathcal{N}(t, x) & \rho(t, x)
\\
\ol{\rho(t, x)} & -\mathcal{N}(t, x)
\end{pmatrix}
=
M^{(2)}(z=+i 0; t, x)\sigma_3 M^{(2)}(z=+i 0; t, x)^{-1}.
\end{align*}
Substituting here $M^{(2)}(z;t,x) = M_{err}(z;t,x)M_{appr}^{(1)}(z;t,x),$ we get
\[
\mathcal{E}(t, x) = -2i\sqrt{\frac{x}{t-x}}\cdot i b(t, x) + \mathcal{E}_{err}^{(1)}(t, x)
=
2\sqrt{\frac{x}{t-x}}\cdot \frac{2\psi}{1 + |\psi|^2} + \mathcal{E}_{err}^{(1)}(t, x),
\]
where
\[
\mathcal{E}_{err}^{(1)}(t, x) = -2\,i\,\sqrt{\frac{x}{t-x}}\cdot \lim\limits_{z\to\infty} z(M_{err}^{(1)}(z; t, x) - I)_{12}.
\]
From the estimates \eqref{estimate_J_err_1} on $J_{err}^{(1)}$ it follows that 
\[
\mathcal{E}^{(1)}_{err}(t, x) = \mathcal{O}\(\frac{x}{\sqrt{x(t-x)}}\)\cdot \frac{1}{\sqrt[4]{x(t-x)}}
=
\mathcal{O}\(\frac{x}{(x(t-x))^{3/4}}\)
=
\mathcal{O}\(\frac{x}{(\ln x)^{3/2}}\).
\]
and elaborating on the expression \eqref{psi} for $\psi,$
\begin{equation}\label{psi_full}
\psi = \frac{(-1)^n\,n!\,r(i k_0)e^{2\sqrt{x(t-x)}}}{\sqrt{\pi}\,2^{3n+2}\,(x(t-x))^{\frac{n}{2}+\frac14}}
\end{equation}
we hence obtain 
\begin{align}\label{E_expr_1}
\mathcal{E}(t, x)
=
2\sqrt{\frac{x}{t-x}}
\(
\frac{ (-1)^n \cdot e^{i\arg r(i k_0)} }
{
\cosh\(2\sqrt{x(t-x)} - (n+\frac12)\ln\sqrt{x(t-x)} + \chi_n(k_0) \)}
+
\mathcal{O}\(\frac{1}{\sqrt[4]{x(t-x)}}\)
\),
\end{align}
where 
$
\chi_n(k_0) =\ln\left( \dfrac{n!\cdot|r(i k_0)|} {\sqrt{\pi}\cdot2^{3n+2}}\right).
$
Expression \eqref{E_expr_1} coincides with formula \eqref{E_res_as_sol}, and thus the part IV of Theorem \ref{thm_really_close_light_cone} is proved for the function $\mathcal{E}(t,x)$ in domain \eqref{domain_part_IV_1}.

As for the $\mathcal{N}, \rho,$
we have
\begin{align*}
\begin{pmatrix}
\mathcal{N}(t, x) & \rho(t, x) \\ 
\ol{\rho(t, x)} & -\mathcal{N}(t, x)
\end{pmatrix}
&=
M_{err}^{(1)}(+i 0; t, x)
M_{appr}^{(1)}(+i 0; t, x)
\sigma_3
\(M_{appr}^{(1)}(+i 0; t, x)\)^{-1}
M_{err}^{(1)}(+i 0; t, x)^{-1}
\\
&=
M_{err}^{(1)}(+i 0; t, x)
\begin{pmatrix}1 + a & -b \\ \ol{b} & 1 + a\end{pmatrix}
\sigma_3
\begin{pmatrix}1 + a & b \\ -\ol{b} & 1 + a\end{pmatrix}
M_{err}^{(1)}(+i 0; t, x)^{-1}
\\
&=
M_{err}^{(1)}(+i 0; t, x)
\begin{pmatrix}(1 + a)^2 - |b|^2 & 2b(1+a) \\ 2\ol{b}(1+a) & -(1 + a)^2 + |b|^2 \end{pmatrix}
M_{err}^{(1)}(+i 0; t, x)^{-1}
\\
&=M_{err}^{(1)}(+i 0; t, x)
\begin{pmatrix}1 - 2|b|^2 & 2b(1+a) \\ 2\ol{b}(1+a) & -1 + 2 |b|^2 \end{pmatrix}
M_{err}^{(1)}(+i 0; t, x)^{-1},
\end{align*}
and thus $\mathcal{N}(t, x) = 1 - 2|b|^2 + \mathcal{O}\(\frac{1}{\sqrt[4]{x(t-x)}}\),
$
$\rho(t, x) = 2b(1+a) + \mathcal{O}(\frac{1}{\sqrt{\ln x}}).$
Thus, part IV of Theorem \ref{thm_really_close_light_cone} is proved for the domain \eqref{domain_part_IV_1}.
Note also that $b$ is close to $0$ away from the peaks of the solitons and is close to $1$ near the peaks. Hence, $\mathcal{N}$ is close to $1$ away from the peaks and is close to $-1$ near the peaks.

\subsubsection{Proof of Theorem \ref{thm_really_close_light_cone}, part IV, second half.}\label{sect_refined_analysis_2}
In this section we consider the other half of the domain \eqref{domain_part_IV} (but, for convenience, with the index $n\mapsto n-1$), namely (below, $n = 1, 2, 3, \ldots$)
\begin{equation}\label{domain_part_IV_2}
x + \frac{1}{4x}\(m\ln x + (n-m-\frac12)\ln\ln x\)^2 \leq t \leq x + \frac{1}{4x}\(m\ln x + (n-m)\ln\ln x\)^2.
\end{equation}
\begin{lemma}\label{lem_beta_estimate_2}
For $(t,x)$ in the domain \eqref{domain_part_IV_2} one has
\[
n-m-\frac12 + \ord((\ln\ln x)^{-1}) \leq \beta(t,x) \leq n-m + \ord((\ln\ln x)^{-1}),\quad \mbox{ as }x\to\infty,
\]
where $\beta=\beta(t,x)$ is defined by formula \eqref{parametrization}.
\end{lemma}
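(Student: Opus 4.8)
The plan is to reproduce the proof of Lemma \ref{lem_beta_estimate} almost verbatim, shifting only the numerical constants, since the domain \eqref{domain_part_IV_2} is the ``lower'' counterpart of \eqref{domain_part_IV_1}. First I would convert the bounds defining \eqref{domain_part_IV_2} into bounds on $2\sqrt{x(t-x)}$: subtracting $x$, multiplying through by $4x$, and taking positive square roots (which is legitimate for large $x$, as the quantities inside the squares are positive) yields the exact analogue of \eqref{domain_consequence}, namely
\[
m\ln x + (n-m-\tfrac12)\ln\ln x \;\leq\; 2\sqrt{x(t-x)} \;\leq\; m\ln x + (n-m)\ln\ln x .
\]

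Next I would invoke the defining relation \eqref{parametrization}, which I rewrite as $\beta(t,x)\ln\sqrt{x(t-x)} = 2\sqrt{x(t-x)} - m\ln x$. Substituting the two-sided bound above immediately gives the analogue of \eqref{domain_consequence_2}:
\[
(n-m-\tfrac12)\ln\ln x \;\leq\; \beta(t,x)\ln\sqrt{x(t-x)} \;\leq\; (n-m)\ln\ln x .
\]
The only quantitative input still required is an estimate of the factor $\ln\sqrt{x(t-x)}$. From the displayed bound on $2\sqrt{x(t-x)}$ the leading term is $m\ln x$ (since $\ln\ln x = o(\ln x)$), so $\sqrt{x(t-x)} = \tfrac{m}{2}\ln x\,(1+o(1))$ and hence $\ln\sqrt{x(t-x)} = \ln\ln x + \ord(1)$ as $x\to\infty$. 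Dividing the previous display by this quantity then yields the claimed estimate for $\beta$.

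There is no genuine obstacle here, the argument being identical in structure to Lemma \ref{lem_beta_estimate}; the single point meriting care is the propagation of the $\ord(1)$ error in $\ln\sqrt{x(t-x)}$ into the $\ord((\ln\ln x)^{-1})$ error for $\beta$. I would handle this by noting that for any fixed constant $c$ one has $c\,\dfrac{\ln\ln x}{\ln\ln x + \ord(1)} = c + \ord\!\big((\ln\ln x)^{-1}\big)$, and this holds regardless of the sign of $c$. Applying it with $c = n-m-\frac12$ to the lower bound and with $c = n-m$ to the upper bound preserves the shape of the estimate and completes the proof.
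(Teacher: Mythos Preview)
Your proof is correct and follows exactly the approach the paper intends: the paper's own proof of this lemma consists of the single sentence ``The proof is exactly the same as in Lemma \ref{lem_beta_estimate},'' and you have faithfully carried out that same argument with the shifted constants. Your added remark about the division step working regardless of the sign of $c$ is a welcome clarification that the paper omits.
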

\begin{proof}
The proof is exactly the same as in Lemma \ref{lem_beta_estimate}.
\end{proof}

Now we want to ``improve'' the $21$ term in $L_n(\zeta)$ as in formula \eqref{improved_asymp_2}. Define
\[
M_{appr}^{(2)}(z;t,x)
=
\begin{cases}
\widehat N(z;t,x)\cdot\(\frac{z-i}{z+i}\)^{n\sigma_3}, \quad z\in\mathbb{C}\setminus\(\Omega_{u}\cup\Omega_{d}\),
\\
\widehat N(z;t,x)\cdot B_u(z;t,x)\cdot\widehat\Delta(\zeta_u)L_{n}(\zeta_u)\cdot r(k_0z)^{-\frac{\sigma_3} {2}}e^{-\sqrt{x(t-x)}\sigma_3},
\quad z\in\Omega_{u},
\\
\widehat N(z;t,x)\cdot B_{d}(z;t,x)\cdot\sigma\widehat\Delta(\zeta_d)L_n(\zeta_d)\sigma\cdot r^*(k_0z)^{\frac{\sigma_3}{2}}e^{\sqrt{x(t-x)}\sigma_3},\quad z \in \Omega_{d}.
\end{cases}
\]
Here $\widehat\Delta(\zeta)=\begin{pmatrix}1 & 0 \\ \dfrac{2\pi i\cdot 2^{n-1}\cdot n}{\sqrt{\pi}\,n!\,\zeta} & 1\end{pmatrix}$ has a simple pole at $\zeta=0,$ which corresponds to a simple pole of $M_{appr}^{(2)}$ at the points $z = \pm i, $ and to cancel it we introduce the yet unknown function $\widehat N(z;t,x),$ for which we look in the form
\[
\widehat N(z;t,x) = \begin{pmatrix}
1 + \dfrac{i\, \widehat a}{z - i} & \dfrac{i\, \widehat b}{z+i}
\\
\dfrac{i\,\ol{\widehat b}}{z-i} & 1 - \dfrac{i\,\ol{\widehat a}}{z+i}
\end{pmatrix}.
\]
The condition that $\widehat N(z;t,x) B_u(z;t,x) \widehat \Delta(\zeta_u)$ is regular at the point $z = i$ is equivalent
to the following system of linear equations for the unknown coefficients $\widehat a = \widehat a(t,x),\ \widehat b = \widehat b(t,x)\!:$
$$
\begin{cases}
\(1 - \dfrac{\widehat a}{2}\)\cdot \dfrac{2\pi\cdot 2^{n-1}\cdot n}{\ol{\phi}^2\sqrt{\pi}\,n!\cdot\sqrt[4]{x(t-x)}} + \widehat b = 0,
\\\\
\widehat a + \dfrac{\widehat b}{2}\cdot \dfrac{2\pi\cdot 2^{n-1}n}{\sqrt{\pi}\,n!\,\phi^2\cdot\sqrt[4]{x(t-x)}} = 0,
\end{cases}
\quad
\mbox{ or }
\begin{cases}
\widehat a +\widehat\psi\widehat b = 0,
\\
\ol{\widehat \psi}\cdot \widehat a - \widehat b = 2 \ol{\widehat \psi}.
\end{cases}
\quad
\mbox{ and }
\begin{cases}
\widehat a  = \dfrac{2|\widehat \psi|^2}{1 + |\widehat \psi|^2},
\\\\
\widehat b = \dfrac{-2 \ol{\widehat \psi}}{1 + |\widehat \psi|^2},
\end{cases}
$$
where $\phi$ is given by formula \eqref{phi},
\[
\phi = e^{\frac{\pi i n}{2}}\cdot 2^{-n}\cdot\(x(t-x)\)^{\frac{-n}{4}}
\cdot
\sqrt{r(i k_0)}\cdot e^{\sqrt{x(t-x)}},
\]
and 
\begin{equation}\label{psi_hat}\widehat \psi=\widehat \psi(t,x):=\frac{\pi\cdot 2^{n-1}\cdot n}{\phi^2\cdot\sqrt{\pi}\,n!\cdot\sqrt[4]{x(t-x)}}.\end{equation}
As in the previous section, $\widehat a$ is real and $(\widehat a+1)^2 + |\widehat b|^2 = 1,$ and hence $\det \widehat N(z;t,x)\equiv 1.$ Besides, $\widehat a, \widehat b$ are uniformly bounded in $t,x$, and hence $\widehat N(z;t,x)$ and $\widehat N(z;t,x)^{-1}$ are bounded away from $z=\pm i.$

Let us examine the jumps of the matrix  error function 
$$M^{(2)}_{err}(z;t,x) = M(z;t,x) M_{appr}^{(2)}(z;t,x)^{-1},$$
whose jump matrix $J^{(2)}_{err}(z;t,x) = J^{(2)}_{err}(z)$ satisfies the relation $M^{(2)}_{err, -}(z;t,x) = M^{(2)}_{err, +}(z;t,x)J^{(2)}_{err}(z;t,x).$ Similarly as in the previous subsection \ref{sect_refined_analysis_1},
for $z\in\partial\Omega_{u}$ we have
\begin{align*}
J^{(2)}_{err}(z;t,x) 
&= 
\begin{pmatrix}
1 + \mathcal{O}\(\frac{1}{\sqrt[4]{x(t-x)}}\)
&
\mathcal{O}(1)\cdot \(\sqrt{x(t-x)}\)^{\beta(t,x)+m-n-\frac12}
\\
\mathcal{O}(1)\cdot \(\sqrt{x(t-x)}\)^{-\beta(t,x)-m+n-1}
&
1 + \mathcal{O}\(\frac{1}{\sqrt[4]{x(t-x)}}\)
\end{pmatrix}.
\end{align*}

In the above matrix, the $21$ term has an improved decay compared to \eqref{J_err_not_refined}. 
In view of Lemma \ref{lem_beta_estimate_2}, the error matrix admits the estimate
\begin{equation}\label{estimate_J_err_2}
J^{(2)}_{err}(z;t,x) = I + \mathcal{O}\(\frac{1}{\sqrt[4]{x(t-x)}}\)
=
I + \mathcal{O}\(\frac{1}{\sqrt{\ln x}}\)
=
I + \mathcal{O}\(\frac{1}{\sqrt{\ln t}}\),\quad z\in\Sigma_{err}.
\end{equation}
Since the jump contour $\Sigma_{err} = \partial\Omega_u\cup\partial\Omega_d\cup(-\infty, -1)\cup(1, +\infty)$ of the error function $M_{err}^{(2)}$ is fixed for all $t, x,$ from here we conclude by standard arguments that 
$M_{err}^{(2)}(z; t, x) = I + \mathcal{O}\(\frac{1}{\sqrt{\ln t}}\),$ uniformly in $z\in\mathbb{C}.$

\subsubsection*{Reconstruction of $\mathcal{E}, \mathcal{N} ,\rho.$}
Similarly as in the previous subsection \ref{sect_refined_analysis_1}, 
\begin{align*}
&\mathcal{E}(t, x) = -2\,i\,\sqrt{\frac{x}{t-x}}\lim\limits_{z\to\infty} z(M^{(2)}(z; t, x) - I)_{12},
\\
&
\begin{pmatrix}
\mathcal{N}(t, x) & \rho(t, x)
\\
\ol{\rho(t, x)} & -\mathcal{N}(t, x)
\end{pmatrix}
=
M^{(2)}(+i 0; t, x)\sigma_3 M^{(2)}(+i 0; t, x)^{-1}.
\end{align*}
Substituting here $M^{(2)}(z;t,x) = M_{err}(z;t,x)M_{appr}^{(2)}(z;t,x),$ we get
\[
\mathcal{E}(t, x) = -2i\sqrt{\frac{x}{t-x}}\cdot i\, \widehat b(t, x) + \mathcal{E}_{err}^{(2)}(t, x)
=	
2\sqrt{\frac{x}{t-x}}\cdot \frac{-2\ol{\widehat \psi}}{1 + |\widehat\psi|^2} + \mathcal{E}_{err}^{(2)}(t, x),
\]
where
\[
\mathcal{E}^{(2)}_{err}(t, x) = -2\,i\,\sqrt{\frac{x}{t-x}}\cdot \lim\limits_{z\to\infty} z(M_{err}^{(2)}(z; t, x) - I)_{12}.
\]
From the estimate \eqref{estimate_J_err_2} on $J_{err}^{(2)}$ it follows that 
\[
\mathcal{E}_{err}^{(2)}(t, x) = \mathcal{O}\(\frac{x}{\sqrt{x(t-x)}}\)\cdot \frac{1}{\sqrt[4]{x(t-x)}}
=
\mathcal{O}\(\frac{x}{(x(t-x))^{3/4}}\)
=
\mathcal{O}\(\frac{x}{(\ln x)^{3/2}}\).
\]
and elaborating on the expression \eqref{psi_hat} for $\widehat\psi$,
\begin{equation}\label{psi_hat_full}
\widehat\psi
=
\frac{(-1)^n\,2^{3n-1}\,n\,\sqrt{\pi}\,(x(t-x))^{\frac{n}{2}-\frac14}}{n!\,r(ik_0)\,e^{2\sqrt{x(t-x)}}},
\end{equation}
we thus obtain 
\begin{align}\label{E_expr_2}
\mathcal{E}(t, x)
=
2\sqrt{\frac{x}{t-x}}
\(
\frac{ (-1)^{n-1} \cdot e^{i\arg r(i k_0)} }
{
\cosh\(2\sqrt{x(t-x)} - (n-\frac12)\ln\sqrt{x(t-x)} + \widehat \chi_n(k_0) \)}
+
\mathcal{O}\(\frac{1}{\sqrt[4]{x(t-x)}}\)
\)
\end{align}
for $n\geq 1,$ where
$
\widehat\chi_n(k_0) =\ln\left( \dfrac{(n-1)!\cdot|r(i k_0)|} {\sqrt{\pi}\cdot2^{3n-1}}\right).
$
Expression \eqref{E_expr_2} coincides with formulae \eqref{E_res_as_sol}, \eqref{E_expr_1} (after changing back $n\mapsto n+1$), and thus expression \eqref{E_expr_1} is valid not only in the domain \eqref{domain_part_IV_1}, but also in \eqref{domain_part_IV_2}, i.e. in the full domain \eqref{domain_part_IV}.
This completes the proof of part IV of Theorem~\ref{thm_really_close_light_cone} for the function $\mathcal{E}(t,x).$

As for the functions $\mathcal{N}(t,x), \rho(t,x),$
similarly as in subsection \ref{sect_refined_analysis_1}, we have
\begin{align*}
\begin{pmatrix}
\mathcal{N}(t, x) & \rho(t, x) \\ 
\ol{\rho(t, x)} & -\mathcal{N}(t, x)
\end{pmatrix}
&=
M_{err}^{(2)}(+i 0; t, x)
M_{appr}^{(2)}(+i 0; t, x)
\sigma_3
\(M_{appr}^{(2)}(+i 0; t, x)\)^{-1}
M_{err}^{(2)}(+i 0; t, x)^{-1}
\\
&=
M_{err}^{(2)}(+i 0; t, x)
\begin{pmatrix}1 - \widehat a & \widehat b \\ -\ol{\widehat b} & 1 -\widehat a\end{pmatrix}
\sigma_3
\begin{pmatrix}1 - \widehat a & -\widehat b \\ \ol{\widehat b} & 1 - \widehat a\end{pmatrix}
M_{err}^{(2)}(+i 0; t, x)^{-1}
\\
&=M_{err}^{(1)}(+i 0; t, x)
\begin{pmatrix}1 - 2|\widehat b|^2 & -2\widehat b(1 -\widehat a) \\ -2\ol{\widehat b}(1-\widehat a) & -1 + 2 |\widehat b|^2 \end{pmatrix}
M_{err}^{(1)}(+i 0; t, x)^{-1},
\end{align*}
and thus $\mathcal{N}(t, x) = 1 - 2|\widehat b|^2 + \mathcal{O}\(\frac{1}{\sqrt[4]{x(t-x)}}\),
$
$\rho(t, x) = -2\widehat b(1-\widehat a) + \mathcal{O}(\frac{1}{\sqrt{\ln x}}).$ Since $\widehat\psi$ in \eqref{psi_hat_full} is equal to $-\frac{1}{\psi},$ where $\psi$ is as in \eqref{psi_full} if we change there $n\mapsto n-1$,
then the asymptotics for $\mathcal{N}(t,x), \rho(t,x)$ from subsection \ref{sect_refined_analysis_1} are valid not only in the region \eqref{domain_part_IV_1}, but also in the region \eqref{domain_part_IV_2}. 
This completes the proof of part IV of Theorem \ref{thm_really_close_light_cone}.

\section{Asymptotic analysis. Region $\sigma t\leq x\leq (1-\sigma)t,\ \sigma\in(0,\frac12).$ Proof of Theorem \ref{thm_tail}}\label{sect_thm_2}

Introduce some functions needed in the course of the asymptotic analysis.

\subsection*{Phase functions.}
We will need in total three different phase functions, namely
\begin{equation}\label{def:g}
\widetilde\theta(k;k_0,\tau) = \tau\!\! \( \!k - \frac{k_0^2}{k} \),
\ \
h(k;k_0,\tau) = \tau\!\! \(\! k + \frac{k_0^2}{k} \),
\ \
g(k;k_0,\tau) =
\begin{cases}
h(k;k_0,\tau), \ \ \ \, |k| > k_0,
\\
-h(k;k_0,\tau), \ |k| < k_0.
\end{cases}
\end{equation}
Here $\tau = t-x,
\quad k_0 = \sqrt{\frac{x}{4(t-x)}}$ (see \eqref{k_0}).
\\Note that $\widetilde\theta(k;k_0,\tau) = \theta(t,x;k),$ where $\theta$ is defined in formula \eqref{theta}.

\subsection{Functions $\delta, F, G, H$}
{\bf Function $\delta$.} The function $\delta$ is defined as a solution to the scalar conjugation problem
\[
\frac{\delta_+(k; k_0)}{\delta_-(k; k_0)} = 1 + |r(k)|^2,\quad k\in (-\infty, -k_0)\cup(k_0, +\infty),
\]
with the normalisation $\delta(k; k_0) \to 1\mbox{ as } k\to\infty,$
and is given explicitly by the formula
\begin{equation}\label{def:delta}
\delta(k;k_0) = \exp\left\{
\frac{1}{2\pi i}
\(\int_{-\infty}^{-k_0}+\int_{k_0}^{+\infty}\)
\frac{\ln(1+|r(s)|^2)\ ds}{s - k}
\right\}.
\end{equation}

\medskip\medskip

\noindent
{\bf Function $G$.} The function $G$ is defined as a solution to the scalar conjugation problem
\[
\frac{G_+(k;k_0)}{G_-(k;k_0)} = 1 + \frac{1}{|r(k)|^2},\quad k\in (-k_0, k_0),\qquad G(k) \to 1\mbox{ as } k\to\infty,
\]
and is given explicitly by the formula
\begin{equation}\label{def:G}
G(k;k_0) = \exp\left\{
\frac{1}{2\pi i}
\int_{-k_0}^{k_0}
\frac{\ln\(1+|r(s)|^{-2}\)\ ds}{s - k}
\right\}.
\end{equation}

\medskip\medskip

\noindent
{\bf Function $H$.} The function $H$ is defined using the function $G$ as follows:
\begin{equation}\label{def:H}
H(k;k_0)
=
\begin{cases}
G(k;k_0) b(k),\quad \Im k > 0,
\\
\dfrac{G(k;k_0)}{b^*(k)},\quad \Im k < 0.
\end{cases}
\end{equation}
It satisfies the following jump condition:
\[H_+(k; k_0) = H_-(k; k_0) \frac{1}{1 + |r(k)|^{-2}},\quad k\in(-\infty, -k_0)\cup(k_0, +\infty).\]

\medskip\medskip
\noindent
{\bf Function $F$.}
\noindent Function $F$ is defined using $G$ as follows:
\begin{equation}\label{def:F}
F(k;k_0) = \begin{cases}
a(k)\delta(k; k_0)G(k;k_0),\quad |k|>k_0, \ \Im k>0,
\\
(a^*(k))^{-1}\delta(k;k_0)G(k;k_0), \quad |k|>k_0, \ \Im k<0,
\\
\dfrac{\delta(k;k_0)}{b(k) G(k;k_0)},\quad |k|<k_0,\  \Im k>0,
\\
\dfrac{b^*(k)\delta(k;k_0)}{G(k;k_0)},\quad |k|<k_0,\ \Im k<0.
\end{cases}
\end{equation}

\noindent It satisfies the conjugation conditions
\[
F_+(k;k_0) F_-(k;k_0) = \frac{\delta^2(k;k_0)}{r(k)}, \quad k\in C_{u},
\qquad
F_+(k;k_0) F_-(k;k_0) = r^*(k) \delta^2(k;k_0), \quad k\in C_{d},
\]
and has the asymptotics $F(k;k_0) \to 1$ as $k\to\infty.$
As before \eqref{Cud}, $C_{u}=C_{u}(k_0)$  and $C_{d}=C_{d}(k_0)$ are the upper and lower parts of the circle $|k|=k_0,$ oriented from the point $-k_0$ to the point $k_0,$ respectively.

\medskip
Now we proceed to the asymptotic analysis. We will make 6 transformations 
\begin{equation}\label{chain_transformations}
M \mapsto M^{(1)} \mapsto M^{(2)} \mapsto M^{(3)} \mapsto M^{(4)} \mapsto M^{(5)} \mapsto M^{(6)}
\end{equation}
to arrive at a RH problem for $M^{(6)},$ and then do the error analysis.

When performing the steps of transformations \eqref{chain_transformations}, we will not track how the appropriate pole conditions change. 
Instead, we 
do all the contour deformations first, 
and then track back how $M^{(6)}$ depends on $M$ (and thus on the Jost solutions), which allows us to state the precise pole conditions for $M^{(6)}.$

\subsection{Steps of transformations of the original RH problem \ref{mainRHP}}\label{sect_steps_transformations}
\subsubsection*{Step 1: Change the size of the circle from $\frac12$ to $k_0.$}
Define 
\[M^{(1)}(k;k_0,\tau) = M(t,x;k)\begin{cases}
\begin{pmatrix}
0 & -r(k) e^{-2i\theta} \\
\dfrac{1}{r(k)} e^{2i\theta} & 1
\end{pmatrix}^{\mathrm{sgn}(|k_0|-\frac12)},\quad |k| \in (\frac12, k_0), \ \Im k>0,
\\
\begin{pmatrix}
1 & \dfrac{-1}{r^*(k)}e^{-2i\theta} \\ r^*(k)e^{2i\theta} & 0
\end{pmatrix}^{\mathrm{sgn}(|k_0|-\frac12)},\quad |k|\in(\frac12, k_0), \ \Im k <0.
\end{cases}\]
where $\theta = \theta(t,x;k)=\widetilde\theta(k;k_0,\tau).$
This transformation effectively substitutes the jump over the circle $|k| = \frac12$ with the jump of the same form over the circle $|k| = k_0,$ and changes appropriately the jumps over the intervals $\pm(\frac12, k_0)$ (note that the transformation depends on whether $k_0$ is bigger than $\frac12$ or smaller.)

\subsubsection*{Step 2: changing the phase function $\theta$ to $g.$}
In the sequel, it will turn out that instead of the phase function $\theta$ it will be convenient for us to use a function, that has the same distribution of signs of its imaginary part on the complex plane as $\theta$ does, possess similar asymptotic behaviour as $\theta$ as $k\to0$ and $k\to\infty$, and at the same time satisfies the condition $g_+(k) + g_-(k) = 0$ for $k\in C_{u}\cup C_{d}.$ Such a function $g$ is defined in formula \eqref{def:g}.

We define
$M^{(2)}(k;k_0,\tau) = M^{(1)}(k;k_0,\tau) e^{i(g(k;k_0,\tau)-\theta(x,t;k))\sigma_3}.$
This turns the exponentially growing jump matrices for $M^{(1)}$ on the $C_{u}\cup C_{d}$ into oscillating ones for $M^{(2)}.$

\subsubsection*{Step 3: preparation for lens opening over $(-\infty, -k_0)\cup (k_0, +\infty)\cup C_{u}\cup C_{d}.$}

In this step, we would like to transform the oscillating jump matrix over $\mathbb{R}$ and over $C_{u}\cup C_{d}$ into jumps close to the identity matrix over some contours off the real line and off $C_{u}\cup C_{d}$, respectively.
This is done in two steps: the $\delta F^{-1}$-transformation and the lens opening.

For the purpose, define $M^{(3)}(k;k_0,\tau) = M^{(2)}(k;k_0,\tau)\delta(k;k_0)^{\sigma_3}F(k;k_0)^{-\sigma_3},$ where $\delta$ is defined in \eqref{def:delta} and $F$ is defined in \eqref{def:F}.
Then the jump for $M^{(3)}$ over $(-\infty, -k_0)\cup(k_0, +\infty)$ takes the form
(below, $g=g(k;k_0,\tau)$)
$$
M^{(3)}_-(k;k_0,\tau) = 
M^{(3)}_+(k;k_0,\tau)
\begin{pmatrix}
1 & 0 \\ 
\dfrac{-a(k)b^*(k)\ \delta_+^2(k;k_0)\ e^{2ig}}{F^2(k;k_0)} & 1
\end{pmatrix}
\begin{pmatrix}
1 & \dfrac{-a^*(k)b(k)F^2(k;k_0)e^{-2ig}}{\delta_-^2(k;k_0)}
\\ 0 & 1
\end{pmatrix}.
$$
Here we used $\frac{r^*(k)}{1 + r(k)r^*(k)} = a(k)b^*(k)$, cf. \eqref{aabb}.
Furthermore, over the $(-k_0, k_0)$  the jump is 
\[M_-^{(3)}(k;k_0,\tau) = M_+^{(3)}(k;k_0,\tau)
\begin{pmatrix}
1 & 0 \\ \dfrac{-\delta^2(k;k_0)\ e^{2ig}}{r(k)\,F^2(k;k_0)} & 1
\end{pmatrix}
\begin{pmatrix}
1 & \dfrac{-F^2(k;k_0)e^{-2ig}}{r^*(k)\delta^2(k;k_0)}
\\ 0 & 1
\end{pmatrix},\quad k\in(-k_0, k_0),\]
and over $C_{u},$ $C_{d}$ it takes the form
\[
M^{(3)}_-(k;k_0,\tau) = M_+^{(3)}(k;k_0,\tau)
\begin{pmatrix}
1 & 0 \\ \dfrac{-\delta^2(k)\,e^{2ig_+}}{r(k)\,F_+^2(k)} & 1
\end{pmatrix}
\begin{pmatrix}0 & -1 \\ 1 & 0\end{pmatrix},\quad k\in C_{u},
\]
and 
\[
M_-^{(3)}(k;k_0,\tau) = M_{+}^{(3)}(k;k_0,\tau)
\begin{pmatrix}
0 & 1 \\ -1 & 0
\end{pmatrix}
\begin{pmatrix}
1 & \dfrac{-F_-^2(k;k_0)e^{-2ig_-}}{r^*(k)\,\delta^2(k;k_0)}
\\ 0 & 1
\end{pmatrix},\quad k\in C_{d}.
\]
Here $g_{\pm}$ are the limiting values of the function $g$ on the contours $C_u, C_d,$ from the positive/negative side of the contour, respectively.

\subsubsection*{Step 4: lens opening.}
Now we are ready to open the lenses over $\mathbb{R}\cup C_{u} \cup C_{d}.$ It is convenient at the same time to flip the columns inside the circle $|k| = k_0$; the latter erases the jump across the circle and makes jumps expressed naturally in terms of the phase function $h$ \eqref{def:g}, which is our final phase function. 

Define appropriate contours $L_1, L_2, L_3, L_4, L_5, L_6$ needed for the lens opening as follows.
Let $L_5$ be an oriented composed curve that consists of two pieces: one starts at $-\infty + idk_0$ for some $d>0$ and ends at $-k_0$, another one starts at $k_0$ and ends at $+\infty + id k_0$ (see Figure \ref{Fig:M4}, left).
Next, $L_6$ is the curve symmetric to $L_5$ with respect to the real axis.

Contour $L_1$ starts at the point $k = k_0,$ ends at the point $k=-k_0,$ and is located above the circle $|k| = k_0.$
Contours $L_3$ starts at the point $k = -k_0,$ ends at the point $k = k_0,$ and is situated between the upper part of the circle $|k| = k_0$ and the real axis.
Contours $L_2, L_4$ are symmetric to $L_1, L_3$ with respect to the real axis, respectively, and are situated in the lower half-plane $\Im k<0.$

Introduce the following notation. If $\gamma_1, \ldots, \gamma_k$ are some contours that circumscribe a domain $\Omega,$ then denote $\Omega =: Int(\gamma_1, \ldots, \gamma_k).$

%%%%%%%%%%%%%%%%%%%%%
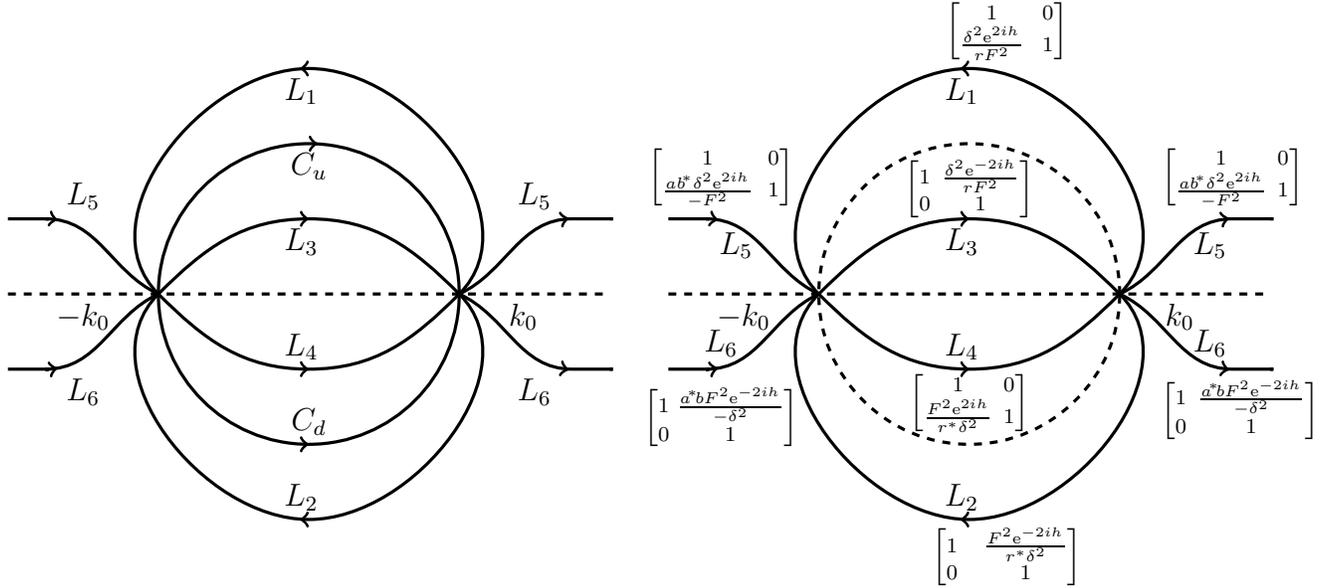
\begin{figure}
\begin{tikzpicture}
\draw[very thick, dashed] (-4,0) to (4,0);  
\draw[very thick, decoration={markings, mark=at position 0.25 with {\arrow{<}}}, decoration={markings, mark=at position 0.75 with {\arrow{>}}}, postaction={decorate}] (0,0) circle(2);

% lenses L_1,2,3,4
\draw[very thick, decoration={markings, mark=at position 0.5 with {\arrow{<}}}, postaction={decorate}] (-2,0) to [out=135, in=180] (0,3) [out =0, in=45] to (2,0);
\node at (-0.1,2.7){$L_1$};
% L2
\draw[very thick, decoration={markings, mark=at position 0.5 with {\arrow{<}}}, postaction={decorate}] (-2,0) to [out=-135, in=180] (0,-3) [out =0, in=-45] to (2,0);
\node at (-0.1,-2.7){$L_2$};
% L_3
\draw[very thick, decoration={markings, mark=at position 0.5 with {\arrow{>}}}, postaction={decorate}] (-2,0) to [ in=180] (0,1) [out =0] to (2,0);
\node at (-0.1,0.7){$L_3$};
% L_4
\draw[very thick, decoration={markings, mark=at position 0.5 with {\arrow{>}}}, postaction={decorate}] (-2,0) to [out=-45, in=-180] (0,-1) [out =0, in=-135] to (2,0);
\node at (-0.1, -0.7){$L_4$};

% lenses L_5,6
\draw[very thick, decoration={markings, mark=at position 0.3 with {\arrow{>}}}, postaction={decorate}] (-4,1) to [out=0, in=0](-3.5,1) [out=0, in =155] to  (-2,0); 
\node at (-3,1.3) {$L_5$};
\draw[very thick, decoration={markings, mark=at position 0.3 with {\arrow{<}}}, postaction={decorate}] (4,1) to [out=0, in=0](3.5,1) [out=180, in =25] to  (2,0); 
\node at (3,1.3) {$L_5$};

% L_2
\draw[very thick, decoration={markings, mark=at position 0.3 with {\arrow{>}}}, postaction={decorate}] (-4,-1) to [out=0, in=0](-3.5,-1) [out=0, in =-155] to  (-2,0); 
\node at (-3, -1.3) {$L_6$};
\draw[very thick, decoration={markings, mark=at position 0.3 with {\arrow{<}}}, postaction={decorate}] (4, -1) to [out=0, in=0](3.5, -1) [out=180, in =-25] to  (2,0); 
\node at (3, -1.3) {$L_6$};

\node at (0, 1.7){$C_{u}$};
\node at (0, -1.7){$C_{d}$};
\node at (2.85, -0.3){$k_0$};
\node at (-3., -0.3){$-k_0$};

%L_1 jump
%L_2 jump
\node at (0.5, -3.5){\color{white}$\scriptsize\begin{bmatrix}1 & \frac{F^2 \e^{-2\ii h}}{r^*\delta^2} \\ 0 & 1\end{bmatrix}$};
\end{tikzpicture}
%%%
%%%%%
%%%%%
%%%%%
\begin{tikzpicture}
\draw[very thick, dashed] (-4,0) to (4,0);  
\draw[very thick, dashed] (0,0) circle(2);

% lenses L_1,2,3,4
\draw[very thick, decoration={markings, mark=at position 0.5 with {\arrow{<}}}, postaction={decorate}] (-2,0) to [out=135, in=180] (0,3) [out =0, in=45] to (2,0);
\node at (-0.1,2.7){$L_1$};
% L2
\draw[very thick, decoration={markings, mark=at position 0.5 with {\arrow{<}}}, postaction={decorate}] (-2,0) to [out=-135, in=180] (0,-3) [out =0, in=-45] to (2,0);
\node at (-0.1,-2.7){$L_2$};
% L_3
\draw[very thick, decoration={markings, mark=at position 0.5 with {\arrow{>}}}, postaction={decorate}] (-2,0) to [ in=180] (0,1) [out =0] to (2,0);
\node at (-0.1,0.7){$L_3$};
% L_4
\draw[very thick, decoration={markings, mark=at position 0.5 with {\arrow{>}}}, postaction={decorate}] (-2,0) to [out=-45, in=-180] (0,-1) [out =0, in=-135] to (2,0);
\node at (-0.1, -0.7){$L_4$};

% lenses L_5,6
\draw[very thick, decoration={markings, mark=at position 0.3 with {\arrow{>}}}, postaction={decorate}] (-4,1) to [out=0, in=0](-3.5,1) [out=0, in =155] to  (-2,0); 
\node at (-3.1, 0.65) {$L_5$};
\draw[very thick, decoration={markings, mark=at position 0.3 with {\arrow{<}}}, postaction={decorate}] (4,1) to [out=0, in=0](3.5,1) [out=180, in =25] to  (2,0); 
\node at (3.2, 0.65) {$L_5$};

% L_6
\draw[very thick, decoration={markings, mark=at position 0.3 with {\arrow{>}}}, postaction={decorate}] (-4,-1) to [out=0, in=0](-3.5,-1) [out=0, in =-155] to  (-2,0); 
\node at (-3.3, -0.65){$L_6$};
\draw[very thick, decoration={markings, mark=at position 0.3 with {\arrow{<}}}, postaction={decorate}] (4, -1) to [out=0, in=0](3.5, -1) [out=180, in =-25] to  (2,0); 
\node at (3.2, -0.65) {$L_6$};

\node at (2.8, -0.3){$k_0$};
\node at (-3., -0.3){$-k_0$};

%L_5 jump
\node at (-3.3,1.55){$\scriptsize\begin{bmatrix}1\! & \!\! 0\\ \frac{ab\!^*\delta^2\e^{2\ii h}}{-F^2} \! & \!\! 1\end{bmatrix}$};
%L_5 jump
\node at (3.5, 1.55){$\scriptsize\begin{bmatrix}1 \!\! & \!\! 0\\\frac{ab\!^*\delta^2\e^{2\ii h}}{-F^2} \!\! & \!\! 1\end{bmatrix}$};
%L_6 jump
\node at (-3.3, -1.65){$\scriptsize\begin{bmatrix}1 \!\! & \!\!\! \frac{a\!^*\!bF^2\e^{-2\ii h}}{-\delta^2} \\ 0 \!\! & \!\!\! 1\end{bmatrix}$};
%L_6 jump
\node at (3.6, -1.55){$\scriptsize\begin{bmatrix}1 \!\! & \!\! \frac{a\!^*\!bF^2\e^{-2\ii h}}{-\delta^2} \\ 0 \!\! & \!\! 1\end{bmatrix}$};
%
%L_1 jump
\node at (0.5,3.5){$\scriptsize\begin{bmatrix}1 \! & \! 0\\\frac{\delta^2\e^{2\ii h}}{rF^2} \! & \! 1\end{bmatrix}$};
%L_3 jump
\node at (0.0,1.4){$\scriptsize\begin{bmatrix}1 \!\! & \!\! \frac{\delta^2\e^{-2\ii h}}{rF^2} \\ 0 \!\! & \!\! 1\end{bmatrix}$};
%L_2 jump
\node at (0.5, -3.5){$\scriptsize\begin{bmatrix}1 & \frac{F^2 \e^{-2\ii h}}{r^*\delta^2} \\ 0 & 1\end{bmatrix}$};
%L_4 jump
\node at (0.0, -1.45){$\scriptsize\begin{bmatrix}1 \!\! & \!\! 0 \\ \frac{F^2 \e^{2\ii h}}{r^*\delta^2} \!\! & \!\! 1\end{bmatrix}$};

\end{tikzpicture}
\caption{On the left: Contours $L_j,$ $j=1, \ldots ,6.$ On the right: Jump contour and jumps for $M^{(4)}.$}
\label{Fig:M4}
\end{figure}

Using this notation, define
\[
M^{(4)}(k;k_0,\tau) = 
\begin{cases}
M^{(3)}(k;k_0,\tau)
\begin{pmatrix}
1 & 0 \\\dfrac{-a(k)b^*(k)\,\delta^2(k;k_0) e^{2i g(k;k_0,\tau)}}{F^2(k;k_0)} & 1
\end{pmatrix}, \qquad k\in Int(L_5, \mathbb{R}),
\\
M^{(3)}(k;k_0,\tau)
\begin{pmatrix}
1 & \dfrac{-a^*(k)b(k) F^2(k;k_0) e^{-2i g(k;k_0,\tau)}}{\delta^2(k;k_0)} \\ 0 & 1
\end{pmatrix}^{-1},\qquad k\in Int(L_6, \mathbb{R}),
\end{cases}\]
\[
M^{(4)}(k;k_0,\tau) = \begin{cases}
M^{(3)}(k;k_0,\tau)\begin{pmatrix}1 & 0 \\ \dfrac{-\delta^2(k;k_0)e^{2ig(k;k_0,\tau)}}{r(k)F^2(k;k_0)} & 1
\end{pmatrix},\qquad k\in Int(L_1, C_{u}),
\\
M^{(3)}(k;k_0,\tau)
\begin{pmatrix}
1 & \dfrac{-F^2(k;k_0)e^{-2ig(k;k_0,\tau)}}{r^*(k)\delta^2(k;k_0)}
\\ 0 & 1
\end{pmatrix}^{{-1}},\qquad k\in Int(L_2, C_{d}),
\end{cases}
\]
\[
M^{(4)}(k;k_0,\tau) = \begin{cases}
M^{(3)}(k;k_0,\tau)\begin{pmatrix}0 & 1 \\ -1 & 0\end{pmatrix}, \qquad k\in Int(C_{u}, L_3),
\\
M^{(3)}(k;k_0,\tau)\begin{pmatrix}0 & 1 \\ -1 & 0\end{pmatrix}, \qquad k\in Int(C_{d}, L_4),
\end{cases}\]
\[
M^{(4)}(k;k_0,\tau) = \begin{cases}
M^{(3)}(k;k_0,\tau)
\begin{pmatrix}1 & 0 \\ \dfrac{-\delta^2(k;k_0)e^{2ig(k;k_0,\tau)}}{r(k) F^2(k;k_0)} & 1\end{pmatrix}
\begin{pmatrix}0 & 1 \\ -1 & 0\end{pmatrix},\qquad k\in Int(\mathbb{R}, L_3),
\\
M^{(3)}(k;k_0,\tau)
\begin{pmatrix}
1 & \dfrac{-F^2(k;k_0)e^{-2i g(k;k_0,\tau)}}{\delta^2(k;k_0) \, r^*(k)} \\ 0 & 1
\end{pmatrix}^{\!\!\!-1}\begin{pmatrix}0 & 1 \\ -1 & 0\end{pmatrix},\qquad k\in Int(\mathbb{R}, L_4),
\end{cases}
\]
The jump for $M^{(4)}$ is shown in Figure \ref{Fig:M4}, right.

\subsubsection*{Step 5: moving the contours $L_5, L_6$ to infinity.}\label{sect_step_5}
The next step is to get rid of contours $L_5, L_6$ by moving them to infinity. 
This is the step where we use the compactness of the input pulse, by employing \eqref{b_estimate} that
$b^*(k) e^{2ih(k; k_0, \tau)} = \ord(k^{-1}e^{(T-2\tau)\Im k}) = o(1), \ k\to\infty,\ \Im k\geq0,$ for $\tau\geq T/2.$
Here $T$ is the constant that determines the support of the input pulse, i.e. $\mathrm{supp}\,\mathcal{E}_1 \subset[0, T].$
Define 
\[
\begin{cases}M^{(5)}(k;k_0,\tau) = M^{(4)}(k;k_0,\tau)
\begin{pmatrix}
1 & 0 \\\dfrac{-a(k)b^*(k)\,\delta^2(k;k_0) e^{2i h(k;k_0,\tau)}}{F^2(k;k_0)} & 1
\end{pmatrix},\qquad k\in Int(L_5, L_1),
\\
M^{(5)}(k;k_0,\tau) = M^{(4)}(k;k_0,\tau)
\begin{pmatrix}
1 & \dfrac{-a^*(k)b(k) F^2(k;k_0) e^{-2i h(k;k_0,\tau)}}{\delta^2(k;k_0)} \\ 0 & 1
\end{pmatrix}^{\hskip-2mm-1},\quad k\in\Omega_6 = Int(L_6, \mathbb{R}),
\\
M^{(5)}(k;k_0,\tau) = M^{(4)}(k;k_0,\tau),\qquad \mbox{elsewhere.}
\end{cases}
\]
This transformation removes the jumps over $L_5, L_6$, and changes the ones over $L_1, L_2$ as follows:
\[
M_-^{(5)}(k;k_0,\tau) = M_+^{(5)}(k;k_0,\tau)
\begin{pmatrix}1 & 0 \\ \dfrac{\delta^2(k;k_0) e^{2ih(k;k_0,\tau)}}{r(k)(1+r(k)r^*(k))\,F^2(k;k_0)} & 1\end{pmatrix},\qquad k \in L_1,
\]
\[
M_-^{(5)}(k;k_0,\tau) = M_+^{(5)}(k;k_0,\tau)
\begin{pmatrix}1 & \dfrac{F^2(k;k_0) e^{-2ih(k;k_0,\tau)}}{r^*(k)(1+r(k)r^*(k))\,\delta^2(k;k_0)} \\ 0 & 1\end{pmatrix}, \qquad k\in L_2.
\]

%%%%%%%%
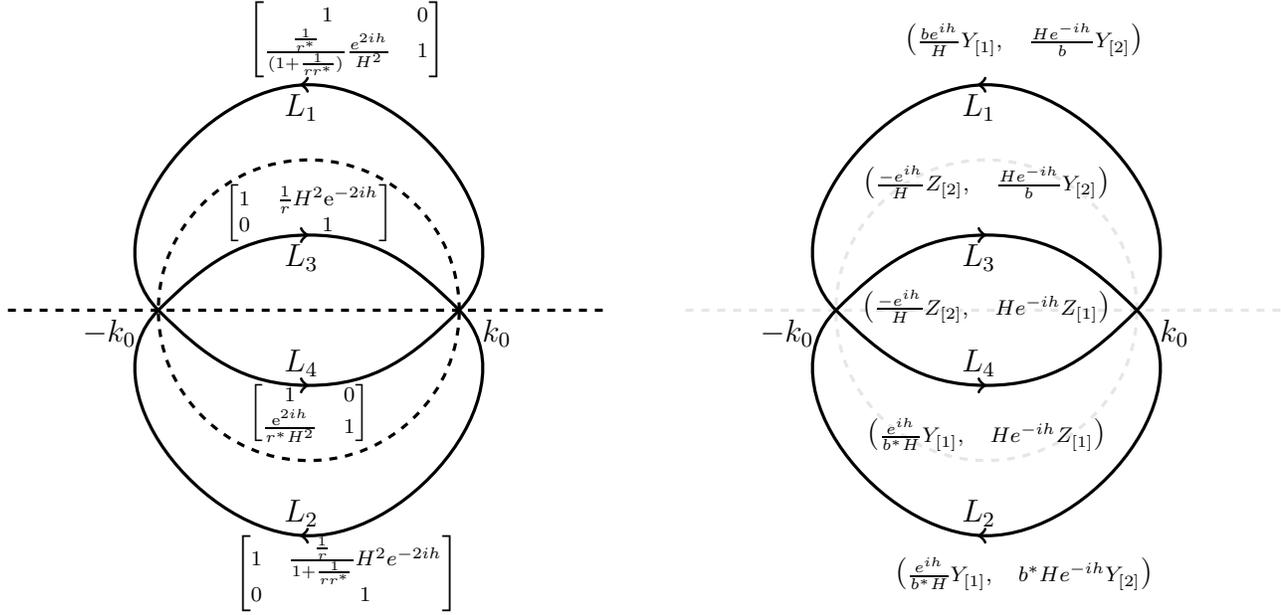
\begin{figure}
\begin{tikzpicture}
\draw[very thick, dashed] (-4,0) to (4,0);  
\draw[very thick, dashed] (0,0) circle(2);

% lenses L_1,2,3,4
\draw[very thick, decoration={markings, mark=at position 0.5 with {\arrow{<}}}, postaction={decorate}] (-2,0) to [out=135, in=180] (0,3) [out =0, in=45] to (2,0);
\node at (-0.1,2.7){$L_1$};
% L2
\draw[very thick, decoration={markings, mark=at position 0.5 with {\arrow{<}}}, postaction={decorate}] (-2,0) to [out=-135, in=180] (0,-3) [out =0, in=-45] to (2,0);
\node at (-0.1,-2.7){$L_2$};
% L_3
\draw[very thick, decoration={markings, mark=at position 0.5 with {\arrow{>}}}, postaction={decorate}] (-2,0) to [ in=180] (0,1) [out =0] to (2,0);
\node at (-0.1,0.7){$L_3$};
% L_4
\draw[very thick, decoration={markings, mark=at position 0.5 with {\arrow{>}}}, postaction={decorate}] (-2,0) to [out=-45, in=-180] (0,-1) [out =0, in=-135] to (2,0);
\node at (-0.1, -0.7){$L_4$};

\node at (2.5, -0.3){$k_0$};
\node at (-2.65, -0.3){$-k_0$};

%
%L_1 jump
\node at (0.5,3.6){$\scriptsize\begin{bmatrix}1 & 0\\\frac{\frac1{r^*}}{(1 + \frac{1}{rr^*})}\frac{e^{2\ii h}}{H^2} & 1\end{bmatrix}$};
%L_3 jump
\node at (0.0, 1.3){$\scriptsize\begin{bmatrix}1 & \frac1{r}H^2\e^{-2\ii h} \\ 0 & 1\end{bmatrix}$};
%L_2 jump
\node at (0.5, -3.5){$\scriptsize\begin{bmatrix}1 & \frac{\frac1{r}}{1 + \frac{1}{rr^*}}H^2e^{-2\ii h} \\ 0 & 1\end{bmatrix}$};
%L_4 jump
\node at (0.0, -1.37){$\scriptsize\begin{bmatrix}1 & 0 \\ \frac{\e^{2\ii h}}{r^*H^2}  & 1\end{bmatrix}$};
\end{tikzpicture}
\quad
%%%%%%%%%%%%%%%%
%%% Figure with structure %%%
%%%%%%%%%%%%%%%%
\quad
\begin{tikzpicture}
\draw[very thick, dashed, black!10] (-4,0) to (4,0);  
\draw[very thick, dashed, black!10] (0,0) circle(2);

% lenses L_1,2,3,4
\draw[very thick, decoration={markings, mark=at position 0.5 with {\arrow{<}}}, postaction={decorate}] (-2,0) to [out=135, in=180] (0,3) [out =0, in=45] to (2,0);
\node at (-0.1,2.7){$L_1$};
% L2
\draw[very thick, decoration={markings, mark=at position 0.5 with {\arrow{<}}}, postaction={decorate}] (-2,0) to [out=-135, in=180] (0,-3) [out =0, in=-45] to (2,0);
\node at (-0.1,-2.7){$L_2$};
% L_3
\draw[very thick, decoration={markings, mark=at position 0.5 with {\arrow{>}}}, postaction={decorate}] (-2,0) to [ in=180] (0,1) [out =0] to (2,0);
\node at (-0.1,0.7){$L_3$};
% L_4
\draw[very thick, decoration={markings, mark=at position 0.5 with {\arrow{>}}}, postaction={decorate}] (-2,0) to [out=-45, in=-180] (0,-1) [out =0, in=-135] to (2,0);
\node at (-0.1, -0.7){$L_4$};

\node at (2.5, -0.3){$k_0$};
\node at (-2.65, -0.3){$-k_0$};

%L_2 jump white for vertical alignment
\node at (0.5, -3.5){\color{white}$\scriptsize\begin{bmatrix}1 & \frac{\frac1{r}}{1 + \frac{1}{rr^*}}H^2e^{-2\ii h} \\ 0 & 1\end{bmatrix}$};
%
%Int(L_1, R)
\node at (0.5,3.6){$\scriptsize\begin{pmatrix}\frac{be^{ih}}{H}Y_{[1]}, & \frac{H e^{-i h}}{b}Y_{[2]}\end{pmatrix}$};
%Int(L_1, L_3)
\node at (0.0, 1.7){$\scriptsize\begin{pmatrix}\frac{-e^{ih}}{H}Z_{[2]}, & \frac{H e^{-i h}}{b}Y_{[2]}\end{pmatrix}$};
%Int(L_2, R)
\node at (0.5, -3.5){$\scriptsize\begin{pmatrix}\frac{e^{i h}}{b^* H}Y_{[1]}, & b^*He^{-ih}Y_{[2]}\end{pmatrix}$};
%Int(L_2, L_4)
\node at (0.0, -1.65){$\scriptsize\begin{pmatrix}\frac{e^{ih}}{b^*H}Y_{[1]}, & He^{-ih}Z_{[1]}\end{pmatrix}$};
%Int(L_3, L_4)
\node at (0.0, 0.05){$\scriptsize\begin{pmatrix}\frac{-e^{ih}}{H}Z_{[2]}, & He^{-ih}Z_{[1]}\end{pmatrix}$};
\end{tikzpicture}
\caption{ On the left: jumps for  $M^{(5)}.$ On the right: structure of $M^{(5)}$ in terms of $Y, Z.$}\label{Fig:M5}
\end{figure}

The jumps for $M^{(5)}$ are naturally expressed in terms of the function $H$ defined in \eqref{def:H} (see also Figure~\ref{Fig:M5}, left).
We have 
$M^{(5)}_-(k;k_0,\tau)=M^{(5)}_+(k;k_0,\tau)J^{(5)}(k;k_0,\tau),$ where

\begin{align*}
J^{(5)}(k) & =\begin{pmatrix}1 & 0 \\ \dfrac{\frac1{r^*(k)}}{1+\frac{1}{r(k)r^*(k)}}\cdot\dfrac{e^{2ih(k)}}{H^2(k)}
 & 1\end{pmatrix},\quad k\in L_1,
\qquad\qquad\,
=\begin{pmatrix}1 & \dfrac{1}{r(k)}H^2(k)e^{-2ih(k)} \\ 0 & 1\end{pmatrix},\quad k\in L_3,
\\
&
=\begin{pmatrix}1 & \dfrac{\frac1{r(k)}}{1+\frac{1}{r(k)r^*(k)}}\cdot H^2(k)e^{-2ih(k)} \\ 0 & 1\end{pmatrix},\quad k\in L_2,
\qquad
=\begin{pmatrix}1 & 0 \\ \dfrac{e^{2ih(k)}}{r^*(k)H^2(k)} & 1\end{pmatrix},\quad k\in L_4,
\end{align*}
and where
$J^{(5)}(k_0) = J^{(5)}(k_0;k_0,\tau),\ $ $h(k) = h(k;k_0,\tau),\ $ $H(k)=H(k;k_0).$

\subsection*{Expressing $M^{(5)}$ in terms of the original Jost solutions, and figuring out the pole conditions.}
Tracing back the chain of transformations that led from $M$ to $M^{(5)}$, we find that $M^{(5)}$ has the following structure (see also Figure \ref{Fig:M5}, right):
\begin{align*}
M^{(5)}(k;k_0,\tau) = \begin{cases}
\begin{pmatrix}
\frac{b(k)}{H(k)}e^{i h(k)}Y_{[1]}(t, x; k), & \frac{H(k)}{b(k)}e^{-i h(k)}Y_{[2]}(t, x; k)
\end{pmatrix},\quad k\in Int(L_1, \mathbb{R}),
\\
\begin{pmatrix}
\frac{e^{i h(k)}Y_{[1]}(t, x; k)}{b^*(k)\,H(k)}, & b^*(k) H(k) e^{-i h(k)}Y_{[2]}(t, x; k)
\end{pmatrix},\quad k\in Int(L_2, \mathbb{R}),
\\
\begin{pmatrix}
\frac{-Z_{[2]}(k) e^{i h(k)}}{H(k)}, & \frac{H(k)}{b(k)}e^{-i h(k)}Y_{[2]}(t, x; k)
\end{pmatrix},\quad k\in Int(L_1, L_3),
\\
\begin{pmatrix}
\frac{e^{i h(k)}\,Y_{[1]}(k)}{b^*(k)\,H(k)}, & H(k)e^{-i h(k)}Z_{[1]}(t, x; k)
\end{pmatrix},\quad k\in Int(L_2, L_4),
\\
\begin{pmatrix}
\frac{-Z_{[2]}(k) e^{i h(k)}}{H(k)}, & H(k) e^{-i h(k)}Z_{[1]}(t, x; k)
\end{pmatrix},\quad k\in Int(L_3, L_4).
\end{cases}
\end{align*}

Recalling the definitions of the function $H$ \eqref{def:H}, we see that the function $a(.)$ does not contribute at all to the poles of $M^{(5)};$ the only poles are caused by zeros of the function $b(.).$ Furthermore,
$M^{(5)}$ has no poles in the domains $Int(L_1, \mathbb{R})$ and $Int(L_2, \mathbb{R})$, and the only poles are concentrated in the domains $Int(L_1, L_3), $ $Int(L_2, L_4)$ and $Int(L_3, L_4).$
The next transformation eliminates the poles in the domain $Int(L_3, L_4).$

\subsubsection*{Step 6: eliminating poles in $Int(L_3, L_4).$}
First, we ``modify'' the functions $H, G$ \eqref{def:G}, \eqref{def:H}, by eliminating or introducing zeros and poles of $H$ inside $Int(L_3, L_4),$ as follows
\begin{equation}\label{def:Htilde}
\widetilde{H}(k;k_0) = H(k;k_0)\prod\limits_{k_j \in Z_b\cap Int(L_3, (-k_0, k_0))}\frac{k - \ol{k_j}}{k - k_j}\,,
\qquad
\widetilde{G}(k;k_0) = G(k;k_0)\prod\limits_{k_j \in Z_b\cap Int(L_3, (-k_0, k_0))}\frac{k - \ol{k_j}}{k - k_j}\,.
\end{equation}
In doing so we assume that there are no zeros of $b(.)$ on the line $L_3,$ which can always be achieved by slightly deforming $L_3.$
Then the function
\[
M^{(6)}(k;k_0,\tau) = M^{(5)}(k;k_0,\tau)\prod\limits_{k_j \in Z_b\cap Int(L_3, (-k_0, k_0))}\(\frac{k - k_j}{k - \ol{k_j}}\)^{\sigma_3}
\]
satisfies the jump conditions, which are obtained from the jumps for the function $M^{(5)}$ if we substitute there $H$ with $\widetilde H,$ i.e.
$M^{(6)}_-(k;k_0,\tau)=M^{(6)}_+(k;k_0,\tau)J^{(6)}(k;k_0,\tau),$ where
\begin{align*}
J^{(6)}(k) & =\begin{pmatrix}1 & 0 \\ \dfrac{\frac1{r^*(k)}}{1+\frac{1}{r(k)r^*(k)}}\cdot\dfrac{e^{2ih(k)}}{\widetilde H^2(k)}
 & 1\end{pmatrix},\quad k\in L_1,
\quad\qquad\,
=\begin{pmatrix}1 & \dfrac{1}{r(k)}\widetilde H^2(k)e^{-2ih(k)} \\ 0 & 1\end{pmatrix},\quad k\in L_3,
\\
&
=\begin{pmatrix}1 & \dfrac{\frac1{r(k)}}{1+\frac{1}{r(k)r^*(k)}}\cdot \widetilde H^2(k)e^{-2ih(k)} \\ 0 & 1\end{pmatrix},\quad k\in L_2,
\qquad
=\begin{pmatrix}1 & 0 \\ \dfrac{e^{2ih(k)}}{r^*(k)\widetilde H^2(k)} & 1\end{pmatrix},\quad k\in L_4,
\end{align*}
and where $J^{(6)}(k) = J^{(6)}(k;k_0,\tau),\ $ $\widetilde H(k) = \widetilde H(k;k_0),\ $ $h(k)=h(k;k_0,\tau).$
Moreover, using properties \eqref{ZYrelations}, we can specify the pole conditions of the function $M^{(6)}$ in the domains $Int(L_1, L_3)$ and $Int(L_2, L_4)$ (recall that we assume (Assumption \ref{assumption_2}) that zeros of $b(.)$ are of the first order):
\begin{align*}
M^{(6)}(k;k_0,\tau)\begin{pmatrix}1 & 0 \\ 
\frac{\gamma_j\, e^{2ih(k;k_0,\tau)}}{(k - k_j)\widetilde G^2(k_j;k_0)} & 1\end{pmatrix}\mbox{is regular at the point $k=k_j$},
\\
M^{(6)}(k;k_0,\tau)\begin{pmatrix}1 & \frac{-\ol{\gamma_j}\,\widetilde G^2(\ol{k_j};k_0)\, e^{-2ih(k;k_0,\tau)}}{(k - \ol{k_j})} \\ 0 & 1\end{pmatrix}\mbox{is regular at the point $k=\ol{k_j}$},
\end{align*}
where 
$k_j \in Z_b\cap Int(L_1, L_3),$ 
\[
\gamma_j^{-1} = a(k_j)\dot{b}(k_j),
\]
and the dot $\dot{}$ denotes the derivative in $k.$

\subsection{Parametrix analysis}
We see that the jump matrices for $M^{(6)}$ are exponentially close to the identity matrix everywhere except for the neighbourhoods of the points $\pm k_0.$ We will construct functions that satisfy approximately the jump conditions near these points.

\subsubsection{Global parametrix.}\label{sect_global_parametrix}
Recall that by our Assumption \ref{assumption_2} all the absolute values $|k_j|$ for $k_j\in Z_b$ are mutually different, hence we can always achieve, by slightly moving contours $L_1, L_3,$ if necessary, that there is at most one zero of the function $b(.)$ in the domain $Int(L_1, L_3.)$

We define the global parametrix $M_{glob}(k)=M_{glob}(k;k_0,\tau)$ to be equal to the identity matrix, $M_{glob}(k) = I,$ in the case if $Z_b\cap Int(L_1, L_3) = \emptyset.$

In the case $Z_b\cap Int(L_1, L_3) = \left\{k_j\right\},$ we define the global parametrix to be equal to 
\begin{equation}\label{M_glob}
M_{glob}(k) = M_{glob}(k;k_0,\tau) = \begin{pmatrix}
1 + \frac{i\,A}{k - k_j} & \frac{i\, B}{k - \ol{k_j}}
\\
\frac{i\, \ol{B}}{k - k_j} & 1 - \frac{i\,\ol{A}}{k - \ol{k_j}}
\end{pmatrix}
\end{equation}
where $A, B$ and their complex conjugates $\ol{A}, \ol{B}$ are some complex parameters (which might depend on $k_0, \tau,$ or, equivalently, on $t,x$), which are determined from the condition
\[
M_{glob}(k)\begin{pmatrix}
1 & 0 \\ \frac{i\,\gamma_j\,e^{2ih(k;k_0,\tau)}}{\widetilde G^2(k;k_0)\,(k-k_j)} & 1
\end{pmatrix}
\quad \mbox{ is regular at } k\to k_j.
\]
This leads to the linear system of equations for $A, B,$
\begin{equation}\label{wj}
\begin{cases}
A + \w_j B = 0,
\\
\ol{\w_j} A - B = 2\Im k_j\cdot \ol{\w_j},
\end{cases}
\qquad 
\mbox { where }
\quad 
\w_j = \frac{\gamma_j\cdot e^{2ih(k_j;k_0,\tau)}}{2\Im k_j\cdot \widetilde G^2(k_j;k_0)},
\end{equation}
from where 
\begin{equation}\label{AjBj}
A = A_j = \frac{|\w_j|^2\cdot 2\Im k_j}{1 + |\w_j|^2},
\qquad 
B = B_j = \frac{-\ol{\w_j}\cdot 2\Im k_j}{1 + |\w_j|^2}.
\end{equation}
Note that $A_j, B_j$ are uniformly bounded in $t,x$, that $A_j>0$, that $A_j^2 + |B_j|^2 = 2\Im k_j\cdot A_j$ (and hence automatically $\det M_{glob}(k)\equiv 1$), and that $M_{glob}(k)$ satisfies the following pole condition at the complex conjugate point $\ol{k_j}:$
\[
M_{glob}(k)\begin{pmatrix}1 & \frac{i\,\ol{\zeta_j}\,\widetilde G^2(k)\,e^{-2ih(k)}}{k - \ol{k_j}} \\ 0 & 1\end{pmatrix}
\quad \mbox{ is regular at }\ \ol{k_j}.\]

\subsubsection{Local parametrices.}

\subsubsection*{Local behaviour of the function $\widetilde H$ near $k = -k_0$ and $k = k_0.$}
We have that 
\begin{align*}
\widetilde H(k;k_0) = \(\frac{k + k_0}{k_0}\)^{i\nu_l} \cdot \chi_l(k; k_0),
\qquad
\widetilde H(k;k_0) = \(\frac{k_0 - k}{k_0}\)^{-i\nu_r} \cdot \chi_r(k; k_0),
\end{align*}
where
\[
\nu_l = \nu_l(k_0) = \frac{1}{2\pi}\ln\(1 + \frac{1}{|r(-k_0)|^2}\),
\qquad
\nu_r =\nu_r(k_0) = \frac{1}{2\pi}\ln\(1 + \frac{1}{|r(k_0)|^2}\),
\]
and
\begin{align*}
\chi_l(k; k_0) = 
\exp\left[
\int\limits_{-k_0}^{k_0}
\frac{\ln\(\frac{1 + |r(s)|^{-2}}{1 + |r(-k_0)|^{-2}}\)\, ds}{(s - k)\cdot 2\pi i}
\right]
\(\frac{k_0-k}{k_0}\)^{-i\nu_l}
\cdot
\prod\limits_{k_j\in Int(L_3, \mathbb{R})}\frac{k - \ol{k_j}}{k - k_j}
\cdot
\begin{cases}
b(k)e^{\pi\nu_l},\ \Im k>0,
\\
\frac{1}{b^*(k)}e^{-\pi\nu_l},\ \Im k<0,
\end{cases}
\end{align*}
and
\begin{align*}
\chi_r(k; k_0) = \exp\left[\int\limits_{-k_0}^{k_0}
\frac{\ln\(\frac{1 + |r(s)|^{-2}}{1 + |r(k_0)|^{-2}}\)\, ds}{(s-k) \cdot 2\pi i}
\right]
\(\frac{k+k_0}{k_0}\)^{i\nu_r}
\cdot\prod\limits_{k_j\in Int(L_3, \mathbb{R})}\frac{k - \ol{k_j}}{k - k_j}
\cdot
\begin{cases}
b(k)e^{\pi\nu_r},\ \Im k>0,
\\
\frac{1}{b^*(k)}e^{-\pi\nu_r},\ \Im k<0.
\end{cases}
\end{align*}
Note that the functions $\chi_l(k; k_0), \chi_r(k; k_0)$ have (nonzero) limits as $k\to\mp k_0,$ respectively, namely
\begin{equation}\label{chi_l}
\chi_l(-k_0; k_0)
=
\exp\left[
\int\limits_{-k_0}^{k_0}
\frac{\ln\(\frac{1 + |r(s)|^{-2}}{1 + |r(-k_0)|^2}\)\, ds}{(s+k_0)\cdot 2\pi i}
\right]
\cdot 2^{-i\nu_l(k_0)}
e^{i\arg b(-k_0)}
\cdot\prod\limits_{k_j\in Int(L_3, \mathbb{R})}\frac{k_0 + \ol{k_j}}{k_0 + k_j}
\end{equation}
and
\begin{equation}\label{chi_r}
\chi_r(k_0; k_0)
=
\exp\left[\int\limits_{-k_0}^{k_0}
\frac{\ln\(\frac{1 + |r(s)|^{-2}}{1 + |r(k_0)|^{-2}}\)\, ds}{(s-k_0)\cdot 2\pi i}
\right]
\cdot 2^{i\nu_r(k_0)}
e^{i\arg b(k_0)}
\cdot\prod\limits_{k_j\in Int(L_3, \mathbb{R})}\frac{k_0 - \ol{k_j}}{k_0 - k_j},
\end{equation}
and $|\chi_l(-k_0; k_0)| = |\chi_r(k_0; k_0)| = 1.$

\subsubsection*{Local behaviour of the function $h$ near $k = -k_0$ and near $k = k_0.$}
Define conformal mappings $z_l = z_l(k; k_0), z_r=z_r(k; k_0)$ as follows:
\[
z_l = \frac{k+k_0}{k_0}\cdot\(\sum\limits_{j=0}^{\infty}\(\frac{k+k_0}{k_0}\)^j\)^{1/2},
\qquad
z_r = \frac{k_0-k}{k_0}\cdot\(\sum\limits_{j=0}^{\infty}\(\frac{k_0-k}{k_0}\)^j\)^{1/2}
\]
and define the scaled variables $\lambda_l = \lambda_l(k;k_0,\tau),$ $\lambda_r = \lambda_r(k;k_0,\tau)$ as follows:
\[\lambda_l = z_l \sqrt{\tau k_0},
\qquad
\lambda_r = z_r \sqrt{\tau k_0},\]
so that
\[
h(k;k_0,\tau) = \tau \(k + \frac{k_0^2}{k} \) = -2\tau k_0 - \tau k_0\, z_l^2 = -2\tau k_0 - \lambda_l^2,
\]\[
h(k;k_0,\tau) = 2\tau k_0 + \tau k_0\, z_r^2 = 2\tau k_0 + \lambda_r^2.
\]

\subsubsection*{Parabolic cylinder parametrix.}

For $a\in\mathbb{C},$ the parabolic cylinder function $D_{a}(z)$ is an entire function that satisfies the differential equation
\[\partial_{zz}D_{a}(z)+\(a+\frac12-\frac{z^2}{4}\)D_{a}(z)=0,\]
and has the asymptotics as $z\to\infty$
$$
D_{a}(z) = z^{a}\e^{-z^2/4}\(1-\frac{a(a-1)}{2z^2}+
\frac{a(a-1)(a-2)(a-3)}{8z^4}+\mathcal{O}(z^{-6})\),\quad \arg z\in\(\frac{-3\pi}{4},\frac{3\pi}{4}\).
$$
Furthermore, $D_a(z)$ satisfies the following relations:
\begin{equation}\label{ParaCyl_prop}\begin{split}&
D_{a}(z)=\ee^{-\pi a\ii}D_{a} (-z)+\frac{\sqrt{2\pi}}{\Gamma(-a)}\e^{-\pi(a+1)\ii/2}D_{-a-1}(\ii z),
\\
&
D_{a}(z) = \e^{\pi a\ii}D_{a} (-z)+\frac{\sqrt{2\pi}}{\Gamma(-a)}\e^{\pi(a+1)\ii/2}D_{-a-1}(-\ii z),
\\
&
D_{a}(z)=\frac{\Gamma(a+1)}{\sqrt{2\pi}}\(\e^{\pi\ii a/2}D_{-a-1}(\ii z)+\e^{-\pi\ii a/2}D_{-a-1}(-\ii z)\),
\\&
D_{a+1}(z)-zD_{a}(z)+aD_{a-1}(z)=0,
\qquad 
D_{a}'(z)=-\frac{z}{2}D_{a}(z)+aD_{a-1}(z)=0.
\end{split}\end{equation}

\noindent
Let $r_0\in\mathbb{C}$ be a non-zero complex parameter. Define 
$$\nu = \frac{1}{2\pi}\ln\(1 + \frac1{|r_0|^2}\)>0$$
and consider the following piece-wise analytic function:
$$
\Psi(\lambda; r_0)=\begin{bmatrix}
v^+D_{\ii\nu}(2\e^{-\pi\ii/4}\,\lambda)
 & 
-\beta_2u^+D_{-\ii\nu-1}(2\e^{{-3\pi\ii}/{4}}\,\lambda)
\\
-\beta_1v^+D_{\ii\nu-1}(2\e^{{-\pi\ii}/{4}}\,\lambda)  
&
u^+D_{-\ii\nu}(2\e^{{-3\pi\ii}/{4}}\,\lambda)
\end{bmatrix},\quad \Im\lambda>0,
$$
$$
\Psi(\lambda; r_0)=\begin{bmatrix}
v^-D_{\ii\nu}(2\e^{3\pi\ii/4}\,\lambda) 
& 
\beta_2u^-D_{-\ii\nu-1}(2\e^{{\pi\ii}/{4}}\,\lambda)
\\
\beta_1 v^-D_{\ii\nu-1}(2\e^{{3\pi\ii}/{4}}\,\lambda)
&
u^-D_{-\ii\nu}(2\e^{{\pi\ii}/{4}}\,\lambda)
\end{bmatrix},\quad \Im\lambda<0,
$$
where 
\begin{align*}&u^+=2^{\ii\nu}\e^{3\pi\nu/4},\quad
u^-=2^{\ii\nu}\e^{-\pi\nu/4},\quad
v^+=2^{-\ii\nu}\e^{-\pi\nu/4},\quad
v^-=2^{-\ii\nu}\e^{3\pi\nu/4},
\\
&\beta_1 = \dfrac{-\ii\sqrt{2\pi}\,r_0\,2^{2\ii\nu}\e^{\pi\nu/2}}{\Gamma(\ii\nu)} = \dfrac{\Gamma(-\ii\nu+1)2^{2\ii\nu}}{\sqrt{2\pi}\cdot\ol{r_0}\cdot\e^{\pi\nu/2}},
\quad
\beta_2 = \dfrac{\Gamma(1+\ii\nu)}{\sqrt{2\pi}\,r_0\,2^{2\ii\nu}\e^{\pi\nu/2}}=\dfrac{\ii\sqrt{2\pi}\cdot\ol{r_0}\cdot\e^{\pi\nu/2}}{2^{2\ii\nu}\Gamma(-\ii\nu)}\,.
\end{align*}
Note that $\beta_1\beta_2 = \nu,\ \ol{\beta_1} = \beta_2.$

%-------------
Properties \eqref{ParaCyl_prop} allow to verify that $\Psi(\lambda; r_0)$ has the following jump across the real line:
\[
\Psi(\lambda-\ii 0; r_0) = \Psi(\lambda+\ii 0; r_0)\begin{pmatrix}1+\frac1{|r_0|^2} & \frac1{r_0} \\ \frac1{\ol{r_0}} & 1\end{pmatrix},\quad \lambda\in\mathbb{R}.
\]
The function
\begin{equation}
\begin{split}
	P(\lambda; r_0)&=
	\Psi(\lambda; r_0)\cdot\lambda^{-\ii\nu\sigma_3}\e^{-\ii\lambda^2\sigma_3}, \qquad  \qquad 
	\arg\lambda\in\(\dfrac{\pi}{4},\frac{3\pi}{4}\)\cup
	\(\frac{-3\pi}{4},\dfrac{-\pi}{4}\),
	\\&=
	\Psi(\lambda; r_0)\cdot\lambda^{-\ii\nu\sigma_3}\e^{-\ii\lambda^2\sigma_3}
	\begin{pmatrix}1 & \frac1{r_0} \lambda^{2\ii\nu}\e^{2\ii\lambda^2} \\ 0 & 1\end{pmatrix}, \qquad \arg\lambda\in\(0,\dfrac{\pi}{4}\),
	\\
	&=
	\Psi(\lambda; r_0)\cdot\lambda^{-\ii\nu\sigma_3}\e^{-\ii\lambda^2\sigma_3}
	\begin{pmatrix}1 &  0 \\ \frac{-1}{\ol{r_0}}\lambda^{-2\ii\nu}\e^{-2\ii\lambda^2} & 1\end{pmatrix}, \qquad \arg\lambda
	\in\(\dfrac{-\pi}{4}, 0\),
	\\
	&=
	\Psi(\lambda; r_0)\cdot\lambda^{-\ii\nu\sigma_3}\e^{-\ii\lambda^2\sigma_3}
	\begin{pmatrix}1 & 0 \\ \frac{\frac{1}{\ol{r_0}}}{1+\frac{1}{|r_0|^2}}\lambda^{-2\ii\nu}\e^{-2\ii\lambda^2} & 1\end{pmatrix}, \qquad \arg\lambda
	\in\(\frac{3\pi}{4}, \pi\),
	\\
	&=
	\Psi(\lambda; r_0)\cdot\lambda^{-\ii\nu\sigma_3}\e^{-\ii\lambda^2\sigma_3}
	\begin{pmatrix}1 & \frac{\frac{-1}{r_0}}{1+\frac1{|r_0|^2}}\lambda^{2\ii\nu}\e^{2\ii\lambda^2} \\  0 & 1\end{pmatrix}, \qquad \arg\lambda
	\in\(-\pi, \frac{-3\pi}{4}\)
\end{split}
\end{equation}
satisfies the jump relation  $P_{-}(\lambda; r_0) = P_{+}(\lambda; r_0)
J(\lambda; r_0)$ on the contour $$
\lambda\in \Sigma=(\infty\e^{3\pi\ii/4},0)\cup (\infty\e^{-3\pi\ii/4},0) \cup (0,\infty\e^{\pi\ii/4})\cup(0,\infty\e^{-\pi\ii/4}),
$$
where 
$J(\lambda; r_0)$ is equal to 
\begin{align*}
J(\lambda; r_0)&=\begin{bmatrix}1 & 0 \\ \frac{\frac1{\ol{r_0}}}{1+\frac{1}{|r_0|^2}}\lambda^{-2\ii\nu}\e^{-2\ii\lambda^2} & 1\end{bmatrix},
	\lambda\in (\infty\e^{3\pi\ii/4},0),
	\ \ \,
	=\begin{bmatrix}1 & \frac1{r_0}\lambda^{2\ii\nu}\e^{2\ii\lambda^2} \\ 0 & 1\end{bmatrix},
	\lambda\in (0,\infty\e^{\pi\ii/4}),
	\\
	&=\begin{bmatrix}1 & \frac{\frac{1}{r_0}}{1+\frac{1}{|r_0|^2}}\lambda^{2\ii\nu}\e^{2\ii\lambda^2} \\ 0 & 1 \end{bmatrix}, \lambda\in (\infty\e^{-3\pi\ii/4},0),
	\ \quad
	=\begin{bmatrix}1 & 0 \\ \frac{1}{\ol{r_0}}\lambda^{-2\ii\nu}\e^{-2\ii\lambda^2} & 1\end{bmatrix},
	\lambda\in (0,\infty\e^{-\pi\ii/4}),
\end{align*}
and has the following asymptotics as $\lambda\to\infty,$ which is uniform in $\arg\lambda\in[-\pi, \pi]:$
\begin{equation}\label{PC_asymp}
	P(\lambda; r_0) = 
	\begin{bmatrix}
		1 - \frac{\nu(1-\i\nu)}{8\lambda^2} + \frac{\ii\nu(1-\ii\nu)(2-\ii\nu)(3-\ii\nu)}{128\lambda^4}+\mathcal{O}(\lambda^{-6})
		& 
		\frac{\e^{-\pi\ii/4}\,\beta_2}{2\lambda}
		+\frac{\e^{\pi\ii/4}\,\beta_2(1+\ii\nu)(2+\ii\nu)}{16\lambda^3}+\mathcal{O}(\lambda^{-6})
		\\\\
		\frac{\e^{-3\pi\ii/4}\,\beta_1}{2\lambda}+\frac{\e^{3\pi\ii/4}\,\beta_1(1-\ii\nu)(2-\ii\nu)}{16\lambda^3}+\mathcal{O}(\lambda^{-6})
		&
		1 - \frac{\nu(1+\ii\nu)}{8\lambda^2} - \frac{\ii\nu(1+\ii\nu)(2+\ii\nu)(3+\ii\nu)}{128\lambda^4}+\mathcal{O}(\lambda^{-6})
	\end{bmatrix}.\end{equation}
Note also that the functions $\Psi(\lambda; r_0)$ and $P(\lambda; r_0)\lambda^{-i\nu\sigma_3}$ are continuous at the origin.

\subsubsection*{Approximate solution $M_{appr}$.}

Now we define a function $M_{appr},$ which satisfies approximately the jump conditions near the points $-k_0, k_0.$
Let $\delta\in(0, 1)$ be a fixed number and denote by $C_l, C_r$ the circles of radius $\delta\cdot k_0$ centred at $-k_0, k_0, $respectively. 
Denote the corresponding disks by
$$Int(C_l) = \left\{k:\ |k+k_0| < k_0\cdot \delta\right\},
\qquad
Int(C_r) := \left\{k:\ |k - k_0| < k_0\cdot \delta\right\}.$$
Define
$$
M_{appr}(k;k_0,\tau)
=
\begin{cases}
M_{glob}(k;k_0,\tau),\qquad |k\mp k_0|>\delta\cdot |k_0|,
\\
Q_l(k;k_0,\tau)P\(\lambda_l; r_0 = r(-k_0)\)\cdot \phi_l(k;k_0,\tau)^{-\sigma_3},\qquad k\in Int(C_l),
\\
Q_r(k;k_0,\tau)
{\scriptsize \begin{pmatrix}0 & 1 \\ -1 & 0 \end{pmatrix} }
P\(\lambda_r; r_0 = \ol{r(k_0)}\)
{\scriptsize \begin{pmatrix}0 & -1 \\ 1 & 0 \end{pmatrix}}
\cdot
\phi_r(k;k_0,\tau)^{-\sigma_3},\quad k\in Int(C_r),
\end{cases}
$$
where
\begin{equation}\label{phi_lr}
\begin{split}
&\phi_l(k;k_0,\tau) = 
\(\frac{k+k_0}{k_0 \cdot z_l\cdot \sqrt{\tau k_0}}\)^{\ii\nu_l(k_0)}
\cdot
\chi_l(-k_0; k_0)
\cdot
e^{2\ii\tau k_0},
\\
&\phi_r(k;k_0,\tau) = 
\(\frac{k_0 \cdot z_r \cdot \sqrt{\tau k_0}}{k_0 - k}\)^{\ii\nu_r(k_0)}
\cdot
\chi_r(k_0; k_0)
\cdot
e^{-2\ii\tau k_0}.
\end{split}
\end{equation}

\noindent Here $Q_l, Q_r$ are $2\times2$ matrix-valued functions analytic in $k\in Int(C_l), Int(C_r),$ respectively, which are needed to make 
the jump on the circles $C_l, C_r$ for the error matrix
\[
M_{err}(k;k_0,\tau) = M^{(6)}(k;k_0,\tau)M_{appr}(k;k_0,\tau)^{-1}
\]
as close to the identity matrix as possible.
We thus define 
\[
Q_l(k;k_0,\tau) := M_{glob}(k;k_0,\tau)\phi_l(k;k_0,\tau)^{\sigma_3},
\qquad
Q_r(k;k_0,\tau) := M_{glob}(k;k_0,\tau)\phi_r(k;k_0,\tau)^{\sigma_3}.
\]
Note that $M_{appr}$ has the following jumps: 
$M_{appr, -}(k;k_0,\tau)
=
M_{appr, +}(k;k_0,\tau)
J_{appr}(k;k_0,\tau),$ where 
the jump matrix $J_{appr}$ is as follows: inside $Int(C_l)$ it has the form
\begin{align*}
J_{appr}(k; k_0, \tau)&=\begin{bmatrix}1 & 0 \\ \frac{\frac1{\ol{r(-k_0)}}}{1+\frac{1}{|r(-k_0)|^2}}
\cdot
\frac{e^{2ih(k;k_0,\tau)}}{\chi^2_l(-k_0; k_0)}
\cdot
\(\frac{k+k_0}{k_0}\)^{-2i\nu_l(k_0)} & 1\end{bmatrix},
	k\in L_1\cap Int(C_l),
	\\
	&=\begin{bmatrix}1 & \frac1{r(-k_0)}\chi_l^2(-k_0; k_0)e^{-2ih(k;k_0,\tau)}\(\frac{k+k_0}{k_0}\)^{2i\nu_l(k_0)} \\ 0 & 1\end{bmatrix},
	\lambda\in L_3\cap Int(C_l),
	\\
	&=\begin{bmatrix}1 & \frac{\frac{1}{r(-k_0)}}{1+\frac{1}{|r(-k_0)|^2}}\cdot\chi_l^2(-k_0; k_0)e^{-2ih(k;k_0,\tau)}\(\frac{k+k_0}{k_0}\)^{2i\nu_l(k_0)} \\ 0 & 1 \end{bmatrix}, \lambda\in L_2\cap Int(C_l),
	\\
	&=\begin{bmatrix}1 & 0 \\ \frac{1}{\ol{r(-k_0)}}\cdot\frac{e^{2ih(k;k_0,\tau)}}{\chi_l^2(-k_0; k_0)}\(\frac{k+k_0}{k_0}\)^{-2i\nu_l(k_0)} & 1\end{bmatrix},
	\lambda\in L_4\cap Int(C_l),
\end{align*}
and inside $Int(C_r)$ it has the following form:
\begin{align*}
J_{appr}(k;k_0,\tau)&=
	\begin{bmatrix}1 & \frac1{r(k_0)}\chi_r^2(k_0; k_0)e^{-2ih(k)}\(\frac{k_0-k}{k_0}\)^{-2i\nu_r(k_0)} \\ 0 & 1\end{bmatrix},
	\lambda\in L_3\cap Int(C_r),
	\\&=
	\begin{bmatrix}1 & 0 \\ \frac{\frac1{\ol{r(k_0)}}}{1+\frac{1}{|r(k_0)|^2}}
\cdot
\frac{e^{2ih(k)}}{\chi^2_r(k_0; k_0)}
\cdot
\(\frac{k_0 - k}{k_0}\)^{2i\nu_r(k_0)} & 1\end{bmatrix},
	k\in L_1\cap Int(C_r),
	\\
	&=\begin{bmatrix}1 & 0 \\ \frac{1}{\ol{r(k_0)}}\cdot\frac{e^{2ih(k)}}{\chi_r^2(k_0; k_0)}\(\frac{k_0 - k}{k_0}\)^{2i\nu_r(k_0)} & 1\end{bmatrix},
	\lambda\in L_4\cap Int(C_r),
	\\
	&=\begin{bmatrix}1 & \frac{\frac{1}{r(k_0)}}{1+\frac{1}{|r(k_0)|^2}}\cdot\chi_r^2(k_0; k_0)e^{-2ih(k;k_0,\tau)}\(\frac{k_0-k}{k_0}\)^{-2i\nu_r(k_0)} \\ 0 & 1 \end{bmatrix}, \lambda\in L_2\cap Int(C_r),
\end{align*}

\subsection{Reconstruction of $\mathcal{E}, \mathcal{N}, \rho$}

\begin{lemma}\label{lem_reconstruction}
In terms of the function $M_{err}$, functions $\mathcal{E}(t, x), \mathcal{N}(t, x), \rho(t, x)$ can be expressed as follows:
\begin{itemize}
\item[(a)]
In the case $Z_b\cap Int(L_1, L_3)=\emptyset$ we have
\begin{align*}
\mathcal{E}(t, x) = -\lim\limits_{k\to\infty}4 \ii k M_{err, 12}(k),
\qquad
\begin{pmatrix}
\mathcal{N}(t, x) & \rho(t, x) \\ \ol{\rho(t, x)} & -\mathcal{N}(t, x)
\end{pmatrix}
=
-M_{err}(+\ii 0)\sigma_3 M_{err}(+\ii 0)^{-1},
\end{align*}
where $M_{err}(k)=M_{err}(k;k_0,\tau),$ $\tau=t-x,$ $k_0=\frac12\sqrt{x/(t-x)}.$

\item[(b)]
In the case $Z_b\cap Int(L_1, L_3) = \left\{k_j\right\}$, we have
$
\mathcal{E}(t, x) = 4B_j + \mathcal{E}_{err}(t, x)
$ and 
\begin{equation}\label{second_formula}
\begin{pmatrix}
\mathcal{N}(t, x) & \rho(t, x)
\\
\ol{\rho(t, x)} & -\mathcal{N}(t, x)
\end{pmatrix}
=
M_{err}(+i 0)
\begin{pmatrix}
-1 + \frac{2|B_j|^2}{|k_j|^2} & \frac{-2 i B_j}{\ol{k_j}}\(1 - \frac{i A_j}{k_j}\)
\\
\frac{2 i \, \ol{B_j}}{k_j}\(1 + \frac{i A_j}{\ol{k_j}}\) & 1 - \frac{2|B_j|^2}{|k_j|^2}
\end{pmatrix}
M_{err}(+i 0)^{-1},
\end{equation}
where 
\begin{align*}
\mathcal{E}_{err}(t, x) = -\lim\limits_{k\to\infty}4 \ii k M_{err, 12}(k),
\qquad
A_j = \frac{2\Im k_j\cdot |\w_j|^2}{1 + |\w_j|^2},\quad B_j = \frac{-2\Im k_j \cdot \ol{\w_j}}{1 + |\w_j|^2},
\end{align*}
and where $\w_j$ is defined in \eqref{wj}. More explicitly, $\w_j = |\w_j|e^{i\arg \w_j},$ where
\begin{multline*}
|\w_j| = \frac{1}
{2\Im k_j\cdot |a(k_j)\dot{b}(k_j)|}
\exp\left\{-2\Im k_j\(t-x-\frac{x}{4\left[(\Re k_j)^2 + (\Im k_j)^2\right]}\)\right\}
\cdot\\\cdot
\exp\left[
\frac{-\Im k_j}{\pi}\int_{-k_0}^{k_0}\frac{\ln\(1 + |r(s)|^{-2}\)\, ds}
{(s-\Re k_j)^2 + (\Im k_j)^2}\right]
\prod\limits_{p: |k_p|<|k_j|}\left|\frac{k_j - k_p}{k_j - \ol{k_p}}\right|^2\,,
\end{multline*}
and
\begin{multline*}
\arg \w_j = -\arg\(a(k_j)\dot{b}(k_j)\) + 2\Re k_j\cdot\(t-x+\frac{x}{4[(\Re k_j)^2 + (\Im k_j)^2]}\)
+
\\
+\frac{1}{\pi}\int_{-k_0}^{k_0}
\frac{(s - \Re k_j)\ln\(1 + |r(s)|^{-2}\)\, ds}{(s-\Re k_j)^2 + (\Im k_j)^2}
+
2\sum\limits_{p: |k_p|<|k_j|}\arg\(\frac{k_j - k_p}{k_j - \ol{k_p}}\).
\end{multline*}
\end{itemize}
\end{lemma}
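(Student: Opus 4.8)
The plan is to start from the exact reconstruction formulae \eqref{E1}, \eqref{F1} for the original matrix $M$ and push them forward along the chain $M\mapsto M^{(1)}\mapsto\cdots\mapsto M^{(6)}$ and finally to $M_{err}=M^{(6)}M_{appr}^{-1}$. The decisive observation is that $\mathcal{E}$ is read off at $k=\infty$ while $(\mathcal{N},\rho)$ is read off at $k=+\ii0$, and that near both points every transformation of the chain acts as an \emph{explicit analytic right factor} (the only subtlety being the unbounded contours $L_5,L_6$, which are removed in Step~5 and contribute exponentially small corrections by \eqref{b_estimate} for $\tau\ge T/2$). Hence it suffices to compute what the accumulated factor does to each datum, and then to substitute $M^{(6)}=M_{err}M_{appr}$ with $M_{appr}=M_{glob}$ near $\infty$ and near $+\ii0$.

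For $\mathcal{E}$ I would use that the quantity $\lim_{k\to\infty}kM_{12}$ changes, under a right multiplication $M\mapsto MG$ with $G=I+G_1k^{-1}+O(k^{-2})$, only by the off-diagonal entry $(G_1)_{12}$. Near $k=\infty$ each factor is either the identity (Step~1), diagonal (the phase conjugation $e^{\ii(g-\theta)\sigma_3}$ of Step~2, the $\delta^{\sigma_3}F^{-\sigma_3}$ of Step~3, and the Blaschke factor of Step~6), or lower triangular with exponentially small off-diagonal entry (the $L_5,L_6$ factors of Step~5); in all these cases $(G_1)_{12}=0$. Therefore $\mathcal{E}=-4\ii\lim_{k\to\infty}kM^{(6)}_{12}$. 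Since $M^{(6)}=M_{err}M_{glob}$ and $M_{glob}=I+k^{-1}\!\left(\begin{smallmatrix}\ii A_j & \ii B_j\\ \ii\ol{B_j} & -\ii\ol{A_j}\end{smallmatrix}\right)+O(k^{-2})$ by \eqref{M_glob}, the off-diagonal coefficient picks up exactly $\ii B_j$, yielding $\mathcal{E}=4B_j-4\ii\lim_{k\to\infty}kM_{err,12}=4B_j+\mathcal{E}_{err}$; in case (a) the global parametrix is $I$ and this reduces to $\mathcal{E}=\mathcal{E}_{err}$.

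For $(\mathcal{N},\rho)$ I would evaluate the chain at $k=+\ii0$, which lies in the central lens $Int(L_3,L_4)$. Tracing the factors there: Step~1 is the identity; Step~2 gives $e^{\ii(g-\theta)\sigma_3}=I$ since $g-\theta=-2\tau k\to0$; Step~3 gives the diagonal $(bG)^{\sigma_3}$ (using $F=\delta/(bG)$ in $|k|<k_0,\ \Im k>0$); Step~6 gives the diagonal Blaschke factor. The only nondiagonal factor is the column swap $J_0=\left(\begin{smallmatrix}0&1\\-1&0\end{smallmatrix}\right)$ introduced by the lens opening of Step~4; crucially, the accompanying lower-triangular lens factor tends to $I$ at $+\ii0$ because $e^{2\ii g}=e^{-2\ii h}\to0$ there. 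Thus $M(+\ii0)=M_{err}(+\ii0)\,M_{glob}(+\ii0)\,W$, where $W$ is a product of diagonal matrices and the single swap $J_0$. Since diagonal matrices commute with $\sigma_3$ and $J_0\sigma_3J_0^{-1}=-\sigma_3$, we get $W\sigma_3W^{-1}=-\sigma_3$, and \eqref{F1} becomes $\left(\begin{smallmatrix}\mathcal{N}&\rho\\ \ol{\rho}&-\mathcal{N}\end{smallmatrix}\right)=M_{err}(+\ii0)\big[-M_{glob}(+\ii0)\sigma_3M_{glob}(+\ii0)^{-1}\big]M_{err}(+\ii0)^{-1}$; this sign flip from $J_0$ is precisely the source of the minus sign in part (a). Evaluating $M_{glob}$ at $k=0$ from \eqref{M_glob} and using that $A_j$ is real with $A_j^2+|B_j|^2=2\Im k_j\,A_j$ (so $\det M_{glob}\equiv1$), a direct $2\times2$ computation of $-M_{glob}(0)\sigma_3M_{glob}(0)^{-1}$ reproduces the middle matrix in \eqref{second_formula}; in case (a) $M_{glob}=I$ and one recovers $-M_{err}(+\ii0)\sigma_3M_{err}(+\ii0)^{-1}$.

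Finally, the closed forms for $|\w_j|$ and $\arg\w_j$ follow by inserting $\gamma_j^{-1}=a(k_j)\dot b(k_j)$, the value $h(k_j;k_0,\tau)=\tau(k_j+k_0^2/k_j)$, and the factorization $\widetilde G=G\cdot\prod(k-\ol{k_p})/(k-k_p)$ from \eqref{def:Htilde} into the definition \eqref{wj} of $\w_j$, and separating modulus and argument: using $k_0^2=x/4\tau$, the factor $e^{2\ii h(k_j)}$ produces the exponential $\exp\{-2\Im k_j(t-x-\tfrac{x}{4|k_j|^2})\}$ and the term $2\Re k_j(t-x+\tfrac{x}{4|k_j|^2})$; the factor $|G(k_j)|^{-2}$ and $\arg G(k_j)$ give the two principal-value integrals over $(-k_0,k_0)$; and the Blaschke product gives $\prod_{|k_p|<|k_j|}|(k_j-k_p)/(k_j-\ol{k_p})|^{2}$ together with the corresponding sum of arguments (the index set $Z_b\cap Int(L_3,(-k_0,k_0))$ being reindexed as $|k_p|<|k_j|$ through the nesting of the contours). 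The main obstacle is the careful bookkeeping at the two distinguished points — in particular, verifying that every nondiagonal and lens contribution either vanishes in the limit or collapses to the single swap $J_0$, and correctly carrying the resulting identity $J_0\sigma_3J_0^{-1}=-\sigma_3$; once this is in place, the remaining steps are elementary residue extractions and modulus/argument computations.
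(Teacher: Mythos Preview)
Your proof is correct and follows essentially the same route as the paper: trace the chain $M\mapsto M^{(6)}=M_{err}M_{appr}$ at $k=\infty$ and at $k=+\ii0$, observe that all accumulated right factors are diagonal (or identity, or lower-triangular not touching the $12$-entry) except for the single column swap $\left(\begin{smallmatrix}0&1\\-1&0\end{smallmatrix}\right)$ in Step~4, and then read off the contributions of $M_{glob}$. Your explicit identification of the swap as the source of the minus sign in the $(\mathcal{N},\rho)$ formula, and your outlined derivation of $|\w_j|,\arg\w_j$ from \eqref{wj}, are details the paper leaves implicit; otherwise the arguments coincide.
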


\begin{proof}
\color{white}\end{proof}\color{black}
\subsubsection*{Function $\mathcal{E}(t, x).$}Tracking back the chain of transformations \eqref{chain_transformations}, $M \to \ldots \to M^{(6)},$ we see that
\begin{align*}
\mathcal{E}(t, x) = -4i\lim\limits_{k\to\infty} k M(t,x;k) = -4i\lim\limits_{k\to\infty} k M^{(j)}(k;k_0,\tau),\quad \mbox{ for } j = 1, 2, \ldots, 6,
\end{align*}
and hence 
\[
\mathcal{E}(t, x) = \mathcal{E}_{appr}(t, x) + \mathcal{E}_{err}(t, x),
\]
where
\[
\mathcal{E}_{appr}(t, x) = -4i\lim\limits_{k\to\infty} k M_{appr}(k;k_0,\tau),
\qquad
\mathcal{E}_{err}(t, x) = -4i\lim\limits_{k\to\infty} k M_{err}(k;k_0,\tau).
\]
The term $\mathcal{E}_{appr}(t, x)$ can be computed explicitly. Indeed, using expression \eqref{M_glob} for $M_{glob}$ from subsection \ref{sect_global_parametrix}, we see that
$\mathcal{E}_{appr}(t, x) = 0$ in the case $Z_b\cap Int(L_1, L_3) = \emptyset,$
and $\mathcal{E}_{appr}(t, x) = 4B_j$ in the case $k_j \in Int(L_1, L_3),$ where $B_j$ is defined in \eqref{AjBj},
\[
\mathcal{E}(t, x) = 4B_j + \mathcal{E}_{err}(t, x).
\]

\subsubsection*{Functions $\mathcal{N}(t, x)$ and $\rho(t, x).$}
Similarly, 
$$
\begin{pmatrix}
\mathcal{N}(t, x) & \rho(t, x)
\\
\ol{\rho(t, x)} & -\mathcal{N}(t, x)
\end{pmatrix}
=
M(+i 0)\sigma_3M(+i 0)^{-1}
=
M^{(j)}(+i 0)\sigma_3 M^{(j)}(+i 0)^{-1}\qquad \mbox{ for } j=1, 2, 3,
$$
where $M(k) = M(t,x;k),$ $M^{(j)}(k) = M^{(j)}(k;k_0,\tau),$ and 
\[
\begin{pmatrix}
\mathcal{N}(t, x) & \rho(t, x)
\\
\ol{\rho(t, x)} & -\mathcal{N}(t, x)
\end{pmatrix}
=
-M^{(j)}(+i 0)\sigma_3 M^{(j)}(+i 0)^{-1}\qquad \mbox{ for } j=4, 5, 6.
\]
Hence (below, we drop dependence of $M_{err}, M_{appr}$ on $k_0, \tau$),
\[
\begin{pmatrix}
\mathcal{N}(t, x) & \rho(t, x)
\\
\ol{\rho(t, x)} & -\mathcal{N}(t, x)
\end{pmatrix}
=
M_{err}(+i 0)\cdot\(-M_{appr}(+i 0)\sigma_3 M_{appr}(+i 0)^{-1}\)\cdot M_{err}(+i 0)^{-1}.
\]
The middle factor here can be computed explicitly, we have 
$$M_{appr}(+i 0) = \begin{pmatrix}1 - \frac{i A_j}{k_j} & \frac{-i B_j}{\ol{k_j}} \\ \frac{-i \, \ol{B_j}}{k_j} & 1 + \frac{i A_j}{\ol{k_j}}\end{pmatrix}$$
in the case $k_j \in Z_b\cap Int(L_1, L_3)$ and $M_{appr}(+i 0 ) = I$ in the case that there  are no zeros of $b(.)$ between the lines $L_1$ and $L_3.$ Thus
\[
-M_{appr}(+i 0)\sigma_3 M_{appr}(+i 0)^{-1} = 
\begin{pmatrix}
-1 + \frac{2|B_j|^2}{|k_j|^2} & \frac{-2 i B_j}{\ol{k_j}}\(1 - \frac{i A_j}{k_j}\)
\\
\frac{2 i \, \ol{B_j}}{k_j}\(1 + \frac{i A_j}{\ol{k_j}}\) & 1 - \frac{2|B_j|^2}{|k_j|^2}
\end{pmatrix}.
\]
and 
$$
\begin{pmatrix}
\mathcal{N}(t, x) & \rho(t, x)
\\
\ol{\rho(t, x)} & -\mathcal{N}(t, x)
\end{pmatrix}
=
M_{err}(+i 0)
\begin{pmatrix}
-1 + \frac{2|B_j|^2}{|k_j|^2} & \frac{-2 i B_j}{\ol{k_j}}\(1 - \frac{i A_j}{k_j}\)
\\
\frac{2 i \, \ol{B_j}}{k_j}\(1 + \frac{i A_j}{\ol{k_j}}\) & 1 - \frac{2|B_j|^2}{|k_j|^2}
\end{pmatrix}
M_{err}(+i 0)^{-1}.
$$

{
\color{white}\begin{proof}\color{black}
\hskip-13mm {\small \it End of proof of Lemma \ref{lem_reconstruction}.}
\end{proof}
}

\subsection{Riemann-Hilbert problem for $M_{err}$}
Function $M_{err}$ satisfies the following RH problem (below we again drop the dependence on $k_0, \tau$):
\begin{enumerate}
\item {\it Analyticity}: $M_{err}(k)$ is analytic in $\mathbb{C}\setminus\Sigma_{err},$ where $\Sigma_{err} = \bigcup\limits_{j=1}^{4}L_j\cup C_l\cup C_r.$
\item {\it Normalisation}: $M_{err}(k)\to I$ as $k\to\infty.$
\item {\it Jump}: $M_{err, -}(k) = M_{err, +}(k) J_{err}(k),$ 
where
\begin{equation}\label{J_err}
\begin{split}
& J_{err}(k) = 
M_{glob}(k)
\phi_l(k)^{\sigma_3}
P(\lambda_l; r_0 = r(-k_0))
\phi_l(k)^{-\sigma_3}
M_{glob}(k)^{-1},\quad k\in C_l,
\\
& J_{err}(k) = 
M_{glob}(k)
\phi_r(k)^{\sigma_3}
{\scriptsize \begin{pmatrix}0 & 1 \\ -1 & 0 \end{pmatrix}}
P(\lambda_r; r_0 = \ol{r(k_0)})
{\scriptsize \begin{pmatrix}0 & -1 \\ 1 & 0 \end{pmatrix}}
\phi_r(k)^{-\sigma_3}
M_{glob}(k)^{-1},\ k\in C_l,
\\
& J_{err}(k) = M_{glob}(k)J^{(6)}(k)M_{glob}(k)^{-1},\quad k\in \bigcup\limits_{j=1}^{4} L_j \setminus (C_l\cup C_r),
\\
& J_{err}(k) = M_{glob}(k)
J_{appr}(k)^{-1}J^{(6)}(k)
M_{glob}(k)^{-1},\quad k \in \bigcup\limits_{j=1}^{4}L_j\cap(C_l\cup C_r).
\end{split}
\end{equation}
\end{enumerate}

\begin{lemma}\label{lem_J_err}
In the regime $\tau\to+\infty,$ uniformly in $C^{-1} \leq k_0 \leq C,$ $C>1,$ 
the jump matrix $J_{err}$ admits the following estimates:
\begin{align*}
\|J_{err}(k;k_0,\tau) - I\|_{L_{\infty}(\Sigma_{err})\cap L_{1}(\Sigma_{err})\cap L_{2}(\Sigma_{err})}
=
\mathcal{O}\(\frac{1}{\tau^{1/2}}\).
\end{align*}
\end{lemma}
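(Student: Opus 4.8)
The plan is to estimate $J_{err}-I$ separately on the three natural pieces of $\Sigma_{err}$: (i) the portions of the lens contours $L_1,\dots,L_4$ lying outside the two disks $Int(C_l),Int(C_r)$; (ii) the boundary circles $C_l,C_r$; and (iii) the portions of $L_1,\dots,L_4$ inside the disks. Since all four lens contours join $-k_0$ to $k_0$ and the circles have radius $\delta k_0$ with $k_0$ confined to $[C^{-1},C]$, every piece of $\Sigma_{err}$ has uniformly bounded length; hence the $L_1$ and $L_2$ norms are controlled by the $L_\infty$ norm up to a fixed constant, and it suffices to produce an $L_\infty$ bound of order $\tau^{-1/2}$ on each piece. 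Throughout I would use that $M_{glob}$ and $M_{glob}^{-1}$ are uniformly bounded on $\Sigma_{err}$: when $Z_b\cap Int(L_1,L_3)=\emptyset$ this is trivial, as $M_{glob}=I$; in the soliton case the only poles of $M_{glob}$ sit at $k_j$ and $\ol{k_j}$, which under Assumption \ref{assumption_2} and the $\varepsilon$-separation hypothesis stay a fixed distance from the contours, so the conjugations in \eqref{J_err} never amplify an estimate by more than a constant.

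On piece (i) the jump is $J_{err}=M_{glob}J^{(6)}M_{glob}^{-1}$, so I only need to bound $J^{(6)}-I$. Its off-diagonal entries carry the factors $e^{\pm2ih(k;k_0,\tau)}$, and by the choice of the lens contours (following the steepest-descent directions of the phase $h$ in \eqref{def:g}) the quantity $\Re(\mp2ih)$ is strictly negative on $L_1,\dots,L_4$, degenerating only at the saddles $\pm k_0$. On the part of each lens at distance $\geq\delta k_0$ from $\pm k_0$ this yields $e^{\pm2ih}=\mathcal{O}(e^{-c\delta^2\tau})$ with $c>0$ uniform in the admissible $k_0$; together with the boundedness of $\widetilde H^{\pm2}$ and of $1/r,1/r^*,1/(1+|r|^{\pm2})$ on these fixed contours, this gives $J_{err}-I=\mathcal{O}(e^{-c\delta^2\tau})$, far below the claimed rate. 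On piece (ii) the jump is the conjugation $M_{glob}\phi_\bullet^{\sigma_3}P(\lambda_\bullet)\phi_\bullet^{-\sigma_3}M_{glob}^{-1}$ ($\bullet=l,r$). Here I would invoke \eqref{PC_asymp}: on $C_l$ one has $|\lambda_l|=|z_l|\sqrt{\tau k_0}\asymp\delta\sqrt{\tau k_0}\asymp\sqrt{\tau}$, so $P(\lambda_l;r(-k_0))=I+\mathcal{O}(\lambda_l^{-1})=I+\mathcal{O}(\tau^{-1/2})$, the leading correction being purely off-diagonal. The crucial point is that $|\phi_l|$ is bounded above and below uniformly on $C_l$, since $|\chi_l(-k_0;k_0)|=1$, the factor $e^{2\ii\tau k_0}$ is unimodular, and the remaining factor in \eqref{phi_lr} has modulus $e^{-\nu_l\arg(\cdots)}$ with bounded argument; hence conjugation by $\phi_l^{\pm\sigma_3}$ multiplies the off-diagonal entries by factors of unit order and preserves the $\mathcal{O}(\tau^{-1/2})$ bound. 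The same argument applies on $C_r$, and this piece supplies the stated rate.

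On piece (iii) the jump is $J_{err}=M_{glob}J_{appr}^{-1}J^{(6)}M_{glob}^{-1}$, and the task is to bound the mismatch $J_{appr}^{-1}J^{(6)}-I$ between the exact jump and its frozen-coefficient model. Writing, on $L_3\cap Int(C_l)$ for definiteness, $\widetilde H^2(k)=(\tfrac{k+k_0}{k_0})^{2\ii\nu_l}\chi_l^2(k;k_0)$, the two jumps differ only by replacing the slowly varying factor $\chi_l^2(k;k_0)/r(k)$ by its value $\chi_l^2(-k_0;k_0)/r(-k_0)$ at the saddle; since $r$ is analytic and $\chi_l$ is Lipschitz near $-k_0$, this difference is $\mathcal{O}(|k+k_0|)$. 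Because on $L_3$ the common exponential obeys $|e^{-2\ii h}|\asymp e^{-c|\lambda_l|^2}$ while $|k+k_0|\asymp|\lambda_l|\sqrt{k_0/\tau}$, the product is $\mathcal{O}(|\lambda_l|e^{-c|\lambda_l|^2})\cdot\mathcal{O}(\sqrt{k_0/\tau})=\mathcal{O}(\tau^{-1/2})$, using that $s\mapsto se^{-cs^2}$ is bounded. The analogous computation on $L_1,L_2,L_4$ inside either disk, where the frozen quantities $r(\mp k_0)$ and $\nu_{l,r}(k_0)$ enter the denominators $1+|r|^{-2}$, is identical. Combining the three pieces gives $\|J_{err}-I\|=\mathcal{O}(\tau^{-1/2})$ in all three norms.

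The main obstacle I expect is piece (iii): one must verify that the frozen-coefficient parametrix $J_{appr}$ reproduces $J^{(6)}$ up to an error that, although only algebraically small in $|k\mp k_0|$, is multiplied by the Gaussian decay of the phase so that it stays $\mathcal{O}(\tau^{-1/2})$ rather than deteriorating. Equivalently, the point to confirm is that the modulus of continuity of the spectral data $r$ and of $\chi_{l,r}$ at the saddles is good enough — Lipschitz, which holds by analyticity of $r$, is exactly what is needed — to match the $\tau^{-1/2}$ rate already produced by the circles in piece (ii).
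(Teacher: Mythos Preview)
Your proposal is correct and follows essentially the same strategy as the paper: decompose $\Sigma_{err}$ into (i) lens contours outside the disks (exponentially small via the sign of $\Im h$), (ii) the circles $C_l,C_r$ (where the parabolic cylinder asymptotics \eqref{PC_asymp} give $P=I+\mathcal{O}(\lambda^{-1})=I+\mathcal{O}(\tau^{-1/2})$, and the conjugating factors $\phi_{l,r}^{\pm\sigma_3}$, $M_{glob}^{\pm1}$ are uniformly bounded), and (iii) lens contours inside the disks (where the Lipschitz mismatch $\mathcal{O}(k\mp k_0)$ is beaten by the Gaussian decay $e^{-c\tau|k\mp k_0|^2}$). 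The paper writes out the explicit form of $J_{appr}^{-1}J^{(6)}$ on each $L_j\cap Int(C_{l,r})$ to arrive at the pointwise bound $J_{err}-I=\mathcal{O}(k\mp k_0)\,e^{-2\tau k_0|(k\mp k_0)/k_0|^2}$, which is exactly your piece-(iii) estimate in slightly different notation.
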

\begin{proof}
\medskip
\medskip
\medskip
\noindent
The jump matrix $J_{err}$ on the contours $L_j, j=1, 2, 3, 4,$ inside the circles $C_l, C_r$ can be written more explicitly as follows:
\begin{align*}
J_{err}(k) &= 
M_{glob}(k)
\begin{pmatrix}
1 & 0 
\\
\(
\frac{r^*(k)^{-1}\cdot\chi_l^{-2}(k, k_0)}{1 + (r(k)r^*(k))^{-1}}
-
\frac{\ol{r(-k_0)}^{\,-1}\cdot\chi_l^{-2}(-k_0, k_0)}{1 + |r(-k_0)|^{-2}}
\)
\cdot\(\frac{k+k_0}{k_0}\)^{-2i\nu_l}\cdot e^{2ih(k)} & 1
\end{pmatrix}
M_{glob}(k)^{-1}, 
\\
&\hskip128mm k\in L_1\cap Int(C_l),
\\
&=M_{glob}(k)
\begin{pmatrix}
1 & 
\(
\frac{\chi_l^2(k, k_0)}{r(k)}
-
\frac{\chi_l^2(-k_0; k_0)}{r(-k_0)}
\)
\hskip-1mm\cdot\hskip-1mm
\(\frac{k+k_0}{k_0}\)^{2i\nu_l} 
\hskip-1.5mm\cdot\hskip-0.5mm
e^{-2ih(k)} 
\\
0 & 1
\end{pmatrix}
M_{glob}(k)^{-1}, 
\ k\in L_3\cap Int(C_l),
\\
&=M_{glob}(k)
\begin{pmatrix}
1 & 
\(
\frac{r(k)^{-1}\cdot\chi_l^{2}(k, k_0)}{1 + (r(k)r^*(k))^{-1}}
-
\frac{r(-k_0)^{-1}\cdot\chi_l^{2}(-k_0, k_0)}{1 + |r(-k_0)|^{-2}}
\)
\cdot\(\frac{k+k_0}{k_0}\)^{2i\nu_l}\cdot e^{-2ih(k)} 
\\
0 & 1
\end{pmatrix}
M_{glob}(k)^{-1}, 
\\
&\hskip128mm k\in L_2\cap Int(C_l),
\\
&=M_{glob}(k)
\begin{pmatrix}
1 & 0
\\
\(
\frac{\chi_l^{-2}(k, k_0)}{r^*(k)}
-
\frac{\chi_l^{-2}(-k_0, k_0)}{\ol{r(-k_0)}}
\)
\hskip-1mm\cdot\hskip-1mm
\(\frac{k_0}{k+k_0}\)^{2i\nu_l}
\hskip-2mm\cdot\hskip-0.5mm
 e^{2ih(k)} 
 & 1
\end{pmatrix}
M_{glob}(k)^{-1}, 
\ k\in L_4\cap Int(C_l),
\end{align*}

\begin{align*}
J_{err}(k) 
&=M_{glob}(k)
\begin{pmatrix}
1 & 
\(
\frac{\chi_r^2(k, k_0)}{r(k)}
-
\frac{\chi_r^2(k_0; k_0)}{r(k_0)}
\)
\hskip-1mm\cdot\hskip-1mm
\(\frac{k_0-k}{k_0}\)^{-2i\nu_r} 
\hskip-1.5mm\cdot\hskip-0.5mm
e^{-2ih(k)} 
\\
0 & 1
\end{pmatrix}
M_{glob}(k)^{-1}, 
\ k\in L_3\cap Int(C_r),
\\
&= 
M_{glob}(k)
\begin{pmatrix}
1 & 0 
\\
\(
\frac{r^*(k)^{-1}\cdot\chi_r^{-2}(k, k_0)}{1 + (r(k)r^*(k))^{-1}}
-
\frac{\ol{r(-k_0)}^{\,-1}\cdot\chi_r^{-2}(k_0, k_0)}{1 + |r(k_0)|^{-2}}
\)
\cdot\(\frac{k_0-k}{k_0}\)^{2i\nu_r}\cdot e^{2ih(k)} & 1
\end{pmatrix}
M_{glob}(k)^{-1}, 
\\
&\hskip128mm k\in L_1\cap Int(C_r),
\\
&=M_{glob}(k)
\begin{pmatrix}
1 & 0
\\
\(
\frac{\chi_r^{-2}(k, k_0)}{r^*(k)}
-
\frac{\chi_r^{-2}(k_0, k_0)}{\ol{r(k_0)}}
\)
\hskip-1mm\cdot\hskip-1mm
\(\frac{k_0-k}{k_0}\)^{2i\nu_r}
\hskip-2mm\cdot\hskip-0.5mm
 e^{2ih(k)} 
 & 1
\end{pmatrix}
M_{glob}(k)^{-1}, 
\ k\in L_4\cap Int(C_r),
\\
&=M_{glob}(k)
\begin{pmatrix}
1 & 
\(
\frac{r(k)^{-1}\cdot\chi_r^{2}(k, k_0)}{1 + (r(k)r^*(k))^{-1}}
-
\frac{r(k_0)^{-1}\cdot\chi_r^{2}(k_0, k_0)}{1 + |r(k_0)|^{-2}}
\)
\cdot\(\frac{k_0-k}{k_0}\)^{-2i\nu_r}\cdot e^{-2ih(k)} 
\\
0 & 1
\end{pmatrix}
M_{glob}(k)^{-1}, 
\\
&\hskip128mm k\in L_2\cap Int(C_r),
\end{align*}
It follows that inside the circle $C_l, C_r,$ respectively, the jump matrix $J_{err}$ admits the estimate
\begin{equation}\label{estimates_LjClCr}
J_{err}(k) = I + \mathcal{O}(k\mp k_0)\cdot e^{-2\tau k_0\,\left|\frac{k\mp k_0}{k_0}\right|^2}.
\end{equation}
Furthermore, 
\begin{equation}\label{estimates_ClCr}
J_{err}(k) = I + \mathcal{O}(\tau^{-1/2} k_0^{-1/2}) \quad \mbox{ uniformly for } k\in C_l\cup C_r,
\end{equation}
as follows from the fact that $M_{glob}(k)$ and $\phi_l(k), \phi_r(k)$ are uniformly bounded from $0$ and $\infty$ on the circles $C_l, C_r,$ respectively, 
and from the estimates
$$
J_{err}(k) = M_{glob}(k)
\phi_l(k)^{\sigma_3}
P(\lambda_l; r_0 = r(-k_0))
\phi_l(k)^{-\sigma_3}
M_{glob}(k)^{-1}
=
I + \mathcal{O}(\lambda_l^{-1})
=
I + \mathcal{O}(\tau^{-1/2}k_0^{-1/2}),
$$
and similarly 
$J_{err}(k) = I + \mathcal{O}(\tau^{-1/2} k_0^{-1/2})$ for $k\in C_r.$

Using the estimates \eqref{estimates_ClCr}, \eqref{estimates_LjClCr} and using the fact that on the parts of the contours $L_j$ outside the disks $Int(C_l)$, $Int(C_r)$ the jump matrix is exponentially close to the identity matrix, we obtain the statement of the Lemma.
\end{proof}

\subsection{Asymptotics of $M_{err}$}

\begin{lemma}\label{lem_M_err}
Let $C>1$ be a fixed real number. In the regime $\tau\to\infty$, uniformly for $\frac{1}{C}\leq k_0 \leq C$, one has
\begin{align*}
\lim\limits_{k\to\infty}-4ik \(M_{err}(k)-I\) = \frac{-4i}{2\pi i}\int\limits_{C_l\cup C_r}(J_{err}(s)-I)ds + \mathcal{O}(\tau^{-1}),
\\
M_{err}(0) = I - \frac{1}{2\pi i}\int\limits_{C_l\cup C_r}\frac{1}{s}(J_{err}(s)-I)ds + \mathcal{O}(\tau^{-1}).
\end{align*}
\end{lemma}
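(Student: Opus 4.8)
The plan is to treat the Riemann--Hilbert problem for $M_{err}$ as a small-norm problem and to extract the leading term of its Cauchy-integral representation, exactly in the spirit of \eqref{M2_integral}. By Lemma \ref{lem_J_err} we have $\|J_{err}-I\|_{L^1(\Sigma_{err})\cap L^2(\Sigma_{err})\cap L^{\infty}(\Sigma_{err})}=\mathcal{O}(\tau^{-1/2})$, uniformly for $C^{-1}\le k_0\le C$. Hence, for $\tau$ large, the associated Beals--Coifman singular integral operator $f\mapsto \mathcal{C}_+[f\,(I-J_{err})]$ has operator norm $\mathcal{O}(\tau^{-1/2})<1$ on $L^2(\Sigma_{err})$, so $I$ minus this operator is invertible with uniformly bounded inverse (Neumann series). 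Consequently the RH problem is solvable, its boundary value solves $M_{err,+}=I+\mathcal{C}_+[M_{err,+}(I-J_{err})]$ with $\|M_{err,+}-I\|_{L^2(\Sigma_{err})}=\mathcal{O}(\tau^{-1/2})$, and
\[
M_{err}(k)=I+\frac{1}{2\pi i}\int_{\Sigma_{err}}\frac{M_{err,+}(s)\(I-J_{err}(s)\)}{s-k}\,ds .
\]

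Next I would specialise this representation at the two points of interest. Letting $k\to\infty$ and using $\tfrac{1}{s-k}=-\tfrac1k+\mathcal{O}(k^{-2})$ gives $\lim_{k\to\infty}-4ik\(M_{err}(k)-I\)=\tfrac{-4i}{2\pi i}\int_{\Sigma_{err}}M_{err,+}(s)\(J_{err}(s)-I\)\,ds$. Evaluating instead at $k=0$ gives $M_{err}(0)=I-\tfrac{1}{2\pi i}\int_{\Sigma_{err}}\tfrac{M_{err,+}(s)\(J_{err}(s)-I\)}{s}\,ds$; here one checks that $0\notin\Sigma_{err}$ and that $1/s$ is bounded on $\Sigma_{err}$ uniformly in the range $C^{-1}\le k_0\le C$, since the circles $C_l,C_r$ are centred at $\mp k_0$ with radius $\delta k_0<k_0$ and the lenses $L_j$ stay a fixed positive distance from the origin. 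In both formulae I would then replace $M_{err,+}$ by $I$: the discarded term $\int_{\Sigma_{err}}(M_{err,+}-I)(J_{err}-I)\,\omega(s)\,ds$, with weight $\omega(s)=1$ or $\omega(s)=1/s$ bounded, is by Cauchy--Schwarz at most $\|M_{err,+}-I\|_{L^2}\,\|(J_{err}-I)\omega\|_{L^2}=\mathcal{O}(\tau^{-1/2})\cdot\mathcal{O}(\tau^{-1/2})=\mathcal{O}(\tau^{-1})$.

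It then remains to reduce the contour of integration from $\Sigma_{err}$ to $C_l\cup C_r$. I would split $\Sigma_{err}=(C_l\cup C_r)\cup\bigcup_{j}L_j$ and show the lens contours $L_j$ contribute only $\mathcal{O}(\tau^{-1})$. On the portions of $L_j$ lying outside the disks $Int(C_l),Int(C_r)$ the phase $h$ has a definite sign, so $J_{err}-I$ is exponentially small and these pieces are negligible. On the portions inside the disks I would invoke the refined estimate \eqref{estimates_LjClCr}, $J_{err}(k)-I=\mathcal{O}(k\mp k_0)\,e^{-2\tau k_0|(k\mp k_0)/k_0|^2}$; parametrising $u=k\mp k_0$ along $L_j$ and using $\int |u|\,e^{-c\tau|u|^2}\,du=\mathcal{O}(\tau^{-1})$ shows these contributions are $\mathcal{O}(\tau^{-1})$ as well. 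Combining with the $\mathcal{O}(\tau^{-1})$ errors of the previous paragraph yields both claimed identities.

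The main obstacle is precisely this last reduction: one must see that the lens contours contribute at the sharper order $\mathcal{O}(\tau^{-1})$ rather than the naive $\mathcal{O}(\tau^{-1/2})$ coming from the $L^1$ bound of Lemma \ref{lem_J_err}. This rests on the \emph{linear-times-Gaussian} structure in \eqref{estimates_LjClCr}, i.e. the fact that $J_{err}-I$ vanishes to first order at the stationary points $\pm k_0$, and on the uniform boundedness of $M_{glob}$ and of $\phi_l,\phi_r$ on $C_l,C_r$ already used in \eqref{estimates_ClCr}. Once the $\tau^{-1}$ control of the lens contribution is secured, the remainder is the routine Neumann-series solvability and Cauchy--Schwarz estimate sketched above.
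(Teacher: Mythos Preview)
Your proposal is correct and follows essentially the same route as the paper: set up the singular integral equation via $M_{err,+}=I+\mathcal{C}_+[M_{err,+}(I-J_{err})]$, replace $M_{err,+}$ by $I$ with a Cauchy--Schwarz error $\mathcal{O}(\tau^{-1})$, and then use the linear-times-Gaussian structure \eqref{estimates_LjClCr} on the lens pieces inside the disks (plus exponential decay outside) to reduce the contour to $C_l\cup C_r$ at cost $\mathcal{O}(\tau^{-1})$. Your explicit remarks that $0\notin\Sigma_{err}$ and that $1/s$ is uniformly bounded there, and your identification of the key mechanism $\int|u|\,e^{-c\tau|u|^2}\,du=\mathcal{O}(\tau^{-1})$, make the paper's estimates $\int_{L_j\cap C_{l,r}}(J_{err}-I)\,ds=\mathcal{O}(\tau^{-1})$ more transparent.
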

\begin{proof}
The error matrix $M_{err}$ can be obtained as 
\[
M_{err} = I + \mathcal{C}[M_{err, +}(.)(I - J_{err}(.))],
\]
where $M_{err, +}$ is the solution of the singular integral equation
\[
M_{err, +} = I + \mathcal{C}_+[M_{err, +}(.)(I - J_{err}(.))],
\]
and
where we denote
\[
\mathcal{C}f(k) = \frac{1}{2\pi i}\int_{\Sigma_{err}}\frac{f(s)\,ds}{s-k},
\quad
\mathcal{C}_{\pm}f(k) = \frac{1}{2\pi i}\int_{\Sigma_{err}}\frac{f(s)\,ds}{(s-k)_{\pm}}\ .
\]
It follows that
$$
M_{err}(k) = I + \frac{1}{2\pi i}\int_{\Sigma_{err}}\frac{(I-J_{err}(s))\ ds}{s-k}
+ \frac{1}{2\pi i}\int_{\Sigma_{err}}\frac{(M_{err, +}(s)-I)(I-J_{err}(s))\ ds}{s-k}
$$
and 
\begin{align*}
&
\lim\limits_{k\to\infty}-4ik(M_{err}(k)-I) = \frac{-4i}{2\pi i}\int_{\Sigma_{err}}(J_{err}(s)-I)\ ds
+ \frac{-4i}{2\pi i}\int_{\Sigma_{err}}(M_{err, +}(s)-I)(J_{err}(s)-I)\ ds\,,
\\
&
M_{err}(0) = I - \frac{1}{2\pi i}\int_{\Sigma_{err}}\frac{1}{s}(J_{err}(s)-I)\ ds
- \frac{1}{2\pi i}\int_{\Sigma_{err}}\frac{1}{s}(M_{err, +}(s)-I)(J_{err}(s)-I)\ ds\,.
\end{align*}
Using estimates \eqref{estimates_LjClCr}, \eqref{estimates_ClCr}, 
one obtains
\[
\int_{C_{l,r}}(J_{err}(k)-I)ds = \mathcal{O}\(k_0^{1/2}\tau^{-1/2}\),
\quad
\int_{L_j\cap C_{l,r}}(J_{err}(k)-I)ds = \mathcal{O}\(k_0 \tau^{-1}\),
\]
$$
\int_{C_{l,r}}(M_{err,+}(s)-I)(J_{err}(s)-I)ds = \mathcal{O}(\tau^{-1}),
\quad
\int_{L_j\cap C_{l,r}}(M_{err,+}(s)-I)(J_{err}(s)-I)ds = \mathcal{O}(k_0^{3/2}\tau^{-3/2}),
$$
from where it follows that the main contribution to $\int_{\Sigma_{err}}(J_{err}(s)-I)ds$ comes from $C_{l}, C_r.$ Similar argument applies for the integral 
$\int_{\Sigma_{err}}s^{-1}(J_{err}(s)-I)ds$, from where we obtain the statement of the Lemma.
\end{proof}

\begin{lemma}\label{lem_integrals_Clr}
Under conditions of Lemma \ref{lem_M_err}, we have 
\begin{align*}
&\frac{1}{2\pi i}\int_{C_l}(J_{err}(k) - I)dk
=
M_{glob}(-k_0)
\begin{pmatrix}
0 & i\,e^{i\,\omega_l}
\\
i\,e^{-i\,\omega_l} & 0
\end{pmatrix}
M_{glob}(-k_0)^{-1}
\cdot
\frac{\sqrt{k_0\,\nu_l}}{2\sqrt{\tau}}
+
\mathcal{O}\(\frac{1}{\tau}\),
\\
&\frac{1}{2\pi i}\int_{C_r}(J_{err}(k) - I)dk
=
M_{glob}(k_0)
\begin{pmatrix}
0 & i\,e^{i\,\omega_r}
\\
i\,e^{-i\,\omega_r} & 0
\end{pmatrix}
M_{glob}(k_0)^{-1}
\cdot
\frac{\sqrt{k_0\,\nu_r}}{2\sqrt{\tau}}
+
\mathcal{O}\(\frac{1}{\tau}\),
\\
&\frac{1}{2\pi i}\int_{C_l}\frac{1}{k}(J_{err}(k) - I)dk
=
M_{glob}(-k_0)
\begin{pmatrix}
0 & -i\,e^{i\,\omega_l}
\\
-i\,e^{-i\,\omega_l} & 0
\end{pmatrix}
M_{glob}(-k_0)^{-1}
\cdot
\frac{\sqrt{\nu_l}}{2\sqrt{k_0\,\tau}}
+
\mathcal{O}\(\frac{1}{\tau}\),
\\
&\frac{1}{2\pi i}\int_{C_r}\frac{1}{k}(J_{err}(k) - I)dk
=
M_{glob}(k_0)
\begin{pmatrix}
0 & i\,e^{i\,\omega_r}
\\
i\,e^{-i\,\omega_r} & 0
\end{pmatrix}
M_{glob}(k_0)^{-1}
\cdot
\frac{\sqrt{\nu_r}}{2\sqrt{k_0\,\tau}}
+
\mathcal{O}\(\frac{1}{\tau}\),
\end{align*}
where $\omega_l = \omega_l(t,x), \omega_r = \omega_r(t,x)$ are defined as follows:
\begin{multline*}
\omega_l(t, x) = 4\tau k_0 - \nu_l \ln(16\tau k_0) - \frac{1}{\pi}\int_{-k_0}^{k_0}\frac{\ln\frac{1 + |r(s)|^{-2}}{1 + |r(-k_0)|^{-2}}\, ds}{s+k_0} + \arg\(a(-k_0)b(-k_0)\) + \arg\Gamma(i\nu_l)
\\
+\sum\limits_{k_j\in Z_b\cap Int(L_3, \mathbb{R})}2\arg\frac{k_0 + \ol{k_j}}{k_0 + k_j}
- \frac{\pi}{4},
	\end{multline*}
\begin{multline*}
\omega_r(t, x) = - 4 \tau k_0 + \nu_r \ln(16\tau k_0) - \frac{1}{\pi}\int_{-k_0}^{k_0}\frac{\ln\frac{1 + |r(s)|^{-2}}{1 + |r(k_0)|^{-2}}\, ds}{s-k_0} + \arg (a(k_0) b(k_0)) - \arg\Gamma(i\nu_r) 
\\
+\sum\limits_{k_j\in Z_b\cap Int(L_3, \mathbb{R})}2\arg\frac{k_0-\ol{k_j}}{k_0-k_j}
+ \frac{\pi}{4}.
\end{multline*}
\end{lemma}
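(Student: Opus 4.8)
The plan is to reduce each of the four contour integrals to the residue at the relevant endpoint $\mp k_0$ of a single explicitly known meromorphic expression, all remaining contributions being $\mathcal{O}(\tau^{-1})$. First I would substitute the explicit form \eqref{J_err} of $J_{err}$ on $C_l$ together with the large-$\lambda$ expansion \eqref{PC_asymp} of the parabolic cylinder parametrix. On $C_l$ one has $|\lambda_l|\asymp\sqrt{\tau k_0}\to\infty$, so \eqref{PC_asymp} gives $P(\lambda_l;r(-k_0)) = I + \lambda_l^{-1}P_1 + \mathcal{O}(\lambda_l^{-2})$ uniformly on $C_l$, with $P_1=\tfrac12\begin{pmatrix}0 & e^{-\pi i/4}\beta_2\\ e^{-3\pi i/4}\beta_1 & 0\end{pmatrix}$ purely off-diagonal. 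Hence
\[
J_{err}(k)-I = \frac{1}{\lambda_l}\,M_{glob}(k)\,\phi_l(k)^{\sigma_3}P_1\phi_l(k)^{-\sigma_3}\,M_{glob}(k)^{-1} + \mathcal{O}(\lambda_l^{-2}),\quad k\in C_l.
\]

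The key observation I would exploit is that the leading coefficient is meromorphic inside $Int(C_l)$ with a single simple pole at $k=-k_0$: indeed $M_{glob}$ is analytic there (its poles sit at $k_j,\ol{k_j}$, away from $\pm k_0$), the factor $\phi_l(k)^{2}$ is analytic and nonvanishing on $Int(C_l)$ since $\tfrac{k+k_0}{k_0 z_l}=\sqrt{-k/k_0}$ is and $0\notin Int(C_l)$ because $\delta<1$, while from \eqref{def:g} one computes $\lambda_l = (k+k_0)\sqrt{\tau}/\sqrt{-k}$, so $\lambda_l^{-1}$ has a simple pole at $-k_0$ with $\mathrm{Res}_{-k_0}\lambda_l^{-1}=\sqrt{k_0}/\sqrt{\tau}$. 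Therefore the residue theorem, with the orientation of $C_l$, yields
\[
\frac{1}{2\pi i}\int_{C_l}(J_{err}-I)\,dk = \frac{\sqrt{k_0}}{\sqrt{\tau}}\,M_{glob}(-k_0)\,\phi_l(-k_0)^{\sigma_3}P_1\,\phi_l(-k_0)^{-\sigma_3}M_{glob}(-k_0)^{-1}+\mathcal{O}(\tau^{-1}),
\]
the error coming from the $\mathcal{O}(\lambda_l^{-2})$ tail (length $\asymp k_0$ times size $\asymp(\tau k_0)^{-1}$) and the next residue. Since $\phi_l^{\sigma_3}P_1\phi_l^{-\sigma_3}$ is off-diagonal with entries $\phi_l^{2}(P_1)_{12}$ and $\phi_l^{-2}(P_1)_{21}$, and $|\beta_2|^2=\beta_2\ol{\beta_2}=\beta_2\beta_1=\nu_l$, the modulus of the $(1,2)$-entry is exactly $\tfrac{\sqrt{k_0\nu_l}}{2\sqrt\tau}$, matching the claim.

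It remains to identify the phase, which I expect to be the main (if routine) obstacle, as it requires assembling several arguments into the compact formula for $\omega_l$. Using $\phi_l(-k_0)^2 = (\tau k_0)^{-i\nu_l}\chi_l(-k_0;k_0)^2 e^{4i\tau k_0}$ from \eqref{phi_lr}, the explicit $\chi_l(-k_0;k_0)$ from \eqref{chi_l}, and $(P_1)_{12}=\tfrac12 e^{-\pi i/4}\beta_2$ with $\beta_2=\Gamma(1+i\nu_l)/(\sqrt{2\pi}\,r(-k_0)\,2^{2i\nu_l}e^{\pi\nu_l/2})$, I would collect: the factor $e^{4i\tau k_0}$ gives $4\tau k_0$; the powers $(\tau k_0)^{-i\nu_l}$ and $2^{-2i\nu_l}$ (two copies from $\beta_2$, two from $\chi_l^2$) combine into $-\nu_l\ln(16\tau k_0)$; the Cauchy exponential in \eqref{chi_l}, squared, gives $-\tfrac1\pi\int_{-k_0}^{k_0}\tfrac{\ln\frac{1+|r(s)|^{-2}}{1+|r(-k_0)|^{-2}}}{s+k_0}ds$ (this also fixes the endpoint regularization, as the logarithm must vanish at $s=-k_0$ for convergence); the factor $e^{2i\arg b(-k_0)}$ from $\chi_l^2$ together with $\arg r(-k_0)^{-1}=\arg a(-k_0)-\arg b(-k_0)$ from $\beta_2$ produces $\arg(a(-k_0)b(-k_0))$; the product in \eqref{chi_l}, squared, gives $2\sum\arg\tfrac{k_0+\ol{k_j}}{k_0+k_j}$; and $\arg\Gamma(1+i\nu_l)=\tfrac\pi2+\arg\Gamma(i\nu_l)$ combined with $e^{-\pi i/4}$ leaves $\arg\Gamma(i\nu_l)-\tfrac\pi4$ after one factors out the explicit $i$ written in the statement. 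This reproduces $\omega_l$ exactly.

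Finally, the integral over $C_r$ I would handle identically, with three changes of bookkeeping: the parametrix carries the $\sigma$-conjugation $\bigl(\begin{smallmatrix}0&1\\-1&0\end{smallmatrix}\bigr)P\bigl(\begin{smallmatrix}0&-1\\1&0\end{smallmatrix}\bigr)$, which swaps and negates the off-diagonal entries of $P_1$; the data become $r_0=\ol{r(k_0)}$, $\nu_r$, $\phi_r$, $\chi_r$ from \eqref{chi_r}, \eqref{phi_lr}; and $\lambda_r=(k_0-k)\sqrt\tau/\sqrt{k}$ has $\mathrm{Res}_{k_0}\lambda_r^{-1}=-\sqrt{k_0}/\sqrt\tau$. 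The extra sign from this residue cancels the sign from the $\sigma$-swap, producing the stated $+\tfrac{\sqrt{k_0\nu_r}}{2\sqrt\tau}$ with phase $\omega_r$, now carrying $e^{-4i\tau k_0}$, the term $-\arg\Gamma(i\nu_r)$ from $\Gamma(1-i\nu_r)$, and $+\tfrac\pi4$. The two integrals against $k^{-1}$ follow from the same residue computation with the extra factor $k^{-1}$ evaluated at the pole, i.e. multiplication by $-1/k_0$ at $-k_0$ (which both rescales by $k_0^{-1}$ and flips the sign in the $C_l$ formula) and by $1/k_0$ at $k_0$; this yields the prefactors $\tfrac{\sqrt{\nu_l}}{2\sqrt{k_0\tau}}$ and $\tfrac{\sqrt{\nu_r}}{2\sqrt{k_0\tau}}$ with the indicated signs. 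All estimates are uniform for $C^{-1}\le k_0\le C$ because the conformal maps $z_{l,r}$, the matrix $M_{glob}$, and the factors $\phi_{l,r}$ and $\chi_{l,r}$ are uniformly bounded and bounded away from zero on the fixed circles $C_l, C_r$.
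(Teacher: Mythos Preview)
Your proposal is correct and follows the same approach as the paper: expand $P(\lambda)$ via \eqref{PC_asymp}, recognize the leading term as meromorphic in $Int(C_{l,r})$ with a simple pole at $\mp k_0$ coming from $\lambda_{l,r}^{-1}$, and read off the contour integrals as residues. The paper's proof is essentially a one-line summary (``computing the first-order residues''), whereas you have carried out in full the bookkeeping of the phase $\omega_l$ --- combining $\phi_l(-k_0)^2$, $\chi_l(-k_0;k_0)^2$, $\beta_2$, and the identity $\arg\Gamma(1+i\nu_l)=\tfrac{\pi}{2}+\arg\Gamma(i\nu_l)$ --- which is precisely the calculation the paper leaves to the reader.
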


\begin{proof}
The jump matrix $J_{err}$ on the circles $C_l, C_r$ takes the form (see the first two formulae of \eqref{J_err} and formula 
\eqref{PC_asymp})
$$
J_{err}(k) - I = M_{glob}(k)
\left[
\begin{pmatrix}
0 & \frac{\phi_l^2(k)e^{-\pi i/4}\beta_{2,l}}{2}
\\
\frac{e^{-3\pi i/4}\beta_{1,l}}{2\phi_l^2(k)}
\end{pmatrix}
\frac{1}{\sqrt{\tau\,k_0}\,z_l} + \mathcal{O}\(\frac{1}{\tau\,k_0\,z_l^2}\)
\right]
M_{glob}(k)^{-1},\ k\in C_l,
$$
$$
J_{err}(k) - I = M_{glob}(k)
\left[
\begin{pmatrix}
0 & \frac{\phi_r^2(k)e^{\pi i/4}\beta_{1,r}}{2}
\\
\frac{e^{3\pi i/4}\beta_{2,r}}{2\phi_r^2(k)}
\end{pmatrix}
\frac{1}{\sqrt{\tau\,k_0}\,z_r} + \mathcal{O}\(\frac{1}{\tau\,k_0\,z_r^2}\)
\right]
M_{glob}(k)^{-1},\ k\in C_r.
$$
Computing the first-order residues, we find the integrals over the circles $C_l, C_r$
and thus obtain the statement of the Lemma.
\end{proof}

\subsubsection*{Proof of Theorem \ref{thm_tail}.} {\it Functions $\mathcal{N}, \rho.$}
From the symmetry
$$\ol{M_{err}(\ol k)} = \begin{bmatrix}0 & 1 \\ -1 & 0\end{bmatrix}
M_{err}(k) \begin{bmatrix}0 & -1 \\ 1 & 0\end{bmatrix},$$
it follows that formula \eqref{second_formula} from Lemma \ref{lem_reconstruction} can be written in the form
\begin{equation}\label{Nrho_expr}
\begin{pmatrix}
\mathcal{N} & \rho \\ \ol{\rho} & -\mathcal{N}
\end{pmatrix}
=
\begin{pmatrix}
1 + X & Y \\ -\ol{Y} & 1 + \ol{X}
\end{pmatrix}
\begin{pmatrix}
-P & Q \\ \ol{Q} & P
\end{pmatrix}
\begin{pmatrix}
1 + \ol{X} & -Y \\ \ol{Y} & 1 + X
\end{pmatrix},
\end{equation}
where $\mathcal{N} = \mathcal{N}(t,x),$ $\rho = \rho(t,x)$ and
\[
1 + X = M_{err}(0)_{11},\ 
Y = M_{err}(0)_{12},
\quad
P = 1 - \frac{2|B_j|^2}{|k_j|^2},\
Q = \frac{-2i\,B_j}{\ol{k_j}}\(1 - \frac{i\,A_j}{k_j}\).
\]
Note that $\det M_{err}(0) = 1$ and hence $|1 + X|^2 + |Y|^2 = 1.$ Multiplying matrices in the right-hand side of \eqref{Nrho_expr}, we find that
\[
\mathcal{N}(t, x)
=
(-|1 + X|^2 + |Y|^2)P + (1 + \ol{X})\,Y\,\ol{Q} + (1+X)\,\ol{Y}\,Q,
\]
\[
\rho(t, x)
=
2(1+X)YP - Y^2\, \ol{Q} + (1+X)^2 Q.
\]
From Lemmas \ref{lem_integrals_Clr}, \ref{lem_M_err} it follows that $X = \mathcal{O}(\tau^{-1/2}),$ $Y = \mathcal{O}(\tau^{-1/2})$, thus
\[
\mathcal{N}(t, x)
=
-P + Y\,\ol{Q} + \ol{Y}\,Q + \mathcal{O}(\tau^{-1}),
\qquad
\rho(t, x)
=
2YP + \(1 + 2X\) Q + \mathcal{O}(\tau^{-1}).
\]
Multiplying matrices from Lemma \ref{lem_integrals_Clr}, $M_{err}(0)$ and thus $X$, $Y$ can be written more explicitly,
\begin{align*}
X &=  
\frac{\sqrt{\nu_l}}{2\sqrt{k_0\,\tau}}
\(
\(1 + \frac{i\,A_j}{k_0 + \ol{k_j}}\)\frac{B_j\,e^{-i\,\omega_l}}{k_0 + \ol{k_j}}
-
\(1 - \frac{i\,A_j}{k_0 + k_j}\)\frac{\ol{B_j}\,e^{i\,\omega_l}}{k_0 + k_j}
\)
\\
&+ 
\frac{\sqrt{\nu_r}}{2\sqrt{k_0\,\tau}}
\(
\(1 - \frac{i\,A_j}{k_0 - \ol{k_j}}\)\frac{B_j\,e^{-i\,\omega_r}}{k_0 - \ol{k_j}}
-
\(1 + \frac{i\,A_j}{k_0 - k_j}\)\frac{\ol{B_j}\,e^{i\,\omega_r}}{k_0-k_j}
\)
\end{align*}
and
\begin{align*}
Y = \frac{i\,\sqrt{\nu_l}}{2\sqrt{k_0\,\tau}}
\(
e^{i\,\omega_l}\(\!\!1 - \frac{i\,A_j}{k_0 + k_j}\)^2
+
\frac{B_j^2\,e^{-i\,\omega_l}}{(k_0 + \ol{k_j})^2}
\)
-\frac{i\,\sqrt{\nu_r}}{2\sqrt{k_0\,\tau}}
\(
e^{i\,\omega_r}\(\!\!1 + \frac{i\,A_j}{k_0 - k_j}\)^2
+
\frac{B_j^2\,e^{-i\,\omega_r}}{(k_0 - \ol{k_j})^2}
\)\!.
\end{align*}

\medskip
\noindent
{\it Function $\mathcal{E}.$}
It follows from Lemma \ref{lem_reconstruction} that
\begin{align*}
\mathcal{E}(t, x)
&=
4B_j - \lim\limits_{k\to\infty}4ik(M(k)-I)_{12}
\\
&=4B_j - \frac{4i}{2\pi i}\int_{C_l\cup C_r}(J_{err}(k)-I)_{12}dk + \mathcal{O}(\tau^{-1})
\\
&=
4B_j
+\frac{2\sqrt{k_0\,\nu_l}}{\sqrt{\tau}}
\left[M_{glob}(-k_0)
\begin{pmatrix}
0 & e^{i\omega_l} \\ e^{-i\omega_l} & 0
\end{pmatrix}
M_{glob}(-k_0)^{-1}
\right]_{12}
\\
&\quad+\frac{2\sqrt{k_0\,\nu_r}}{\sqrt{\tau}}
\left[M_{glob}(k_0)
\begin{pmatrix}
0 & e^{i\omega_r} \\ e^{-i\omega_r} & 0
\end{pmatrix}
M_{glob}(k_0)^{-1}
\right]_{12}
+ \mathcal{O}(\tau^{-1}).
\end{align*}
Multiplying the matrices, we complete the proof of Theorem \ref{thm_tail}.

\begin{rem}
Theorem \ref{thm_tail} can be proved under the weaker assumption of an exponential decay of the input pulse.
In such a case, Step 5 of Section \ref{sect_steps_transformations} is not performed, but instead the parabolic cylinder parametrices in the neighborhoods of the points $\pm k_0$ must be adjusted.
\end{rem}

\appendix
\addtocontents{toc}{\fixappendix}

\section{Proof of Lemma \ref{lem_inversion}}\label{sect_appendix}

\begin{proof}
Let us look for $y$ in the form $y = z + \gamma p,$ where $0< p \ll z;$ substituting this in \eqref{ziny}, we get
$p = \ln(z + \gamma p),$ and $z = e^p - \gamma p.$ Now we look for $p$ in the form $p = \ln z - q,$ where $q\ll p.$ Then
$$e^{-q} - \frac{\gamma \ln z}{z} + \frac{\gamma q}{z} - 1 = 0.$$
$$e^{-q} - 1 + \delta + \varepsilon q = 0,\quad
\mbox{ where we denoted }\quad \delta = \frac{-\gamma\ln z}{z},\quad \varepsilon = \frac{\gamma}{z}.$$
It follows that $q\to 0$ as $z\to+\infty.$ By Rouche's theorem, there exists a fixed $r>0$ such that for sufficiently small $\varepsilon, \delta$ the equation for $q$ has exactly one root in the disk $\left\{q:\, |q|<r\right\}$. Then 
\[
q = \frac{1}{2\pi i}\oint\limits_{|\zeta|=r}\frac{\zeta\(-e^{-\zeta} + \varepsilon\)\, d\zeta}{e^{-\zeta} - 1 + \delta + \varepsilon \zeta}
=
\frac{1}{2\pi i}\oint\limits_{|\zeta|=r}\frac{\zeta\(-e^{-\zeta} + \varepsilon\)}{e^{-\zeta} - 1}
\sum\limits_{j=0}^{\infty}\frac{(-1)^j(\delta+\varepsilon\zeta)^j}{(e^{-\zeta} - 1)^j}\,d\zeta,
\]
where we integrate in the counter-clockwise direction.
Thus $q$ can be expanded in a series $$q = \sum\limits_{j=1}^{\infty}\frac{1}{\zeta^j}\sum\limits_{k=0}^j c_{kj}\ln^k z $$
for some real $c_{kj}.$
Computing the corresponding residues, we obtain the statement of the lemma.
\end{proof}

\section{Uniqueness of the solution of the ibv problem}\label{sect_uniqueness}
The proof of uniqueness is very similar to the one for the case of MB equations with retarded time (cf. \cite[Appendix A]{LM2022}). It can be carried out for a more general notion of solution than the classical one.

{\it Definition.}
Let $\mathcal{E}_0, \rho_0:[0, +\infty)\to\mathbb{C},\quad $ $\mathcal{N}_0:[0, +\infty)\to\mathbb{R}, \quad \mathcal{E}_1:(0, +\infty)\to\mathbb{C}$ be given functions.  
We say that a triple of locally integrable functions $\mathcal{E}, \rho:[0, +\infty)\times[0, +\infty)\to\mathbb{C},$ $\mathcal{N}:[0, +\infty)\times[0, +\infty)\to\mathbb{R}$ satisfies the Maxwell-Bloch system \eqref{MB1a} in the {\it subclassical } sense,  
if for all $t\geq0, x\geq0, s\geq0$
\begin{equation}\label{MBintegralform}
\begin{split}
& \mathcal{E}(t+s, x+s) = \mathcal{E}(t, x) + \int_{0}^{s}\rho(t+y, x+y)dy,
\\
& \rho(t+s, x) = \rho(t, x) + \int_{0}^{s}\mathcal{N}(t+y, x)\mathcal{E}(t+y, x)dy,
\\
& \mathcal{N}(t+s, x) = \mathcal{N}(t, x) - \int_{0}^{s}\Re\left[\rho(t+y, x)\ol{\mathcal{E}(t+y, x)}\right]dy.
\end{split}
\end{equation}
Note that it follows that $\mathcal{E}$ is differentiable in the direction $(1, 1)$
and $\mathcal{N}, \rho$ have partial derivative in $t.$ It also follows that the limits
$\lim\limits_{s\to0+}\mathcal{E}(t+s, s),$
$\lim\limits_{s\to0+}\mathcal{E}(s, x+s),$
$\lim\limits_{s\to0+}\mathcal{N}(s, x),$
$\lim\limits_{s\to0+}\rho(s, x)$
exist for all $t>0, x\geq0.$
We say that the initial and boundary conditions \eqref{IBC} are satisfied if 
\[\lim\limits_{s\to0+}\mathcal{E}(t+s, s) = \mathcal{E}_1(t),\qquad \forall t>0,\ \mbox{ and }
\]
\[
\lim\limits_{s\to0+}\mathcal{E}(s, x+s) = \mathcal{E}_1(t),
\qquad
\lim\limits_{s\to0+}\rho(s, x) = \rho_0(x),
\qquad
\lim\limits_{s\to0+}\mathcal{N}(s, x) = \mathcal{N}_0(x),\qquad \forall x\geq0.
\]

\begin{prop}\label{prop_uniqueness}
Let functions $\mathcal{E}_0, \rho_0:[0, +\infty)\to\mathbb{C},\quad $ $\mathcal{N}_0:[0, +\infty)\to\mathbb{R}, \quad \mathcal{E}_1:(0, +\infty)\to\mathbb{C}$ be given, let $\mathcal{N}_0,$ $\rho_0$ satisfy $\mathcal{N}_0(x)^2 + |\rho_0(x)|^2\equiv1$ for all $x\geq0$
and let $\mathcal{E}_j, j=0,1$ satisfy the following property:
\begin{equation}\label{propE01}
\forall\varepsilon>0 \ \exists\delta>0 \mbox{ such that for any interval $\Delta$ of the length smaller than } \delta, \int_\Delta|\mathcal{E}_j(s)|ds < \varepsilon.
\end{equation}
Then there exists at most one subclassical solution of MB system \eqref{MB1a} with initial and boundary conditions \eqref{IBC}.
\end{prop}
\begin{rem}
Property \eqref{propE01} is satisfied for instance if $\int_0^{\infty}|\mathcal{E}_j(s)|ds < \infty,$ $j=0,1.$
\end{rem}
\begin{proof}
{\it Step 1: a priori estimates.} It follows that $\frac{\partial}{\partial t}\(|\rho(t,x)|^2 + \mathcal{N}(t,x)^2\) = 0,$ and hence $|\rho(t,x)|^2 + \mathcal{N}(t,x)^2 = |\rho(0,x)|^2 + \mathcal{N}(0,x)^2 = 1$ for all $t, x\geq0.$
Functions $\mathcal{N}, \rho$ thus possess a priori bounds 
\begin{equation}\label{rhoN_apriori}
|\mathcal{N}(t,x)|\leq 1, \qquad |\rho(t,x)|\leq 1,
\end{equation} 
and from the first of equations \eqref{MBintegralform} we obtain the following a priori bound for $\mathcal{E}:$
\begin{equation}\label{E_apriori}
|\mathcal{E}(t,x)| \leq |\mathcal{E}(t-x, 0)| + x\ \mbox{ for } t>x,\qquad \mbox{ and } |\mathcal{E}(t,x)| \leq |\mathcal{E}(0, x-t)| + x\ \mbox{ for } t\leq x.
\end{equation}

{\it Step 2: rewritting the integral equations in terms of initial and boundary data.}
We need to split the equation for $\mathcal{E}$ in two pieces, depending on whether $t$ is greater than $x$ or not,
\begin{equation}\label{E_integral_2}
\begin{split}
\mathcal{E}(t,x) = \mathcal{E}(0, x-t) + \int_0^t\rho(s, x-t+s)ds, \ t\leq x,
\\
\mathcal{E}(t,x) = \mathcal{E}(t-x, 0) + \int_0^x\rho(t-x+s, s)ds, \ t>x,
\end{split}
\end{equation}
and the equations for $\rho, \mathcal{N}$ take the form
\begin{equation}\label{rhoN_integral_2}
\begin{split}
\rho(t,x) = \rho(0, x) + \int_0^t\mathcal{N}(s, x)\mathcal{E}(s, x)ds,
\\
\mathcal{N}(t,x) = \mathcal{N}(0, x) - \int_0^t\Re\(\rho(s, x)\ol{\mathcal{E}(s, x)}\)ds.
\end{split}
\end{equation}

{\it Step 3: estimates for the differences.} Assume that $\widehat{\mathcal{E}}(t, x), \widehat{\rho}(t, x), \widehat{\mathcal{N}}(t, x)$
and 
$\widetilde{\mathcal{E}}(t, x), \widetilde{\rho}(t, x), \widetilde{\mathcal{N}}(t, x)$
are two different subclassical solutions of the MB equations satisfying the same boundary and initial conditions \eqref{IBC}.
Denote $u(t, x) = \widehat{\mathcal{E}}(t, x) - \widetilde{\mathcal{E}}(t, x),$
$v_1(t, x) = \widehat{\rho}(t, x) - \widetilde{\rho}(t, x),$
$v_2(t, x) = \widehat{\mathcal{N}}(t, x) - \widetilde{\mathcal{N}}(t, x).$
Substituting the two sets of solutions in \eqref{E_integral_2}, \eqref{rhoN_integral_2} and taking the difference between them
we obtain
\[
\begin{split}
u(t, x) = u(0, x-t) + \int_0^tv(s, x-t+s)ds,\ t\leq x,
\\
u(t, x) = u(t-x, 0) + \int_0^xv(t-x+s, s)ds,\ t>x,
\end{split}
\]
\[
\begin{split}
v_1(t, x) = v_1(0, x) + \int_0^t\(v_2(s, x)\widehat{E}(s, x) + \widetilde{\mathcal{N}}(s, x)u(s, x)\)ds,
\\
v_2(t, x) = v_2(0, x) - \int_0^t\Re\(v_1(s, x)\ol{\widehat{E}(s, x)} + \widetilde{\rho}(s, x)\ol{u(s, x)}\)ds.
\end{split}
\]
Since $\widehat{E}, \widehat{\rho}, \widehat{\mathcal{N}}$ and $\widetilde{E}, \widetilde{\rho}, \widetilde{\mathcal{N}}$ satisfy the same boundary and initial conditions, we have
$u(t-x, 0) = 0$ for $t>x$ and $u(0, x-t) = v_j(0, x-t) = 0$ for $t\leq x,$ $j=1,2.$ Applying now the a priori estimates \eqref{rhoN_apriori}, \eqref{E_apriori}, we obtain
\begin{equation}\label{estimates_u}
\begin{split}
|u(t, x)| \leq t\cdot \sup\limits_{(\tau, y)\in[0, t]\times[0, x]}|v_1(\tau, y)|,\quad t\leq x,
\\
|u(t, x)| \leq x\cdot \sup\limits_{(\tau, y)\in[0, t]\times[0, x]}|v_1(\tau, y)|,\quad t>x,
\end{split}
\end{equation}
and the estimates for $v_1, v_2$ will now also depend on whether $t$ is greater or smaller than $x$:
\begin{equation}\label{estimates_v}
\begin{split}
&|v_j(t,x)| \leq \sup\limits_{(\tau, y)\in[0, t]\times[0, x]}|v_{j+1}(\tau, y)| \cdot \int_0^t\(|\widehat{\mathcal{E}}(0, x-s)|+s\)ds
+
t\cdot\sup\limits_{(\tau, y)\in[0, t]\times[0, x]}|u(\tau, y)|,\quad t\leq x,
\\
&|v_j(t,x)| \leq \sup\limits_{(\tau, y)\in[0, t]\times[0, x]}|v_{j+1}(\tau, y)| \cdot \left(\int_0^x\(|\widehat{\mathcal{E}}(0, x-s)|+s\)ds
+
\int_x^t\(|\widehat{\mathcal{E}}(s-x, 0)|+x\)ds\right)
\\&
\hskip10.4cm+
t\cdot\sup\limits_{(\tau, y)\in[0, t]\times[0, x]}|u(\tau, y)|,\quad t > x,
\end{split}
\end{equation}
for $j=1,2$, where we identify $v_3=v_1.$ 
Applying the estimates \eqref{estimates_u}, \eqref{estimates_v} for a point $(\tilde t, \tilde x)\in[0, t]\times[0, x]$ and then taking supremum over $[0, t]\times[0, x],$ we obtain, after some simplifications and after considering separately the cases $t\leq x$ and $t>x$, that for all $t, x\geq0$
\begin{equation}\label{estimates_uv}
\begin{split}
&
\sup\limits_{(\tau, y)\in[0, t]\times[0, x]}|u(\tau, y)|
\leq
t\cdot
\sup\limits_{(\tau, y)\in[0, t]\times[0, x]}|v_1(\tau, y)|,
\\
&
\sup\limits_{(\tau, y)\in[0, t]\times[0, x]}|v_j(\tau, y)|
\leq
\sup\limits_{(\tau, y)\in[0, t]\times[0, x]}|v_{j+1}(\tau, y)|\cdot\(\frac{t^2}{2} + \sup\limits_{\Delta:|\Delta|\leq t}\int_\Delta|\widehat{\mathcal{E}}(0, s)|ds + \int_0^t|\widehat{\mathcal{E}}(s, 0)|ds\)
\\
&\hskip10cm+t\cdot
\sup\limits_{(\tau, y)\in[0, t]\times[0, x]}|u(\tau, y)|,\quad j=1,2,
\end{split}
\end{equation}
where supremum in $\sup\limits_{\Delta:|\Delta|\leq t}\int_\Delta|\widehat{\mathcal{E}}(0, s)|ds$ is taken over all the intervals $\Delta\subset[0, +\infty)$ of length smaller or equal than $t.$

{\it Step 4: concluding estimates.}
Denote
$$M(t,x) = \sup\limits_{(\tau, y)\in[0, t]\times[0, x]}|u(\tau, y)| + \sup\limits_{(\tau, y)\in[0, t]\times[0, x]}|v_1(\tau, y)| + \sup\limits_{(\tau, y)\in[0, t]\times[0, x]}|v_2(\tau, y)|.$$ 
Note that in view of \eqref{rhoN_apriori}, $|v_j(\tau, y)|\leq 2$ and thus the supremum of $|v_j|, j=1,2,$ is finite. In view of the first estimate in \eqref{estimates_uv}, the supremum of $|u|$ over a compact is also finite, and hence $M(t,x)$ is finite.
Adding estimates in \eqref{estimates_uv}, we obtain
\begin{equation}\label{estimate_M}
M(t,x) \leq 
\(3t + t^2 + 2\sup\limits_{\Delta:|\Delta|\leq t}\int_\Delta|\widehat{\mathcal{E}}(0, s)|ds + \int_0^t|\widehat{\mathcal{E}}(s, 0)|ds\)M(t,x).
\end{equation}
Taking $t=t_1>0$ sufficiently small such that the expression in the brackets in the right-hand side of \eqref{estimate_M} is smaller than $1,$ we obtain that $M(t,x)\equiv0$ for all $0\leq t\leq t_1,$ $x\geq0,$ and hence the functions $\widehat{E}, \widehat{\rho}, \widehat{\mathcal{N}}$ and $\widetilde{E}, \widetilde{\rho}, \widetilde{\mathcal{N}}$ coincide in the strip $[0, t_1]\times[0, +\infty).$

{\it Step 5: extending the strip to the whole quarter-plane.} Let $t^*$ be the supremum of all $t,$ such that the two sets of solutions coincide in the strip $[0, t)\times[0, +\infty).$
We want to prove that $t^*=\infty.$ Assuming for the contrary that $t^*$ is finite, we first note that from the continuity of $\rho, \mathcal{N}$ in the direction $(0, 1)$ and the continuity of $\mathcal{E}$ in the direction $(1, 1)$ (as follows from \eqref{MBintegralform}), it follows that the two sets of solutions coincide also at the time $t^*.$
Second, the a priori bounds \eqref{E_apriori} guarantee that the property \eqref{propE01} is satisfied also for $\widehat{\mathcal{E}}(t^*, s).$ 
The Steps 1-4 can now be applied for the quarter-plane $t\geq t^*, x\geq0,$ thus extending the strip where the two sets of solutions coincide. This contradicts the definition of the point $t^*,$ and the above contradiction shows that the two sets of solutions coincide for all $t\geq0, x\geq 0.$
This finishes the proof of the Proposition.
\end{proof}

\section*{Acknowledgement}
O.M. expresses his gratitude to his wife Olga, who helped a lot with improving the quality of the presentation.
V.K. thanks Wolfgang Pauli Institute for financial support  in the context of the WPI thematic program “Quantum Equations and Experiments (2021/22)”.

The research was partially supported by the funds of the Charles University within the follow-up activities of the 4EU+ alliance: ``Support for research and educational cooperation with Ukraine'', project code: 4EU+/UA/F3/09.

%\begin{thebibliography}{XXX}

%\noindent{\bf References}
\section*{References}

\begin{enumerate}

\bibitem{AKN} M.~J.~Ablowits, D.~Kaup, A.~C.~Newell. Coherent pulse propagation, a dispersive, irreversible phenomenon. {\it J. Math. Phys.} \textbf {15} 1852-1858, 1974 
	
\bibitem{AS}
M.~J.~Ablowitz and H.~Segur. \textit{Solitons and the Inverse Scattering Transform.} SIAM Philadelphia, 1981

\bibitem{abramowitz} M.~Abramowitz, I.~A.~Stegun. Handbook of mathematical functions with formulas, graphs, and mathematical tables. U. S. National Bureau of Standards Applied Mathematics Series, No. 55, Washington, D.C., 1964 xiv+1046 pp.

\bibitem{BM19}
M.~Bertola, A.~Minakov. Laguerre polynomials and transitional asymptotics of the modified Korteweg–de Vries equation for step-like initial data. Anal. Math. Phys. 9 (2019), no. 4, 1761–1818.

\bibitem{BK00} A.~Boutet de Monvel and V.~Kotlyarov. Scattering problem for the Zakharov-Shabat equations on the semi-axis.  {\it Inverse Problems}, {16,}  1813-1837, 2000 
		
\bibitem{BFS03} A.~Boutet de Monvel, A.~S.~Fokas and D.~Shepelsky. The analysis of the global relation for the nonlinear Schr\"oodinger equation on the half-line. {\it Lett. Math. Phys.} {\bf 65} 199-212, 2003
		
\bibitem{BFS04} A.~Boutet de Monvel, A.~S.~Fokas and D.~Shepelsky. The modified KdV equation on the half-line. {\it J. of the Inst. of Math. Jussieu.} {\bf 3} 139-164, 2004 
		
\bibitem{BMK03} A.~Boutet de Monvel and V.~P.~Kotlyarov. Generation of asymptotic solitons of the nonlinear Schr\"odinger equation  by boundary data. {\it J. Math. Phys.} \textbf{44}  3185-3215,  2003 
		
\bibitem{BMK07} A.~Boutet de Monvel and V.~Kotlyarov. Focusing nonlinear Schr\"odinger equation on the quarter plane with time-periodic boundary condition: a Riemann-Hilbert approach. {\it Journal of the Institute of Mathematics of Jussieu} {\bf6} 579-611, 2007 

\bibitem{BIK09} A.~Boutet de Monvel, A.~R.~Its and V.~P.~Kotlyarov. Long-time asymptotics for the focusing NLS equation with time-periodic boundary condition on the half-line. {\it Comm. Math. Phys.} {\bf 290 2} 479--522, 2009 
		
\bibitem{BKS09} A.~Boutet de Monvel, V.~P.~Kotlyarov and D.~Shepelsky. Decaying long-time asymptotics for the focusing NLS equation with periodic boundary condition. {\it  Int. Math. Res. Notices} {\bf 3} 547--577, 2009 
		
\bibitem{BKS11}  A.~Boutet de Monvel, V.~P.~Kotlyarov, and  D.~Shepelsky. Focusing NLS equation: Long-time Dynamics of the Step-like Initial Data. {\it International Mathematics Research Notices} {\bf 7} 1613-1653, 2011

\bibitem{Knuth} 
R. M. Corless, G. H. Gonnet, D. E. G. Hare, D. J. Jeffrey and D. E. Knuth. On the Lambert W-function. Adv Comput Math 5, 329–359, 1996. \url{https://doi.org/10.1007/BF02124750}

\bibitem{DIZ93}  P.~Deift,  A.~Its and X.~Zhou.
Long-time asymptotics for integrable nonlinear wave
equations. {Important developments in soliton theory,
Springer Ser. Nonlinear Dynam.}  181--204, 1993
		
\bibitem{DZ93} P.~Deift and X.~Zhou.
A steepest descent method for oscillatory Riemann\textendash Hilbert problems. Asymptotics for the MKdV equation. {\it Ann. of Math.} {\bf 137 2} 295--368, 1993
		
\bibitem{FKM17} Filipkovska M S, V.~P.~Kotlyarov and E.~A.~Melamedova. Maxwell-Bloch equations without spectral broadening: gauge equivalence, transformation operators and matrix Riemann-Hilbert problems.
{\it Journal of Mathematical Physics, Analysis, Geometry} {\bf 13 2} 119-153,  2017 
		
\bibitem{Fok97} A.~S.~Fokas. A unified transform method for solving linear and certain nonlinear PDEs. {\it Proc. R. Soc. Lond. A} {\bf 453} 1411-1443, 1997 
		
\bibitem{Fok02} A.~S.~Fokas. Integrable nonlinear evolution equations on the half-line. {\it Comm. Math. Phys.} {\bf230} 1-39, 2002 
		
\bibitem{FI96} A.~S.~Fokas and A.~R.~Its. The linearization of the initial boundary value problem of the nonlinear Schr\"odinger Equation. {\it SIAM J. Math. Anal.} \textbf{27}  738--764, 1996 
		
\bibitem{FI94} A.~S.~Fokas and A.~R.~Its. An Initial Boundary Value Problem for the Korteweg de Vries Equation. {\it Mathematics and Computer in Simulation} \textbf{37}  293-321, 1994 
		
\bibitem{FI92} A.~S.~Fokas and A.~R.~Its. An Initial Boundary Value Problem for the sine-Gordon Equation in laboratory coordinates. {\it Teor.\ Mat.\ Fiz.} \textbf{92} 387--403, 1992 
		
\bibitem{GZM83} I.~R.~Gabitov, V.~E.~Zakharov, A.~V.~Mikhailov. Superfluorescence pulse shape. {\it Pis'ma Zh. Eksp.Teor. Fiz.} {\bf 37} 234-237, 1983  
		
\bibitem{GZM84} I.~R.~Gabitov, V.~E.~Zakharov, A.~V.~Mikhailov.
Nonlinear theory of superfluorescence. {\it Zh. Eksp. Teor. Fiz.} {\bf 86} 1204-1216, 1984

\bibitem{GZM85} I.~R.~Gabitov, V.~E.~Zakharov, A.~V.~Mikhailov. Maxwell-Bloch equations and inverse scattering transform method. {\it Teor. Mat. Fiz} \textbf{63} 11-31, 1985 

\bibitem{GM2020} T.~Grava, A.~Minakov. On the long-time asymptotic behavior of the modified Korteweg–de Vries equation with step-like initial data. SIAM J. Math. Anal. 52, no. 6, 5892–5993, 2020.

\bibitem{K13} V.~Kotlyarov. Complete linearization of a mixed problem to the Maxwell-Bloch equations by matrix Riemann-Hilbert problems. {\it Journal of Physics A: Mathematical and Theoretical} {\bf 46:28} 285206, 2013 
	
\bibitem{KM12} V.~Kotlyarov and A.~Minakov. Riemann-Hilbert problems and the MKdV equation with step initial data: short-time behavior of solutions and the nonlinear Gibbs-type phenomenon. J. Phys. A 45, no. 32, 325201, 17 pp, 2012.
	
\bibitem{KM19} V.~Kotlyarov and A.~Minakov. Dispersive shock wave, generalized Laguerre polynomials, and asymptotic solitons of the focusing nonlinear Schr\"odinger equation. {\it Journal of Mathematical Physics} {\bf 60:12} 123501, 2019 

\bibitem{KM14} V.~P.~Kotlyarov and E.~A.~Moskovchenko. Matrix Riemann-Hilbert Problems and Maxwell-Bloch equations without spectral broadening. {\it Journal of Mathematical Physics, Analysis, Geometry} \textbf{10:3} 328--349, 2014 
		
\bibitem{L1}  G.~L.~Jr.~Lamb. Propagation of ultrashort optical pulses.
\emph{Phys. Lett. A} \textbf {25A} 181-182, 1967 
		
\bibitem{L2} G.~L.~Jr.~Lamb. Analytical descriptions to ultrashort optical pulse propagation in resonant media. \emph{Rev.Mod.Phys.} \textbf{43} 99-124, 1971 
		
\bibitem{L3} G.~L.~Jr.~Lamb. Phase variation in coherent-optical-pulse
propagation. \emph{Phys.Rev.Lett.} \textbf{31} 196-199, 1973 
		
\bibitem{L4} G.~L.~Jr.~Lamb. Coherent-optical-pulse propagation as an inverse problem. \emph{ Phys. Rev. A} \textbf{9} 422-430, 1974 

\bibitem{lavrentevshabat} M.~A.~Lavrent'ev, B.~V.~Shabat. Metody teorii funktsiĭ kompleksnogo peremennogo. (Russian) [Methods of the theory of functions in a complex variable] Fifth edition. "Nauka'', Moscow, 1987. 688 pp.

\bibitem{LM2022} S.~Li and   P.~D.~Miller. On the Maxwell-Bloch system in the sharp-line limit without solitons. {\it arXiv:2105.13293, 2021}
		
\bibitem{Manakov82} S.~V.~Manakov. Propagation of ultrshort optical pulse in a two-level laser amplifier. {\it Zh. Eksp. Teor. Fiz.} {\bf 83} 68-75, 1982
		
\bibitem{MN86} S.~V.~Manakov and V.~Yu.~Novokshenov. Complete asymptotic representation of electromagnetic pulse in a long two-level amplifier. {\it Teor. Mat. Fiz} \textbf{69} 40-54, 1986 
		
\bibitem{MK06} E.~A.~Moskovchenko and V.~P.~Kotlyarov. A new Riemann-Hilbert problem in a model of stimulated Raman Scattering. {\it J. Phys. A: Math. Gen.} \textbf{39} 14591-14610, 2006 

\bibitem{DLMF}NIST Digital Library of Mathematical Functions. \url{http://dlmf.nist.gov/}, Release 1.1.6 of 2022-06-30. F. W. J. Olver, A. B. Olde Daalhuis, D. W. Lozier, B. I. Schneider, R. F. Boisvert, C. W. Clark, B. R. Miller, B. V. Saunders, H. S. Cohl, and M. A. McClain, eds.	 

\bibitem{Zakh80} V.~E.~Zakharov. Propagation of an amplifying pulse in a two-level medium.  {\it Pis'ma v Zh.Eksp.Teor.Fiz}  \textbf{32} 603, 1980 

\bibitem{Zhou89} X.~Zhou. The Riemann-Hilbert problem and inverse scattering. SIAM J. Math. Anal. 20, no. 4, 966--986, 1989.

%\end{thebibliography}
\end{enumerate}	
\end{document}